\nonstopmode

\input{preamble.sty}
\usepackage{macros}

\begin{document}
    \begin{frontmatter}
        \title{Comparing Call-by-Name and Call-by-Value Reduction and \\ Reduction Strategies in Calculi for Classical Logic} 


        \Paper{%
            \author{Steffen van Bakel and David Davies}
            \email{s.vanbakel@imperial.ac.uk,david.davies17@imperial.ac.uk}
        }
        \Journal{%
            \author{Steffen van Bakel}
            \ead{svanbakel@ic.ac.uk}
            \author{David Davies}
            \ead{david.davies17@ic.ac.uk}
        }
        \address{Department of Computing, Imperial College London, 180 Queen's Gate, London SW7 2BZ, UK}

        \begin{abstract}


            We define call-by-name and call-by-value reduction and reduction strategies for the calculi $\nclmu$ (symmetric $\lmu$), $\lmmt$, and $\Xs$ ($`X$ with implicit substitution).
            We establish a strong relation between these notions through defining a single interpretation from $\nclmu$ to $\lmmt$ that respects normal reduction, as well as the call-by-name and call-by-value reduction in $\nclmu$ within the their counterpart in $\lmmt$; for the strategies, we will show similar, but weaker results.
            We also define a single mapping from $\lmmt$ to $\Xs$, and show that this also respects all three notions.
            We then continue with studying the natural encoding of $\Xs$ into $\lmmt$, and show that only full reduction is respected, but that reduction steps are needed to model substitution, so the {\CBN} and {\CBV} strategies cannot be respected. 
            We conclude with studying the combination of our efforts and define an interpretation of $\slmu$ into $\Xs$, and show that  {\CBN} and {\CBV} reduction are respected.
            This result underlines that $\Xs$ and $\lmmt$ are similar, but different calculi, and that the nature of $\slmu$ makes that any encoding into either can never fully respect the strategies.
        \end{abstract}

        \Paper{%
            \keywords{classical logic, call by name, call by value, interpretations}%
        }
        \Journal{%
            \begin{keyword}
                classical logic, call by name, call by value, interpretations 
            \end{keyword}%
        }
 \end{frontmatter}


 \arraycolsep2\point
 \abovecaptionskip 2\point
 \belowcaptionskip 3\point


 \def \mycite{\cite}


\section*{Introduction}

The {\LC} \cite{Church'36,Barendregt'84} has long served as a foundation for functional programming languages through its Call-by-Name (\CBN) and Call-by-Value (\CBV) subsystems.
The former models a `lazy' notion of reduction, in which computation of an argument is delayed until used within a function; the latter employs `eager' evaluation, which always reduces an argument before being given to a function.
Although both are subsystems of the same calculus, their semantics can vary greatly; seen that {\CBN} and {\CBV} as reduction strategies make different choices about which redex to contract, implementation of these onto a machine or a calculus normally differ, \ie yielding different {\CBN} and {\CBV} interpretations.
This becomes evident, for example, in the Krivine abstract machines (KAM) \cite{Krivine'07}; for {\CBN} they need only have operations for pushing and popping arguments from the stack, whereas a {\CBV} KAM also requires a notion of stack-frames \cite{LepigreESOP'16} in which the function is frozen onto the stack whilst its argument is evaluated.\footnote{This feature is also present in $\lmmt$'s rule $`l$; see \Def\ref{lmmt reduction}.}
Such semantics give an operational meaning to functional languages, and the differences in the semantics can explain the difference between the behaviour of said languages.

\paragraph{(Classical) Logic and Computation}
It is well known that the {\LC} also serves as a proof-term syntax for (an implicative fragment of) intuitionistic logic. 
This link between computation and logic is known as the Curry-Howard correspondence  \cite{Howard'80}, where a proof of a proposition $A$ is represented by a program of type $A$.
This supports calling intuitionistic logic `constructive', as a proof corresponds to a program that will compute an answer.

Classical logic is known to be non-constructive.
It was thus thought that only intuitionistic logic enjoyed a computational counterpart, but \namecite{Griffin'90} discovered an extension of the Curry-Howard correspondence by typing control operators, which allow for manipulation of the current program continuation.
In particular, Griffin presented a typed {\LC} extended with \posscite{Felleisen'87} $\mathcal{C}$-operator for which Griffin gave the type of double-negation elimination, $\neg\neg A \rightarrow A$ (or better: $((A \arrow `B) \arrow `B ) \arrow A$).
In this calculus, a continuation variable $k : \neg A$ represents a continuation expecting a term of type $A$.
The logical correspondence means a notion of `backtracking' realises true classical propositions such as double-negation elimination and Pierce's law, $((A\arrow B)\arrow A)\arrow A$.
Naturally, this led to the exploration of many `classical calculi' -- those with control operators -- of which we highlight a few.

One of the earliest was \posscite{Parigot'92} $\lmu$ , which extends the {\LC} with the $\mu$-operator and a separate set of $\mu$-variables used to denote continuations.
$\lmu$ allows for terms of the form $\Mu `a.[`b]M$, which represents a switch from the current context, which $\mu$ labels by $`a$, to the context $`b$.
With this operator, $\lmu$ implements minimal classical logic \cite{Ariola-Herbelin'03} through the Curry-Howard principle.
A key difference compared with $\lambda C$ is that $\lmu$ has a separate domain of variables for dealing with continuations, the $\mu$-variables, that are (implicitly) typed with negated types, enabling it to express the `proof by contradiction' proof step.
This allows for Parigot's alternate interpretation of continuation variables as representing the `alternative conclusions' of a sequent, rather than negated propositions, bringing it closer to the notation of derivable sequents as used in \posscite{GentzenG'35} {sequent calculus {\LK}} .

A next significant step was \posscite {Curien-Herbelin'00} {calculus $\lmmt$} (see also \cite{Herbelin'95}), a proof-term syntax for a logic that can be seen as a variant of \posciteauthor{GentzenG'35} {\LK} with focus, which expanded on $\lmu$'s alternative conclusions.
\citeauthor{Curien-Herbelin'00} observed the term syntax needed to reflect the symmetry of the sequents, in particular the ability to manipulate and combine assumptions.
Continuations were thus made first-class as an explicit part of the syntax, where the \emph{command} $\cell<t|e>$ denotes the running of \emph{term} $t$ in the \emph{continuation} $e$.
Where terms encode the proof tree of a formula on the right of a sequent, continuations (or co-terms) encode that of a formula on the left (see \Def\ref{basic system});
computation in $\lmmt$ then induces a cut-elimination in the corresponding proof tree by letting the term on the left interact with the continuation on the right.
The link with cut-elimination in {\LK} also marks a step away from confluence.

The logic behind the type system of $\lmmt$ is quite different to {\LK}, due to the management of the \emph{focus} of a sequent - an approach inherited from $\lmu$ - which states if it is concerned with a formula on the left or right of the sequent.
Indeed there are two versions of the axiom rule (for focusing on the left or right), and the typing rules for $\mu$ and $\mutilde$ only shift the focus of the sequent from a neutral position: logically they perform a no-op.
The shift in focus is necessary as the formula under focus is not explicitly named (assigned to a variable) by the proof-term.
It would seem a true proof-term syntax for {\LK} could be achieved by explicitly naming all formulae.

This observation, along with Urban's investigations of the computational aspects of cut-elimination in {\LK} \cite{Urban'00,Lengrandthesis02}, led to \posscite{vBLL-ICTCS'05} $\X$ (see also \cite{Bakel-Lescanne-MSCS'08}), which constitutes a term calculus for the implicative fragment of \LK. 
Whereas terms in $\lmmt$ represent the derivation of a particular formula, terms (or nets) in $\X$ represent the entire sequent, such that all formulae in the sequent are named by free variables and names (called plugs and sockets); this means that no particular formula is given precedence.
Reduction is expressed by local manipulations of the proof trees, as by Gentzen's cut-elimination, so not through global, substitution-like operations.
$\X$ turns out to be not just a direct representation of {\LK}, but it also provides a fine-grained view of computation, including having the expressivity to implement calculi with explicit substitutions, like \posscite{Bloo-Rose'95} $\Lx$ .
In fact, $\X$ represents {\LK} rather too closely, in that it allows for an arbitrary non-strongly normalising kind of cut-elimination, but of course restrictions can be made to counteract this, such as {\CBN} and {\CBV} reduction strategies, or forcing active cuts to finish proceeding through the tree before activating another cut, as we do here with $\Xs$ (see \Sect \ref{Xs def}).

\paragraph{Sub-reduction versus strategies}
A distinction can be made between between \textit{(sub) reduction} relations and \textit{strategies}; a reduction relation states with (sub) expressions can be reduced, where which one to choose is unrestricted, whereas a strategy picks one candidate (and always the same) between the reducible sub-expressions, normally by not allowing reduction in certain sub-terms.

For the $`l$-calculus, {\CBV} reduction was first defined by \citet{Plotkin'75}; he defines a $`l$-value as a $`l$-term that is not a ``combination'' (\ie not an application), so is a variable or an abstraction; since the latter correspond to functions, a value can/need not be further evaluated.
The {\CBV} reduction then only contracts a redex $`@ (`Lx.M) N $ if the argument $N$ is a value; if it is not, the argument is evaluated (through {\CBV} reduction) until it becomes a value.
This might not terminate, and terms might not have a {\CBV} normal form where normal reduction would be normalising, as for $ `@ (`Lxy.y) ({`@ `D `D }) $, where $`D = `La.aa$.

Although each of $`l$, $\lmu$, $\lmmt$ and $\X$ have notions of {\CBN} and {\CBV} reduction, these are expressed in very different ways, making a direct comparison obscure. 
One method of defining {\CBN} or {\CBV} involves specifying the reducible expressions (redexes) considered valid for each; for the $`l$ calculus, these are, respectively $`@ (`Lx.M) N $ for \CBN, and $`@ (`Lx.M) V $ for \CBV, where $V$ is a \emph{value}.%
{\CBN} allows for the contraction of a redex irrespective of the shape of the argument $N$, whereas {\CBV} forces the evaluation of the argument (to a value) before allowing the redex to contract.
\footnote{For the \LC, {\CBV} effectively becomes a sub-reduction system of \CBN: all {\CBV} redexes are {\CBN} redexes, whereas the term $`@ (`Lx.x) (yy) $ is a {\CBN} redex, but not one for \CBV.}
Another method is to also specify uniquely which out of a possible multitude of redexes can be contracted, by limiting the evaluation contexts; we can then define {\CBN} and {\CBV} \emph{reduction strategies}, \ie deterministic sub-reduction systems. 

The great advantage of {\CBV} reduction, and an explanation for its popularity in programming, is that it is, in general, exponentially more efficient than \CBN, since it can significantly reduce the number of redundant expression evaluations that occur in a program:
since an operand is run to completion before being used, it is completely evaluated before a redex is contracted, only finished computations are distributed through parameter call, and no unnecessary duplication of reductions takes place.
This is not the case for \CBN: if the variable $x$ occurs twice in $M$, then the redex contraction $ `@ (`Lx.M) N \reduc M \tsubst[N/x] $ places the term $N$ twice, and then evaluating $ M \tsubst[N/x] $ risks running $N$ twice, thus duplicating reductions.

For \posscite{Parigot'92} $\lmu$-calculus, the situation is slightly different. 
\citeauthor{Parigot'92}'s original presentation of $\lmu$ was {\CBN}, as is that of the $`l$-calculus.
It is possible to define {\CBV} reduction on $\lmu$ by restricting the contraction of $`b$-redexes to $`@ (`Lx.M) V $, as also done by \namecite{Curien-Herbelin'00}, but, perhaps surprisingly, \citet{Ong-Stewart'97} define a {\CBV} variant for $\lmu$ that requires \emph{extending} $\lmu$'s notion of reduction.
This variant, denoted $\lmuV$, adds to the reduction relation of $\lmu$ a rule allowing for a $\mu$-abstraction to pull in a term on the \textit{left} of an application \cite{Ong-Stewart'97} (similar to an idea used in \cite{Barbanera-Berardi'96}), combined with the usual restrictions of arguments as values, so introducing  $`@ V (`M `a .`b M) $ as a redex.
Adding that rule to standard (\CBN) $\lmu$ would break confluence (this was already noted by \citet{Parigot'92}), a property that $\lmu$ is set up to satisfy.
But this problem disappears once the redexes are restricted to those with a value for arguments (in fact, reduction in $\lmuV$ is confluent \cite{Py-PhD'98,Bakel-LMCS'23}), and the extra rule can be added safely.
So for $\lmu$, {\CBV} is \emph{not} a sub-reduction system of \CBN; in fact, they are both sub-reduction systems for $\nclmu$, \posscite{David-Nour-TLCA05} \emph{symmetric $\lmu$} , which is the non-confluent system obtained from $\lmu$ by adding that extra rule.
For these two notions it is also possible to define {\CBN} and {\CBV} \emph{strategies}, by limiting the evaluation contexts, and these have been studied extensively, as in \cite{Py-PhD'98}.

For \posscite{Herbelin'95} $\lmmt$-calculus , the situation is again different.
Reduction in $\lmmt$ is not confluent; it has a critical pair in the term $\wcell<`M`a.c_1|\mt x.c_2>$, that can be contracted in both directions, to both $c_1 \tsubst[\mt x.c_2/`a] $ and $c_2 \tsubst[`M`a.c_1/x]$ (see \Def\ref{lmmt reduction}), with possibly different results.
\namecite{Curien-Herbelin'00} define {\CBN} reduction by not allowing the first contraction, and {\CBV} by not allowing the second; they do not consider strategies for either. 
They show that their interpretation of {\CBN} $\lmu$ into $\lmmt$ (see \Def\ref{CBN CBV Herbelin}) respects reduction in the {\CBN} and {\CBV} sub-systems (modulo $`m$-expansion); moreover, the {\CBV} reduction considered in \cite{Curien-Herbelin'00} for $\lmu$ just restricts $`b$-reduction to terms of the shape $`@ (`Lx.M) V $.
Similar results are obtained in \cite{Rocheteau'05}, but for an extended notion of $\lmu$.

A similar situation exists for $`X$, which has a critical pair in $\cut P `a + x Q $ that in certain circumstances can reduce to both $\cutL P `a + x Q $ and $\cutR P `a + x Q $, terms that (can) run to different results, and {\CBN} and {\CBV} reduction can be defined by blocking one or the other.
\Citeauthor{Bakel-Lescanne-MSCS'08} define an interpretation of $\lmu$ and $\lmmt$ into $`X$, and show that reduction is respected modulo $`m$-expansion; it deals with {\CBN} and {\CBV} reduction only for the interpretations of the {\LC} and $\Lx$ into $`X$.

Notions of {\CBN} and {\CBV} sub-reduction systems for each of the calculi $\lmu$, $\lmmt$, and $`X$ are thus obtained as restrictions on their respective notions of reduction, that differ strongly in origin and character, but are shown to correspond through various interpretations.
This now naturally leads to the following questions: how do these notions compare? and what if reduction \emph{strategies} are considered? what then, if any, is the relation between these different notions? are these the same and correct restrictions?
More broadly, can one be sure that the notions of {\CBN} and {\CBV} reduction and strategies are compatible between the calculi?
Moreover, we will see through the results of our paper that the standard mapping from $\lmmt$ to $\X$ as defined by \citet{Lengrand'03}, respects full reduction, as well as {\CBN} and {\CBV} reduction and strategies. 
So, in particular, there is no need for a separate {\CBN} or {\CBV} interpretation.
Seen that both calculi have clearly defined {\CBN} and {\CBV} reduction and strategies, this seems natural, but that then raises the question: how about $\slmu$ and $\lmmt$?

Notions of {\CBN} and {\CBV} sub-reduction system for each of the calculi $\lmu$, $\lmmt$, and $\X$ are thus obtained as restrictions on their respective notions of reduction, that differ strongly in origin and character, but are shown to correspond through various interpretations.
This now naturally leads to the following questions: what if reduction \emph{strategies} are considered? what then, if any, is the relation between these different notions? are these the same and correct restrictions? 
More broadly, can one be sure that the notions of {\CBN} and {\CBV} strategies are compatible between the calculi?
 
In this paper, we mainly focus on {\CBN} and {\CBV} \emph{reduction strategies}, for all calculi we consider. 
Although these are defined in very different ways, in \Def\ref{our lmu to lmmt} we will present $\lmulmmt[`.]{`.}$, a single, natural interpretation from $\nclmu$ to $\lmmt$, that respects reduction.
As to respecting reduction and strategies, we will argue in \Rem\ref{Herbelin not enough} that it is not possible to show that an interpretation fully respects of $\nclmu$; because of the fact that $\lmu$'s $`m$-reduction adds applications, and $\lmmt$'s does not, $\lmu$'s $`m$-substitution cannot simply be represented by term substitution in $\lmmt$; we will state the provable results that \emph{can} be shown.

To achieve a similar result when comparing $\lmmt$ and $`X$, the situation is slightly more complex, but also easier.
The added complexity lies in the fact that $`X$ is a calculus defined without substitution; reduction is defined by explicitly moving a term through the syntactic structure of another in small steps, much like explicit substitution is defined for \posscite{Bloo-Rose'95} $\Lx$ .
Defining a reduction strategy for a calculus with explicit substitutions is a different story altogether: for the $`l$-calculus, it would only propagate towards the head-variable, not the other variables, as done, for example, in \cite{Bakel-Vigliotti-CONCUR'09}. 
Is it therefore not straightforward to compare {\CBN} and {\CBV} strategies for $\lmmt$ and $\X$; to remedy this, here we will define $\Xs$, a version of $`X$ that uses (implicit) substitution, for which we will follow \namecite{Summers-PhD'08} and define {\CBN} and {\CBV} strategies, and establish the relation between $\lmmt$ and $\Xs$.
We will define $\lmmtX(`.){`.}$, an interpretation from $\lmmt$ to $\Xs$, and show that it respects all reductions and strategies.
We will also define $\Xlmmt[`.]$, a mapping from $\X$ to $\lmmt$ and show that reduction is preserved, but the strategies are not.

We will then compose the various interpretations and define one for $\slmu$ into $\X$, and show in what way it preserves full, \CBN, and {\CBV} reduction; in view of the shortcomings of $\lmulmmt[`.]$, we will not state results about the strategies.

\paragraph{Subsystems of Classical Calculi} Previous work has shown that each of the {\CBN} and {\CBV} subsystems agree -- that is, for example, there is a mapping of call-by-name $\lmu$ into call-by-name $\lmmt$, and $\lmuV$ into call-by-value $\lmmt$ \cite{Curien-Herbelin'00} -- but the mappings are \emph{per evaluation discipline}, and thus do not answer if the restrictions on reduction are analogous.
For example, in $\lmuV$ we have $x(\Mu `a.\Cmd) \reduc \Mu `g.\Cmd [\musub x.`g/`a] $.
However, the {\CBN} translation of \citet{Curien-Herbelin'00}, which gives $\Mu `d.<x|(\Mu `a.c)`.`d>$, does not reduce to to the translation of $\Mu `g.c [\musub x.`g/`a] $.

The question we address in this paper is: is there is a single mapping from $\nclmu$ into $\lmmt$, and a single mapping from $\lmmt$ into $\X$, such that equality is preserved under {\CBN} and {\CBV} reduction, and after restricting the constituent systems, through {\CBN} and {\CBV} reduction strategies?
Concretely, if $M$ and $N$ are a $\nclmu$-terms such that $M \reduc N$, then we wish not just for 
$ \Sem[M] \eqlmmt \Sem[N] $, but also if $M \redCBV N$ then $ \Sem[M] \eqlmmtV \Sem[N] $, and if $M \redCBN N$ then ${\Sem[M] \eqlmmtN \Sem[N]}$, where the latter two equations mean the two terms are equal through paths using only by-value and by-name reductions, respectively, but not necessarily in the same direction.
This latter fact can be understood by seeing that, as in \cite{Curien-Herbelin'00,Bakel-Lescanne-MSCS'08}, it is impossible to show these results with respect to just reduction, since we have to consider reductions going in the opposite direction.
These will be limited to reductions of redexes that are created by the interpretation whilst dealing with the added applications that are the result of $\slmu$'s $`m$-reduction.

This paper presents two such novel mappings, satisfying a subtly stronger property, that a single reduction $M \reduc N$ in the source calculus induces at least one reduction in the target.
More specifically, for the $\lmu$ to $\lmmt$ translation, there is a $t$ such that $\Sem[M] \tcredlmmt t$ and $\Sem[N] \rtcredlmmt t$, and similarly for $\redlmmtV$ and $\redlmmtN$.
The key in the mapping from $\lmu$ to $\lmmt$ is that it allows for an extra possible reduction which allows the head of an application to capture its context in {\CBV}.

\paragraph{Overview}
The first three sections introduce the three calculi of concern: $\nclmu$, $\lmmt$, and $\Xs$.
The first mapping, $\lmulmmt[`.]$, from $\nclmu$ to $\lmmt$ is given in \Sect\ref{sec:TranslateLmuLmmt}, while 
\Sect\ref{sec:InterpLmmtX} defines the second mapping, $\lmmtX(.)$, from $\lmmt$ into $`X$.
Both the mappings $\lmulmmt[`.]$ and $\lmmtX(.)$ respect equality, assignable types, and call-by-name and call-by-value equality, and furthermore respect that a reduction in the source calculus gives rise to at least one reduction in the target.
In \Sect\ref{Xi in lmmt} we will study the natural encoding $\Xlmmt[`.]$ of $`X$'s terms into $\lmmt$.
Optimising that encoding slightly by avoiding to create too many cuts, we will show that $ \lmmtX(.) $ is the right-inverse of $\Xlmmt[`.]$, but that $\Xlmmt[`.]$ is only $ \lmmtX(.) $'s left-inverse up to extensionality.
We will show that $\Xlmmt[`.]$ respects reduction in $\Xs$ by equality in $\lmmt$: to simulate $\Xs$'s substitution in $\lmmt$ under the encoding, in the image rules $(`m)$ and $(\MuT)$ are needed, so this encoding cannot respect the two strategies.
In \Sect \ref{slmu in X} we will combine the efforts of \Sect\ref{sec:TranslateLmuLmmt} and \ref{sec:InterpLmmtX} to define a mapping $\slmuX(`.)`. $ from $\slmu$ to $\Xs$ and prove that full, \CBN, and {\CBV} reduction are preserved.




\setbox81=\hbox{\textrm{C}}
\renewqsymbol{`C}{\copy81}

\def \reductionfigure{ \begin{figure*}[t]
\point.9pt	%
 \[ \def \Turnlmu{\Turn} 
\Inf	[\arrE]
	{\Inf	[`m]
		{\Inf	{\Inf	[`m]
			{\InfBox{ \derlmu `G |- M : A\arr B | `a`:A\arr B,`g`:D,`D }
			}{~ \derlmu `G |- {`M`g . [`a] M } : D | `a`:A\arr B,`D ~ }
		}{\InfBox<\Choose{126}{146}>{\derlmu `G |- \Cont[{`M`g . [`a] M }] : C | `a`:A\arr B,`D } }
		}{\derlmu `G |- {`M`a . [`b] \Cont[{`M`g . [`a] M }]} : A\arr B | `D }
\InfBox{ \derlmu `G |- N : A | `D }
	}{ \derlmu `G |- {`@ (`M`a . [`b] \Cont[{`M`g . [`a] M }]) N } : B | `D }
 \kern-10mm
\Inf	[`m]
	{\Inf	{\Inf	[`m]
		{\Inf	[\arrE]
			{\InfBox{ \derlmu `G |- M : A\arr B | `d`:B,`g`:D,`D }
	 \Inf	[\Weak]
			{\InfBox{ \derlmu `G |- N : A | `D }
			}{\derlmu `G |- N : A | `d`:B,`g`:D,`D }
			}{\derlmu `G |- {`@ M N } : B | `d`:B,`g`:D,`D }
		}{ \derlmu `G |- {`M`g . [`d] `@ M N } : D | `d`:B,`D }
		}{\InfBox<\Choose{116}{134}>{\derlmu `G |- \Cont[{`M`g . [`d] `@ M N }] : C | `d`:B,`D } }
	}{\derlmu `G |- {`M`d . [`b] \Cont[{`M`g . [`d] `@ M N }]} : B | `D }
 \]
 \[ \def \Turnlmu{\Turn} \def \ContS{\copy169}
\Inf	[\arrE]
	{\InfBox{ \derlmu `G |- N : A\arr B | `D }
	 \Inf	[`m]
		{\Inf	{\Inf	[`m]
			{\InfBox{ \derlmu `G |- M : A | `a`:A,`g`:D,`D }
			}{ \derlmu `G |- {`M`g . [`a] M } : D | `a`:A,`D }
		}{\InfBox<\Choose{108}{128}>{\derlmu `G |- \Cont[{`M`g . [`a] M }] : C | `a`:A,`D } }
		}{\derlmu `G |- {`M`a . [`b] \Cont[{`M`g . [`a] M }]} : A | `D }
	}{ \derlmu `G |- {`@ N (`M`a . [`b] \Cont[{`M`g . [`a] M }]) } : B | `D }
 \kern-4mm
\Inf	[`m]
	{\Inf	{\Inf	[`m]
		{\Inf	[\arrE]
			{\Inf	[\Weak]
			{\InfBox{ \derlmu `G |- N : A\arr B | `D }
			}{\derlmu `G |- N : A\arr B | `d`:B,`g`:D,`D }
	 \InfBox{ \derlmu `G |- M : A | `d`:B,`g`:D,`D }
			}{\derlmu `G |- {`@ N M } : B | `d`:B,`g`:D,`D }
		}{\derlmu `G |- {`M`g . [`d] `@ N M } : D | `d`:B,`D }
		}{\InfBox<\Choose{114}{134}>{\derlmu `G |- \Cont[{`M`g . [`d] `@ N M }] : C | `d`:B,`D } }
	}{\derlmu `G |- {`M`d . [`b] \Cont[{`M`g . [`d] `@ N M }]} : B | `D }
 \]
 \caption{An illustration of structural reduction in $\lmu$.}
 \label{Reduction in lmu}
 \end{figure*}}

 \section{{\nclmuPDF}: The Symmetric {\lmuPDF}-Calculus} 
 \label{lmu}
 \label{sec:LMu}

 \citet{Parigot'92} introduced $\lmu$ as an extension of the {\CBN} $\lambda$-calculus by adding an operator $\mu$, along with its associated reduction rules,
 \[ \begin{array}{rcl}
 `@ (`M`a . [`a] M ) N & \reduc & `M`g . [`g] `@ (M\mursubst[N.`g/`a]) N, \\
 `@ (`M`a . [`b] M ) N & \reduc & `M`g . [`b] M \mursubst[N.`g/`a],
 \end{array} \]
in which $`g$ is \textit{fresh}, and the `structural substitution' $\mursubst[N.`g/`a]$ recursively replaces each of $M$'s sub-terms of the form $[`a]P$ by $[`a]PN$ (see Definition \ref{def:Lmu:StructuralSubstitution}).
Unlike $`b$-reduction, $\mu$-reduction keeps the $\mu$-binding on the reduct:
this allows $\lmu$ to encode \emph{control operators} -- those that manipulate the (applicative) evaluation context, or continuation, of terms, by dealing with a term at the time.

The variant of $\lmu$ considered in this paper extends that defined by Parigot by including the `left-$\mu$' reduction rules (discussed at the end of Section 3.2 in \cite{Parigot'92}),
 \[ \begin{array}{rcl}
 `@ N (`M`a . [`a] M ) & \reduc & `M`g . [`g] `@ N (\mulsubst[N.`g/`a]M), \\
 `@ N (`M`a . [`b] M ) & \reduc & `M`g . [`b] \mulsubst[N.`g/`a]M,
 \end{array}\]
which ends up incorporating and generalising the \CBV-variant as defined by Ong and Stewart \mycite{Ong-Stewart'97}.
The resulting calculus corresponds to David and Nour's \emph{symmetric $\lmu$} ($\nclmu$) \cite{David-Nour-TLCA05}, and has a non-confluent reduction system, a property shared by the two other calculi with which we wish to compare $\nclmu$.

 \subsection{Proof-Theoretic Foundation of {\lmuPDF}}
 \label{sec:Lmu:ProofTheoreticFoundation}
An example of an approach for representing classical proofs, one can view $\lmu$ as a natural deduction system \cite{Prawitz'65} in which there is one main conclusion that is being manipulated and possibly several alternative ones.
It is a terms-as-proofs representation of an implicative classical logic with focus.
The formulas for this logical system $\TurnF$ are
 \[ \begin{array}{rcl}
	A,B &::=& `v \mid A\arrow B
 \end{array} \]
and the contexts $`G$ and $`D$ are multi-sets of formulas. 
It has the following inference rules:
{ \def \derF #1 |- #2 | #3 {#1 \Turn #2 \Mid #3 }
 \[ \begin{array}{rl@{\dquad}rl@{\dquad}rl}
(\Ax) : &
\Inf	{\derF `G,A |- A | `D }
	&
(\arrI) : &
\Inf	{\derF `G,A |- B | `D
	}{ \derF `G |- A\arrow B | `D }
	&
(\arrE) : &
\Inf	{\derF `G |- A\arrow B | `D
 \quad
 \derF `G |- A | `D
	}{ \derF `G |- B | `D }
 \end{array} \]
 \[ \begin{array}{rl@{\dquad}rl}
(\Act) : &
\Inf	{\derF `G |- `B | A,`D
	}{ \derF `G |- A | `D }
	&
(\Pass) : &
\Inf	{\derF `G |- A | A,`D
	}{ \derF `G |- `B | A,`D }
 \end{array} \] }

The intention of this system is to express (a) classical logic, for which it encapsulates the `Proof by Contradiction' inference rule, $(\PbC)$.
Viewing the formulas in $`D$ as negated assumptions, any statement $ \derF `G |- A | `D $ can be seen as $ \deriv `G,\neg `D |- A $ (where $\neg `D$ lists the negated versions of all formulas in $`D$).
With this understanding, the rules $(\Act)$ and $(\Pass)$ correspond to allowing the following variants of rules $(\PbC)$ and $(\negE)$:
 \[ \begin{array}{c@{\qquad}c}
\Inf	[\PbC]
	{ \derlog `G,\neg `D,\neg A |- `B
	}{ \derlog `G,\neg `D |- A }
	&
\Inf	[\negE]
	{\Inf	[\Ax]
		{ \derlog `G,\neg `D,\neg A |- \neg A }
 \quad
 \derlog `G,\neg `D,\neg A |- A
	}{ \derlog `G,\neg `D,\neg A |- `B }
 \end{array} \]

$\lmu$ seeks to provide a proof-term syntax for this deduction system, in which each logical rule corresponds to a term-constructor.
Typing annotates in judgements $ \derF `G |- A | `D $ each formulae in $`G$ and $`D$ with term-variables $x,y,z,\dots$ and $\mu$-variables (or names) $`a,`b,`g,\dots$, respectively, and the formula $A$ by a term $t$ whose variables/names occur in the annotated $`G$ and $`D$.
Variables, abstractions, and application encode $(\Ax)$, $(\arrI)$, and $(\arrE)$, as usual.
The two new rules, $(\Act)$ and $(\Pass)$, require term-constructors that explicitly interact with the set of conclusions $\Delta$.
These are the naming constructor $[`a]M$, denoted by $`C$, and name binding $`m`a.`C$:

{\def\Turnlmu{\Turn}%
 \[ \begin{array}{c@{\qquad}c}
(\Act):
\Inf	{ \derlmu `G |- C : `B | `a`:A,`D }
	{ \derlmu `G |- `m`a.`C: A | `D }
 &
(\Pass):
\Inf	{ \derlmu `G |- M : A | `a`:A,`D }
	{ \derlmu `G |- [`a]M : `B | `a`:A,`D }
 \end{array} \] %
}

This is of course only half of the work needed; the real difficulty is in determining how proofs using $(\Act)$ and $(\Pass)$ should contract, as these contractions will induce the notion of reduction on $\lmu$.
 \namecite{Parigot'92} bases the reductions of $\lmu$ on the cut-elimination procedure of earlier work on free deduction~\cite{Parigot'91}; we will cover the former in detail in the next section. 

 \subsection{Syntax and Reduction of {\nclmuPDF}}
 \label{sec:Lmu:Syntax}
 \label{sec:Lmu:Reduction}

The terms of $\nclmu$ are precisely those of $\lmu$ \cite{Parigot'92}.
 \begin{definition}[Syntax of $\nclmu$] \label{lm-terms}
	The following defines the syntax of $\lmu$, over \emph{variables} $x,y,\dots$ and \emph{names} $`a,`b,\dots$:
 \[\begin{array}{rcl@{\quad}l}
M,N &::=&  V \mid `@ M N \mid `m`a. `C & (\emph{terms})  \\
V &::=& x \mid `Lx.M & (\emph{values})  \\
`C &::=& [`b]M & (\emph{named terms, commands})
 \end{array} \] 

$`l$ and $`m$ bind variables and names, respectively. 
The notions of free and bound names/variables follow the usual definition, and we accept Barendregt's convention \cite{Barendregt'84} to keep free and bound names/variables distinct, using (silent) $`a$-conversion whenever necessary.

We write $x \ele M$ ($`a \ele M$) if $x$ ($`a$) occurs in $M$, either free of bound, and call a term \emph{closed} if it has no free names or variables.
For brevity, we will often treat commands $[`a]M$ as terms.
 \end{definition}

Morally, $`m`a.`C$ denotes a term that captures a fragment of its applicative context, and sends it to the $`a$-named sub-commands of $`C$, $[`a]M$.
In turn, $[`a]M$ denotes the term $M$ in the context to which $`a$ binds.
Whilst this contextual understanding of $\lmu$ aids intuition, one must remember that applicative contexts are not a primitive part of the syntax of $\lmu$, and that $\mu$ captures such contexts `piecewise' -- that is -- capturing only the portions $M"[]"$ or $"[]"N$ of the wider context, one term at a time.
The particular feature is due to $`m$-contractions re-introducing $`m$ abstractions, and will play a role in our encodings.

The reduction rules for terms represent contractions of their corresponding proofs in $\TurnF$.
The rules $\lmu$ inherits from the {\LC} are the \emph{logical} reductions, \emph{i.e.}~they deal with the removal of an introduction-elimination pair for a type constructor.
As logical reductions require substituting axioms by proofs of the formulae of said axioms, $\lmu$ uses (implicit) substitution for the corresponding proof-terms.
This is the usual notion of substitution in  ereby $M\tsubst[N/x]$ stands for the (instantaneous) substitution of all occurrences of $x$ in $M$ by $N$.
The system $\TurnF$ also features \emph{structural} cuts, that change the focus of a proof.
Understanding the alternative conclusions as negated premises,\footnote{Only names can have negated types, not terms; for a variant of $\lmu$ with negation as a type constructor, see \cite{Bakel-LMCS'23}.} the structural rules essentially deal with introduction-elimination pairs for negation with these premises, 
and swap to one of the other conclusions.
These structural rules then have proof contractions, which redirect the right-hand premise of the structural cut to multiple logical cuts in the left-hand premise, replacing occurrences of axioms using the main formula of the right-hand premise.

$\lmu$ uses the constructs $\mu`a.M$ and $[`a]M$ to witness \emph{activation} and \emph{passivation} in $\TurnF$, which together express that the focus of the derivation changes (see \Def\ref{tas lmu}) -- a \emph{context switch}.
In this sense, terms of the form $`@ (`m`a.`C_1 ) M_1 $ and $`@ M_2 (`m`b.`C_2 ) $ represent structural cuts.
It follows that $\lmu$ then needs a notion of `replacing structural cuts with many logical cuts' in order to match the same concept in $\TurnF$; this is the purpose of \emph{structural substitution}.

The aforementioned terms $`@ (`m`a.`C_1 ) M_1 $ and $`@ M_2 (`m`b.`C_2 ) $ must contract by generating the appropriate logical cuts in the subcommands of $`C_1$ and $`C_2$.
Concretely, the first must reduce to a term $(\mu`a.`C'_1)$, in which every subcommand of the form $[`a]N$ is replaced by $[`a]`@ N M_1 $ (\ie a use of the axiom $`a$), and the second to a term $(\mu`b.`C'_2)$, replacing $[`b]P$ by $[`b]`@ M_2 P $.
Looking towards typing, as the type of $`a$ must match that of $N$ in $[`a]N$, the type of $`a$ must change after the structural substitution; if we have $N : A\arrow B$, then $`@ N M_1 : B$, meaning $`a$ must have type $A\arrow B$ in the former, and type $B$ in the latter.
It will make more sense to instead replace $`a$ by a new, fresh variable, so that we need not worry about these changing types.

The first notion of structural substitution, defined by \citet{Parigot'92}, is then written $M \mursubst[N.`g/`a]$, denoting the term obtained from $M$ in which every command of the form $[`a]P$ is replaced by $[`g] P N$, with $`g $ a fresh name.
The second notion, suggested by \citet{Ong-Stewart'97} with a different notation, is $\mulsubst[N.`g/`a] M$, in which every $[`a]P$ in $M$ is replaced by $[`g] N P$.

 \begin{definition}[Structural substitution\footnotemark]
 \label{def:Lmu:StructuralSubstitution}
 \emph{Right-structural substitution}, $M \mursubst[N.`g/`a]$, and \emph{left-structu\-ral substitution}, $\mulsubst[N.`g/`a] M $, are defined inductively over pseudo-terms:
 \[\begin{array}{c@{\dquad}c}	
 \begin{array}[t]{@{}r@{\dsk}l@{~}c@{~}l}
x & \mursubst[N.`g/`a] & \ByDef & x \\
(`Lx.M) & \mursubst[N.`g/`a] & \ByDef & `Lx . (M \mursubst[N.`g/`a] ) \\
(PQ) & \mursubst[N.`g/`a] & \ByDef & `@ (P \mursubst[N.`g/`a]) (Q \mursubst[N.`g/`a]) \\ {}
[`a]M & \mursubst[N.`g/`a] & \ByDef & [`g] `@ (M \mursubst[N.`g/`a]) N \\ {}
[`b]M & \mursubst[N.`g/`a] & \ByDef & [`b] ( M \mursubst[N.`g/`a] ) \quad (`b \not= `a) \\
(`m`b.`C ) & \mursubst[N.`g/`a] & \ByDef & `m`d. `C \mursubst[N.`g/`a]
 \end{array}
	&
 \begin{array}[t]{r@{\dsk}l@{~}c@{~}l@{}}
 \mulsubst[N.`g/`a] &x & \ByDef & x \\
 \mulsubst[N.`g/`a] & (`Lx.M) & \ByDef & `Lx . (\mulsubst[N.`g/`a] M) \\
 \mulsubst[N.`g/`a] & (PQ) & \ByDef & `@ (\mulsubst[N.`g/`a] P ) (\mulsubst[N.`g/`a] Q ) \\
 \mulsubst[N.`g/`a] & [`a]M & \ByDef & [`g] `@ N (\mulsubst[N.`g/`a] M ) \\
 \mulsubst[N.`g/`a] & [`b]M & \ByDef & [`b] \mulsubst[N.`g/`a] M \quad (`b \not= `a) \\
 \mulsubst[N.`g/`a] & `m`b.`C & \ByDef & `M`d . \mulsubst[N.`g/`a] `C
 \end{array}
 \end{array} \]
 \end{definition}
 \footnotetext{%
 \citet{Parigot'92} only defines the first variant of these notions of structural substitutions (so does not use the prefix `right'); \citet{Ong-Stewart'97} define the two notions together using a notion of contexts.
}

With structural substitution in place, we can now explore reduction of terms using $\mu$.
Different variants of $\mur$ and $\mul$, such as call-by-name and -value, exist in the literature; we follow the definitions of \citet{Ariola-Herbelin'03}.

 \begin{definition}[$\nclmu $ reduction] 
 \label{mu reduction definition}
 \label{def:Lmu:Reduction}
 \begin{enumerate}
 \firstitem
The reduction rules of $\nclmu$ are:
 \[\begin{array}{rrcl@{\quad}l}
 \textit{logical } (`b) : & `@ (`Lx.M ) N & \reduc & M \tsubst[N/x] \\
 \textit{right-structural } (\mur) : & `@ (`m`a . `C ) N & \reduc & `M`g . `C \mursubst[N.`g/`a] & (`g\textit{ fresh}) \\
 \textit{left-structural } (\mul) : & `@ M (`m`a.`C) &\reduc& `M`g . \mulsubst[M.`g/`a] `C & ( `g\textit{ fresh}) \\
 \textit{renaming } (\Rename) : & [`b] `m`g.`C & \reduc & `C \tsubst[`b/`g] \\
 \textit{erasing } (\Erase) : & `M`a . [`a] M & \reduc & M & (`a\notele M)
 \end{array}\]

 \item
We write $ \rednclmu$ for the relation that is the compatible closure of these rules, and $ \rtcrednclmu$ for the reflexive and transitive closure of $\rednclmu$.

 \end{enumerate}
 \end{definition}
Term substitution $\tsubst[N/x]$ and the $(`b)$ rule are the same as for the \LC.
The $\mul$ and $\mur$ reductions encode the structural proof contractions of {$\TurnF$};
computationally, they capture a single-term applicative context to a name $`a$, and place every $`a$-named term in this context.
The rules $(\Rename)$ and $(\Erase)$ are best understood computationally.
The first says that binding to $`g$ the context that $`b$ denotes, is the same as just having each $`g$-named sub-term directly use the context $`b$.
The $(\Erase)$ rule is an $\eta$-like rule for $\mu$; if all we do with the current context $`a$ is run $M$ in that context, 
then we may as well not bother capturing said context.
Note that $(\Erase)$ is the only way to eliminate the top-most $\mu$ in a term of the form $`m`a .`C$.   
 
We highlight how $\mul$ and $\mur$ interact with commands.
For $\mur$ reduction, we have:
 \[ \begin{array}{rcl@{\quad}l}
	`@ (`M`a.[`b]M) N &\reduc& `M`g.[`b]M \mursubst[N.`g/`a] \quad (`b \not=`a) & \textrm{and} \\
	`@ (`M`a.[`a]M) N &\reduc& `M`g.[`g] `@ (M \mursubst[N.`g/`a]) N
 \end{array} 
 \]
and for $\mul$ reduction, we have:
 \[ \begin{array}{rcl@{\quad}l}
`@ N (`M`a.[`b]M) &\reduc& `M`g.[`b] \mulsubst[N.`g/`a] M \quad (`b \not=`a) & \textrm{and} \\
`@ N (`M`a.[`a]M) &\reduc& `M`g.[`g] `@ N (\mulsubst[N.`g/`a] M)
 \end{array} \]
In each reduction, the structural substitution for a name $`a$ only applies to an $`a$-named term; it otherwise skips the substitution and recurses on sub-terms.
Ultimately, each reduction deposits $N$ at every $`a$-named term, in either head (via $\mul$) or argument (via $\mur$) position.

The reduction rules have two critical pairs:
 \[ \begin{array}{r@{\dquad}c@{\dquad}l}
	`@ (`M`a . [`b] M ) (`M`g . [`d] N ) & \textrm{and} &
	`@ (`Lx . M) (`M`g . [`d] N ) .
 \end{array} \]
The first reduces to both
$ `M`s .[`b] M \mursubst[{(`M`g . [`d] N )}.`s/`a]$ and
$ `M`t . [`d] \mulsubst[{(`M`a . [`b] M)} . `t / `g] N $
(where all names are distinct), and these need not be the same term.
Similarly, the second reduces to both
$ M \tsubst[{(`M`g . [`d] N )}/x] $ and
$ `M `t . [`d] \mulsubst[{(`Lx . M)} . `t / `g] N $.
It follows that $\rednclmu$ is not confluent.

 \begin{remark} \label{added application} 
Notice that in the \LC, since $`b$-reduction is based on term-substitution, \ie replacing term variables by terms, it cannot create applications or abstractions: a variable $x$ can be replaced by an application $`@ M N $, but in the tree that represents the $`l$-term, no new application nodes can be created.
 
\reductionfigure

This is no longer true in $\nclmu$ (or $\lmu$, for that matter) where a $`m$-contraction can create applications; \Fig \ref{Reduction in lmu} illustrates the reduction steps
 \[ \begin{array}{r@{\quad}rcl}
	(\mur): & `@ (`M`a . [`b] \Cont[{`M`g . [`a] M }]) N &\rednclmu& `M`d . [`b] \Cont[{`M`g . [`d] `@ M N }] \\
	(\mul): & `@ N (`M`a . [`b] \Cont[{`M`g . [`a] M }]) &\rednclmu& `M`d . [`b] \Cont[{`M`g . [`d] `@ N M }]
 \end{array} \]
 (where $ `b`:C \ele `D $ and $`a$ does not occur in $M$).
Here the application $`@ M N $ in the first, and $`@ N M $ in the second, are new. 
This is a special feature of $\lmu$, and is not part of either $\lmmt$ nor $\X$, even when adapting to the different formalisms, and creates anomalies when interpreting $\nclmu$ into those other calculi.
Since both the left and right structural substitution create an application, a simulation would have to account for that. 
We will see this addressed in the sections that follow.
 \end{remark}

Historically, removing the critical pairs has led to the definition of {\CBN} and {\CBV} sub-reduction systems.
For \CBN, the standard restriction simply deletes the rule $(\mul)$, which then yields Parigot's original $\lmu$ calculus.\footnote{It is worthwhile noticing that $\lmu$ still has a critical pair in (assuming all names are distinct)
 \[ \begin{array}{rcl}
 `@ ( `M `a . `b . `M `g . `d P ) Q & \redlmu & 
 	\begin{cases}
`M `s . `b `M `g . `d P \mursubst[Q.`s/`a] \\
 `@ ( `M `a . `d P \tsubst[`b/`g] ) Q 
	\end{cases} 
 \end{array} \]
However, this does not create a problem; \citet{Py-PhD'98} shows that $\lmu$-reduction is confluent.
Notice that the first reduction step destroys the outer-most redex; this created a problem with an earlier attempt to prove confluence using Tait and Martin-L\"of's technique (see \cite{Barendregt'84,Pfenning'92});
Py solved this problem by using \posscite{Aczel'78} variant of the definition of parallel reduction. 
}
For \CBV, many different approaches exist: one approach eliminates the critical pairs by limiting the rules $(\mul)$ and $(`b)$ to only apply in case the operand is a value $V$ (\ie a variable, or an abstraction):
 \[ \begin{array}{rcl@{\quad}l}
	`@ (`Lx.M ) V & \reduc & M \tsubst[V/x] \\
	`@ V (`m`a.`C) &\reduc& `M`g . \mulsubst[V.`g/`a] `C & (`g\textit{ fresh})
 \end{array} 
 \]
The term $ `@ (`M`a . [`b] M ) (`M`g . [`d] N ) $ can then only be a $\mur$-redex, and $ `@ (`Lx . M) (`M`g . [`d] N ) $ only a $\mul$-redex.
This notion of {\CBV} relies upon not considering $`m$-abstractions values, although one could argue that it can be seen as a meaningful term.
This suggests allowing the contraction of \emph{any} redex only when the argument is a value and changing rule $(\mur)$ as well:
 \[ \begin{array}{rcl@{\quad}l}
	`@ (`m`a . `C ) V & \reduc & `M`g . `C \mursubst[V.`g/`a] & (`g\textit{ fresh})
 \end{array} \]
as for example \citet{Rocheteau'05} does, but not every {\CBV} reduction for $\lmu$ is defined in this way.

We focus on a more specific sort of sub-reduction system -- \emph{reduction strategies} -- in which any term has at most one redex.

 \begin{definition}[{\CBN} and {\CBV} reduction and strategies for $\nclmu$] \label{cbn cbv nclmu}

 \begin{enumerate}
 \firstitem \label{cbn def}
Call-by-name reduction $\rednclmun$ is defined by eliminating rule $(\mul)$ from $\rednclmu$.

 \item
The {\CBN} \emph{evaluation strategy} $\rednclmuN$ is a restriction of $\rednclmun$ by limiting the contextual closure to:
 \[\begin{array}{rcl}
P \reduc Q &\Then&
 \begin{cases}{lcl@{\dquad}l}
`@ P M &\reduc& `@ Q M & ( P \not= `m`g.`C ) \\
`M`a . [`b] P &\reduc& `M`a . [`b] Q & ( P \not= `m`g.`C ) \\
 \end{cases}
 \end{array}\]

 \item 
Call-by-value reduction on $\slmu$, $\rednclmuv$ is defined by restricting the rules $(`b)$ and $(\mul)$:
 \[ \begin{array}[t]{rrcl@{\quad}l}
(`b_{\vsubscr}) : & `@ ( `Lx.M ) V &\reduc& M\tsubst[V/x]
 \\
(\mulv) : & `@ V (`m`a.`C) &\reduc& `M`g . \mulsubst[V.`g/`a] `C & ( `g\textit{ fresh})
 \end{array} \]

 \item
The {\CBV} \emph{evaluation strategy} $\rednclmuV$  is a restriction of $\rednclmuv$ by limiting the contextual closure to: %
 \[\begin{array}{rcl}
P \reduc Q &\Then&
	\begin{cases}{lcl@{\dquad}l}
		`@ P M &\reduc& `@ Q M & ( P \not= `m`g.`C ) \\
		`@ V P &\reduc& `@ V Q & ( P \emph{ is an application}) \\
		`M`a . [`b] P &\reduc& `M`a . [`b] Q & ( P \not= `m`g.`C ) \\
	\end{cases}
 \end{array}\]

 \end{enumerate}
 \end{definition}

Both $\rednclmuN$ and $\rednclmuV$ are reduction strategies as they pick precisely one $\rednclmu$-redex to contract.
These strategies are much stricter than $\rednclmu$ reduction: a term may be in either {\CBN} or \CBV-normal form (\emph{i.e.}~reduction has stopped), but need not be a $\rednclmu$-normal form.

 \subsection{Typing and Soundness for {\nclmuPDF}}
 \label{sec:Lmu:Typing}

Since terms of $\nclmu$ are the terms of $\lmu$, type assignment is exactly the same for both.
Judgements are of the shape ${\derlmu `G |- M : A | `D }$, in which $`D$ consists of pairs of names and types; the left-hand context $`G$, as for the $`l$-calculus, contains pairs of variables and types, and represents the types of the free term variables of $M$.
The right-hand context $`D$ represents the \emph{alternative conclusions}, and contains the free names of $M$; each typed name $`b `: B$ should match the the types of all terms $N$ such that $[`b]N$ is a subterm of $M$.
The formula $A$ in the judgement is the \emph{main}, or \emph{active}, conclusion, labelled by the term $M$.
 \begin{definition}[Typing for $\lmu$ \cite{Parigot-Brno'93}] \label{tas lmu} \label{lmu rules}
 \begin{enumerate}
 \firstitem
	With $`v$ ranging over a countably infinite set of type-variables,
	types have the grammar:
 \[ \begin{array}{rcl}
		A,B &::=& `v \mid (A\arrow B)
 \end{array} \]
We will remove brackets in types when possible, using $ A \arrow B \arrow C = (A \arrow (B \arrow C)) $ 
 \item
	A \emph{context} (of term variables) $`G$ is a partial mapping from term variables to types, denoted as a finite set of \emph{statements} $x`:A$, such that the \emph{subjects} of the statements ($x$) are distinct.
	We write $`G_1,`G_2$ for the \emph{compatible} union of $`G_1$ and $`G_2$ (if $x`:A_1 \ele `G_1$ and $x`:A_2 \ele `G_2$, then $A_1 = A_2$); $`G, x`:A$ for $`G, \Set{x`:A}$; $x \notele `G$ if there exists no $A$ such that $x`:A \ele `G$; and $`G\Except x$ for $`G\Except \Set{x`:A}$.

 \item
	A \emph{context of names} $`D$ (or \emph{co-context}) is a partial mapping from \emph{names} to types, denoted as a finite set of \emph{statements} $`a`:A$, such that the \emph{subjects} of the statements ($`a$) are distinct.
	The notions $`D_1,`D_2$, as well as $`D, `a`:A$ and $`a \notele `D$ are analogous to those for contexts.

 \item
	A \emph{judgement} is an expression of the shape $\derlmu `G |- M : A | `D $;
	we extend the notion of free and bound variables and names to judgements $\derlmu `G |- M : A | `D $ and consider the term variables appearing in $`G$ and the names occurring in $`D$ as binding the free occurrences in $M$.

 \item
		The type assignment rules for $\lmu$ are:
		{\def\Turnlmu{\Turn}
 \[
	 \begin{array}{rl@{\quad}rl@{\quad}rl}
			(\Ax) : &
	 \Inf	{ \derlmu `G,x`:A |- x : A | `D }
			&
			(\arrI) : &
	 \Inf	{ \derlmu `G,x`:A |- M : B | `D
			}{ \derlmu `G |- {`Lx.M} : A\arrow B | `D }
			&
			(\arrE) : &
	 \Inf	{ \derlmu `G |- M : A\arrow B | `D
		 \quad
		 \derlmu `G |- N : A | `D
			}{ \derlmu `G |- { M N } : B | `D }
 \end{array} \]
 \[
	 \begin{array}{rl@{\quad}l}
			(`m) : &
	 \Inf	{ \derlmu `G |- M : B | `a`:A,`b`:B,`D
			}{ \derlmu `G |- {`M`a.[`b]M} : A | `b`:B,`D }
	 \quad
	 \Inf	{ \derlmu `G |- M : A | `a`:A,`D
			}{ \derlmu `G |- {`M`a.[`a]M} : A | `D }
 \end{array} \] }
		We will write $\derlmu `G |- M : A | `D $ for judgements derivable in this system.

 \end{enumerate}
 \end{definition}
The rules $(\Ax)$, $(\arrI)$, and $(\arrE)$ are as usual for the \LC, and do not interact with the co-context.
The $(`m)$ rules are best understood in two parts:
if $M$ is a term of type $A$, and $`a$ is a name of type $A$, then $[`a]M$ is a `well-formed' command;
then, if we have a well-formed command $`C$ with an alternate conclusion $`b `: B$, we can activate this conclusion.
One can think of $[`a]M$ as storing the type of $M$ amongst the alternative conclusions by giving it the name $`a $; $\mu`a.`C$ will then retrieve the type of $M$ from these conclusions.

Computationally, one can interpret names as references to continuations (\ie applicative contexts), so that $`a `: A$ occurs in the co-context means that the term has access to a continuation that accepts a term of type $A$.
The well-formed command part of the $(\mu)$ reduction rule then says that, if we also have a term $M$ of the same type $A$, we can send it to the continuation $`a$ in the command $[`a]M$.
The typing of $\mu$-binding represents capturing such a continuation; if a continuation expects a term of type $A$, and a command $`C$ has sent terms of type $A$ to $`a$, then, by assigning the type $A$ to $\mu`a.`C$, we ensure that this term will only appear in continuations expecting terms of type $A$, thus making them valid to capture.

The type system is strongly related to $\TurnF$: if we erase all term information from the inference rules, we get the rules from $\TurnF$, except for the variants of rule $(`m)$, which we can infer:
 \[
\Inf	[\Act]
	{\Inf	[\Pass]
	{\InfBox{\derlog `G |- B | A,B,`D }
	}{ \derlog `G |- `B | A,B,`D }
	}{ \derlog `G |- A | B,`D }
 \qquad
\Inf	[\Act]
	{\Inf	[\Pass]
	{\InfBox{\derlog `G |- A | A,`D }
	}{ \derlog `G |- `B | A,`D }
	}{ \derlog `G |- A | `D }
 \]
Thus, the erased $(`m)$ rules are derivable in $\TurnF$.

The following result is standard and of use in the proofs below.

 \begin{lemma} [Weakening and thinning for $\Turnlmu$] \label{lmn thinning lemma} \label{lmn weakening lemma}
	The following rules for \emph{weakening} and \emph{thinning} are admissible for $\Turnlmu$:
 \[ \def \Turnlmu {\Turn}
 \begin{array}{rl@{\dquad}rl}
		(\Weak) : &
\Inf	[`G\subseteq `G',`D\subseteq `D']
		{\derlmu `G |- M : A | `D
		}{\derlmu `G' |- M : A | `D' }
		&
		(\Thin) : &
\Inf	[<45mm>{`G' = \Set{x`:B \ele `G \mid x \ele \fv(M)}, ~ \\ ~ `D' = \Set{`a`:B \ele `D \mid `a \ele \fn(M)}}]
		{\derlmu `G |- M : A | `D
		}{\derlmu `G' |- M : A | `D' }
 \end{array} \]
 \end{lemma}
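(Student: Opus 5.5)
Both rules will be proved by induction on the structure of the derivation of $\derlmu `G |- M : A | `D $, with a case for each rule of \Def\ref{tas lmu}: $(\Ax)$, $(\arrI)$, $(\arrE)$, and the two variants of $(`m)$. Since the rules are syntax-directed, this is the same as induction on $M$. Throughout, Barendregt's convention lets us assume that bound variables and names in $M$ do not occur in $`G'$ or $`D'$, so extending a context with a bound variable or name never clashes.

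For $(\Weak)$, suppose $`G\subseteq `G'$ and $`D\subseteq `D'$.
\begin{enumerate}
\item $(\Ax)$: here $x`:A\ele `G\subseteq `G'$, so the axiom applies directly with $`G'$ and $`D'$.
\item $(\arrI)$: the premise has context $`G,x`:A$. Since $x$ is fresh, $`G,x`:A \subseteq `G',x`:A$, so the induction hypothesis applies to the premise, and we reapply $(\arrI)$.
\item $(\arrE)$: apply the induction hypothesis to both premises and reapply the rule.
\item $(`m)$: the premise has co-context $`a`:A,`b`:B,`D$ (or $`a`:A,`D$). Since $`a$ is bound and hence fresh for $`D'$, this co-context is contained in $`a`:A,`b`:B,`D'$. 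We apply the induction hypothesis and then the same $(`m)$ variant; the statement $`b`:B$ is still present in the conclusion's co-context because $`b`:B\ele`D\subseteq`D'$.
\end{enumerate}

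For $(\Thin)$ the induction is the same, but we must show that restricting to $\fv(M)$ and $\fn(M)$ preserves derivability at each rule.
\begin{enumerate}
\item $(\Ax)$: $\fv(x)=\{x\}$, so $`G'=\{x`:A\}$ and $`D'=\emptyset$, and the axiom gives the judgement.
\item $(\arrI)$: by the induction hypothesis the premise is derivable with context restricted to $\fv(M)$. If $x\ele\fv(M)$, that restriction is $`G',x`:A$. If $x\notele\fv(M)$, we first weaken with $x`:A$ using the admissible $(\Weak)$ just proved. Then we reapply $(\arrI)$.
\item $(\arrE)$: the induction hypothesis yields each premise with contexts restricted to the free variables and names of $M$ and of $N$ separately. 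We weaken both to the restriction for $`@ M N$, which is the union of the two, and then apply $(\arrE)$. Here the compatibility of unions in the definition of contexts guarantees the union is a well-defined context.
\item $(`m)$: the premise co-context is restricted to $\fn(M)$. We must reinstate $`a`:A$ if $`a$ happens not to occur free, and $`b`:B$, which is in $\fn(`M`a.[`b]M)$ anyway, using $(\Weak)$. Then we reapply the rule.
\end{enumerate}

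The only point needing care, and so the main (mild) obstacle, is the order of the two proofs: $(\Weak)$ must be established first, because the $(\Thin)$ cases for $(\arrI)$, $(\arrE)$ and $(`m)$ use it to re-add bound or discarded statements. With that order, each case is a routine reapplication of the corresponding rule.
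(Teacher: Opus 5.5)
Your proof is correct and follows the standard argument the paper has in mind; its own proof just says ``Standard''. You do induction on the syntax-directed derivation, establishing $(\Weak)$ first and then using it in the $(\arrI)$, $(\arrE)$ and $(\mu)$ cases of $(\Thin)$ to restore the bound statements and the statements discarded when each premise is thinned separately.
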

 \begin{proof}
	Standard.\qed
 \end{proof}

We will now show that type assignment is sound, \ie is closed under reduction.
This result might itself be as expected, and is presented here mostly for completeness.
 \Comment{
	We use the results for the three notions of term substitution from \cite{Bakel-PPDP'19}.

 \begin{proposition}[Substitution lemma \cite{Bakel-PPDP'19}]
 \label{term substitution lemma} \label{lmu substitution lemma}
 \label{right structural substitution lemma} \label{left structural substitution lemma}

 \begin{enumerate}

 \firstitem
		If $\derlmu `G,x`:B |- M : A | `D $ and $\derlmu `G |- L : B | `D $, then $\derlmu `G |- M\tsubst[L/x] : A | `D $.

 \item
		If $\derlmu `G |- M : A | `a`:B\arrow C,`D $ and $ \derlmu `G |- L : B | `D $, then $\derlmu `G |- M\mursubst[L.`g/`a] : A | `g`:C,`D $.

 \item
		If $ \derlmu `G |- L : B\arrow C | `D $ and $ \derlmu `G |- M : A | `a`:B,`D $, then $ \derlmu `G |- {\mulsubst[L.`g/`a] M} : A | `g`:C,`D $.

 \end{enumerate}
 \end{proposition}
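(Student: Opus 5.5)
We prove the three parts of the substitution lemma separately, each by induction on the structure of $M$; more precisely, by induction on the derivation of the judgement for $M$, using the last typing rule applied. Throughout we rely on \Lem\ref{lmn weakening lemma}. Weakening lets us transport the derivation for $L$ into the extended contexts that appear under binders ($`G,y`:C$ under a $`l$, and $`D,`b`:C$ under a $`m$). By Barendregt's convention, the bound variable or name is never free in $L$ and is distinct from $x$, $`a$, $`g$, so the weakened judgement is well-formed.

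For part~1 (term substitution) the argument is the standard one from the \LC.
\begin{itemize}
\item In the case $(\Ax)$, either $M = x$, so $A = B$ and we use the derivation for $L$; or $M = y \neq x$, and we apply $(\Ax)$ again.
\item The cases $(\arrI)$ and $(\arrE)$ follow directly from the induction hypothesis, weakening $L$'s judgement by $y`:C$ in the $(\arrI)$ case.
\item For the two $(`m)$ rules, we weaken $L$'s judgement with the co-context entries $`a`:A$ (and $`b`:B$) introduced in the premise, apply the induction hypothesis to the body, and reapply the same $(`m)$ rule. This works because term substitution does not touch names.
\end{itemize}

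For part~2 (right-structural substitution) we do induction on $M$ following the clauses of \Def\ref{def:Lmu:StructuralSubstitution}. The variable, abstraction and application cases are immediate from the induction hypothesis; the types of sub-terms do not change, since only the co-context changes from $`a`:B\arrow C$ to $`g`:C$.

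The interesting cases are the $(`m)$ rules, split according to whether the command is $[`a]P$ or $[`b]P$ with $`b \not= `a$.
\begin{itemize}
\item For a subterm $`m`d.[`a]P$, the premise types $P : B\arrow C$, with $`a`:B\arrow C$ in the co-context. The induction hypothesis gives $P\mursubst[L.`g/`a] : B\arrow C$ with $`g`:C$, and weakening gives $L : B$ under the same contexts. Rule $(\arrE)$ then yields $`@ (P\mursubst[L.`g/`a]) L : C$, and the $(`m)$ rule naming $`g$ (of type $C$) closes the case.
\item For a subterm $[`b]P$ with $`b\not=`a$, we apply the induction hypothesis and reapply the same rule unchanged.
\end{itemize}
There is a subtle point in the formulation of the statement: the renaming of the bound name in the clause for $`m`b.`C$ in \Def\ref{def:Lmu:StructuralSubstitution}, and the case where $`b = `a$ itself is bound. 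Both are excluded by Barendregt's convention.

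Part~3 (left-structural substitution) is symmetric. In the key case $[`a]P$, the premise gives $P : B$ under $`a`:B$, and we have $L : B\arrow C$. The induction hypothesis gives $\mulsubst[L.`g/`a]P : B$, and $(\arrE)$ gives $`@ L (\mulsubst[L.`g/`a]P) : C$, named by $`g`:C$.

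The main obstacle is bookkeeping in the $(`m)$ cases. The typing rules for $(`m)$ are given with the command fused into the binder, so we must distinguish four subcases:
\begin{itemize}
\item the binder is $`a$-independent and the command names $`a$;
\item the binder is $`a$-independent and the command names another $`b$;
\item the command names the bound name itself, where $`a$ may or may not still occur inside the body;
\item and so on, checking that the co-context entry for $`a$ is consistently replaced by $`g$ in every premise.
\end{itemize}
We would handle this by strengthening the induction to range over commands as well, treating $[`b]P$ as a pseudo-judgement typed $`B$, so that the clauses match \Def\ref{def:Lmu:StructuralSubstitution} one-to-one.
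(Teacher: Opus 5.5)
Your proposal is correct and follows essentially the same route as the paper. Both argue by induction over the clauses of the three substitutions (equivalently, over the syntax-directed typing derivation of $M$), use weakening to carry the typing of $L$ under binders, and find that the only non-trivial case is a $\mu$-abstraction whose command names the substituted name, which is closed by an application step followed by the $\mu$-rule naming the fresh $\gamma{:}C$. Your closing concern about further bookkeeping subcases is unnecessary: your two bullets (command naming $\alpha$, versus naming any other name including the bound one) already cover all the $\mu$-cases, exactly as in the paper.
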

}

First we show results for the three notions of term substitution.

 \begin{lemma}[Substitution lemma]
 \label{lem:Lmu:Types:Substitution}
 \label{term substitution lemma} 
 \label{lmu substitution lemma}
 \label{right structural substitution lemma} 
 \label{left structural substitution lemma}

 \begin{enumerate}

 \firstitem If $\derlmu `G,x`:B |- M : A | `D $ and $\derlmu `G |- L : B | `D $, then $\derlmu `G |- M\tsubst[L/x] : A | `D $.

 \item
If $\derlmu `G |- M : A | `a`:B\arrow C,`D $ and $ \derlmu `G |- L : B | `D $, then $\derlmu `G |- M\mursubst[L.`g/`a] : A | `g`:C,`D $.

 \item
If $ \derlmu `G |- L : B\arrow C | `D $ and $ \derlmu `G |- M : A | `a`:B,`D $, then $ \derlmu `G |- {\mulsubst[L.`g/`a] M} : A | `g`:C,`D $.

 \end{enumerate}
 \end{lemma}

 \begin{Proof}{By induction on the definition of the three kinds of substitution.}{\Lmm \ref{lmu substitution lemma}}
%
\begin {enumerate}
\itemsep 4\point

\firstitem
By induction on the definition of term substitution.

\Longer
{
\begin{description} \itemsep 2pt

        \item [{$ x\tsubst[L/x] = L $}]
            If $ \derlmu `G,x`:B |- x : A | `D $, then $B = A$, so $ \derlmu `G |- L : A | `D $, so also $ \derlmu `G |- x[L/x] : A | `D $.

        \item [{$ y \tsubst[L/x] \same y $ $(y \not= x)$}]
            Then $y`:A \ele `G$ and by rule $(\Ax)$ we have $ \derlmu `G |- y : A | `D $.

        \item [{$ `ly.(N\tsubst[L/x]) = (`ly.N)\tsubst[L/x] $}]
            Then $A = C\arrow D$.
            If $ \derlmu `G,x`:B |- `ly.N : C\arrow D | `D $, then by rule $(\arrI)$, $ \derlmu `G,x`:B,y`:C |- N : D | `D $.
            Then by induction, $ \derlmu `G,y`:C |- N\tsubst[L/x] : D | `D $, so by $(\arrI)$, $ \derlmu `G |- `ly.(N\tsubst[L/x]) : C\arrow D | `D $.

        \item [{$ (PQ)\tsubst[L/x] \equiv `@ P\tsubst[L/x] Q\tsubst[L/x] $}]
            If $ \derlmu `G,x`:B |- PQ : A | `D $, then, by rule $(\arrE)$ there exist $C$ such that both $\derlmu `G,x`:B |- P : C\arrow A | `D $ and $\derlmu `G,x`:B' |- Q : C | `D $.
            Then, by induction, $\derlmu `G |- P\tsubst[L/x] : C\arrow A | `D $ and $\derlmu `G |- Q\tsubst[L/x] : C | `D $; the result follows by rule $(\arrE)$.

        \item [{$ (`M`a . [`b] N)\tsubst[L/x] \same `M`a . [`b] N\tsubst[L/x] $}]
            If $ \derlmu `G,x`:B |- `M`a.[`b]N : A | `D $, then, by rule $(`m)$ there exist $C$ such that $\derlmu `G,x`:B |- N : C | `a`:A,`b`:C,`D' $ with $`D = `b`:C,`D'$.
            Then, by induction, $\derlmu `G |- N\tsubst[L/x] : C | `a`:A,`b`:C,`D' $, and by rule $(`m)$ we have $\derlmu `G |- N\tsubst[L/x] : C | `a`:A,`b`:C,`D' $.

    \end{description}
}

\item
By induction on the definition of right-structural substitution\Shorter{; we only show the interesting cases}.

\begin{description}

        \Longer
        {
        \item [{$ x\mursubst[L.`g/`a] \ByDef x $}]
            Then $x`:A \ele `G$, and by rule $(\Ax)$ we have $\derlmu `G |- x : A | `g`:C,`D $.

        \item [{$ (`Lx.N)\mursubst[L.`g/`a] \ByDef `Lx . (N\mursubst[L.`g/`a] ) $}]
            Then $A = D\arrow E$ and, by rule $(\arrI)$, $\derlmu `G,x`:D |- N : E | `a`:B\arrow C,`D $.
            Then by induction we have $\derlmu `G,x`:D |- {N\mursubst[L.`g/`a]} : E | `g`:C,`D $, so by rule $(\arrI)$ also $\derlmu `G |- { `Lx . {N\mursubst[L.`g/`a]} } : D\arrow E | `g`:C,`D $.

        \item [{$ (`@ P Q )\mursubst[L.`g/`a] \ByDef `@ P\mursubst[L.`g/`a] Q\mursubst[L.`g/`a]
            $}]
            Then by rule $(\arrE)$ there exists $D$ such that $ \derlmu `G |- P : D\arrow A | `a`:B\arrow C,`D $ and $ \derlmu `G |- Q : D | `a`:B\arrow C,`D $.
            Then by induction we can assume both $ \derlmu `G |- P\mursubst[L.`g/`a] : D\arrow A | `g`:C,`D $ and $ \derlmu `G |- Q\mursubst[L.`g/`a] : D | `g`:C,`D $; the result follows by rule $(\arrE)$.
        }

    \item [{$ `M`d.[`a]N\mursubst[L.`g/`a] \ByDef `M`d.[`g](N\mursubst[L.`g/`a] L) $}]
        Then by rule $(`m)$
        $ \derlmu `G |- N : B\arrow C | `d`:A,`a`:B\arrow C,`D $, and by induction
        $ \derlmu `G |- N\mursubst[L.`g/`a] : B\arrow C | `d`:A,`g`:C,`D $.
        Since $`a$, $`d$ and $`g$ all do not occur (free) in $N$, we can construct
        \[ \def\Turnlmu{\Turn}
            \Inf [`m]
            {\Inf [\arrE]
                {\InfBox { \derlmu `G |- N\mursubst[L.`g/`a] : B\arrow C | `d`:A,`g`:C,`D }
                    \quad
                    \Inf    [\Weak]
                    {\InfBox { \derlmu `G |- L : B | `D }
                    }{ \derlmu `G |- L : B | `d`:A,`g`:C,`D }
                }{ \derlmu `G |- { `@ (N\mursubst[L.`g/`a]) L } : C | `d`:A,`g`:C,`D }
            }{ \derlmu `G |- { `M`d.[`g] `@ (N\mursubst[L.`g/`a]) L } : A | `g`:C,`D }
        \]

    \item [{$ (`M`d.[`b]N)\mursubst[L.`g/`a] \ByDef `M`d.[`b] ( N\mursubst[L.`g/`a] ) ~ (`b \not= `a) $}]
        Then by rule $(`m)$ there exists $D$ such that $`D = `b`:D,`D'$, and
        $ \derlmu `G |- N : D | `d`:A,`b`:D,`a`:B\arrow C,`D' $, and by induction
        $ \derlmu `G |- N\mursubst[L.`g/`a] : D | `d`:A,`b`:D,`g`:C,`D' $.
        But then, by rule $(`m)$, also $ \derlmu `G |- `M`d.[`b] N\mursubst[L.`g/`a] : A | `b`:D,`g`:C,`D' $.

\end{description}

\item
By induction on the definition of left-structural substitution\Shorter{; we only show the interesting cases}.

    \begin{description}

            \Longer
            {
            \item [{$ \mulsubst[L.`g/`a] x \ByDef x $}]
                Then $x`:A \ele `G$, and by rule $(\Ax)$ we have $\derlmu `G |- x : A | `g`:C,`D $.

            \item [{$ \mulsubst[L.`g/`a] (`Lx.N) \ByDef `Lx . (\mulsubst[L.`g/`a] N) $}]
                Then $A = D\arrow E$ and, by rule $(\arrI)$, $\derlmu `G,x`:D |- N : E | `a`:B,`D $.
                Then by induction we have $\derlmu `G,x`:D |- {\mulsubst[L.`g/`a] N} : E | `g`:C,`D $, so by rule $(\arrI)$ also $\derlmu `G |- { `Lx . \mulsubst[L.`g/`a] N } : D\arrow E | `g`:C,`D $.

            \item [{$ \mulsubst[L.`g/`a] (`@ P Q ) \ByDef {\mulsubst[L.`g/`a] P} \, {\mulsubst[L.`g/`a] Q}
                $}]
                Then by rule $(\arrE)$ there exists $D$ such that $ \derlmu `G |- P : D\arrow A | `a`:B,`D $ and $ \derlmu `G |- Q : D | `a`:B,`D $.
                Then by induction both $ \derlmu `G |- {\mulsubst[L.`g/`a] P} : D\arrow A | `g`:C,`D $ and $ \derlmu `G |- {\mulsubst[L.`g/`a] Q} : D | `g`:C,`D $; the result follows by rule $(\arrE)$.
            }

        \item [{$ \mulsubst[L.`g/`a ] `M`d.[`a]N \ByDef `M`d.[`g] `@ L ({\mulsubst[L.`g/`a] N}) $}]
            Then by rule $(`m)$
            $ \derlmu `G |- N : B | `d`:A,`a`:B,`D $, and by induction
            $ \derlmu `G |- {\mulsubst[L.`g/`a] N} : B | `d`:A,`g`:C,`D $.
            Since $`d$ and $`g$ do not occur (free) in $L$, we can construct
            \[ \def\Turnlmu{\Turn}
                \Inf [`m]
                {\Inf [\arrE]
                    {\Inf    [\Weak]
                        {\InfBox{ \derlmu `G |- L : B\arrow C | `D }
                        }{ \derlmu `G |- L : B\arrow C | `d`:A,`g`:C,`D }
                        \InfBox{ \derlmu `G |- {\mulsubst[L.`g/`a] N} : B | `d`:A,`g`:C,`D }
                    }{\derlmu `G |- { `@ L {\mulsubst[L.`g/`a] N} } : C | `d`:A,`g`:C,`D }
                }{ \derlmu `G |- { `M`d.[`g] `@ L {\mulsubst[L.`g/`a] N} } : A | `g`:C,`D }
            \]

        \item [{$ \mulsubst[L.`g/`a] (`M`d.[`b]N) \ByDef `M`d.[`b] ( {\mulsubst[L.`g/`a] N} ) ~ (`b \not= `a) $}]
            Then by rule $(`m)$ there exists $D$ such that $`b`:D,`D' = `D$, and
            $ \derlmu `G |- N : D | `d`:A,`a`:B,`b`:D,`D' $.
            Then we have
            $ \derlmu `G |- {\mulsubst[L.`g/`a] N} : D | `d`:A,`g`:C,`b`:D,`D' $ by induction.
            But then, by rule $(`m)$, also $ \derlmu `G |- `M`d.[`b]{\mulsubst[L.`g/`a] N} : A | `g`:C,`b`:D,`D' $.
            \qed

    \end {description}
\end {enumerate}

 \end {Proof}

 \begin{theorem} [Soundness] 
 \label{soundness lmu}
 \label{lem:Lmu:Types:SubjectReduction}
	If $ \derlmu `G |- M : A | `D $, and $M \rednclmuN N$, then $ \derlmu `G |- N : A | `D $.
 \end{theorem}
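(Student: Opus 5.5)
We prove the statement for the full relation $\rednclmu$, of which the strategy $\rednclmuN$ is a sub-relation, so the result follows immediately.

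The argument is by induction on the definition of $M \rednclmu N$ as the compatible closure of the five rules. The contextual cases (reduction inside an abstraction, either side of an application, or under $`M`a.[`b]\,\cdot$) are routine. We invert the last typing rule applied to $M$, which is unique by the syntax, use the induction hypothesis on the premise containing the redex, and reapply the same rule. For the $`m$ case, note that the premise has co-context $`a`:A,`b`:B,`D$ (or $`a`:A,`D$ when $`b=`a$), and the induction hypothesis is applied with exactly that co-context.

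The base cases each reduce to one item of Lemma~\ref{lmu substitution lemma}.

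For $(`b)$: inverting $(\arrE)$ and $(\arrI)$ gives $\derlmu `G,x`:B |- M : A | `D $ and $\derlmu `G |- N : B | `D $. Part~1 of the lemma then yields the type of $M\tsubst[N/x]$.

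For $(\mur)$, with redex $`@ (`M`a.`C) N$ of type $A$: inverting $(\arrE)$ gives $`M`a.`C : B\arrow A$ and $N : B$. Inverting $(`m)$ gives $`C = [`b]P$ typed with $`a`:B\arrow A$ in the co-context. Part~2 of the lemma, applied to $P$ with $L=N$, types $P\mursubst[N.`g/`a]$ with $`g`:A$ in place of $`a$. Reapplying $(`m)$ on $`g$ gives $`M`g.`C\mursubst[N.`g/`a] : A$. This step needs a subcase on whether $`b = `a$. If $`b=`a$, the command becomes $[`g]\,`@ (P\mursubst[N.`g/`a]) N$; we type it by $(\arrE)$ using weakening (Lemma~\ref{lmn weakening lemma}) on $N$, then close with $(`m)$ in the form $`M`g.[`g]\ldots$.

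For $(\mul)$: the case is symmetric to $(\mur)$, using part~3 of the lemma with $M : B\arrow A$ and $`a`:B$.

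For $(\Rename)$: we have $[`b]`m`g.`C$ with $`m`g.`C$ of the type of $`b$, so $`g$ and $`b$ carry the same type. The renaming $\tsubst[`b/`g]$ therefore preserves typing; this is a small auxiliary induction on $`C$, of the same shape as the lemma, which we state and check.

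For $(\Erase)$: from $`M`a.[`a]M$ with $`a\notele M$, we obtain a typing of $M$ whose co-context mentions $`a$. Thinning (Lemma~\ref{lmn thinning lemma}) followed by weakening removes $`a$ and gives $\derlmu `G |- M : A | `D $.

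The main obstacle is the typing bookkeeping in the structural cases. The co-context changes from $`a`:B\arrow A$ to $`g`:A$, and the enclosing $(`m)$ rule has two forms, so each subcase ($`b=`a$ or not) must be matched against the appropriate form of the $(`m)$ rule. A further detail is that the renaming needed for $(\Rename)$ is not covered by the earlier lemma and must be proved separately.
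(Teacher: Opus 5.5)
Your proposal is correct and takes essentially the paper's route: induction on the reduction, with the $(\beta)$, $(\mur)$ and $(\mul)$ cases discharged by the three parts of the substitution lemma (plus weakening of the argument when the command names the bound continuation), the renaming case justified by the two names carrying the same type, and $(\Erase)$ by thinning. The differences are cosmetic. You spell out the renaming lemma and the weakening after thinning, both of which the paper leaves implicit. You also state up front that you prove the result for full $\rednclmu$, which is what the paper's case analysis, including its $(\mul)$ cases, does in effect.
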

 \begin{Proof}{By induction on the definition of $\rednclmuN$.}{\Thm \ref{soundness lmu}}
By induction on the definition of $\rednclmuN$.

\begin{description}

    \item [{$ `@ ( `Lx . M ) N \rednclmuN M \tsubst [N/x] $}]
        The derivation for $ \derlmu `G |- {`@ (`L x . M ) N } : A | `D $ is shaped like
        \[ \def \Turnlmu{\Turn}
            \Inf    [\arrE]
            {\Inf    [\arrI]
                {\InfBox{ \derlmu `G,x`:B |- M : A | `D }
                }{ \derlmu `G |- `Lx.M : B\arrow A | `D }
                \InfBox{ \derlmu `G |- N : B | `D }
            }{ \derlmu `G |- { `@ (`Lx . M ) N } : A | `D }
        \]

        Then, by \Lmm\,\ref{term substitution lemma}, we have $ \derlmu `G |- M\tsubst[N/x] : A | `D $.

    \item [{$ `@ (`M`a.[`a]M) N \rednclmuN `M`g . [`g] `@ (M\mursubst[N.`g/`a]) N $}]
        The derivation for $ `@ (`M`a.[`a]M) N $ is shaped like
        \[ \def \Turnlmu{\Turn}
            \Inf    [\arrE]
            {\Inf    [`m]
                {\InfBox{ \derlmu `G |- M : B\arrow A | `a`:B\arrow A,`D }
                }{ \derlmu `G |- `M`a.[`a]M : B\arrow A | `D }
                \InfBox{ \derlmu `G |- N : B | `D }
            }{ \derlmu `G |- {`@ (`M`a.[`a]M) N } : A | `D }
        \]
        Then by \Lmm\,\ref{term substitution lemma}, we have $\derlmu `G |- M\mursubst[N.`g/`a] : B\arrow A | `g`:A,`D $.
        Since $`g$ is fresh, by weakening also $ \derlmu `G |- N : B | `g`:A,`D $, and we can construct
        \[ \def \Turnlmu{\Turn}
            \Inf    [`m]
            {\Inf    [\arrE]
                {\InfBox{ \derlmu `G |- M\mursubst[N.`g/`a] : B\arrow A | `g`:A,`D }
                    \quad
                    \InfBox{ \derlmu `G |- N : B | `g`:A,`D }
                }{ \derlmu `G |- {`@ M\mursubst[N.`g/`a] N } : A | `g`:A,`D }
            }{ \derlmu `G |- `M`g.[`g]{`@ (M\mursubst[N.`g/`a]) N } : A | `D }
        \]

    \item [{$ `@ (`M`a.[`d]M) N \rednclmuN `M`g . [`d] M\mursubst[N.`g/`a] $, with $`a \not= `d $}]
        The derivation for $ `@ (`M`a.[`d]M) N $ is shaped like
        \[ \def \Turnlmu{\Turn}
            \Inf    [\arrE]
            {\Inf    [`m]
                {\InfBox{ \derlmu `G |- M : C | `a`:B\arrow A,`d`:C,`D' }
                }{ \derlmu `G |- `M`a.[`d]M : B\arrow A | `d`:C,`D' }
                \InfBox{ \derlmu `G |- N : B | `d`:C,`D' }
            }{ \derlmu `G |- {`@ (`M`a.[`d]M) N } : A | `d`:C,`D' }
        \]
        with $`D = `d`:C,`D'$.
        Then by \Lmm\,\ref{term substitution lemma}, we have $ \derlmu `G |- M\mursubst[N.`g/`a] : C | `g`:A,`d`:C,`D' $, and we can construct
        \[ \def \Turnlmu{\Turn}
            \Inf    [`m]
            {\InfBox{ \derlmu `G |- M\mursubst[N.`g/`a] : C | `g`:A,`d`:C,`D' }
            }{ \derlmu `G |- `M`g.[`d]M\mursubst[N.`g/`a] : A | `d`:C,`D' }
        \]

    \item [{$ `@ M (`M`a.[`a]N) \rednclmuN `M`g.[`g]`@ M (\mulsubst M.`g/`a N) $}]
        The derivation for $ `@ M (`M`a.[`a]N) $ is shaped like
        \[ \def \Turnlmu{\Turn}
            \Inf    [\arrE]
            {\InfBox{ \derlmu `G |- M : B\arrow A | `D }
                \quad
                \Inf    [`m]
                {\InfBox{ \derlmu `G |- N : B | `a`:B,`D }
                }{ \derlmu `G |- { `M`a.[`a]N } : B | `D }
            }{ \derlmu `G |- { `@ M (`M`a . [`a] N ) } : A | `D }
        \]
        Then by \Lmm \ref{
        term substitution lemma},
        we have $ \derlmu `G |- {\mulsubst M.`g/`a N} : B | `g`:A,`D $, and
        we can construct
        \[ \def \Turnlmu{\Turn}
            \Inf    [`m]
            {\Inf    [\arrE]
                {\Inf    [\Weak]
                    {\InfBox{ \derlmu `G |- M : B\arrow A | `D }
                    }{ \derlmu `G |- M : B\arrow A | `g`:A,`D }
                    \InfBox{ \derlmu `G |- {\mulsubst[M.`g/`a] N} : B | `g`:A,`D }
                }{ \derlmu `G |- {`@ M (\mulsubst[M.`g/`a] N) } : A | `g`:A,`D }
            }{ \derlmu `G |- { `M`g . `g `@ M (\mulsubst[M.`g/`a] N ) } : A | `D }
        \]

    \item [{$ `@ M (`M`a.[`d]N) \rednclmuN `M`g . [`d] {\mulsubst M.`g/`a N }$, with $`a \not= `d $}]
        The derivation for $ `@ M (`M`a.[`d]N) $ is shaped like
        \[ \def \Turnlmu{\Turn}
            \Inf    [\arrE]
            {\InfBox{ \derlmu `G |- M : B\arrow A | `d`:C,`D' }
                \quad
                \Inf    [`m]
                {\InfBox{ \derlmu `G |- N : C | `a`:B,`d`:C,`D' }
                }{ \derlmu `G |- { `M`a.[`d]N } : B | `d`:C,`D' }
            }{ \derlmu `G |- {`@ M (`M`a.[`d]N) : A } | `d`:C,`D' }
        \]
        with $`D = `d`:C,`D'$.
        Then by \Lmm \ref{
        term substitution lemma},
        we have $ \derlmu `G |- {\mulsubst[M.`g/`a] N} : C | `g`:A,`d`:C,`D' $, and by rule $(`m)$ we have
        $ \derlmu `G |- { `M`g . [`d] \mulsubst[M.`g/`a] N } : A | `d`:C,`D' $.

        \Comment{
        \item [{$ `M`a.[`a]M \rednclmuN M $}]
            The derivation for $ `M`a.[`a]M $ is shaped like
            \[ \def \Turnlmu{\Turn}
                \Inf    [`m]
                {\InfBox{ \derlmu `G |- M : A | `a`:A,`D }
                }{ \derlmu `G |- `M`a.[`a]M : A | `D }
            \]
            Since $`a$ does not occur in $M$, we can thin $`a`:A,`D$ and obtain $ \derlmu `G |- M : A | `D $.
        }

    \item [{$ `M`a.[`b] `M`g.[`d]M \rednclmuN `M`a.([`d]M) \tsubst[`b/`g] $}]
        The derivation for $ `@ (`M`a.[`d]M) N $ is shaped like
        \[ \def \Turnlmu{\Turn}
            \Inf    [`m]
            {\Inf    [`m]
                {\InfBox{ \derlmu `G |- M : D | `a`:A,`b`:B,`g`:B,`d`:D,`D' }
                }{ \derlmu `G |- `M`g.[`d]M : B | `a`:A,`b`:B,`d`:D,`D' }
            }{ \derlmu `G |- `M`a.[`b] `M`g.[`d]M : A | `b`:B,`d`:D,`D' }
        \]
        So in particular, replacing all occurrences of $`g$ by $`b$, we obtain a derivation for $ \derlmu `G |- M\tsubst[`b/`g] : D | `a`:A,`b`:B,`d`:D,`D' $.
        Now either:

        \begin{description}
            \item [$`d \not= `g$]
                Then we can construct:
                \[ \def \Turnlmu{\Turn}
                    \Inf    [`m]
                    {\InfBox{ \derlmu `G |- M\tsubst[`b/`g] : D | `a`:A,`b`:B,`d`:D,`D' }
                    }{ \derlmu `G |- `M`a.[`d]M\tsubst[`b/`g] : A | `b`:B,`d`:D,`D' }
                \]

            \item [$`d = `g$]
                Then $D = B$ as well, and we can construct:
                \[ \def \Turnlmu{\Turn}
                    \Inf    [`m]
                    {\InfBox{ \derlmu `G |- M\tsubst[`b/`g] : B | `a`:A,`b`:B,`D' }
                    }{ \derlmu `G |- `M`a.[`b]M\tsubst[`b/`g] : A | `b`:B,`D' }
                \]

        \end{description}

    \item[{$`M`a . [`a] M \reduc M $}, with $a\notele M$]
        The derivation for $ `M`a . [`a] M $ is shaped like
        \[ \def \Turnlmu{\Turn}
            \Inf    [`m]
            {\InfBox{ \derlmu `G |- M : A | `a`:A,`D }
            }{ \derlmu `G |- `M`a.[`a]M : A | `D }
        \]
        Since $a\notele M$, by thinning we get $ \derlmu `G |- M : A | `D $.

\end{description}
The contextual rules follow by induction.\QED
 \end{Proof}

This result of course also holds for {\CBV} and {\CBN}-reduction as a simple corollary.

\Comment{
}



 \section{The Calculus {\lmmtPDF}} 
 \label{lmmt section}
 \label{sec:LMMT}

This section will give a short summary of $\lmmt$, as introduced by \citet{Curien-Herbelin'00}.
\Sect \ref{sec:LMu} explored how $\lmu$ is simultaneously a proof-term syntax for (a kind of) classical logic, and a language of control, as illustrated in \cite{Bakel-PPDP'19}, where a $`l$-calculus extended with exception handling was introduced that can be implemented in $\lmu$.
Although $\lmu$ achieves the latter without a primitive notion of continuation, this comes at the cost of a rather complex notion of reduction, namely through structural substitution.
The internal classical logic itself imposes a restriction on the shape of proof trees in which negation elimination only applies to negated assumptions, forcing the negated proposition to be an instance of the $(\Ax)$ rule.
Informally, the issue is that natural deduction is not well-suited to classical logic. 

\citet{GentzenG'35} says as much at the inception of natural deduction, and argues that the sequent calculus {\LK} is better placed for reasoning about classical logic; the latter allows for the negated assumptions in a $(\Cut)$ to be arbitrary proof trees, not just $(\Ax) $.
As the type system of $\lmu$ follows the spirit of natural deduction, it also inherits these difficulties in its logical interpretation.
It seems, then, a calculus for classical logic would benefit from a type system in the style of {\LK}.

Natural deduction defines logical connectives through their \emph{introduction} and \emph{elimination} rules, which always modify the single proposition on the right-hand side of a judgement.
{\LK} on the other hand, features multiple propositions on both sides of judgements (called \emph{sequents}), and only \emph{introduction} rules for logical connectives;
unlike in natural deduction, these rules apply to either the left or right of sequents.
The only way to eliminate a connective is to eliminate the whole formula in which it appears, via the $(\Cut)$-rule.
Sequents in {\LK} take the form $A_1,\ldots,A_n \vdash B_1,\ldots,B_m $, where the conjunction of $A_1,\ldots,A_n$ is understood to imply the disjunction of $B_1,\ldots,B_m$; so if all of the $ A_1,\ldots,A_n $ hold, then so must at least one of the $ B_1,\ldots,B_m $.

We give the definition for the implicative fragment of the sequent calculus \LK,  known as Kleene's $G_3$ \cite{Kleene'52}, as it plays a major role in this paper.

 \begin{definition} [Implicative \LK] \label{LK}
Let $`G$ and $`D$ be multi-sets of formulas, defined by the grammar:
 \[ \begin{array}{rcl}
A,B &::=& `v \mid A \Arrow B 
 \end{array} \]
The implicative fragment of the sequent calculus \LK,  known as Kleene's $G_3$ \cite{Kleene'52}, si defined through the rules:
 \[ \def \TurnLK {\Turn} \begin {array}{rl@{\quad}rl@{\quad}rl@{\quad}rl}
(\textit{Ax})`: &
 \Inf	{ \derLK `G,A |- A,`D }
&
(\ArrL)`: &
 \Inf	{ \derLK `G |- A,`D 
	 \quad
	 \derLK `G,B |- `D 
	}{ \derLK `G, A\Arrow B |- `D }
& 
(\ArrR)`: &
 \Inf	{ \derLK `G,A |- B,`D 
 	}{ \derLK `G |- A\Arrow B,`D }
&
(\textit{cut})`: &
 \Inf	{ \derLK `G |- A,`D 
	 \quad
	 \derLK `G,A |- `D 
	}{ \derLK `G |- `D }
 \end {array} \]
 \end{definition}
It is well known that {\LK} enjoys a symmetry that natural deduction lacks, and both propositions on the left and on the right can be shown through a proof.

This is not the case for for $\TurnF$ or $\lmu$, where negation can only be shown through an assumption (inhabited by a name $`a$ in $\lmu$ in the co-context).
Moving to design a term calculus to represent the proofs for {\LK} will inhabit sequents with more complex proof trees for negation, leading to use more complex expressions, or terms.
As a name $`a `: A$ refers to a continuation expecting a term of type $A$, these terms should be able to construct these continuations.
$\lmmt$ is a calculus that does that: it takes both terms and continuations as primitive.
Where $\lmu$ has commands of the form $[`a]M$ representing $M$ in the continuation $`a$,
$\lmmt$ allows commands $c$ of the form $ \cell<t|e> $, 
representing $t$ in the continuation $e$. 
Examples of such continuations are names, $`a$ (and then $\cell<M|`a>$ corresponds to $[`a]M$), and \emph{stacks}, $t`.e$, the latter corresponding with the context $e("[]"t)$.
A command $\cell<{`M `a .c}|e> $ represents that the entire continuation of ${`M `a .c}$ is $e$, and can be reduced by substitution
$ 
 \cell<{`M `a .c}|e>
 \reduc
 \tsubst[e/`a],
$ 
thereby avoiding the need for structural substitution.

In $\lmu$, if $`@ (`m`a . `C_1 ) (`m`a . `C_2 )$ reduces by $(\mur)$, it represents $`m`a . `C_1 $ taking control over the current context $"[]"(`m `a.`C_2 )$.
We can dually say that the $(\mul)$ reduction of the same term represents the context $"[]"(`m `a.`C_2)$ taking control over the current term, $`m`a . `C_1 $; 
this means $\lmmt$ must feature a construct allowing continuations to do as such.
This is the purpose of $\mt$, which captures the term in a command, and substitutes it into a subcommand:
$ 
\begin{array}{rcl}
 \cell< t | \mt x.c >
	&\reduc&
 c\tsubst[t/x]
 \end{array} $. 
One may notice the symmetry between the $\mu$ and $\mutilde$ reductions, and their leading to non-confluence in the command
$ 
 \cell< {`M `a .c_1} | {\mt x . c_2} >
$. 
Remarkably, a systematic preference of $\mu$ or $\mutilde$ reductions in such commands leads to {\CBV} or {\CBN} reduction, respectively.

 \paragraph{Overview}
\Sect \ref{sec:LMMT:Syntax} defines the syntax and reduction for {$\lmmt$}.
\Sect \ref{sec:LMMT:Typing} provides the type system, with the expected soundness theorem, \ie assignable types are preserved under reduction. 
Finally, \Sect \ref{sec:LMMT:NonConfluence} explores the inherent non-confluence of reduction in {$\lmmt$}, and obtaining {\CBV} or {\CBN} as confluent subsystems.

 \subsection{Syntax and Reduction}
 \label{sec:LMMT:Syntax}
 \begin{definition}[Commands, Terms, and Contexts \cite{Curien-Herbelin'00}]
 Let $x, y, z, \ldots $ range over a countably infinite set of \emph{term variables} and $`a, `b, `g, \ldots$ range over a countably infinite set of \emph{environment variables} (or \emph{names}). 
 The following defines the three syntactic categories of $\lmmt$:
 \[ \begin{array}{rcl@{\hspace*{2cm}}l}
 c &::=& \cell<t|e> & (\textit{commands}) \\
 t &::=& x \mid`L x.t \mid `M`b.c & (\textit{terms}) \\
 e &::=& `a \mid t`.e \mid \mt x.c & (\textit{environments}) \\
 \end{array} \]

 \observe
The operators $`l$ and {$\mt$} bind variables, and $\mu$ binds names. 
The notion of free or bound term and environment variables is defined as usual.

 \end{definition}

The environment $t`.e$ corresponds with the continuation $e["[]"\:t]$ in the \LC;
an environment of the form $t_1`. \dots t_n`.`a $ is called a \emph{stack}. 
$\lmmt$ inherits $`M`a .c$ from $\lmu$, which represents a term that captures its environment and binds it to $`a$.
Symmetrically, $\mt x.c$ is an environment that captures the active term, and binds it to $x$; 
the {\LC} analogue of this environment is the continuation \textrm{\sf let $x = "[]"$ in $c$}.
A command $\cell<t|e>$ represents a term $t$ running in the environment $e$, which may look something like $e[t]$ in the {\LC}.
Commands of the form $\cell<t|`a>$ correspond to the naming construct $[`a]t$ of $\lmu$, giving the name $`a$ to the output of $t$.
From the grammar, it follows that each environment is a sequence of terms, ending with either a name or an environment of the shape $ \mt x.c $:
 \[\begin{array}{rcl}
 e &::=&
 \begin{cases}
 t_1 `.~ {\dots} `.~ t_n`.`a \\
 t_1 `.~ {\dots} `.~ t_n`. \mt x.c \\
 \end{cases}
 \end{array}\]

Commands are the computational units of $\lmmt$, whose reduction is of a weak-head form.

 \begin{definition}[Reduction in $\lmmt$ \cite{Curien-Herbelin'00,Herbelin'05}] \label{lmmt reduction}
 Let $ c\tsubst[e/`b] $ stand for the implicit substitution of the free occurrences of the environment variable $`b$ in the command $c$ by the environment $e$, and $ c\tsubst[t/x] $ for that of $x$ by the term $t$.
 The reduction rules are defined by:
 \[ \begin{array}{c}
 \begin{array}{rrcl}
 \multicolumn{4}{c}{\textit{logical rules}} \\
 (`l): & \cell<`L x.t_1 | t_2`.e >	&\reduc& \cell< t_2 | {\mt x.< t_1|e>} > \\
 (`m): & \cell< `M`b.c | e > & \reduc & c\tsubst[e/`b] \\
 (\mt): & \cell< t | \mt x.c > & \reduc & c\tsubst[t/x]
 \end{array}
 \quad
 \begin{array}{rrcll}
 \multicolumn{5}{c}{\textit{extensional rules}} \\
 (`h): &`L x . `M`b.< t | x`.`b >	&\reduc& t ~ & (x,`b \notele \FV{t}) \\
 (`h`m): & `M`a.< t | `a > & \reduc & t & (`a \notele \FV{t}) \\
 (`h\mt): & \mt x.< x | e > & \reduc & e & (x \notele \FV{e})
 \end{array}
 \\ [10mm]
 \textit{contextual rules} \\ [-2mm]
 \begin{array}{r@{~}c@{~}l}
 t \reduc t' & \Imply &
 \begin{cases}{r@{\,}c@{\,}l}
  \cell< t | e >	&\reduc& \cell< t' | e > \\
  `L x . t	&\reduc& `L x . t' \\
  t`.e	&\reduc& t'`.e
 \end{cases}
 \end{array}
 \quad
 \begin{array}{r@{~}c@{~}l}
 e \reduc e' & \Imply &
 \begin{cases}{r@{\,}c@{\,}l}
  \cell< t | e >	&\reduc& \cell< t | e' > \\
  t`.e	&\reduc& t`.e'
 \end{cases}
 \end{array}
 \quad
 \begin{array}{r@{~}c@{~}l}
 c \reduc c' & \Imply &
 \begin{cases}{r@{\,}c@{\,}l}
  `M`b . c	&\reduc& `M`b . c' \\
  \mt x . c	&\reduc& \mt x . c'
 \end{cases}
 \end{array}
 \end{array} \]
We use $\redlmmt$ for this notion of reduction and $\eqlmmt$ for the induced equality, and write $ M \crlmmt N $ if there exists $P$ such that $ M \rtcredlmmt P $ and $ N \rtcredlmmt P $.

We say that the reductions $ \cell< `M`b.c | e > \redlmmt c\tsubst[e/`b] $ and $ \cell< t | \mt x.c > \redlmmt c\tsubst[t/x] $ \emph{take place over $`b$}, respectively $x$, and write $ c \redlmmt (n) c' $ when the reduction step from $c$ to $c'$ takes place over $n$, mainly as an aid to the reader.

\observe We will use $\downarrow_{`.}$ also for other notions of reduction.

 \end{definition}

The logical rules operate on commands, rules $(`h)$ and $(`h`m)$ on terms, and rule $(`h\mt)$ on environments.
We will have little need of the extensional rules, only using $(`h`m)$ in \Thm \ref{interpretation respects beta reduction} and when dealing with $\slmu$'s rule $(\Rename)$.

In \Sect \ref{sec:LMMT:Typing}, we will see that commands are the syntax for cuts; reduction of commands thus induces a cut-elimination in the corresponding proof-tree.
Unlike cuts, not all commands are reducible: \eg $\cell<x|`a>$, $\cell<`Lx.t|`a>$ and $\cell<x|t`.e>$ are irreducible; this is one of the differences between {\LK} and $\lmmt$.

Although $\lmmt$ has abstraction, it does not have application, as that corresponds to an \emph{elimination rule}, the likes of which are not a part of \LK.
In fact, abstraction, representing \LK's $(\arrR)$, has a counterpart in \emph{environment construction} $t`.e$, representing \LK's $(\arrL)$, where a term with a hole is built, offering the operand $t$ and the continuation $e$.
The main operators are $`m$ and $\mutilde$-abstraction, which, in a sense correspond to (delayed) substitution (parameter call) and to environment call.

$\lmmt$ has both \emph{explicit} and \emph{implicit} variables: the implicit variables are for example in $t`.e$, where the hole `$`.$' (which acts as input) does not have an identity, and in $`L x.t$ where the environment (output) is anonymous.
We can make these variables explicit by \emph{naming}, respectively, $\mt y.<y|t`.e>$ and $`M`a.<`L x.t|`a>$; when $y$ ($`a)$ is fresh, these terms are $`h$ redexes, but, in general, the implicit variable can be made to correspond to one that already occurs.

 \begin{example} \label{lmmt prep}
We have the following reduction:

 \[ `M`g . < t_1 | { \mt x . < t_2 | { \mt y . < x | y `. `a > } > } > 
 	\redlmmtn 
  `M`g . < t_2 | { \mt y . < t_1 | y `. `a > } > 
 	\redlmmtn 
  `M`g . < t_1 | t_2 `. `a >  
 \]
that builds a stack $ t_2 `. `a $ for $ t_1 $.
This will prove useful in some of the proofs; in order to be able to identify when we use this property, we will write $ `M`g . < t_1 | { \mt x . < t_2 | { \mt y . < x | y `. `a > } > } > \redlmmtS  `M`g . < t_1 | t_2 `. `a > $
for this reduction.
 \end{example}

The calculus expresses elegantly the duality of \LK's left- and right introduction in a very symmetric syntax.
However, this duality notwithstanding, $\lmmt$ does not fully represent {\LK}.
The {\LK} proof
 \[ \def \TurnLK{\Turn}
 \Inf [\Cut]
	{\Inf [\arrR]
		{ \InfBox { \derLK {`G,A} |- B,`D }
		}{ \derLK `G |- A\arr B,`D }
	 \Inf [\arrL]
		 { \InfBox { \derLK `G |- A,`D }
		 \quad
		 \InfBox { \derLK `G,B |- `D }
		}{ \derLK `G,A\arr B |- `D }
	}{ \derLK `G |- `D }
 \]
naturally reduces to both
 \[ \def \TurnLK{\Turn}
\Inf [\Cut]
	{ \InfBox { \derLK `G |- A,`D }
 ~
	\Inf [\Cut]
		{\InfBox { \derLK {`G,A} |- B,`D }
 ~
 	\Inf [\Weak]
		{ \InfBox { \derLK `G,B |- `D }
  			}{ \derLK `G,A, B |- `D }
 		}{ \derLK `G,A |- `D }
 	}{ \derLK `G |- `D }
 \dquad\textrm{and}\dquad
\Inf [\Cut]
	{\Inf [\Cut]
		{\Inf [\Weak]
 			{ \InfBox { \derLK `G |- A,`D }
 			}{ \derLK `G |- A, B,`D }
		 \InfBox { \derLK `G,A |- B,`D }
	 	}{ \derLK `G |- B,`D }
	 \InfBox { \derLK `G,B |- `D }
 	}{ \derLK `G |- `D }
 \]
The rule $(`l)$ represents the reduction to the first derivation, but the second has no such associated rule.
This is a fundamental difference with the calculus $`X$ (see \Sect \ref{sec:XCalculus}), which is able to represent both reductions;
this implies that there does not exist a full reduction-preserving interpretation of $`X$ into $\lmmt$.
One remedy would add the second alternative to $\lmmt$ as well, which would result in the reduction rule
 \[ \begin{array}{rcl}
\cell<`Ly.t | t'`.e >	&\reduc&
 \begin{cases}{ll@{\dquad}l}
 \cell< t' | {\mt y.<t|e>} > \\ 
 \cell <{`M`g . < t' | {\mt y.<t|`g> } > }|e> & (`g \textit{ fresh}) 
 \end{cases}
 \end{array}\]
Since $`g$ is fresh, the latter can reduce to the former,
 \[\begin{array}{lclclcl}
 \cell <{`M`g . < t' | {\mt y.<t|`g> } > }|e> & \redlmmt (`g) & 
 \cell < t' | {\mt y.<t|e> } >
 \end{array}\]
but the second alternative allows $e$ to take control of reduction;
if $e$ had the form $\mt z.c$, there is the new possible reduction,
 \[\begin{array}{rcl}
 \cell <{`M`g . < t' | {\mt y.<t|`g> } > }| {\mt z.c} > 
 & \redlmmt &
 c \tsubst{`M`g . \cell < t' | {\mt y.<t|`g> } > }/z
 \end{array}\]
We will look in more detail at this in \Sect \ref{add `m_l to lmmt};
 \Def\ref{lmmtE} implements this extension, which allows for a proper embedding of $\Xs$ into $\lmmt$.

Adding this alternative would extend the expressivity of $\lmmt$, since now we would have also the reduction:
 \[ \begin{array}{rclcl}
 \cell<`Ly.t | t'`.\mt z.c >	&\reduc&
 \cell <{`M`g . < t' | {\mt y.<t|`g> } > }|\mt z.c> 
	&\reduc&
 c \tsubst[ `M`g . < t' | {\mt y.<t|`g> } > / z ] \\
 \end{array} \]

As to the encoding of \CBV-reduction, little is gained by adding this rule; the positioning of sub-terms in $ \cell <{`M`b . < t' | {\mt z.<t|`b> } > }|e> $ and $ \cell < t' | {\mt z.<t|e> } > $ is very similar.


 \subsection{Typing} \label{sec:LMMT:Typing}

{
In its typed version, $\lmmt$ is a proof-term syntax for a classical sequent calculus that treats a logic with focus, and can be seen as an extension of Parigot's $\lmu$ and a variant of Gentzen's \LK, restricted to implication (\Def \ref{LK}), by  splitting the rule $(\Ax)$ into two variants, and adding activation and deactivation rules.

\[ \def \TurnLmmt {\Turn}
\begin{array}{c@{\quad}c@{\quad}c}
 \begin{array}{rl@{\quad}rl}
 \multicolumn{4}{c}{
(\Cut): ~
\Inf {\derlog `G |- A \mid `D \quad \derlog `G \mid A |- `D
 }{\derlog `G |- `D }
} \\ [4mm] 
(\AxT): &
\Inf {\derlog `G,A |- A \mid `D }
 &
(\AxC): &
\Inf {\derlog `G \mid A |- A,`D }
 \end{array}
 &
 \begin{array}{rl}
(\arrR): &
\Inf {\derlog `G,A |- B \mid `D
 }{\derlog `G |- A\arr B \mid `D }
 \\ [5mm]
(\arrL):&
\Inf { \derlog `G |- A \mid `D \quad \derlog `G \mid B |- `D
 }{ \derlog `G \mid A\arr B |- `D }
 \end{array}
 &
 \begin{array}{rl}
(\ActR): &
\Inf {\derlog `G |- A,`D
 }{\derlog `G |- A \mid `D }
 \\ [5mm]
(\ActL): &
\Inf {\derlog `G,A |- `D
 }{\derlog `G \mid A |- `D }
 \end{array}
\end{array} \]
}

The type system $\Turnlmmt$ inhabits these rules with terms, combining the left-and-right presentation of {\LK} with the active/inactive formulae of types in $\lmu$.
As there are three syntactic categories, it derives three different judgements:
 \begin{flatenumerate}
 \item [$\derLmmt c : `G |- `D $] $c$ is a well-formed command;
 \item [$\derLmmt `G |- t : A | `D $] $t$ has type $A$; and 
 \item [$\derLmmt `G | e : A |- `D $] $e$ has type $A$.
 \end{flatenumerate}
The type system is defined with respect to these judgements.
 \begin{definition}[Typing for $\lmmt$ \cite{Curien-Herbelin'00}] 
 \label{def:LMMT:Typing:TypeSystem}
 \label{basic system}
 Using the notion of types, and contexts of variables and names of \Def \ref{tas lmu}, type assignment for $\lmmt$, $\Turnlmmt$, is defined via the rules:
 \[ \def \Turnlmmt {\Turn}
\begin{array}{rl@{\quad}rl}
 \multicolumn{4}{c}{
(\Cut):~
\Inf {\derLmmt `G |- t : A | `D \quad \derLmmt `G | e : A |- `D
 }{\derLmmt \cell<t|e> : `G |- `D } \qquad
\vspace*{3mm}
}
 \\ 
(\AxT): &
\Inf {\derLmmt `G,x{:}A |- x : A | `D }
 &
(\AxC): &
\Inf {\derLmmt `G | `a : A |- `a{:}A,`D }
 \\ [4mm]
(\arrR): &
\Inf {\derLmmt `G,x`:A |- t : B | `D
 }{\derLmmt `G |- {`L x.t } : A\arr B | `D }
 &
(\arrL):&
\Inf { \derLmmt `G |- t : A | `D \quad \derLmmt `G | e : B |- `D
 }{ \derLmmt `G | t`.e : A\arr B |- `D }
 \\ [5mm]
(`m): &
\Inf {\derLmmt c : `G |- `a`:A,`D
 }{\derLmmt `G |- `M`a.c : A | `D }
 &
(\mt): &
\Inf {\derLmmt c : `G,x`: `A |- `D
 }{\derLmmt `G | \mt x.c : A |- `D }
 \end{array} \]
 \end{definition}

 \Comment{
As the type system should have an {\LK} flavour, it should avoid explicit elimination rules, preferring instead right and left introduction rules.
This imposes on the syntax of $\lmmt$.
Introduction rules in natural deduction correspond with left rules in {\LK}, so $\lambda$-abstractions also implement the rule $(\arrL)$, but application $ `@ t u $, the witness of $(\arrE)$, 

 \[ \begin{array}{c@{\qquad}c}
\Inf {\derlog `G |- A\arr B 
  \quad
  \derlog `G |- A
 }{\derlog `G |- B }
 \end{array} \]
is not a primitive construct.
In {\LK}, the rules $(\arrL)$ and $(\Cut)$,
 \[ \begin{array}{c@{\qquad}c}
\Inf {\derlog `G |- A,`D
  \quad 
  \derlog `G,B |- `D 
 }{\derlog `G, A\arr B |- `D }
&
\Inf {\derlog `G |- A,`D
  \quad 
  \derlog `G,A |- `D 
 }{\derlog `G |- `D }
 \end{array} \]
together take the role of $(\arrE)$ 
by forming an eliminable cut with $(\arrR)$ and $(\arrL)$:
 \[
 \Inf[\Cut]
 {\Inf[\arrR]
 {\derlog `G, A |- B,`D 
 }{\derlog `G |- A\arrow B,B }
  \quad
  \Inf[\arrL] 
 {\derlog `G |- A.B 
 \quad
 \Inf [\Ax]
  {\derlog `G,B |- B } 
 }{\derlog `G,A\arrow B |- B }
 }{\derlog `G |- B }
 \]
In general, the eliminators of types induce syntactic constructs of continuations, just as elimination rules induce right-rules in {\LK}.
We would expect that a $\lambda$-abstraction $\lambda x.t$ reduces against a stack $u`.e$;
it follows that stacks should inhabit $(\arrR)$:
 \[ \def \Turnlmmt {\Turn}%
 \Inf {\Inf { \derLmmt `G,x`:A |- t : B | `D 
 }{ \derLmmt `G |- {`L x.t } : A\arr B | `D }
 \quad
 \Inf { \derLmmt `G |- u : A | `D 
 \quad
 \derLmmt `G | e : B |- `D 
 }{ \derLmmt `G | u`.e : A\arr B |- `D }
 }{ \derLmmt {\cell< `L x.t | u`.e >} : `G |- `D }
 \]
}

We will now show soundness, \ie that type assignment is respected by reduction; first we show a substitution lemma.

 \begin{lemma} [Substitution]
 \label{lem:LMMT:Typing:Substitution}
 \label{lmmt subst lemma}
 \begin{enumerate}
 \firstitem If $ \derLmmt c : `G,x`:A |- `D $, and $ \derLmmt `G |- t : A | `D $, then $ \derLmmt c\tsubst[t/x] : `G |- `D $.
 \item If $ \derLmmt c : `G |- `b`:A,`D $, and $ \derLmmt `G | e : A |- `D $, then $ \derLmmt c\tsubst[e/`b] : `G |- `D $.
 \end{enumerate}
 \end{lemma}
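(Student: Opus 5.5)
We prove both parts by simultaneous induction on the structure of the command $c$. Commands, terms and environments are mutually defined, so we strengthen the statement to all three syntactic categories. For (1) we show that if $\derLmmt `G,x`:A |- u : B | `D $ then $\derLmmt `G |- u\tsubst[t/x] : B | `D $, and that if $\derLmmt `G,x`:A | e : B |- `D $ then $\derLmmt `G | e\tsubst[t/x] : B |- `D $. We state the analogous strengthening for (2), with $\tsubst[e/`b]$ and the name $`b`:A$ in the co-context. The induction is then on the derivation, which is syntax-directed, so the last rule used is determined by the shape of the expression. Before starting we note that weakening and thinning are admissible for $\Turnlmmt$, exactly as in Lemma~\ref{lmn weakening lemma}, by a routine induction. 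We also rely on Barendregt's convention: bound variables and names are chosen distinct from $x$, from $`b$, and from the free variables and names of $t$ and $e$.

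For part (1), the base case is $(\AxT)$. If the axiom is on $x$ itself, then $B=A$, and the substituted term is $t$, which has the required type by hypothesis. If the axiom is on some $y\neq x$, we reapply $(\AxT)$ without $x`:A$ in the context. The $(\AxC)$ case is unaffected by term substitution. The cases $(\Cut)$, $(\arrL)$, $(`m)$, $(\arrR)$ and $(\mt)$ follow directly from the induction hypotheses followed by the same rule. In $(\arrR)$ and $(\mt)$ the context is extended by a bound variable $y`:C$, so we first weaken the derivation for $t$ to $`G,y`:C$. In $(`m)$ the co-context is extended by a bound name $`a`:C$, and we likewise weaken $t$'s derivation to include $`a`:C$.

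Part (2) is the dual. The interesting case is $(\AxC)$ on $`b$ itself: there the judgement is $\derLmmt `G | `b : A |- `b`:A,`D $, and the substituted environment is $e$, typed by hypothesis. For a name $`a\neq`b$ we reapply $(\AxC)$. The structural cases again weaken $e$'s derivation past the binders introduced by $`l$, $\mt$ and $`m$ and then reapply the rule.

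The only point needing care, and where we expect the main obstacle, is bookkeeping in the contexts. Because $`G$ and $`D$ are compatible unions, the side contexts of sub-derivations may contain the bound variable or name. We must ensure that the weakened derivations of $t$ and $e$ remain compatible with those contexts, which Barendregt's convention guarantees. With that, each case is a one-line reapplication of the corresponding typing rule, and the result for commands is the instance of the strengthened statement asked for in the lemma.
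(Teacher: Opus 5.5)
Your proof is correct and takes essentially the same route as the paper, which only says that the result follows by a straightforward induction over the structure of typing derivations. Your explicit strengthening of the statement to terms and environments, and your weakening of the derivation for the substituted term or environment past the binders $\lambda$, $\mu$ and $\tilde\mu$, supply exactly the details the paper leaves implicit.
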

 \begin{proof}
Straightforward by induction over the structure of derivations in $\Turnlmmt$. \QED
 \end{proof}

We can now show that assignable types are preserved under reduction.

 \begin{theorem}[Soundness of type assignment]
\label{Subject reduction nclmu}
\label{lem:LMMT:Typing:SubjectReduction}
 \begin{enumerate}
 \firstitem If $ \derLmmt c : `G |- `D $, and $c \redlmmt c'$, then $ \derLmmt c' : `G |- `D $.
 \item If $ \derLmmt `G |- t : A | `D $, and $t \redlmmt t'$, then $ \derLmmt `G |- t' : A | `D $.
 \item If $ \derLmmt `G | e : A |- `D $, and $e \redlmmt e'$, then $ \derLmmt `G | e' : A |- `D $.
 \end{enumerate}
 \end{theorem}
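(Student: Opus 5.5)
The proof will be a simultaneous induction on the definition of $\redlmmt$, covering all three syntactic categories at once. The contextual closure rules make this mutual: a step inside a command can stem from a step in a term or an environment, and vice versa. I will first treat the rules that rewrite at the root (logical and extensional), and then dispatch the contextual rules by the induction hypothesis. In each case I invert the last rule(s) of the given derivation, which is possible because $\Turnlmmt$ is syntax-directed.

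For the logical rules:
\begin{itemize}
\item $(`m)$: a derivation of $\derLmmt \cell<`M`b.c|e> : `G |- `D $ ends in $(\Cut)$ with premises $\derLmmt c : `G |- `b`:A,`D $ (via $(`m)$) and $\derLmmt `G | e : A |- `D $. Part~2 of \Lmm\ref{lmmt subst lemma} then gives $\derLmmt c\tsubst[e/`b] : `G |- `D $.
\item $(\mt)$: this case is symmetric and uses part~1 of \Lmm\ref{lmmt subst lemma}.
\item $(`l)$: inverting $\cell<`Lx.t_1|t_2`.e>$ gives $\derLmmt `G,x`:A |- t_1 : B | `D $, $\derLmmt `G |- t_2 : A | `D $ and $\derLmmt `G | e : B |- `D $. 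Apply $(\Cut)$ to $t_1$ and $e$ to get $\derLmmt \cell<t_1|e> : `G,x`:A |- `D $; for this, $e$'s derivation must first be weakened with $x`:A$, which is harmless since $x$ is fresh by Barendregt's convention. Then $(\mt)$ gives $\derLmmt `G | \mt x.\cell<t_1|e> : A |- `D $, and a final $(\Cut)$ with $t_2$ closes the case.
\end{itemize}
This requires weakening (and, below, thinning) for $\Turnlmmt$. I will state and prove these as a lemma analogous to \Lmm\ref{lmn weakening lemma}, by a routine induction on derivations.

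For the extensional rules:
\begin{itemize}
\item $(`h`m)$: from $\derLmmt `G |- `M`a.\cell<t|`a> : A | `D $ we get $\derLmmt `G |- t : A | `a`:A,`D $, and the rule $(\AxC)$ forces the type of $`a$ to be $A$. Since $`a\notele\FV{t}$, thinning removes $`a`:A$.
\item $(`h\mt)$: this case is dual.
\item $(`h)$: inversion of $`Lx.`M`b.\cell<t|x`.`b>$ at type $A\arr B$ gives $t : C$ with $x`.`b : C$ in context $`G,x`:A$ and $`b`:B,`D$. The $(\arrL)$ rule, with $(\AxT)$ and $(\AxC)$, forces $C = A\arr B$. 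Thinning away $x$ and $`b$, which do not occur in $t$, yields $\derLmmt `G |- t : A\arr B | `D $.
\end{itemize}

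The contextual cases are direct. For example, if $\cell<t|e>\redlmmt\cell<t'|e>$ because $t\redlmmt t'$, then the induction hypothesis applied to the $(\Cut)$ premise for $t$, at the same type $A$, lets us rebuild the cut. The same pattern handles abstraction, stacks, and the binders $`m$ and $\mt$; in the binder cases the induction hypothesis is applied in the extended context.

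I expect the main obstacle to be the bookkeeping in the $(`h)$ case and the weakening step for $(`l)$: both need the admissible weakening/thinning lemma and careful use of the freshness conventions. Everything else reduces to \Lmm\ref{lmmt subst lemma} and inversion.
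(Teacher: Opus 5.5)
Your proposal is correct and is essentially the paper's proof: the same case analysis, with $(\lambda)$ handled by regrouping the subderivations after weakening the derivation for $e$, $(\mu)$ and $(\tilde\mu)$ by the two parts of the substitution lemma, and $(\eta)$, $(\eta\mu)$, $(\eta\tilde\mu)$ by inversion (the axiom rules fix the types) followed by thinning. You are a little more explicit than the paper: you state the weakening/thinning property needed for the $\lambda\mu\tilde\mu$ type system, which the paper uses without stating, and you spell out the contextual cases, which the paper omits.
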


 \begin{Proof}{Simultaneous by induction on the definition of $\rtcredlmmt$.}{\Thm \ref{Subject reduction nclmu}}

Simultaneous by induction on the definition of $\rtcredlmmt$; we will only show the base cases.
\begin{description}
    \item[$ \cell<`L x.t_1 | t_2`.e > \reduc \cell< t_2 | {\mt x.< t_1|e>} > $]
If $ \derLmmt \cell<`Lx.t_1|t_2`.e> : `G |- `D $, then the derivation is shaped like on the left; regrouping the sub-derivations, we can construct the one on the right.
\[ \kern -5mm \def \Turnlmmt{\Turn}
    \Inf    {\Inf    {\InfBox{ \derLmmt `G,x`:A |- t_1 : B | `D }
}{ \derLmmt `G |- `Lx.t_1 : A\arr B | `D }
\Inf    {\InfBox{ \derLmmt `G |- t_2 : A | `D }
    \quad
    \InfBox{ \derLmmt `G | e : B |- `D }
}{ \derLmmt `G | t_2`.e : A\arr B |- `D }
    }{ \derLmmt \cell<`Lx.t_1|t_2`.e> : `G |- `D }
    \dquad
    \Inf    {\InfBox{ \derLmmt `G |- t_2 : A | `D }
\kern-10mm
\Inf    {\Inf    {\InfBox{ \derLmmt `G,x`:A |- t_1 : B | `D }
\quad
\Inf    [\Weak]
{\InfBox{ \derLmmt `G | e : B |- `D }
}{ \derLmmt `G,x`:A | e : B |- `D }
    }{ \derLmmt \cell<t_1|e> : `G,x`:A |- `D }
}{ \derLmmt `G | \mt x . <t_1|e> : A |- `D }
    }{ \derLmmt \cell<t_2|{\mt x.<t_1|e>}> : `G |- `D }
\]

    \item[$ \cell < `M`b.c | e > \reduc c\tsubst e/`b $]
If $ \derLmmt \cell<`M`b.c|e> : `G |- `D $, then the derivation is shaped like
\[ \def \Turnlmmt{\Turn}
\Inf	{\Inf	{\InfBox{ \derLmmt c : `G |- `b`:A,`D }
		}{ \derLmmt `G |- `M`b.c : A | `D }
\quad
	 \InfBox{ \derLmmt `G | e : A |- `D }
	}{ \derLmmt \cell<`M`b.c|e> : `G |- `D }
\]
By \Lmm \ref{lmmt subst lemma} we get $ \derLmmt c\tsubst[e/`b] : `G |- `D $.

    \item[$ \cell<t|\mt x.c> \reduc c\tsubst t/x $]
If $ \derLmmt \cell<t|\mt x.c> : `G |- `D $, then the derivation is shaped like
\[ \def \Turnlmmt{\Turn}
    \Inf    {\InfBox{ \derLmmt `G |- t : A | `D }
\quad
\Inf    {\InfBox{ \derLmmt c : `G,x`:A |- `D }
}{ \derLmmt `G | \mt x.c : A |- `D }
    }{ \derLmmt \cell<t|\mt x.c> : `G |- `D }
\]
By \Lmm \ref{lmmt subst lemma} we get $ \derLmmt c\tsubst[t/x] : `G |- `D $.

    \item[$ `L x . `M`b.\cell< t | x`.`b > \reduc t $, $ x,`b \notele \FV{t} $]
If $ \derLmmt `G |- `Lx.`M`b.<t|x`.`b> : A | `D $, then the derivation is shaped like
\[ \def \Turnlmmt{\Turn}
    \Inf    {\Inf    {\Inf    {\InfBox{ \derLmmt `G,x`:A |- t : A\arr B | `b`:B,`D }
\quad
\Inf    {\Inf    { \derLmmt `G,x`:A |- x : A | `b`:B,`D }
    \Inf    { \derLmmt `G,x`:A | `b : B |- `b`:B,`D }
}{ \derLmmt `G,x`:A | x`.`b : A\arr B |- `b`:B,`D }
    }{ \derLmmt \cell<t|x`.`b> : `G,x`:A |- `b`:B,`D }
}{ \derLmmt `G,x`:A | `M`b.<t|x`.`b> : B |- `D }
    }{ \derLmmt `G |- `Lx.`M`b.<t|x`.`b> : A\arr B | `D }
\]
From $ \derLmmt `G,x`:A |- t : A\arr B | `b`:B,`D $ and $ x,`b \notele \FV{t} $, by Thinning we get $ \derLmmt `G |- t : A\arr B | `D $.

    \item[$ `M`a.< t | `a > \reduc t $, $`a \notele \FV{t} $]
If $ \derLmmt `G |- `M`a.<t|`a> : A | `D $, then the derivation is shaped like
\[ \def \Turnlmmt{\Turn}
    \Inf    {\Inf    {\InfBox{ \derLmmt `G |- t : A | `a`:A,`D }
    \quad
    \Inf{ \derLmmt `G | `a : A |- `a`:A,`D }
}{ \derLmmt \cell<t|`a> : `G |- `a`:A,`D }
    }{ \derLmmt `G |- `M`a.<t|`a> : A | `D }
\]
From $ \derLmmt `G |- t : A | `a`:A,`D $ and $ `a \notele \FV{t} $, by Thinning we get $ \derLmmt `G |- t : A | `D $.

    \item[$ \mt x.< x | e > \reduc e $, $ x \notele \FV{e} $]
If $ \derLmmt `G |- `M`a.<t|`a> : A | `D $, then the derivation is shaped like
\[ \def \Turnlmmt{\Turn}
    \Inf    {\Inf    {\Inf{ \derLmmt `G,x`:A |- x : A | `D }
    \quad
    \InfBox{ \derLmmt `G,x`:A | e : A |- `D }
}{ \derLmmt \cell<x|e> : `G,x`:A |- `D }
    }{ \derLmmt `G | \mt x . <x|e> : A |- `D }
\]
From $ \derLmmt `G,x`:A | e : A |- `D $ and $ x \notele \FV{e} $, by Thinning we get $ \derLmmt `G | e : A |- `D $.
\QED
\end{description}
\observe
We can extend this last result also for the second alternative to rule $(`l)$, since we can derive:
\[ \def \Turnlmmt{\Turn}
    \Inf    {\Inf    {\Inf    {\InfBox{ \derLmmt `G |- t' : A | `D }
\kern-7mm
\Inf    {\Inf    {\InfBox{ \derLmmt `G,y`:A |- t : B | `D }
\Inf    { \derLmmt `G,y`:A |- `g : B | `g`:B,`D }
    }{ \derLmmt {\cell<t|`g>} : `G,y`:A |- `g`:B,`D }
}{ \derLmmt `G | {\mt y.<t|`g>} : A |- `g`:B,`D }
    }{ \derLmmt { \cell< t' | {\mt y.<t|`g>} >} : `G |- `g`:B,`D }
}{ \derLmmt `G |- {`M`g . < t' | {\mt y.<t|`g> } > } : B | `D }
\kern-10mm
\InfBox{ \derLmmt `G | e : B |- `D }
    }{ \derLmmt \cell <{`M`g . < t' | {\mt y.<t|`g> } > }|e> : `G |- `D }
\] 
 \end{Proof}


 \subsection{Non-Confluence, {\CBNPDF} and {\CBVPDF}}
 \label{sec:LMMT:NonConfluence}

Since \Cut-elimination of the classical sequent calculus is not confluent, neither is reduction in $\lmmt$; it has a critical pair in the command $\cell < `M`a.c_1 | \mt x . c_2 > $, which reduces to both $ c_1\tsubst[\mt x.c_2/`a]$ and $ c_2\tsubst[`M`a.c_1/x]$.
For example, in {\LK} the proof (where $(W)$ is the admissible weakening rule)
 \[ \def \TurnLK {\Turn}
 \Inf [\Cut]
 {\Inf [W]
 { \InfBox{\D_1}{\derLK `G |- `D }
 }{ \derLK `G |- A,`D }
 \Inf [W]
 { \InfBox{\D_2}{\derLK `G |- `D }
 }{ \derLK `G,A |- `D }
 }{\derLK `G |- `D }
 \]
reduces to both $\D_1$ and $\D_2$, different proofs, albeit for the same sequence; likewise, in $\TurnLmmt$ we can derive (where $`a$ does not appear in $c_1$, and $x$ does not appear in $c_2$):
 \[ \def \Turnlmmt {\Turn}
 \Inf [\Cut]
 {\Inf [`m]
 {\Inf [W]
 { \InfBox{\derLmmt c_1 : `G |- `D }
 }{ \derLmmt c_1 : `G |- `a{:}A,`D }
 }{ \derLmmt `G |- `M`a.c_1 : A | `D }
 \Inf [\mt]
 {\Inf [W]
 { \InfBox{\derLmmt c_2 : `G |- `D }
 }{ \derLmmt c_2 : `G,x{:}A |- `D }
 }{ \derLmmt `G | \mt x.c_2 : A |- `D }
 }{\derLmmt {\cell< `M`a.c_1 | \mt x.c_2 >} : `G |- `D }
 \]
and $\cell< `M`a.c_1 | \mt x.c_2 >$ reduces to both $c_1$ and $c_2$: witnesses to the same sequent, but not necessarily representing the same proof; this is related to Lafont's example.

On the other hand, the term $ `M `g . < `Lx.t | `M`a.c `. `g > $ is \emph{not} a $\lmmt$ critical pair, whereas its $\nclmu$-counterpart $ `@ (`Lx.M) (`M`a.\Cmd) $ (see \Sect\ref{nclmu in lmmt}) \emph{is} a $\nclmu$ critical pair.
We will come back to this at the end of \Sect\ref{nclmu in lmmt}. 

The $\lmmt$-calculus expresses the duality of \LK's left and right introduction in a very symmetric syntax.
But the duality goes beyond that: for instance, the symmetry of the reduction rules displays syntactically the duality between the {\CBV} and {\CBN} evaluations (see also \cite{WadlerDual'03}).


In \cite{Curien-Herbelin'00} the {\CBV} sub-reduction is not defined as a strategy but is obtained by forbidding a $\mt$-reduction when the command is also a $`m$-redex, whereas the {\CBN} sub-reduction forbids a $`m$-reduction when the redex is also a $\mt$-redex; there is no other restriction defined in \cite{Curien-Herbelin'00,Herbelin'05} in terms of not permitting certain contextual rules in the definition of {\CBV} and {\CBN}.
Since we want to stude {\CBN} and {\CBV} (left-most outer-most) reduction \emph{strategies} in the sense that each term has at most one contractable cut, we will define those here.

 \begin{definition}[{\CBN} and {\CBV} reduction strategies for $\lmmt$] \label{CBN and CBV in lmmt}
 \begin{enumerate}
 \firstitem
\emph{Values} $V$ are defined by the grammar $V ::= x \mid`Lx.t $, and \emph{stacks}\footnote{In \cite{Herbelin'05}, stacks are called \emph{linear evaluation contexts}.} $S$ are defined by $S ::= `a \mid t `. S$.

 \item
The {\CBN}-reduction $\redlmmtn$ is defined by limiting rule $(`m)$ as on the left, to be only applicable to stacks;
the {\CBV}-reduction $\redlmmtv$ is defined by limiting rule $(\mt)$ as on the right, to be only applicable to values.

 \[ \begin{array}{rrcl}
(`l): & \cell<`L x.t_1 | t_2`.e >	&\reduc& \cell< t_2 | {\mt x.< t_1|e>} > \\
(\mun): & \cell< `M`b.c | S > & \reduc & c \tsubst[S/`b] \\
(\mt): & \cell< t | \mt x . c > & \reduc & c \tsubst[t/x] \\
(`h`m): & `M`a.< t | `a > & \reduc & t \quad (`a \notele \FV{t})
 \end{array} 
 \qquad
 \begin{array}{rrcl}
(`l): & \cell<`L x.t_1 | t_2`.e >	&\reduc& \cell< t_2 | {\mt x.< t_1|e>} > \\
(`m): & \cell< `M`b.c | e > & \reduc & c\tsubst[e/`b] \\
(\mtv): & \cell< V | \mt x . c > & \reduc & c \ntsubst[V/x] \\
(`h`m): & `M`a.< t | `a > & \reduc & t \quad (`a \notele \FV{t})
 \end{array} \]

 \item
The {\CBN}-reduction strategy $\redlmmtN$ is defined by restricting $\redlmmtn$ through limiting the contextual rules to: 
 \[ \begin{array}{rcll}
c \reduc c' & \Imply & `M`b . c \reduc `M`b . c' 
 \end{array} \]
The {\CBV}-reduction strategy $\redlmmtV$ is defined from $\redlmmtv$ the same way.

 \end{enumerate}
 
 \observe
Both reduction strategies only reduce terms or commands, never environments, and only the reduction rules $(`m)$ and $(\mt)$ are changed or restricted.
Since terms are either variables, abstractions, or $`m$-terms, and reduction under a $`l$ is not allowed, only $`m$-terms can be reduced.

 \end{definition}

Of course \Thm \ref{Subject reduction nclmu} holds for the {\CBN} and {\CBV} strategies as well.


\section{{\XiCalcPDF}: {\XCalcPDF} with (Implicit) Substitutions}
\label{Xs def}
\label{sec:Xi}

    \label{sec:XCalculus}
    \label{sec:Xsyntax}

This section explores the $\Xs$-calculus, a variant of the calculus $\X$ which in \cite{vBLL-ICTCS'05,Bakel-Lescanne-MSCS'08} has been proven to be a fine-grained implementation model for various well-known calculi, including the $`l$-calculus, $\Lx$, $\lmu$, and $\lmmt$.
The calculus $`X$ looks to represent proofs in (the implicative fragment of) the sequent calculus \LK,  known as Kleene's $G_3$ \cite{Kleene'52}.
The variant of $\X$ we consider was defined by \namecite{Summers-PhD'08}; it uses substitution to express reduction.

 \subsection{The principle of {\XCalcPDF}} \label{sec:XCalculus:Intro}

The variant of {\LK} we consider offers a natural presentation of the classical propositional calculus with implication.
{\LK} has no notion of active formulae, thus in $`X$, terms denote the entire derivation.
This is not the case for  the proof systems underpinning $\lmu$ and $\lmmt$ deviate from that, in that a proof focusses on one formula, and terms (and environments) represent a proof of that formula. 
In contrast, a term $P$ in $`X$ is a witness of a type judgement ${ \derX P : `G |- `D }$, whereby $`G$ is a context of inputs (called \emph{sockets}), $x_1 `:A_1,\dots, x_m`:A_m$, and $`D$ a context of outputs (called \emph{plugs}), $`a_1`:B_1,\dots,`a_n`:B_n$.
The type system of $`X$ can be seen to annotate ${\LK}$ derivations by labelling these inputs and outputs.

For the rule $(\Ax)$, we must have syntax for a sequent of the form ${ \derLK `G, x `: A |- `a `: A, `D }$.
The only sensible choice is to explicitly associate $x$ and $`a$, which forms the `capsule' ${\caps<x.`a> }$
, that explicitly states which assumptions the rule uses,
 \[ \def  \TurnX {\Turn}
\Inf	{\derX \caps<x.`a> : `G,x`:A |- `a`:A,`D } 
 \]
The syntax ${\caps<x.`a> }$ denotes connecting the input $x$ to the output $`a$.

Similarly, the rule $(\Cut)$ must denote which occurrences of the principal formula to remove.
Given derivations of ${ \derLK `G |- `a `: A, `D }$ and ${ \derLK `G, x `: A |- `D }$, a $(\Cut)$ will remove $x$ and $`a$ from the context; this implies that a cut \emph{binds} these variables.
 \[ \def  \TurnX {\Turn}
\Inf	{\derX P : `G |- `a `: A, `D 
	\quad
	\derX Q : `G, x `: A |- `D 
	}{ \derX \cut P `a + x Q : `G |- `D }
 \]
Here, $\hat{\cdot}$ denotes the binding of a variable.
The term ${\cut P `a + x Q }$ then explicitly associates $`a$ with $\hat{x}\,Q$ in $P$, and, symmetrically, $x$ with $P\,\hat{`a}$ in $Q$.
Notice that the notation stresses the view of terms in $\X$: variables correspond to \textit{inputs} and are bound on the left, whereas names correspond to \textit{outputs} and are bound on the right. 

We can continue this process for the left and right arrow rules, giving each formula a label which the corresponding term uses,
 \[ \def  \TurnX {\Turn} \begin{array}{c@{\qquad}c}
\Inf	{\derX P : `G,x`: A |- `a`: B,`D 
	}{ \derX \exp x P `a . `b : `G |- `b`: A\arr B,`D } 
    &
\Inf	{\derX P : `G |- `a`: A,`D 
	 \quad 
	 \derX Q : `G,x`: B |- `D 
	}{ \derX \imp P `a [y] x Q : `G,y`: A\arr B |- `D } 
 \end{array} \]
The first rule must discharge $x$ and $`a$ from the context, therefore it binds them and (potentially) introduces the output $`b$ of a function.
The term ${\exp x P `a . `b }$ then exports a function of type $A\arr B$ through $`b$, and defines this function by sending its input along $x$ in $P$, and retrieving its output from $`a$.

The second rule similarly binds $`a$ and $x$ in the underlying term.
As the resulting formula has an input of type ${A\arr B}$, the rule gives this input a name $y$ and adds it to the context;
the name $y$ may already appear in $`G$, in which case it must have already had the type ${A\arr B}$.
The term ${\imp P `a [y] x Q }$ therefore expects to receive a function (get connect to via a cut) on $y `: A\arr B$, to which it links the function's input $A$ with the output $`a$ of $P$, and the output $B$ with the input $x$ of $Q$.

This then leads to: 

 \begin {definition}[Syntax for $`X$ \cite{Bakel-Lescanne-MSCS'08}] \label{X terms}
The terms (or \emph{nets}) of the $`X$-calculus are defined by the following syntax, where $x,y$ range over the infinite set of \textsl{term variables} (also called \emph{sockets}), and $`a, `b$ over the infinite set of \textsl{context variables} (also called \emph{plugs}); the common name for both is \emph{connector}.
 \[ \begin{array}{rcccccccc}
P,Q &::=& \caps<x.`a> & \mid & \exp y P `b . `a & \mid & \imp P `b [y] x Q & \mid & \cut P `a + x Q
\\
&& \textsl{capsule} & & \textsl{export} & & \textsl{import} & & \textsl{cut}
 \end{array} \]

 \observe 
We borrow the terminology from $\lmmt$, and call $P$ in $ \cut P `a + x Q $ or $ \imp P `b [y] x Q $ a term, and $Q$ a context.
 \end {definition}
The $\hat{`.}$ symbolises that the connector underneath is bound\footnote{Using the \emph{``hat''}-notation, we are keeping in line with the old tradition of \emph{Principia Mathematica}~\cite{Whitehead-Russel'25,Whitehead-Russel'97}; it is rumoured that this notation morphed into $`l$ because of type setting problems.} in the adjacent term.
The notion of bound and free connector is defined as usual, and we will identify terms that only differ in the names of bound connectors, as usual.
We adopt Barendregt's convention in that free and bound connectors of terms will be different.

The elimination of rule $(\Cut)$ plays a major role in {\LK}, as cut-free proofs enjoy nice proof-theoretic properties.
Gentzen proposed proof reductions by cut-elimination, and showed that the inner-most, left-most strategy terminates  ; the general strategy involves moving cuts upward through a proof tree, applying the latter cuts to sub-formulae of the principal formula of the original cut.
These reductions become the fundamental principle of computation in $`X$.

A prototypical eliminable cut is that with a matching left and right rule as premises,
 \[ \def \TurnLK {\Turn}
\Inf	[\Cut]
	{\Inf	{\InfBox{ \derLK `G,A |- B,`D }
		}{ \derLK `G |- A\Arrow B,`D }
	 \quad
	 \Inf	{\InfBox	{\derLK `G |- A,`D }	
		 \quad
		 \InfBox { \derLK `G,B |- `D }
		}{ \derLK `G, A\Arrow B |- `D }
	}{ \derLK `G |- `D }
\]
Naturally, we can eliminate this use of $(\Cut)$ on $A\Arrow B$ by instead $(\Cut)$-ing on $A$ and $B$,
although we have a choice on the order of these propagated cuts:
 \[ \def\TurnLK {\Turn} \begin{array}{c@{\qquad}c}
\Inf	[\Cut]
	{\Inf	[\Cut]
		{\InfBox { \derLK `G |- A, `D }
		\quad
		\InfBox { \derLK `G,A |- B,`D }
		}{ \derLK `G |- B,`D }
	 \quad
	 \InfBox { \derLK `G,B |- `D }
	}{ \derLK `G |- `D }
    &
\Inf	[\Cut]
	{ \InfBox {\derLK `G |- A, `D }
	 \quad
	 \Inf	[\Cut]
	 	{\InfBox { \derLK `G,A |- B,`D }
		\quad
		\InfBox { \derLK `G,B |- `D }
		}{ \derLK `G,A |- `D }
	}{ \derLK `G |- `D }
\end{array}\]
Although each resulting proof-tree has more cuts than the former, the latter two cut on only strict subformulae of the original principal formula ${A\Arrow B }$ which froms the basic principle for Gentzen's termination proof.
In $`X$, cut eliminations of matching left/right rules induce \emph{logical} reductions.
The original derivation corresponds with a term of the form, ${\cut { \exp y P `b . `a } `a + x { \imp Q `g [x] z R } }$, and the cut-elimination to the following reductions,
\[\begin{array}{rcl}
    {\cut { \exp y P `b . `a } `a + x { \imp Q `g [x] z R } }
	&\reduc&
    \begin{cases}
	\cut Q `g + y { \cut P `b + z R }
	\\
	\cut { \cut Q `g + y P } `b + z R
    \end{cases}
\end{array}\]
Recalling our understanding of $(\arrL)$ and $(\arrR)$, the term ${\exp y P `b . `a }$ defines a function with input $y$ and output $`b$, and ${\imp Q `g [x] z R }$ feeds a function to arrive in $x$ with input from $`g$, and sends its output to $z$.
When cutting these two terms together, the above reduction associates these inputs and outputs accordingly;
the input $y$ of $P$ connects with $`g$, and the output $`b$ with the input $z$ of $R$.
As the $(\Cut)$ rule only makes one such connection at a time, the order in which one makes these connections induces one of the two reducts -- which need not yield the same set of normal forms (reduction is not confluent, so normal forms can be plenty).

In general, the premises of a cut may not be matching left and right rules,
 \[ \def \TurnLK {\Turn}
\Inf	[\Cut]
	{\InfBox{ \derLK `G |- A, `D }
	 \quad
	 \Inf	[\ArrL]
	 	{\InfBox{ \derLK `G, A |- B, `D }
		 \quad
		 \InfBox{ \derLK `G, A, C |- `D }
		}{ \derLK `G, A, B\Arrow C |- `D }
	}{ \derLK `G, B\Arrow C |- `D }
 \]
In this case, we wish to push the $(\Cut)$ on $A$ up the proof-tree, to find the occurrence of the left and right rules introducing $A$,
 \[ \def \TurnLK {\Turn}
\Inf	[\ArrL]
	{\Inf	[\Cut]
		{\InfBox { \derLK `G |- A, `D }
		 \quad
		 \InfBox { \derLK `G, A |- B, `D }
		}{ \derLK `G |- B , `D }
	 \quad
	 \Inf	[\Cut]
	 	{\InfBox { \derLK `G |- A, `D }
		 \quad
		 \InfBox { \derLK `G, A, C |- `D }
		}{ \derLK `G, C |- `D }
	}{ \derLK `G, B\Arrow C |- `D }
\]
In $`X$, such cut eliminations  correspond to the \emph{propagating} reductions.





 \subsection{Type System}

Type assignment for $`X$ is defined as follows:

\begin{definition}[Typing for $`X$ \cite{vBLL-ICTCS'05}]
    \label{Typing for X}
Using the notion of types, and contexts of variables and names of \Def \ref{tas lmu}, type assignment for $`X$ is defined as follows:
 \begin{enumerate}
 \item
\emph{Type judgements} are expressed via the ternary relation $\derX P : `G |- `D $, where $`G$ is a context of \textsl{sockets} and $`D$ is a context of \textsl{plugs}, and $P$ is a term, called the \emph{witness}.

 \item
 \emph{Context assignment for} $`X$ is defined by the following rules:
 \[ \def \TurnX {\Turn} 
\begin {array}{rl@{\dquad}rl} 
(\Cap): & 
\Inf	{\derX \caps<y.`b> : `G,y`: A |- `b`: A,`D } 
	&
(\Imp): & 
\Inf	{\derX P : `G |- `a`: A,`D 
	 \quad 
	 \derX Q : `G,x`: B |- `D 
	}{ \derX \imp P `a [y] x Q : `G,y`: A\arr B |- `D } 
	\\ [5mm]
(\Exp): & 
\Inf	{\derX P : `G,x`: A |- `a`: B,`D 
	}{ \derX \exp x P `a . `b : `G |- `b`: A\arr B,`D } 
	&
(\Cut): & 
\Inf	{\derX P : `G |- `a`: A,`D 
	 \quad 
	 \derX Q : `G,x`: A |- `D 
	}{ \derX \cut P `a + x Q : `G |- `D } 
 \end {array} \]

We write $\derX P : `G |- `D $ if there is a derivation that has this judgement in the bottom line. 

 \end {enumerate}
 \end {definition}

Notice that $`G$ and $`D$ carry the types of the free connectors in $P$, as unordered sets.
There is no notion of type for $P$ itself, instead the derivable statement shows how $P$ is connectable.


 \def \cutleft#1#2#3#4{ {#1} \sk \{ {#2} \DaggerL \XIn{#3}{#4} \nhsk \} \nsk }
 \def \cutright#1#2#3#4{ \{ \XOut{#1}{#2} \DaggerR {#3} \} \sk {#4}}
      
 \subsection{$\Xs$: $\X$ with implicit substitutions}

Van Bakel and Lescanne \cite{Bakel-Lescanne-MSCS'08} argue that $\X$ is a calculus with explicit substitution, which makes it suitable to encode calculi such as the $`l$-calculus, $\Lx$, $\lmu$ and $\lmmt$; these encodings are with respect to full reduction. 
Since in this paper we look to model similar results for restrictions of those calculi to {\CBN} or {\CBV} strategies, this explicit character of $\X$ poses a problem.
In particular, to show the preservations results in \Sect\ref{lmmt in X} for our encoding of $\lmmt$ into $`X$, we need to be able to simulate the implicit substitution of $\lmmt$; this is non-problematic for full reduction, as shown in \cite{Bakel-Lescanne-MSCS'08}, but when modelling the {\CBN} and {\CBV} strategies, in the proofs of \Lmm\ref{mu simulation} and \ref {mut simulation} we would be forced to propagate the active cuts (\ie execute reduction steps to model substitution) all through the terms, even where not permitted under {\CBN} or {\CBV} strategies for $`X$, so would not be able to show full simulation.

Take for example $\lmmt$'s reduction rule $(`m)$, then we would need to show a property like \[ \Tran[\cell<`M`b.c | e >] \rtcredX \Tran[{c \tsubst[e/`b]}] , \] so need to show that $\lmmt$'s substitution is preserved under the translation function $\Tran[`.]$.
Since substitution is not part of the definition of $\X$, the only thing that is possible to show, as is done in \cite{Bakel-Lescanne-MSCS'08}, is that the interpretation of the substitution of $\lmmt$ gets \emph{executed} through reduction in $\X$, mainly through the propagation rules, in all sub-terms.
This implies that, in a proof for this result, to fully achieve $\Tran[{c \tsubst[e/`b]}]$, these steps need to be executed \emph{in full}, irrespective of the restriction of reduction in evaluation contexts that the {\CBN} or {\CBV}-reduction strategy imposes, thus forcing us, in practice, to allow for reduction, at least of the propagation rules, to take place everywhere, thus violating the reduction strategies.
No longer considering active cuts as reducible terms, but rather expressing propagation of active cuts through substitution, avoids that problem.

We therefore change our definition to that of $\Xs$, a variant of $\X$ defined by \namecite{Summers-PhD'08}§ (called $`X^i$ there) that has a notion of substitution; notice that part of the justification of this definition lies in \Lmm\ref{X renaming lemma}.
\citet{Summers-PhD'08} defines $`X^i$ as a variant or $`X$ (adding also negation), that replaces the propagation rules by a substitution-like operation $ P `a \leftrightarrowtriangle x $ and $ `a \leftrightarrowtriangle x Q $; the benefit of this change is that activated and unactivated cuts cannot interfere, and makes it more clear that the intention of reduction is that activated cuts run to completion. 
We agree with Summers that separating the rules that express propagation of activated cuts from the logical reduction rules gives a notion of reduction that is easier to deal with; it should be noted, however, that treating activated cuts as implicit substitution, rather than explicit substitution, as was the case in \cite{Bakel-Lescanne-MSCS'08}, severely restricts $`X$'s notion of reduction and possible reducts.


In the definition of reduction, it is important to know when a connector is `introduced', that is, its first appearance in a term; this plays a crucial role in the reduction rules, both for $\X$ and for $\Xs$.
\begin{definition}[Connector Introduction \cite{Bakel-Lescanne-MSCS'08}]~
    \label{def:XCalculus:Syntax:Introduces}
    \begin{enumerate}
	\item 
	$P$ \emph{introduces} $x$ when either $P = \imp Q `b [x] y R $ with $x \notin \FS{Q,R}$, or $P = \caps<x.`a>$.
    
	\item 
	$P$ \emph{introduces} $`a$ either $P = \exp x Q `b . `a $ and $`a \notin \FP{Q }$, or $P = \caps<x.`a>$. 
    \end{enumerate}
\end{definition}
Essentially, $P$ introduces a connector when it contains the first free use of said connector.
In the following, the logical reductions (\Def \ref{Xs Reduction}) only apply for terms that introduce their bound connector in a cut; 
otherwise, activating a cut implies that reduction must propagate this cut to all uses of the connectors before eliminating it.

The reduction system is based on cut-elimination, as discussed in Section \ref{sec:XCalculus:Intro}.
The core rules are the logical ones, and handle eliminating instances of $(\Ax)$ and left/right rule pairs.
For details, we refer to \cite{Bakel-Lescanne-MSCS'08}].

 \resetqsymbol{`X}{\Xs} 

 \begin {definition}[Substitution on $\Xs$]
The terms of $\Xs$ are those of $\X$:
 \[ \begin{array}{rcccccccc}
P,Q &::=& \caps<x.`a> & \mid & \exp y P `b . `a & \mid & \imp P `b [y] x Q & \mid & \cut P `a + x Q
 \end{array} \]

\Comment{
 \[ \begin{array}{lclcl}
	\cut {\cut \caps<y.`a> `g + z \caps<z.`a> } `a + x \caps<v,`d> & \reduc \\
	\cutL {\cut \caps<y.`a> `g + z \caps<z.`a> } `a + x \caps<v,`d> & \reduc \\
	\cut {\cutL \caps<y.`a> `a + x \caps<v,`d> } `g + z { \cutL \caps<z.`a> `a + x \caps<v,`d> } & \reduc \\
	\cutL \caps<y.`a> `a + x \caps<v,`d> & \reduc \\
	\cut \caps<y.`a> `a + x \caps<v,`d> & \reduc \\
	\cutR \caps<y.`a> `a + x \caps<v,`d> & \reduc & \caps<v,`d> \\
 \end{array} \]
$  |\DaggerLeft| \quad |\DaggerRight| $

}

\emph{Right substitution} on $\Xs$ is defined through (where we use `$\substo$' for `$=$' to indicate the `flow' of the substitution):
 \[ \begin {array}{rrcl@{\quad}l}
(\deactL): & \cutL \caps<y.`a> `a + x P &\substo& P \tsubst[y/x] \\
(\Li): & \cutL Q `a + x P &\substo& Q & (`a \notele \fn(Q) ) \\
(\Lii): & \cutL { \exp y Q `b . `a } `a + x P &\substo&
	\cut { \exp y { \cutL Q `a + x P } `b . `g } `g + x P & (
	`g\textit{ fresh}) \\
(\Liii): & \cutL { \exp y Q `b . `g } `a + x P &\substo& \exp y { \cutL Q `a + x P } `b . `g & (`g \not= `a) \\
(\Liv): & \cutL { \imp Q `b [z] y R } `a + x P &\substo& \imp { \cutL Q `a + x P } `b [z] y { \cutL R `a + x P } \\
(\Lv): & \cutL { \cut Q `b + y R } `a + x P &\substo& \cut { \cutL Q `a + x P } `b + y { \cutL R `a + x P }
 \end {array} \]

\noindent
and \emph{left substitution} on $\Xs$ through:
 \[ \begin {array}{rlcl@{\quad}l}
(\deactR): & \cutR P `a + x \caps<x.`b> &\substo& P \tsubst[`b/`a] \\
(\Ri): & \cutR P `a + x Q &\substo& Q & (x \notele \fv(Q) ) \\
(\Rii): & \cutR P `a + x { \exp y Q `b . `g } &\substo& \exp y { \cutR P `a + x Q } `b . `g \\
(\Riii): & \cutR P `a + x { \imp Q `b [x] y R } &\substo& 
	\cut P `a + z { \imp { \cutR P `a + x Q } `b [z] y { \cutR P `a + x R } } 
	 & z\textit{ fresh}) \\
(\Riv): & \cutR P `a + x { \imp Q `b [z] y R } &\substo& \imp { \cutR P `a + x Q } `b [z] y { \cutR P `a + x R } \hspace*{3mm} & (z \not= x) \\
(\Rv): & \cutR P `a + x { \cut Q `b + y R } &\substo& \cut { \cutR P `a + x Q } `b + y { \cutR P `a + x R }
 \end {array} \]

\observe
We will often use $ \XOut{P}{`a}$ for $\XOut{P\tsubst[`b/`a]}{`b}$ when this term gets created through reduction and $`b \notele P$.
Similarly for  $\XIn{x}{Q}$ instead of $\XIn{y}{Q\tsubst[y/x]}$ with $y \notele Q$.
 \end{definition}

In $\X$, `$\substo$' are reduction steps; although here it is an equality, and `$=$' could be used as well, we hope that using the arrow notation will help the reader read the proofs of this paper.


 \begin{definition}[Reduction on $`X$] \label{Xs Reduction}
The single step reduction steps for $`X$ are defined through the logical rules (were $`a$ and $x$ are both introduced):
 \[ \begin {array}{rr@{\,}c@{\,}lcll}
(\Cap): &
\caps<y.`a> &\dag{`a}{x}& \caps<x.`b> &\reduc& \caps<y.`b>
\\
(\Exp): &
( \exp y P `b . `a ) &\dag{`a}{x}& \caps<x.`g> &\reduc&
 \exp y P `b . `g
\\
(\Imp): &
\caps<y.`a> &\dag{`a}{x}& ( \imp Q `b [x] z R ) &\reduc&
 \imp Q `b [y] z R
\\
(\Ins): &
( \exp y P `b . `a ) &\dag{`a}{x}& ( \imp Q `g [x] z R ) &\reduc&
 \begin{cases}
 \cut Q `g + y { \cut P `b + z R }
\\
 \cut { \cut Q `g + y P } `b + z R \\
 \end{cases} 
\\
\Span{5}{\textrm{\kern -14mm The substitution activation rules are:}} \\ [2mm]
(\actL): & \multicolumn{3}{c}{ \cut P `a + x Q \hspace*{5mm} } &\reduc& \cutL P `a + x Q \quad (\textit{$P$ does not introduce $`a$})
 \\
(\actR): & \multicolumn{3}{c}{ \cut P `a + x Q  \hspace*{5mm} } &\reduc& \cutR P `a + x Q \quad (\textit{$Q$ does not introduce $x$})
 \end {array} \]

We write $P \redXs Q$ if $P$ reduces to $Q$ using one of the above logical or substitution activation rules, and use $\rtcredXs$ for the reflexive, transitive, compatible reduction relation generated by $\redXs$.

 \end {definition}
Notice that, by these rules, in case both $`a$ is not introduced in $P$ and $x$ is not introduced in $Q$, activation can take place in \emph{both} directions, so then the cut $\cut P `a + x Q $ again forms a critical pair.

Notice that the substitution rules $(\Lii)$ and $(\Riii)$ introduce new cuts: defining, for example, $(\Lii)$  as $ \cutL { \exp y Q `b . `a } `a + x P \substo \cutL { \exp y { \cutL Q `a + x P } `b . `g } `g + x P
$ would ignore that perhaps $x$ is introduced in $P$, so a logical rule would be applicable ($`g$ is introduced, since fresh), but, more importantly, would lead to looping substitutions.

We will now define the {\CBN} and {\CBV} reduction and strategies for $`X$.

 \begin {definition}[Call By Name and Call-by-Value reduction and strategies for $`X$] ~
 \begin{enumerate}
 \item
For $`X$, the \CBN-reduction $\redXsn$ is defined by limiting $\redXs$ through replacing rule $(\actL)$ with $(\actLN)$:
 \[ \begin {array}{rlcll}
(\actLN): & \cut P `a + x Q & \redCBN & \cutL P `a + x Q , & (\textit{$P$ does not introduce $`a$ and $Q$ introduces $x$)}. \\
(\actR): & \cut P `a + x Q &\reduc& \cutR P `a + x Q & (\textit{$Q$ does not introduce $x$}) 
 \end {array} \]

As in \cite{Lengrand'03}, we only allow one variant of $(\Ins)$:
 \[ \begin{array}{rcl}
\cut { \exp y P `b . `a } `a + x { \imp Q `g [x] z R } & \redCBN & \cut Q `g + y { \cut P `b + z R }
 \end{array} \]

 \item
The \CBV-reduction $\redXsv$ is defined by limiting $\redXs$ replacing rule $(\actR)$ with $(\actRV)$:
 \[ \def\arraystretch{1} \begin {array}{rlcll}
(\actL): & \cut P `a + x Q &\reduc& \cutL P `a + x Q & (\textit{$P$ does not introduce $`a$}) \\
(\actRV): & \cut P `a + x Q & \redCBV & \cutR P `a + x Q , & (\textit{$P$ introduces $`a$ and $Q$ does not introduce $x$}).
 \end {array} \]

As for \CBN, we only allow the first variant of $(\Ins)$:
 \[ \begin{array}{rcl}
\cut { \exp y P `b . `a } `a + x { \imp Q `g [x] z R } & \redCBV & \cut Q `g + y { \cut P `b + z R }
 \end{array} \]
 
  \item 
The {\CBN} reduction strategy $\redXsN$ is defined from $\redXsn$ by 
not allowing any contextual rules; only terms that are cuts are reducible; the {\CBV} reduction strategy $\redXsV$ is defined from $\redXsv$ in the same way.

 
 \end {enumerate}
 \end {definition}
So, for {\CBN}, when a logical rule cannot be applied to $\cut P `a + x Q $, go left only if $x$ is introduced in $Q$; otherwise, go right; 
for {\CBV}, in that case go right only if $`a$ is introduced in $P$; otherwise, go left. 

This way, we obtain two notions of reduction that are locally confluent ($\redXsn$ and $\redXsv$) because of the absence of critical pairs;
$\redXsN$ and $\redXsV$ are both confluent since they are strategies.
Notice that the only difference between {\CBN} and {\CBV} reduction lies in activation and that both strategies do not allow for reduction in contexts, nor inside substitutions. 

The soundness result of simple type assignment with respect to reduction is stated as usual:

\begin{theorem}[Witness reduction]
    \label{Witness reduction Xs}
    If $\derX P : `G |- `D $, and $P \redXs Q$, then $\derX Q : `G |- `D $.
\end{theorem}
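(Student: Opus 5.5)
We prove witness reduction by induction on the definition of $\rtcredXs$, following the pattern of \Thm\ref{Subject reduction nclmu}. The compatible closure cases are immediate: if the redex is a proper subterm of $P$, the derivation for $P$ contains a sub-derivation for it (with possibly extended contexts). We replace that sub-derivation using the induction hypothesis and rebuild with the same rules. So the work is in the base cases: the logical rules, the two activation rules, and the substitution they trigger.

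We first prove a substitution lemma for $\Xs$, analogous to \Lmm\ref{lmmt subst lemma}. If $\derX P : `G |- `a`:A,`D $ and $\derX Q : `G,x`:A |- `D $, then both $\derX \cutL P `a + x Q : `G |- `D $ and $\derX \cutR P `a + x Q : `G |- `D $. The proof is by induction on the definition of right and left substitution, that is, on the structure of $P$ (respectively $Q$), with a case for each rule $(\deactL),(\Li),\dots,(\Lv)$ and $(\deactR),(\Ri),\dots,(\Rv)$. Alongside this we need three auxiliary facts, all by straightforward induction:
\begin{itemize}
\item admissible weakening and thinning for $\TurnX$, in the style of \Lmm\ref{lmn weakening lemma};
\item type preservation under renaming of sockets, $P\tsubst[y/x]$ when $y`:A\ele`G$;
\item type preservation under renaming of plugs, $P\tsubst[`b/`a]$ when $`b`:A\ele`D$.
\end{itemize}
Renaming handles $(\deactL)$ and $(\deactR)$. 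Thinning handles $(\Li)$ and $(\Ri)$: $`a\notele\fn(Q)$ lets us drop $`a`:A$, and the free sockets of $Q$ lie within $`G$.

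The interesting cases are $(\Lii)$ and $(\Riii)$, which reintroduce a cut. For $(\Lii)$, $P = \exp y Q `b . `a $ with $`a$ also possibly free in $Q$, and $A = B\arr C$. From $\derX Q : `G,y`:B |- `b`:C,`a`:A,`D $, weakening $\derX P' : `G,x`:A |- `D $ (with $P'$ the context) and applying induction gives $\derX \cutL Q `a + x P' : `G,y`:B |- `b`:C,`D $. Rule $(\Exp)$ with fresh $`g$ then yields type $`g`:A$, and $(\Cut)$ against the context closes the case. $(\Riii)$ is dual, using $(\Imp)$ on a fresh $z`:B\arr C$ followed by $(\Cut)$. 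In the remaining structural cases $(\Liii),(\Liv),(\Lv),(\Rii),(\Riv),(\Rv)$, the statement is pushed through the constructor; to apply the induction hypothesis we must first weaken the other cut premise with the connectors bound by that constructor, which is harmless by Barendregt's convention.

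With the lemma in hand, the base cases are direct. For the activation rules $(\actL)$ and $(\actR)$, the derivation ends in $(\Cut)$ with exactly the premises the lemma needs. For $(\Cap)$, $(\Exp)$ and $(\Imp)$, the introduced connector forces the premise to be an axiom of matching type, so the reduct's derivation is read off directly: for $(\Exp)$ the type $A\arr B$ is moved from $`a$ to $`g$, and for $(\Imp)$ we reuse $y`:A\arr B\ele`G$. For $(\Ins)$ we have $\derX P : `G,y`:A |- `b`:B,`D $, $\derX Q : `G |- `g`:A,`D $ and $\derX R : `G,z`:B |- `D $. After weakening, both reducts are built by two $(\Cut)$s. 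Throughout, the fact that the connector is introduced, and so not free in the subterms, lets us discard the cut type from the subterm contexts by thinning.

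I expect the main obstacle to be the bookkeeping in the substitution lemma, especially $(\Lii)$ and $(\Riii)$. There the substituted subterm must be typed in a context still containing $`a`:A$ (respectively $x`:A$) while simultaneously being used at the new fresh connector. This is why the induction must be on the substitution definition with fully general contexts, rather than on derivations.
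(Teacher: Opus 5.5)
Your proof is correct and is essentially the argument the paper has in mind. The paper only says that the proof is that of the corresponding witness-reduction result for $\X$ in \cite{Bakel-Lescanne-MSCS'08}, where each propagation rule is checked as a type-preserving reduction step. Your substitution lemma, proved clause by clause over $(\deactL),\dots,(\Lv)$ and $(\deactR),\dots,(\Rv)$, is exactly that check transposed to the implicit substitution of $\Xs$, and the logical and activation cases then go through as you describe.
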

The proof for this result is basically that for the similar result for $\redX$ in \cite{Bakel-Lescanne-MSCS'08}.

We have the following results. 
These are already shown in \cite{Bakel-Lescanne-MSCS'08}, but for $\X$; since here the exact steps that are needed in the reduction must be known when modelling {\CBN} or \CBV, we give the proofs in detail.

 \begin{lemma} 
 \label{renaming lemma} \label{garbage collection} \label{X renaming lemma}
 \begin{enumerate}
 \firstitem $ \cutL P `a + x \caps<x.`b> = P \tsubst[`b/`a] $, and $ \cut P `a + x \caps<x.`b> \rtcredXs P \tsubst[`b/`a] $.
 \item $ \cutR \caps<y.`a> `a + x P = P \tsubst[y/x] $, and $ \cut \caps<y.`a> `a + x P \rtcredXs P \tsubst[y/x] $.
 \item $ (\Lgc)$: $ \cutL Q `a + x P = Q $, $`a \notele \fn(Q)$. 
 \item $(\Rgc)$: $ \cutR P `a + x Q = Q $, $x \notele \fv(Q)$.
 \end{enumerate}
 \end{lemma}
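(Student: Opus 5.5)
Parts (3) and (4) come first, since (1) and (2) use them. For (3) I argue by induction on the structure of $Q$, under the assumption $`a \notele \fn(Q)$. If $Q$ is a capsule, rule $(\Li)$ applies at once. Otherwise I push the substitution through with the structural rules $(\Liii)$, $(\Liv)$ and $(\Lv)$. Rule $(\Lii)$ cannot be triggered, because $Q$ does not output $`a$ freely. The induction hypothesis then discharges each component, which gives $Q$ back. Strictly, $(\Li)$ alone already yields $\cutL Q `a + x P = Q$ when $`a\notele\fn(Q)$; the induction only confirms that the substitution rules are coherent, i.e. that pushing through never gives a different result. Part (4) is the exact dual, by induction on $Q$ using $(\Ri)$, $(\Rii)$, $(\Riv)$ and $(\Rv)$, with $(\Riii)$ excluded since $x\notele\fv(Q)$.

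For the equality in (1) I induct on $P$ and show $\cutL P `a + x \caps<x.`b> = P\tsubst[`b/`a]$.
\begin{itemize}
\item Capsule $\caps<y.`a>$: rule $(\deactL)$ gives $\caps<x.`b>\tsubst[y/x] = \caps<y.`b>$. A capsule on another plug is handled by $(\Li)$.
\item Export, import and cut: the structural rules and the induction hypothesis handle these.
\item Export $\exp y Q `d . `a$ introducing $`a$ on the outside (rule $(\Lii)$): I obtain $\cut {\exp y {Q\tsubst[`b/`a]} `d . `g } `g + x \caps<x.`b>$. This cut is not itself a substitution but a logical redex: both $`g$ and $x$ are introduced, so rule $(\Exp)$ reduces it to $\exp y {Q\tsubst[`b/`a]} `d . `b$.
\end{itemize}
So the equality case needs care. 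Either I read $=$ modulo this logical step, or, more likely as the paper intends, I show that the right-hand side is reached by $\rtcredXs$.

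For the reduction claim in (1), consider the cut $\cut P `a + x \caps<x.`b>$. If $P$ introduces $`a$, then, since the capsule introduces $x$, one of the logical rules $(\Cap)$ or $(\Exp)$ applies directly and produces $P\tsubst[`b/`a]$. If $P$ does not introduce $`a$, I apply $(\actL)$ and then the substitution computation above, finishing the residual cuts created by $(\Lii)$ with $(\Exp)$ steps inside contexts. These steps are allowed because $\rtcredXs$ is compatible. Part (2) is fully symmetric: the cases are $(\Cap)$/$(\Imp)$ when $Q$ introduces $x$, and otherwise $(\actR)$ followed by the left-substitution rules. The residual cuts from $(\Riii)$ are $\cut \caps<y.`a> `a + z {\imp\cdots}$, which reduce by the logical rule $(\Imp)$.

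The main obstacle is the bookkeeping of the fresh cuts that $(\Lii)$ and $(\Riii)$ create. I must check that each is genuinely a logical redex (both connectors introduced, using freshness of $`g$ and $z$), and that they contract to exactly the renamed subterm. This settles whether the stated equalities hold literally or only up to these logical steps.
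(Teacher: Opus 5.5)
Your argument is the paper's own: induction on nets with $(\deactL)$/$(\Li)$ for capsules and the structural rules plus the induction hypothesis elsewhere. For the reduction clauses you use a logical rule ($(\Cap)$ or $(\Exp)$, dually $(\Cap)$ or $(\Imp)$) when the connector is introduced, and otherwise $(\actL)$, dually $(\actR)$, followed by the first part; (3) and (4) are handled by the same pushing-through.

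Your caveat is right, and you should commit to it rather than leave it open. For $P = \exp y {\caps<y.`d>} `d . `a$, rule $(\Lii)$ gives $\cut { \exp y {\caps<y.`d>} `d . `g } `g + x \caps<x.`b>$, which is one $(\Exp)$ step away from $P \tsubst[`b/`a]$ but not equal to it; the same happens dually for $(\Riii)$ and $(\Imp)$. The paper's proof hides this by writing the $(\Lii)$ step as landing directly on $\exp y { \cutL Q `a + x \caps<x.`b> } `g . `b$, and the $(\Riii)$ step on $\imp { \cutR \caps<y.`a> `a + x Q } `g [y] z {\cutR \caps<y.`a> `a + x R }$. So the first clauses of (1) and (2) hold only up to $\rtcredXs$, while the reduction clauses and parts (3) and (4) are correct as stated.
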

 \begin{Proof}{By induction on the structure of nets. 
Substitution activation plays no role in this proof}{\Lmm \ref{garbage collection}}
By induction on the structure of nets.
 \begin{enumerate} \itemsep 4\point
 \item 
 \begin{description}

 \firstmyitem[$P = \caps<y.`a>$]
\cutL \caps<y.`a> `a + x \caps<x.`b> &\substo& \bsp (\deactL)
\Tran[V y]{`b} &\ByDef&
\caps<y.`a> \tsubst[`b/`a] 
 \end{array}$
 
 \myitem [$P = \caps<y.`g>$, $`g\not=`a$]
\cutL \caps<y.`g> `a + x \caps<x.`b> &\substo& \bsp (\Li)
\caps<y.`g> &\ByDef&
\caps<y.`g> \tsubst[`b/`a] 
 \end{array}$

 \myitem [$P = \exp y Q `g . `a $]
\cutL { \exp y Q `g . `a } `a + x \caps<x.`b> &\substo& \bsp(\Lii)
\exp y { \cutL Q `a + x \caps<x.`b> } `g . `b &\substo& \bsp (\IH) 
\exp y { Q \tsubst[`b/`a]} `g . `b &\ByDef& \\
( \exp y Q `g . `a ) \tsubst[`b/`a] 
 \end{array}$

 \myitem [$P = \exp y Q `g . `d $, $`d\not=`a$]
\cutL { \exp y Q `g . `d } `a + x \caps<x.`b> &\substo& \bsp(\Liii)
\exp y { \cutL Q `a + x \caps<x.`b> } `g . `d &\substo& \bsp (\IH) 
\exp y Q\tsubst[`b/`a] `g . `d &\ByDef& \\ 
( \exp y Q `g . `d ) \tsubst[`b/`a] 
 \end{array}$

 \myitem [{$P = \imp Q `g [y] z R $}]
\cutL { \imp Q `g [y] z R } `a + x \caps<x.`b> &\substo& \bsp (\Liv)
\imp { \cutL Q `a + x \caps<x.`b> } `g [y] z {\cutL R `a + x \caps<x.`b> } &\substo& \bsp (\IH) \\
\imp Q\tsubst[`b/`a] `g [y] z R\tsubst[`b/`a] &\ByDef& 
( \imp Q `g [y] z R ) \tsubst[`b/`a] 
 \end{array}$

 \myitem [$P = \cut Q `g + z R $]
\cutL { \cut Q `g + z R } `a + x \caps<x.`b> &\substo& \bsp (\Lv)
\cut { \cutL Q `a + x \caps<x.`b> } `g + z {\cutL R `a + x \caps<x.`b> } &\substo& \bsp (\IH) \\
\cut Q\tsubst[`b/`a] `g + z R\tsubst[`b/`a] &\ByDef& 
( \cut Q `g + z R ) \tsubst[`b/`a] 
 \end{array}$ \\ 

 \end{description}

 \noindent
For the second part, if $`a$ is introduced in $P$, the result follows by rules $(\Cap)$ or $(\Exp)$; 
otherwise $ \cut P `a + x \caps<x.`b> \redX (\actL) \cutL P `a + x \caps<x.`b> $, and the result follows by the first part.

 \item 
 \begin{description}

 \firstmyitem[$P = {\Tran[V x]{`b}} $]
\cutR \caps<y.`a> `a + x {\Tran[V x]{`b}} &\substo& \bsp (\deactR)
{\Tran[V x]{`b}} &\ByDef&
\Tran[V y]{`b} \tsubst[y/x] 
 \end{array}$
 
 \myitem [$P = \caps<z.`b>$, $z\not=x$]
\cutR \caps<y.`a> `a + x \caps<z.`b> &\substo& \bsp (\Ri)
\caps<z.`b> &\ByDef&
\caps<z.`b> \tsubst[y/x] 
 \end{array}$

 \myitem [$P = \exp z Q `g . `b $]
\cutR \caps<y.`a> `a + x { \exp z Q `g . `b } &\substo& \bsp (\Rii)
\exp z { \cutR \caps<y.`a> `a + x Q } `g . `b &\substo& \bsp (\IH) 
\exp z Q\tsubst[y/x] `g . `b &\ByDef&  \\
( \exp z Q `g . `b ) \tsubst[y/x] 
 \end{array}$

 \myitem [{$P = \imp Q `g [x] z R $}]
\cutR \caps<y.`a> `a + x { \imp Q `g [x] z R } &\substo& \bsp (\Riii) 
\imp { \cutR \caps<y.`a> `a + x Q } `g [y] z {\cutR \caps<y.`a> `a + x R }  \\ 
~=(\IH) ~ \imp Q\tsubst[y/x] `g [y] z R\tsubst[y/x] &\ByDef& 
( \imp Q `g [x] z R ) \tsubst[y/x] 
 \end{array}$

 \myitem [{$P = \imp Q `g [v] z R $, $v \not= x$}]
\cutR \caps<y.`a> `a + x { \imp Q `g [v] z R } &\substo& \bsp (\Riv) \\
\imp { \cutR \caps<y.`a> `a + x Q } `g [v] z {\cutR \caps<y.`a> `a + x R } &\substo& \bsp (\IH) \\
\imp Q\tsubst[y/x] `g [v] z R\tsubst[y/x] &\ByDef& 
( \imp Q `g [v] z R ) \tsubst[y/x] 
 \end{array}$

 \myitem [$P = \cut Q `g + z R $]
\cutR \caps<y.`a> `a + x { \cut Q `g + z R } &\substo& \bsp (\Rv) 
\cut { \cutR \caps<y.`a> `a + x Q } `g + z {\cutR \caps<y.`a> `a + x R } &\substo& \bsp (\IH) \\
\cut Q\tsubst[y/x] `g + z R\tsubst[y/x] &\ByDef& 
( \cut Q `g + z R ) \tsubst[y/x] 
 \end{array}$ \\

 \end{description}

 \noindent
For the second part, if $x$ is introduced in $P$, the result follows by rules $(\Cap)$ or $(\Imp)$; 
otherwise $ \cut \caps<y.`a> `a + x P \redX \cutR \caps<y.`a> `a + x P $, and the result follows by the first part.

 \item 
 \begin{description}

 \firstmyitem[$Q = \caps<y.`g>$, $`g\not=`a$]
\cutL \caps<y.`g> `a + x P &\substo& \bsp (\Li)
\caps<y.`g> 
 \end{array}$

 \myitem [$Q = \exp y R `g . `d $, $`d\not=`a$]
\cutL { \exp y R `g . `d } `a + x P &\substo& \bsp (\Liii)
\exp y { \cutL R `a + x P } `g . `d &\substo& \bsp (\IH) 
\exp y R `g . `d 
 \end{array}$

 \myitem [{$Q = \imp R `g [y] z S $}]
\cutL { \imp R `g [y] z S } `a + x P &\substo& \bsp (\Liv) 
\imp { \cutL R `a + x P } `g [y] z {\cutL S `a + x P } &\substo& \bsp (\IH) 
\imp R `g [y] z S 
 \end{array}$

 \myitem [$Q = \cut R `g + z S $]
\cutL { \cut R `g + z R } `a + x P &\substo& \bsp (\Lv) 
\cut { \cutL R `a + x P } `g + z {\cutL S `a + x P } &\substo& \bsp (\IH) 
\cut R `g + z S 
 \end{array}$

 \end{description}

 \item 
 \begin{description}

 \firstmyitem[$P = \caps<z.`b>$, $z\not=x$]
\cutR P `a + x \caps<z.`b> &\substo& \bsp (\Ri)
\caps<z.`b> 
 \end{array}$

 \myitem [$P = \exp z R `g . `b $]
\cutR P `a + x { \exp z R `g . `b } &\substo& \bsp (\Rii)
\exp z { \cutR P `a + x R } `g . `b &\substo& \bsp (\IH)
\exp z R `g . `b 
 \end{array}$

 \myitem [{$P = \imp Q `g [v] z R $, $v \not= x$}]
\cutR P `a + x { \imp Q `g [v] z R } &\substo& \bsp (\Riv) 
	\imp { \cutR P `a + x Q } `g [v] z {\cutR P `a + x R } &\substo& \bsp (\IH)  \\
\imp Q `g [v] z R
 \end{array}$

 \myitem [$P = \cut Q `g + z R $]
\cutR P `a + x { \cut Q `g + z R } &\substo&\bsp (\Rv)
\cut { \cutR P `a + x Q } `g + z {\cutR P `a + x R } &\substo& \bsp (\IH) 
\cut Q `g + z R 
 \end{array}$
\qed 

 \end{description}
 \end{enumerate}
 
 \observe
Notice that substitution activation plays no role in this proof.

 \end{Proof}

\Comment{
Type assignment for $\Xs$ is defined as for $\X$, and the following soundness result of type assignment with respect to $\redXs$ reduction is stated as usual and is easy to show.

 \begin {theorem}[Witness reduction for $\redXs$]
\label{Witness reduction}
If $\derX P : `G |- `D $, and $P \redXs Q$, then $\derX Q : `G |- `D $.
 \end {theorem}
 \begin{proof}
By \Thm\ref{Witness reduction X} and \ref{X implements Xs}.\QED
 \end{proof}
}




\section{Embedding {\nclmuPDF} into {\lmmtPDF}}
\label {nclmu in lmmt}
\label{sec:TranslateLmuLmmt}

In this sectiom, we will define an interpretation from $\slmu$ to $\lmmt$, and show that it preserves reduction, $\redlmmtn$, $\redlmmtv$, $\redlmmtN$ upto $\redlmmtn$ reductions necessitated by the interpretation of $`m$-substitutions, and, $\redlmmtV$ upto $\redlmmtv$ reductions created by the interpretation.

\subsection{Reduction-Specific Embeddings}
One has many choices when it comes to embedding $\nclmu$ into $\lmmt$.
The most immediate is to interpret application ${`@ M N }$ as the term $M$ operating against a stack of the form ${N `. e}$, where $e$ is the ambient evaluation context. 
\citet{Curien-Herbelin'00} provide such an interpretation, $\cdot^{>}$, which we relabel as $ \Sem[ `. ] $:
 \[ \begin{array}{rcl}
\Sem[x] & \ByDef & x \\
\Sem[`Lx.M] & \ByDef & `L x . \Sem[M] \\
\Sem[`@ M N ] & \ByDef & `M`a . < \Sem[M] | \Sem[N] `. `a > \\
\Sem[`M`a . `b M] & \ByDef & `M`a . < \Sem[M] | `b >
 \end{array} \]
Using this interpretation, and the observation that
 \[ \begin{array}{@{}lclclcl}
\Sem [`@ {`@ P Q } R ] &\ByDef& 
`M`a . < \Sem[`@ P Q ] | \Sem[R] `. `a > &\ByDef& 
`M`a . < {`M`b . < \Sem[P] | \Sem[Q] `. `b >} | \Sem[R] `. `a > &\redlmmt& 
`M`a . < \Sem[P] | \Sem[Q] `. \Sem[R] `. `a >
 \end{array}\]
we can point out the fundamental difference between $`m$-reduction in $\lmu$ and $\lmmt$.

 \begin{remark}[{$\Sem[`.]$} does not respect ($\redlmmtN$/$\redlmmtV$) reduction] \label{no reduction}
\resetqsymbol{`P}{\Sem [P]}
\resetqsymbol{`Q}{\Sem [Q]}
In $\lmu$, as discussed above, the intention of $`m$-reduction is to redirect an applicative context, but it has to do that `one term at the time'.
 \[ \begin{array}{@{}lclclcl}
`@ {`@ ({ `M`a.[`t] `@ x (`M`s.[`a] M)}) P } Q &\rednclmuN& 
`@ ({ `M`b.[`t] `@ x (`M`s.[`b]{`@ M P })}) Q &\rednclmuN& 
`M`g .[`t] `@ x (`M`s.[`g]{`@ {`@ M P } Q } ) 
 \end{array} \]
and therefore $`m$-reduction has to be self-replicating in nature. 
We would like to show that 
 \[ \begin{array}{@{}lclclcl} 
\Sem[{ `@ {`@ ({ `M`a.[`t] `@ x (`M`s.[`a] M)}) P } Q }] &\rtcredlmmtN& 
\Sem [{ `M`g .[`t] `@ x (`M`s.[`g]{`@ {`@ M P } Q } ) }]
 \end{array} \]
but cannot.

In $\lmmt$ we have the ($\redlmmtN$/$\redlmmtV$) reduction
 \[ \begin{array}{@{}lclclcl}
\Sem[ `@ {`@ ({ `M`a.[`t] `@ x (`M`s.[`a] M)}) P } Q ] &\ByDef& \\
%
`M`g . < {`M`b . < { `M`a . < { `M`d . < x | { `M`s . <\Sem[{M}] | `a >} `. `d >} | `t >} | `P `. `b >} | `Q `. `g > 
	&\redlmmt& (`b) \\
`M`g . < { `M`a . < { `M`d . < x | { `M`s . <\Sem[{M}] | `a >} `. `d >} | `t >} | `P `. `Q `. `g > 
	&\redlmmt& (`a) \\
`M`g . < { `M`d . < x | { `M`s . <\Sem[{M}] | `P `. `Q `. `g >} `. `d >} | `t > 
	&\redlmmt& (`d) \\
`M`g . < x | { `M`s . <\Sem[{M}] | `P `. `Q `. `g >} `. `t > 
 \end{array} \]
where the whole environment $ `P `. `Q `. `b $ gets pulled in the final step (notice that the first sequence of steps does not deal with the $`m`a$-redex contraction, but just prepares the environment, contracting the $`m$-redexes that are generated by the interpretation).
For the other part we have reduction:
 \[ \begin{array}{@{}lclclcl}
\Sem [{`M`g .[`t] `@ x ({`M`s.[`g]{`@ {`@ M P } Q } }) }] &\ByDef& \\
`M`g . < { `M`d . < x | { `M`s . < { `M`p . < { `M`r . < \Sem [M] | `P `. `r >} | `Q `. `p >} | `g >} `. `d >} | `t > &\redlmmt& (`d) \\
`M`g . < x | { `M`s . < { `M`p . < { `M`r . < \Sem [M] | `P `. `r >} | `Q `. `p >} | `g >} `. `t > &\redlmmt& (`r) \\
`M`g . < x | { `M`s . < { `M`p . < \Sem [M] | `P `. `Q `. `p >} | `g >} `. `t > 
 \end{array} \]
(Notice that the last reduction step, over $`r$, takes place in the continuation, so in not allowed in either $\redlmmtN$ or $\redlmmtV$.)
This illustrates, as already mentioned in \cite{Curien-Herbelin'00}, that we cannot show that the reduction strategies are preserved under $\Sem[`.]$, but at most can show that the terms involved share a reduct under $	\redlmmtn$ and $\redlmmtv$.
This is caused by the fact that the interpretation has to add additional $`m$-redexes to represent the applications added by $\lmu$'s $`m$-reduction (see \Rem \ref{added application}). 
In fact, the two $`m$ abstractions are fundamentally different: in $\lmu$, $`m$-reduction reconstructs the $`m$-abstraction, whereas in $\lmmt$ it disappears.
\citet{Curien-Herbelin'00} say that their reduction preservation result holds `modulo $`m$-expansions', but we will be more precise than that.

Here we like to distinguish the occurrences of $`m$ in our interpretations into $\lmmt$ that are `computational', \ie correspond to $`m$ abstractions already occurring in $\lmu$, and those that are generated by the interpretations; we will write $\UL{`m}$ for the latter, and $\redlmmtmo$ for reduction steps using these.
 \end{remark}

\Comment{
 \begin{remark}
 In $\lmu$, as discussed above, the intention of $`m$-reduction is to redirect an applicative context, but it has to do that `one term at the time'.
 \[
 \begin{array}{@{}lclclcl}
 `@ {`@ {`@ ({ `M`a.[`t] `@ x (`M`s.[`a] M)}) P } Q } R &\rednclmu& 
 `@ {`@ ({ `M`g .[`t] `@ x (`M`s.[`g]{`@ M P })}) Q } R &\rednclmu& \\ &&
 `@ ({ `M`d.[`t] `@ x (`M`s.[`d]{`@ {`@ M P } Q } )}) R &\rednclmu& \\ &&
 ~ `M`b.[`t] `@ x (`M`s.[`b]{`@ {`@ {`@ M P } Q } R } )
 \end{array}
 \]
 and therefore has to be recursive in nature.
 This is not the case for $\lmmt$, where we have the reduction
 \[
 \begin{array}{@{}lclclcl}
 \Sem[ `@ {`@ {`@ ({ `M`a.[`t] `@ x (`M`s.[`a] M)}) P } Q } R ] &\ByDef& \\
 `M`b . < {`M`g . < {`M`d . < {`M`a . <
 {`M`r . <x | {`M`s . <\Sem[M]|`a>} `. `r >}
 | `t >} | `P `. `d >} | `Q `. `g >} | `R `. `b > &\rtcredlmmt& \\
 `M`b . < {`M`a. <
 {`M`r . <x | {`M`s . <\Sem[M]|`a>} `. `r >} 
 | `t >} | `P `. `Q `. `R `. `b > &\redlmmt& \bsp (`a) \\
 `M`b . <
 {`M`r . <x | {`M`s . <\Sem[M] | `P `. `Q `. `R `. `b >} `. `r >}
 | `t > &\ByDef& 
 \end{array}
 \]
 where the whole environment $ `P `. `Q `. `R `. `b $ gets pulled in in one step (notice that the first sequence of steps does not deal with the $`m`a$-redex contraction, but just prepares the environment, contracting the $`m$-redexes that are generated by the interpretation).
 We can also reduce the term as follows:
 \[
 \begin{array}{@{}lclclcl}
 \Sem[ `@ {`@ {`@ ({ `M`a.[`t] `@ x (`M`s.[`a] M)}) P } Q } R ] &\ByDef& \\
 `M`b . < {`M`g . < {`M`d . < {`M`a . <
 {`M`r . <x | {`M`s . <\Sem[M]|`a>} `. `r >}
 | `t >} | `P `. `d >} | `Q `. `g >} | `R `. `b >
 &\redlmmt& \bsp (`a) \\
 `M`b . < {`M`g . < {`M`d . <
 {`M`r . <x | {`M`s . < \Sem[M] | `P `. `d >} `. `r >} | `t >}
 | `Q `. `g >} | `R `. `b >
 &\redlmmt& \bsp (`d) \\
 `M`b . < {`M`g . <
 {`M`r . <x | {`M`s . < \Sem[M] | `P `. `Q `. `g >} `. `r >} | `t >}
 | `R `. `b >
 &\redlmmtV& \bsp (`g) \\
 `M`b . < {`M`r . <x | {`M`s . < \Sem[M] | `P `. `Q `. `R `. `b >} `. `r >} | `t >
 \end{array}
 \]
 which `pulls in one term at the time', but does not do that using $`a$ but rather the $`m$-redexes generated by the interpretation for applications.
 In fact, in a way the $`m$-abstractions added by the interpretation implement the repetitive character of $\lmu$'s $`m$-reduction; for each surrounding application, a $`m$-abstraction is inserted, that will be used to execute one of the recursive steps.
 \end{remark}
}

 \begin{example}[\cite{Bakel-Lescanne-MSCS'08}]
 It is worthwhile to remark that reduction in the image of $\Sem{ `. }$, even when restricted to the $`l$-calculus, is not confluent.
 In fact, we have both:
 \[
 \begin{array}{@{}lcl}
 \Sem{`@ ({`lz.zz}) ({pp}) } &\ByDef& \\
 `M`a . < \Sem{`lz.zz} | \Sem{pp} `. `a > &\ByDef& \\
 `M`a . < `Lz.(zz)^n | \Sem{pp} `. `a > &\redlmmt& \bsp (`l) \\
 `M`a . < \Sem{pp} | {\mt z . <(zz)^n | `a >} > &\ByDef& \\
 `M`a . < {`M`b . <p|p `. `b>} | {\mt z . <(zz)^n | `a >} > &\redlmmt& \bsp (`m) \\
 `M`a . < p|p `. {\mt z . <(zz)^n | `a >} > & \ByDef& \\
 `M`a . < p|p `. {\mt z . <{`M`g . <z|z `. `g>} | `a >} >
 \end{array}
\dquad
 \begin{array}{lcl}
 \Sem{`@ ({`lz.zz}) ({pp}) } &\redlmmt& \bsp (`l) \\
 `M`a . < \Sem{pp} | {\mt z . <(zz)^n | `a >} > &\ByDef& \\
 `M`a . < \Sem{pp} | {\mt z . <{`M`g . <z|z `. `g>} | `a >} > &\redlmmt& \bsp (\MuT) \\
 `M`a . < {`M`g . <\Sem{pp}|\Sem{pp} `. `g>} | `a > &\redlmmt& \bsp (`m) \\
 `M`a . < \Sem{pp} | \Sem{pp} `. `a > &\ByDef& \\
 `M`a . < {`M`b . <p|p `. `b>} | \Sem{pp} `. `a > &\redlmmt& \bsp (`m) \\
 `M`a . <p|p `. \Sem{pp} `. `a > &\ByDef& \\
 `M`a . <p|p `. {`M`b . <p|p `. `b>} `. `a >
 \end{array}
 \]
 This holds for all the interpretations from $\nclmu$ to $\lmmt$ we discuss in this paper.
 Notice that the right reduction is in $\redlmmtN$.
 The left is {$\redlmmtV$}; it can be extended in $\redlmmtv$ with
 \[
 \begin{array}{lcl}
 `M`a . < p|p `. {\mt z . <{`M`g . <z|z `. `g>} | `a >} > & \redlmmt (`m) &
 `M`a . < p|p `. {\mt z . <z|z `. `a>} >
 \end{array}
 \]
but in this step reduction takes place in the continuation.
Notice that the terms $ `M`a . < p|p `. {\mt z . <z|z `. `a>} > $ and $ `M`a . <p|p `. {`M`b . <p|p `. `b>} `. `a > $ are both in $\redlmmt$ normal form, and are very different.
 \end{example}

 The above interpretation $\Sem[ `. ]$ works well when modelling the {\CBN} strategy, but creates problems for the one for \CBV.
 As we will see below, in the implementation of both these $\lmu$-reduction strategies in $\lmmt$, no reduction will take place in an environment, so not in $e$ in $\cell<t|e>$; environments are seen as resources, where data can be taken from, not places where evaluation takes place; this corresponds to the approach of KAM. 
This implies that when interpreting the reduction $`@ V M \rednclmuV `@ V N $, in the interpretation we would like to run $`M`a . < \Sem[V] | \Sem[M] `. `a > $ to $`M`a . < \Sem[V] | \Sem[N] `. `a > $ in $\redlmmtV$, but cannot, since we cannot run in the environment.

In \cite{Curien-Herbelin'00}, as is common when encoding {\CBN} and {\CBV} reduction, Curien and Herbelin define \emph{two} separate encodings for $`l`m$ into $\lmmt$: one, $\SEMcbn{ `. }$, to model {\CBN} and another, $\SEMcbv{ `. }$, to model \CBV:

 \begin {definition}[\cite{Curien-Herbelin'00}] \label{CBN CBV Herbelin}
The interpretations $ \SEMcbn{ `. } $ and $ \SEMcbv{ `. }$ of $`l`m$ into $\lmmt$ are defined by:
 \[ \begin {array}{rcl}
 \SEMcbn x & \ByDef & x \\
 \SEMcbn{`Lx.M} & \ByDef & `Lx.\SEMcbn{M} \\
 \SEMcbn{`@ M N } & \ByDef & `M`a . < \SEMcbn{M} | \SEMcbn{N} `. `a > \\
 \SEMcbn{`m`b.\Cmd} & \ByDef & `m`b.\SEMcbn{\Cmd} \\
 \SEMcbn{[`a]M} & \ByDef & \cell< \SEMcbn{ M} | `a >
 \end {array}
 \qquad
 \begin {array}{rcl}
 \SEMcbv x & \ByDef & x \\
 \SEMcbv{`Lx.M} & \ByDef & `Lx.\SEMcbv{ M } \\
 \SEMcbv{`@ M N } & \ByDef & `M`a . < \SEMcbv{N} | {\mt x . < \SEMcbv{M} | x `. `a >} > \\
 \SEMcbv{`M`b.\Cmd} & \ByDef & `m`b.\SEMcbv{\Cmd} \\
 \SEMcbv{[`a]M} & \ByDef & \cell< \SEMcbv{ M} | `a >
 \end {array} \]
 \end {definition}

Observe that $\SEMcbn{ `. }$ is Herbelin's interpretation $ \Sem[ `. ] $ we mentioned above, and that these interpretations only differ in the case for application; remark that we have:
 \[ \begin{array}{cccccccc}
`M`a . < t_1 | {\mt x . < t_2 | x `. `a >} > & &\redlmmtN& \bsp (x) & `M`a . < t_2 | t_1 `. `a >
 \end{array} \]
so, essentially, $ \SEMcbv{`@ M N } \redlmmt \SEMcbn{`@ M N } $, 
and in effect, Herbelin only considers $\SEMcbv{ `. }$ in \cite{Curien-Herbelin'00};
that paper only deals with $(`b)$ and $(\mur)$-reduction, and defines {$\redlmmtV$} reduction as we define $\redlmmtv$, by limiting the operands in those rules to values, so does not deal with {\CBV} $\lmu$.

Curien and Herbelin \cite{Curien-Herbelin'00} state reduction preservation results for their encodings (formulated as if $\lmu$-reduction is represented through $\lmmt$-reduction, modulo $(`m)$-expansion), but give very little detail.
Their results are stated with respect to the notion of {\CBN} and \CBV-reduction for $\lmmt$ that just remove the $\Pair<(`m),(\MuT)>$ critical pair, so are the sub-reduction systems $\redlmmtn$ and $\redlmmtv$, not the reduction strategies $\redlmmtN$ and $\redlmmtV$ we have defined here (see \Def \ref{CBN and CBV in lmmt}), but for $\lmu$.

Given that we are interpreting one calculus with well-defined {\CBN} and {\CBV} reduction strategies into another, we wanted to investigate if there could be a single interpretation that respects both {\CBN} and {\CBV} strategies for $\nclmu$ as well, rather than using the standard separation into a {\CBN} and a {\CBV} interpretations.
We start by investigating if Herbelin's encodings already serve this purpose.
We have already seen in \Rem \ref{no reduction} that this is not possible for $\SEMcbn{ `. }$, but we can also make the following observations:

 \begin{remark}
 \label{Herbelin not enough}
 \begin{itemize}
 \firstitem
 The interpretation $\SEMcbn{ `. }$ creates problems when interpreting the {$\slmu$} $(\mul)$-reduction
 \[ \begin{array}{rcl}
`@ N (`M`b . [`b] M) &\rednclmu& `M `g . [`g] `@ N (\mulsubst N.`g/`b M )
 \end{array} \]
We would like to show that this reduction step is preserved by $\SEMcbn{ `. }$, but:
 \[ \begin{array}{rcl}
 \SEMcbn{ `@ N (`M`b . [`b] M) }
	&\ByDef& \\
`M`a . < \SEMcbn{N} | { `M`b . <\SEMcbn{M}|`b>} `. ~ `a >
	&(?)&
`M`a . < \SEMcbn{N} | \SEMcbn{\mulsubst[N.`g/`b]M} `. ~ `a >
 \\
	&\lmmtderV & \bsp (`g)
 `M`g . < {`M `a . < \SEMcbn{N} | \SEMcbn{\mulsubst[N.`g/`b]M} `. ~ `a >} | `g > \\
	&\ByDef& 
\SEMcbn{ `M `g . [`g] `@ N (\mulsubst N.`g/`b M ) }
 \end{array} \]
but cannot: the term $ `M`a . < \SEMcbn{N} | { `M`b . <\SEMcbn{M}|`b>} `. ~ `a > $ is not a $(`m)$-redex (over $`b$).
Actually, $\lmmt$ lacks a reduction rule corresponding to $\lmu$'s $(\mul)$;
in \Sect\ref{add `m_l to lmmt} we will discuss adding this kind of rule to $\lmmt$.

 \item
The interpretation $\SEMcbn{ `. }$ also does not deal well with the contextual reduction rules for $\redlmmtV$.
For the rule $ M \redlmmtV N \Implies `@ V M \redlmmtV `@ V N $ we have:
 \[ \begin{array}{lclclclc}
\SEMcbn{ `@ V M } & \ByDef &
`M`a . < \SEMcbn{V} | \SEMcbn{M} `. `a > &\crlmmtV& \bsp (?)
`M`a . < \SEMcbn{V} | \SEMcbn{N} `. `a > & \ByDef &
\SEMcbn{ `@ V N }
 \end{array} \]
which asks for reduction in the environment.

 \Comment
 {
 We could add the reduction rule $ \cell < t_1 | { `M`b . <t_2|`b>} `. ~ `a > \reduc \cell< t_1 | t_2 `. `a > $.

 \[ \def \Turnlmmt {\Turn}
 \Inf [`m]
 {\Inf [\Cut]
 {\InfBox{ \derlmmt `G |- t_1 : B\arr A | `a`:A,`D }
 \kern-10mm
 \Inf [\arrL]
 {\Inf [\Cut]
 {\Inf [`m]
 {\InfBox{ \derlmmt `G |- t_2 : B | `b`:B,`a`:A,`D }
 \quad
 \Inf [\AxC]
 { \derlmmt `G | `b : B |- `b`:B,`a`:A,`D }
 }{ \derlmmt \cell<t_2|`b> : `G |- `b`:B,`a`:A,`D }
 }{ \derlmmt `G |- `M`b.\cell<t_2|`b> : B | `a`:A,`D }
 \kern-10mm
 \Inf [\AxL]
 { \derlmmt `G |- `a : A | `a`:A,`D }
 }{ \derlmmt `G | {`M`b . <t_2|`b> `. ~ `a } : B\arr A |- `a`:A,`D }
 }{ \derlmmt { \cell < t_1 | {`M`b . <t_2|`b> `. ~ `a } >} : `G |- `a`:A,`D }
 }{ \derlmmt `G |- {`M`a . < t_1 | { `M`b . <t_2|`b> `. ~ `a } >} : A | `D }
 \]
 \[ \def \Turnlmmt {\Turn}
 \Inf [`m]
 {\Inf [\Cut]
 {\InfBox{ \derlmmt `G |- t_1 : B\arr A | `a`:A,`g`:A,`D }
 \quad
 \Inf [\arrL]
 {\InfBox{ \derlmmt `G |- M : B | `a`:A,`g`:A,`D }
 \quad
 \Inf [\AxC]
 { \derlmmt `G |- `a : A | `a`:A,`g`:A,`D }
 }{ \derlmmt `G | t_2 `. `a : B\arr A |- `a`:A,`g`:A,`D }
 }{ \derlmmt {\cell < t_1 | t_2 `. `a>} : `G |- `a`:A,`g`:A,`D }
 }{ \derlmmt `G |- {`M`a . < t_1 | t_2 `. `a>} : A | `g`:A,`D }
 \]
 }

 \item
That $\SEMcbv{ `. }$ deals correctly with $(\mul)$ is illustrated by:
 \[ \begin{array}{rcl}
 \SEMcbv{ {`@ N (`m `b . [`b] M ) } }
 & \ByDef & \\
 `M`a . < {`M `b . < \SEMcbv{M} | `b >} | {\mt x . < \SEMcbv{N} | x `. `a >} >
 & \redlmmt & \bsp (`m)
 `M`a . < \SEMcbv{M}\tsubst[{\mt x . < \SEMcbv{N} | x `. `a >}/`b] | {\mt x . < \SEMcbv{N} | x `. `a >} >
 \\ & \rtclmmtder &
 `M`a . < \SEMcbv{\mulsubst [N.`a/`b] M } | {\mt x . < \SEMcbv{M} | x `. `a >} >
 \\ & \ByDef &
 \SEMcbv{ {`@ M (\mulsubst [N.`a/`b] M ) } } \\
 \end{array} \]
provided of course that we verify that $ \SEMcbv{M}\tsubst[{\mt x . < \SEMcbv{N} | x `. `a >}/`b] = \SEMcbv{ \mulsubst [N.`a/`b] M } $; we will do so in \Lmm \ref{interpretation respects lmu substitution} and in \Sect\ref{add `m_l to lmmt}.

 \item
There is a problem in showing $ M \rednclmuV N \Implies \SEMcbv{M} \crlmmtV \SEMcbv{N} $ when dealing with the contextual reduction rules.
 The first, $ M \reduc N \Implies `@ V M \reduc `@ V N $ now follows easily, since we have:
 \[ \begin{array}{lclclclc}
 \SEMcbv{ `@ V M } & \ByDef &
 `M`a . < \SEMcbv{M} | {\mt x . < \SEMcbv{V} | x `. `a >} > &\crlmmtV& \bsp (\IH)
 `M`a . < \SEMcbv{N} | {\mt x . < \SEMcbv{V} | x `. `a >} > & \ByDef &
 \SEMcbv{ `@ V N }
 \end{array} \]
benefitting from the swap between the terms, but for the second $ M \reduc N \Implies `@ M P \reduc `@ N P $ we now have:
 \[ \begin{array}{lclclclc}
 \SEMcbv{ `@ M P } & \ByDef &
 `M`a . < \SEMcbv{P} | {\mt x . < \SEMcbv{M} | x `. `a >} > &(?)&
 `M`a . < \SEMcbv{P} | {\mt x . < \SEMcbv{N} | x `. `a >} > & \ByDef &
 \SEMcbv{ `@ V N }
 \end{array} \]
for which we need to allow for reduction to take place inside a $\MuT$-term, so inside the environment.

 \item
When modelling $\CBN$ reduction under this interpretation, there is no need to reduce in the environment, since we can then contract the $\MuT$-redexes:
 \[ \begin{array}{lclclclc}
 \SEMcbv{ `@ M P } & \ByDef &
 `M`a . < \SEMcbv{P} | {\mt x . < \SEMcbv{M} | x `. `a >} > &\rtcredlmmtN& \bsp (x)
 `M`a . < \SEMcbv{M} | \SEMcbv{P} `. `a > &\crlmmtN& \bsp (\IH) \\ &&
 `M`a . < \SEMcbv{N} | \SEMcbv{P} `. `a > &\crlmmtN& \bsp (x)
 `M`a . < \SEMcbv{P} | {\mt x . < \SEMcbv{N} | x `. `a >} > & \ByDef &
 \SEMcbv{ `@ V N }
 \end{array} \]
 This is not allowed for $\redlmmtV$, since $\SEMcbv{P}$ need not be a value.

 \end{itemize}
 \end{remark}

 \begin{remark}
 We could argue that the encoding $\SEMcbv{ `. }$ actually represents a \CBV-reduction strategy variant on $\nclmu$ with the contextual rules:
 \[ \begin{array}{rcl}
 \ContV
 &::=&
 "[]" \mid `@ {\ContV} V \mid `@ M {\ContV} \mid `M`a . `b \ContV
 \end{array} \]
 which would force the evaluation of the parameter until it becomes a value, after which the term in function position gets reduced; this corresponds to a reduction like (where we assume that each $P_i$ runs to a value $V_i$):
 \[ \begin{array}{lclclclcl}
 `@ {`@ {`@ {`@ (`Lx.M) P_1 } P_2 \dots} P_{n-1} } P_n & \rtcredCBV &
 `@ {`@ {`@ {`@ (`Lx.M) P_1 } P_2 \dots} P_{n-1} } V_n & \rtcredCBV &
 `@ {`@ {`@ {`@ (`Lx.M) P_1 } P_2 \dots} V_{n-1} } V_n & \rtcredCBV & \\
 `@ {`@ {`@ {`@ (`Lx.M) P_1 } V_2 \dots} V_{n-1} } V_n & \redCBV &
 `@ {`@ {`@ {`@ (`Lx.M) V_1 } V_2 \dots} V_{n-1} } V_n & \redCBV &
 `@ {`@ {`@ {M\tsubst[V_1/x] } V_2 \dots} V_{n-1} } V_n &
 \end{array} \]
 which would perhaps be too great a deviation from a `normal' {\CBV} strategy.

 We would then have:
 \[ \begin{array}{lclclclc}
 \SEMcbv{ `@ P M } & \ByDef &
 `M`a . < \SEMcbv{M} | {\mt x . < \SEMcbv{P} | x `. `a >} > &\crlmmtV& \bsp (\IH)
 `M`a . < \SEMcbv{N} | {\mt x . < \SEMcbv{P} | x `. `a >} > & \ByDef &
 \SEMcbv{ `@ P N } \\ [2mm]
 \SEMcbv{ `@ M V } & \ByDef &
 `M`a . < \SEMcbv{V} | {\mt x . < \SEMcbn{M} | x `. `a >} > &\rtcredlmmtN& \bsp (x)
 `M`a . < \SEMcbv{M} | \SEMcbn{V} `. `a > &\crlmmtN& \bsp (\IH) \\ &&
 `M`a . < \SEMcbv{N} | \SEMcbn{V} `. `a > &\crlmmtN& \bsp (x)
 `M`a . < \SEMcbv{V} | {\mt x . < \SEMcbn{N} | x `. `a >} > & \ByDef &
 \SEMcbv{ `@ N V }
 \end{array} \]
 without the need to reduce inside the environment.%
 \footnote{This might be well suited to model reduction in the Call by Push Value calculus \cite{Levy'99}, where reduction inside parameters is not permitted.}

 \end{remark}

\Comment{
 Curien and Herbelin's result is:
 \begin {theorem}[Simulation of $`l`m$ in $\lmmt$ \cite{Curien-Herbelin'00}] \label{simul:lmlmmt}
 \begin {enumerate} \itemsep0pt
 \firstitem If $M\redCBV N $ then $\SEMcbv{M} \redCBV \SEMcbv{N} $ up to $`m$-expansion.
 \item If $M\redCBN N $ then $\SEMcbn{M} \redCBN \SEMcbn{N} $ up to $`m$-expansion.
 \end {enumerate}
 \end {theorem}
 This result is stated with respect to the notion of {\CBN} and \CBV-reduction for $\lmmt$ as defined in \cite{Curien-Herbelin'00}, that just remove the $`m,\MuT$ critical pair, not the strategies we have defined here.
}

\subsection{Non-Confluent Embedding}
We will now show that we can strengthen the results of \cite{Curien-Herbelin'00}, and show that we can define \emph{one} interpretation with which we can successfully represent all three notions of reduction and strategy.
We will essentially show that our interpretation can be used to represent not only the (traditional) $\lmu$ calculus, but also $\nclmu$, and not only for the {$\redlmmtV$} reduction strategy, but also $\redlmmtN$, as well as unrestricted reduction.

We will first define our interpretation.

 \begin{definition} [{Interpretation $ \lmulmmt[ `. ] $ of $\nclmu$ into $\lmmt$}]
 \label{our lmu to lmmt}
\label{def:TranslateLmuLmmt:Translation}
We define the interpretation $ \lmulmmt[ `. ] $ of $\nclmu$ into $\lmmt$ inductively through:
 \[ \begin{array}{rcl@{\quad}l}
\lmulmmt[x] & \ByDef & \lmulmmt[v x] 
 \\
\lmulmmt[`Lx.M] & \ByDef & \lmulmmt[l x . M] 
 \\
\lmulmmt[`@ M N ] & \ByDef & \lmulmmt[a M N] 
 \\
\lmulmmt[{`M`b.[`g]M}] & \ByDef & \lmulmmt[m `b . `g M]
 \end{array} \]

 \observe
As mentioned above, in the case for application we use $\UL{`m}$ for $`m$-abstractions  and $\UL{\mutilde}$ for $\mutilde$-abstractions generated by the interpretation, and $\redlmmtmo$ for reduction steps using these.
 
 \end{definition}
Notice that this interpretation also is a mapping from the $`l$-calculus to $\lmmt$.

It is straightforward to show that this interpretation respects assignable types:

 \begin{theorem}
 \label{thm:TranslateLmuLmmt:TranslationWellTyped}
 If $ \derLmu `G |- M : A | `D $, then $ \derLmmt `G |- \lmulmmt[M] : A | `D $.
 \end{theorem}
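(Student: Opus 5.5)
The proof goes by induction on the structure of the derivation of $\derLmu `G |- M : A | `D $, following the last rule applied, with cases $(\Ax)$, $(\arrI)$, $(\arrE)$ and the two variants of $(`m)$. In every case the derivation in $\Turnlmmt$ is built from the ones the induction hypothesis provides for the immediate subterms. I use the weakening (and, if needed, thinning) property for $\Turnlmmt$; it holds for $\Turnlmmt$ just as \Lmm\ref{lmn weakening lemma} holds for $\Turnlmu$, by a routine induction. This weakening adds the fresh names and variables that the interpretation introduces.

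The variable and abstraction cases are immediate. For $x$ the image is $x$, typed by $(\AxT)$. For $`Lx.M$ the image is a $`l$-abstraction over $\lmulmmt[M]$; the induction hypothesis gives $\derLmmt `G,x`:A |- \lmulmmt[M] : B | `D $, and rule $(\arrR)$ concludes. For $`M`b.[`g]M$ the image is $`m`b.\cell<\lmulmmt[M]|`g>$. The induction hypothesis gives $\lmulmmt[M]$ type $C$ with $`g`:C$ and $`b`:A$ in the co-context. Rule $(\AxC)$ types $`g$ as an environment of type $C$, $(\Cut)$ forms the command, and $(`m)$ discharges $`b`:A$. The two $\lmu$ variants, $`g = `b$ and $`g \not= `b$, are handled identically, since $(\AxC)$ keeps $`g$ in the co-context.

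The application case is the main step, because there the interpretation inserts the generated $\UL{`m}$ and $\UL{\mutilde}$ binders. I expect the image to have the shape $`m`a.\cell<\lmulmmt[N]|\mt x.\cell<\lmulmmt[M]|x`.`a>>$, with $`a$ and $x$ fresh. From the $\lmu$ premises $M : B\arr A$ and $N : B$, the induction hypothesis gives $\lmmt$ derivations for $\lmulmmt[M]$ and $\lmulmmt[N]$ under $`G$ and $`D$. I first weaken both to $`G,x`:B$ and $`a`:A,`D$; freshness guarantees the extended contexts are well formed. Then:
\begin{enumerate}
\item $(\AxT)$ types $x : B$, and $(\AxC)$ types $`a$ as an environment of type $A$.
\item $(\arrL)$ gives the stack $x`.`a$ type $B\arr A$.
\item $(\Cut)$ with $\lmulmmt[M]$ yields $\derLmmt \cell<\lmulmmt[M]|x`.`a> : `G,x`:B |- `a`:A,`D $.
\item $(\mt)$ gives the $\mt$-environment type $B$, $(\Cut)$ with $\lmulmmt[N]$ produces the outer command, and $(`m)$ yields type $A$ under $`G$ and $`D$.
\end{enumerate}
If the interpretation instead uses the stack form $`m`a.\cell<\lmulmmt[M]|\lmulmmt[N]`.`a>$, the same construction works with one $(\arrL)$ and one $(\Cut)$ and no $\mt$ step.

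The only real obstacle is the bookkeeping of the fresh connectors $`a$ and $x$ in this application case. They must be added to the contexts by weakening, and must not clash with $`G$ or $`D$, which Barendregt's convention guarantees. Everything else is a direct rule-by-rule match between $\Turnlmu$ and $\Turnlmmt$.
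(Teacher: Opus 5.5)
Your method (induction on the $\lambda\mu$ typing derivation, weakening to make room for the fresh connectors, then a rule-by-rule rebuild in $\lambda\mu\tilde{\mu}$) is the paper's, and your variable, abstraction and $\mu$ cases agree with it. The gap is in the application case, which you rightly call the main step: the term you type is not the paper's image of $MN$. By Definition~\ref{our lmu to lmmt}, whose shape is spelled out in the proof of Lemma~\ref{tsubst lemma}, the image of $MN$ is
\[ \mu\alpha.\langle \lmulmmt[M] \mid \tilde{\mu} x.\langle \lmulmmt[N] \mid \tilde{\mu} y.\langle x \mid y\cdot\alpha\rangle\rangle\rangle . \]
Here $\lmulmmt[M]$ is in the head of the outer command, and there are two generated $\tilde{\mu}$-binders.

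Your two candidates are Curien and Herbelin's call-by-value and call-by-name encodings from Definition~\ref{CBN CBV Herbelin}. Your first candidate is the one-step $\tilde{\mu}$-reduct of the paper's term. But soundness (Theorem~\ref{lem:LMMT:Typing:SubjectReduction}) only carries types forward along reduction, so typing the reduct does not type the paper's term. Your type annotations also do not carry over. In your term the $\tilde{\mu}$-bound variable receives $\lmulmmt[N]$ and has type $B$. Here the outer $\tilde{\mu}$-variable receives $\lmulmmt[M]$ and must have type $B\to A$.

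The repair is your own technique with the right annotations; set $\Gamma' = \Gamma, x{:}B\to A, y{:}B$.
\begin{enumerate}
\item Rules $(\AxT)$, $(\AxC)$, $(\arrL)$ and $(\Cut)$ give $\langle x\mid y\cdot\alpha\rangle : \Gamma' \vdash \alpha{:}A,\Delta$.
\item Rule $(\tilde{\mu})$ on $y$ then gives $\Gamma, x{:}B\to A \mid \tilde{\mu} y.\langle x\mid y\cdot\alpha\rangle : B \vdash \alpha{:}A,\Delta$.
\item Weaken $\lmulmmt[N]$ to $\Gamma, x{:}B\to A \vdash \lmulmmt[N] : B \mid \alpha{:}A,\Delta$ and cut it against this environment.
\item Rule $(\tilde{\mu})$ on $x$ then gives an environment of type $B\to A$ under $\Gamma$ and $\alpha{:}A,\Delta$.
\item Weaken $\lmulmmt[M]$ to $\alpha{:}A,\Delta$, cut it against that environment, and close with $(\mu)$ on $\alpha$. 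This gives type $A$ under $\Gamma$ and $\Delta$.
\end{enumerate}
This is exactly the derivation given in the paper.
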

 \begin{Proof}{By induction on the definition of type assignment.}{\Thm \ref{thm:TranslateLmuLmmt:TranslationWellTyped}}
\begin{description}
 \item[$\Ax$]
Then $ M \same x $ and $ x`:A \ele `G $; since $ \lmulmmt[x] = x $, also $ \derLmmt `G |- x : A | `D $ by rule $(\Ax)$.

 \item[$\arrI$]
	Then $ M \same `L x . N $, $ A \same B \arr C $, and $ \derLmu `G,x`:B |- N : C | `D $.
	By induction, $ \derLmmt `G,x`:B |- \lmulmmt[N] : C | `D $, and by rule $(\arrI)$, $ \derLmmt `G |- `Lx.\lmulmmt[N] : A | `D $, and $`Lx.\lmulmmt[N] = \lmulmmt[`Lx.N] $.

 \item[$\arrE$]
	Then $ M \same `@ P Q $, and there exists $B$ such that $ \derLmu `G |- P : B\arr A | `D $ and $ \derLmu `G |- Q : B | `D $.
	Then by induction, $ \derLmmt `G |- \lmulmmt[P] : B\arr A | `D $ and $ \derLmmt `G |- \lmulmmt[Q] : B | `D $; we also have $ \derLmmt `G |- \lmulmmt[P] : B\arr A | `a`:A,`D $ and $ \derLmmt `G,x`:B\arr A |- \lmulmmt[Q] : B | `a`:A,`D $ by weakening, and we can construct (where $`G' = `G,x`:B\arr A,y`:B$):
 \[ \def \Turnlmmt {\Turn} 
\Inf	{\Inf	{\InfBox { \derLmmt `G |- \lmulmmt[P] : B\arr A | `a`:A,`D }
	\kern-12mm
		 \Inf	{\Inf	{\InfBox { \derLmmt `G,x`:B\arr A |- \lmulmmt[Q] : B | `a`:A,`D }
	\kern-10mm
				 \Inf	{\Inf	{\Inf	{ \derLmmt `G' |- x : B\arr A | `a`:A,`D }
						 \Inf	{\Inf	{ \derLmmt `G' |- y : B | `a`:A,`D }
							 \Inf	{ \derLmmt `G' | `a : A |- `a`:A,`D }
							}{ \derLmmt `G' | { y `. `a } : B\arr A |- `a`:A,`D }
						}{ \derLmmt {\cell < x | y `. `a > } : `G' |- `a`:A,`D }
					}{ \derLmmt `G,x`:B\arr A | {\mt y . < x | y `. `a > } : B |- `a`:A,`D }
				}{ \derLmmt {\cell < \lmulmmt[Q] | {\mt y . < x | y `. `a > } > } : `G,x`:B\arr A |- `a`:A,`D }
			}{ \derLmmt `G | {\mt x . < \lmulmmt[Q] | {\mt y . < x | y `. `a > } > } : B\arr A |- `a`:A,`D }
		}{ \derLmmt { \cell < \lmulmmt[P] | {\mt x . < \lmulmmt[Q] | {\mt y . < x | y `. `a > } > } > } : `G |- `a`:A,`D }
	}{ \derLmmt `G |- \lmulmmt[a P Q] : A | `D }
 \]
and $ \lmulmmt[a P Q] = \lmulmmt[{`@ P Q }] $

 \item[$`m$]
	We have two cases: $ M \same `M`a.[`b]N $, $`D = `b`:B,`D'$, and $ \derlmu `G |- N : B | `a`:A,`b`:B,`D' $; then by induction we have $ \derLmmt `G |- \lmulmmt[N] : B | `a`:A,`b`:B,`D' $.
	We can construct:
 \[ \def \Turnlmmt {\Turn}
\Inf	{\Inf	{\InfBox { \derLmmt `G |- \lmulmmt[N] : B | `a`:A,`b`:B,`D' } ~
		 \Inf	{ \derLmmt `G | `b : B |- `a`:A,`b`:B,`D' }
		}{ \derLmmt {\cell <\lmulmmt[N] | `b > } : `G |- `a`:A,`b`:B,`D' }
	}{ \derLmmt `G |- \lmulmmt[m `a . `b N] : A | `b`:B,`D' }
 \]
and $ \lmulmmt[m `a . `b N] = \lmulmmt[`M `a . `b N] $.

	Or $ M \same `M`a.[`a]N $ and $ \derlmu `G |- N : A | `a`:A,`D $; then by induction we have $ \derLmmt `G |- \lmulmmt[N] : A | `a`:A,`D $.
	We can construct:
 \[ \def \Turnlmmt {\Turn}
\Inf	{\Inf	{\InfBox{ \derLmmt `G |- \lmulmmt[N] : A | `a`:A,`D } ~
		 \Inf	{ \derLmmt `G | `a : A |- `a`:A,`D }
		}{ \derLmmt {\cell <\lmulmmt[N] | `a > } : `G |- `a`:A,`D }
	}{\derLmmt `G |- \lmulmmt[m `a . `a N] : A | `D }
 \]
and $ \lmulmmt[m `a . `a N] = \lmulmmt[`M `a . `a N] $.
\QED

 \end{description}
 \end{Proof}

Remark that we have:{
 \[ \begin{array}{c@{~}c@{~}c@{~}c@{~}c@{~}c@{~}c}
\lmulmmt[`@ P Q ] & \ByDef &
\lmulmmt[a P Q] & \reduc & 
\lmulmmt[b P Q] & \reduc &
\Mo`a . < `P | `Q `. `a > \\
\SEMcbv {`@ P Q } & \ByDef &&&
`M`a . < \SEMcbv{Q} | {\mt y . < \SEMcbv{P} | y`.`a >} > & \reduc & 
`M`a . < \SEMcbv{P} | \SEMcbv{Q} `. `a > \\
\SEMcbn {`@ P Q } &\ByDef & && &&
`M`a . < \SEMcbn{P} | \SEMcbn{Q} `. `a >
 \end{array} \] }
a relation that will be useful when doing the proofs.

We will show that our encoding respects the three notions of reduction by showing, in \Thm \ref{beta mu reduction preserved}, \ref{CBN reduction preserved}, and \ref{CBV reduction preserved} that the encoding respects reduction (through equality):
 \begin{eqnarray}
 M \rtcrednclmu N &\Implies& \SemM \crlmmt `N 
 	\label{full reduction} \\
 M \rtcrednclmuN N &\Implies& \Exists {t,t' \ele \lmmt} ( `N \rtcredlmmtN t ~\&~ t \redlmmtn t' ~\&~ \SemM \rtcredlmmtN t' {}^{1} )
 	\label{CBN reduction} \\
 M \rtcrednclmuV N &\Implies& \Exists {t,t' \ele \lmmt} ( `N \rtcredlmmtV t ~\&~ t \redlmmtv t' ~\&~ \SemM \rtcredlmmtV t'  {}^{2} )
 	\label{CBV reduction} 
 \end{eqnarray}
 \begin{description}
 \item [$1$] in $\rtcredlmmtn$, only ($\redlmmtn$) redexes are contracted to complete the substitutions.
 \item [$2$] in $\rtcredlmmtv$, only ($\redlmmtv$) redexes are contracted to complete the substitutions.
 \end{description}
We will see in the proofs of \Lmm \ref{interpretation respects mur substitution CBN} and \Thm \ref{interpretation respects mur reduction CBN} that in order to model $(\mur)$-reduction
, some reverse reduction steps are needed as well, as can be expected in view of \Rem \ref{added application}.

For the basic steps in reduction this will be shown through \Thm \ref{interpretation respects beta reduction} that shows that the encoding respects the $(`b)$-reduction rule:
 \[ \begin{array}{rcl}
\lmulmmt[ `@ (`Lz.M) N ] &\rtcredlmmt& \lmulmmt[M\tsubst N/z ]
 \end{array}
 \]

\Thm \ref{interpretation respects mur reduction} shows it respects $(\mur)$-reduction:
 \[ \begin{array}{rcl}
\lmulmmt[{ `@ (`M`d.[`b]M) N }] & \eqlmmt& \lmulmmt[{`M`g .[`b]M \mursubst[N.`g/`d] }] , \dquad (`b \not=`d) \\
\lmulmmt[{ `@ (`M`d.[`d]M) N }] &\eqlmmt& \lmulmmt[{ `M`g .[`g]`@ (M \mursubst[N.`g/`d]) N }]
 \end{array}
 \]
and \Thm \ref{interpretation respects lmu reduction} shows it respects $(\mul)$-reduction:
 \[ \begin{array}{rcl}
\lmulmmt[ `@ N (`M`a. `d M) ] &\eqlmmt& \lmulmmt[ `M`g . `d {\mulsubst N.`g/`a } M ] , \dquad (`a \not= `d) \\
\lmulmmt[ `@ N (`M`a. `a M) ] &\eqlmmt& \lmulmmt[ `M`g . `g {`@ N (\mulsubst N.`g/`a M )}]
 \end{array}
 \]

We will, for each of these three results, argue that all the contractions that take place in the proofs would still be allowed when restricting to the {$\redlmmtV$} and $\redlmmtN$ reduction strategies, so prove the three results $(\ref{full reduction})$, $(\ref{CBV reduction})$, and $(\ref{CBN reduction})$ simultaneously.
The only exception to this is \Thm \ref{interpretation respects mur reduction}, the proof of which is not sound for $\redlmmtN$; \Thm \ref{interpretation respects mur reduction CBN} will show that result for $\redlmmtN$.
Note that, for $\redlmmtV$, we need to check that:%
 \begin{flatenumerate}
 \item
 only values are substituted through $(\MuT)$-reduction steps;
 \item
 no reduction takes place in environments,
 \end{flatenumerate}%
and for $\redlmmtN$ that:%
 \begin{flatenumerate}
 \item
 only stacks are substituted through $(\mur)$-reduction steps;
 \item
 no reduction takes place in environments.
 \end{flatenumerate}

\Comment{
 \[
 \begin{array}{lclcl}
 \lmulmmt[`@ N ({`M`a . [`b] \Cont[{`M`g . [`a] M }]})] & \ByDef & \\
 \lmulmmt[a <`d,v> N {m `a . `b {\LCont{`M`g . [`a] M }\RCont }}] & \ByDef & \\
 `M`d . < `N | {\mt u . <{`M`a . <{\LCont{ \lmulmmt[m `g . `a M]}\RCont } | `b >} | {\mt v . <u|v `. `d>} >} >
 \\ \\
 `M`d . <{\LCont \lmulmmt[m `g . `d {a N M }] \RCont } | `b > & \ByDef & \\
 \lmulmmt[m `d . `b {\LCont {`M`g . [`d] `@ N M } \RCont }] & \ByDef & \\
 \lmulmmt[{`M`d . [`b] \Cont[{`M`g . [`d] `@ N M }]}]
 \end{array}
 \]

 We will show that our interpretation deals with \emph{both} $\redlmmtN$ and $\redlmmtV$, by showing first that it encodes full non-confluent $\ncredbmu$ reduction, and then show that \CBN-$\lmu$ is captured in \CBN-$\lmmt$, and \CBV-$\lmu$ in \CBV-$\lmmt$.
 To illustrate this result for the $\mul$-reduction step, we give a motivating example.
 In Example\ref{leftsub proof example} we will show how the derivations are transformed.

 \begin{example}
We have $ `@ R ({`@ N (`M`b . `b `@ P Q )}) \rednclmu `@ R (`M`d . `d {`@ N (`@ P Q )}) \rednclmu `M`g . `g `@ R ({`@ N (`@ P Q )}) $ (assuming $`b$ does not occur in $P$ or $Q$) so need to relate $ \lmulmmt[ `@ R ({`@ N (`M`b . `b `@ P Q )})] $ and $ \lmulmmt[`M`g . `g {`@ R ({`@ N (`@ P Q )})}] $:
 \[ \begin{array}{lclcl}
 \lmulmmt [`@ R ({`@ N ({`M`b . `b (`@ P Q )})})] &\ByDef& \\
 \lmulmmt [a R {a <`d> N {m `b . `b {`@ P Q }}}] &\redlmmt& \bsp (`b) \\
 `M`a . < {`M`d . <{\lmulmmt[`@ P Q ]} | {\mt u . <`N|u `. `d>} >} | {\mt y . <`R|y`.`a>} > &\redlmmt& \bsp (u) \\
 `M`a . < {`M`d . <`N|{\lmulmmt[`@ P Q ]} `. `d>} | {\mt y . <`R|y`.`a>} > &\redlmmt& \bsp (u) \\
 \\
 \lmulmmt[m `g . `g {a <`d> R {a N {`@ P Q }}}] &\ByDef& \\
 \lmulmmt[m `g . `g {`@ R ({`@ N (`@ P Q )})}] &\ByDef& \\
 \lmulmmt[`M`g . `g ({`@ R ({`@ N (`@ P Q )})})]
 \end{array}
 \]

 \end{example}
}

 \subsection{The interpretation respects general reduction}
We will now show that the interpretation respects reduction, principally by showing that it respects the rules $(`b)$, $(\mur)$, and $(\mul)$.

 \subsubsection*{$`b$ reduction}
We start by showing that the interpretation respects $`b$-reduction, for which we first need to show it respects term substitution.

 \begin{lemma} [{$\lmulmmt[`.]$ respects $\slmu$'s term substitution}]
 \label{tsubst lemma} \label{interpretation respects term substitution}
 $ \SemM \tsubst [`N/z] = \lmulmmt[{M \tsubst N/z }] $.
 \end{lemma}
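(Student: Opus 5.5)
I will prove the statement by structural induction on the $\nclmu$-term $M$, following the clauses of Definition~\ref{our lmu to lmmt}. Throughout I use Barendregt's convention. All binders introduced by the interpretation are chosen fresh, and I assume $z \notele \fv(N)$ for every bound variable of $M$, $\alpha$-converting silently where needed. In particular, the names $\UL{`m}`a$ and the variables $\UL{\mutilde}x,\UL{\mutilde}y$ generated in the application case never occur in $\lmulmmt[N]$. Under this convention, implicit substitution in $\lmmt$ is capture-avoiding and commutes through every constructor.

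The variable cases come first. If $M = z$, then $\lmulmmt[z]\tsubst[\lmulmmt[N]/z] = z\tsubst[\lmulmmt[N]/z] = \lmulmmt[N] = \lmulmmt[z\tsubst[N/z]]$. If $M = y$ with $y \neq z$, both sides equal $y$. For abstraction, $\lmulmmt[`Lx.P] = `Lx.\lmulmmt[P]$ with $x \neq z$ and $x \notele \fv(N)$, so
\[
(`Lx.\lmulmmt[P])\tsubst[\lmulmmt[N]/z] = `Lx.(\lmulmmt[P]\tsubst[\lmulmmt[N]/z]) = `Lx.\lmulmmt[P\tsubst[N/z]]
\]
by the induction hypothesis, and the last term is $\lmulmmt[(`Lx.P)\tsubst[N/z]]$.

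For application, $\lmulmmt[`@ P Q] = \UL{`m}`a.<\lmulmmt[P]|\UL{\mutilde}x.<\lmulmmt[Q]|\UL{\mutilde}y.<x|y`.`a>>>$. Term substitution distributes over the command, $`m$-, $\mutilde$- and stack constructors. It leaves the inner command $<x|y`.`a>$ untouched, because $x,y,`a$ are fresh and therefore differ from $z$. It reaches exactly $\lmulmmt[P]$ and $\lmulmmt[Q]$, where the induction hypothesis applies twice. The result is the interpretation of $`@ (P\tsubst[N/z]) (Q\tsubst[N/z]) = (`@ P Q)\tsubst[N/z]$.

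For the $`m$-case, $\lmulmmt[`M`b.[`g]P] = `M`b.<\lmulmmt[P]|`g>$. Term substitution does not touch names, so it passes through $`M`b$ and the co-variable $`g$. The induction hypothesis then finishes the case.

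The only step needing care is the freshness bookkeeping in the application case. One must check that the interpretation's auxiliary binders can always be chosen apart from $z$ and from $\fv(\lmulmmt[N])$. This requires noting that $\fv(\lmulmmt[N]) = \fv(N)$ and $\fn(\lmulmmt[N]) = \fn(N)$, which I will establish by a trivial preliminary induction on $N$. Once that holds, no capture occurs and every case is a direct unfolding of the definitions.
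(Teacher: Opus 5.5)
Your proposal is correct and follows the paper's proof: a structural induction on $M$ with the same five cases (the variable $z$, a variable $y \neq z$, abstraction, application, and $\mu$-abstraction), each closed by unfolding the interpretation and applying the induction hypothesis. Your preliminary check that the interpretation preserves free variables and free names only makes explicit the freshness of the auxiliary binders in the application case, which the paper leaves to Barendregt's convention.
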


 \begin{Proof}{By induction on the structure of terms.}{\Lmm \ref{tsubst lemma}}
By induction on the structure of terms.
\addqsymbol{`N}{\lmmtSem [N]\!}
 \begin{description}
 \myitem[$M = z$]
\lmulmmt[z] \tsubst [`N/z] &\ByDef& 
\lmulmmt[v z] \tsubst [`N/z] &\ByDef& 
`N \ByDef \lmulmmt[{z \tsubst N/z }]
 \end{array} $

 \myitem[$M = y$, $y\not=z$]
\lmulmmt[y] \tsubst [`N/z] &\ByDef& 
\lmulmmt[v y] \tsubst [`N/z] &\ByDef& 
y \ByDef
\lmulmmt[y] \ByDef
\lmulmmt[{y \tsubst N/z }]
 \end{array} $

 \myitem[$M = `Ly.P$]
\lmulmmt[`Ly.P] \tsubst [`N/z] &\ByDef& 
\lmulmmt[l y . P] \tsubst [`N/z] = \! (\IH) ~
\lmulmmt[l y . {P\tsubst N/z }] &\ByDef & 
\lmulmmt[`L y . {P\tsubst N/z }]
 \end{array} $

 \myitem[$M = PQ$]
\lmulmmt[`@ P Q ] \tsubst [`N/z] ~\ByDef~ 
\lmulmmt[a P Q] \tsubst [`N/z] ~\ByDef~ \\
`M`a.<`P \tsubst [`N/z] | {\mt x . < `Q \tsubst [`N/z] | { \mt y.< x | y`.`a >} >} > 
=\! (\IH) ~\\
\lmulmmt[a {P \tsubst N/z } {Q \tsubst N/z }] 
 ~\ByDef~
\lmulmmt[{(P \tsubst [N/z]) (Q \tsubst [N/z])}] ~\ByDef~ 
\lmulmmt[{(PQ) \tsubst N/z }]
 \end{array} $

 \myitem[{$M = `M`b.[`g]P$}]
\lmulmmt[`M`b. `g P] \tsubst [`N/z] &\ByDef& 
\lmulmmt[m `b . `g P] \tsubst [`N/z] &\ByDef& 
`M`b.< `P \tsubst [`N/z] | `g > &=& \bsp (\IH) \\
\lmulmmt[m `b . `g {P \tsubst N/z }] &\ByDef& 
\lmulmmt[{`M`b.[`g]P\tsubst N/z }] &\ByDef& 
\lmulmmt[{(`M`b.[`g]P)\tsubst N/z }]
 \end{array} $

 \arrayqed[2\point]
 \end{description}
 \end{Proof}%
Reduction does not play a role in this result.

 \begin{theorem} [{$\lmulmmt[`.]$ respects $\slmu$'s $`b$-reduction}]
 \label {interpretation respects beta reduction}
$\lmulmmt[ `@ (`Lz.M) N ] \tcredlmmt \lmulmmt[M\tsubst N/z ] $.
 \end{theorem}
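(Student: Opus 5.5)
The plan is a direct computation in $\lmmt$ that unfolds the interpretation of the redex and then invokes \Lmm\ref{interpretation respects term substitution}. By \Def\ref{our lmu to lmmt}, and as can be read off the typing derivation in the proof of \Thm\ref{thm:TranslateLmuLmmt:TranslationWellTyped}, we have
\[ \lmulmmt[`@ (`Lz.M) N ] = \Mo`a . < `Lz.\lmulmmt[M] | {\mt x . < \lmulmmt[N] | {\mt y . < x | y `. `a >} >} > \]
with $x,y,`a$ fresh. I will exhibit an explicit reduction sequence of at least one step from this term to $\lmulmmt[M] \tsubst[\lmulmmt[N]/z]$. Then \Lmm\ref{interpretation respects term substitution} closes the argument, since it gives $\lmulmmt[M]\tsubst[\lmulmmt[N]/z] = \lmulmmt[{M\tsubst N/z }]$.

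The key steps, in order, are as follows. First, contract the outer $(\MuT)$-redex over $x$. This substitutes the value $`Lz.\lmulmmt[M]$ for $x$ and yields $\Mo`a . < \lmulmmt[N] | {\mt y . < `Lz.\lmulmmt[M] | y `. `a >} >$. Second, contract the $(\MuT)$-redex over $y$, giving $\Mo`a . < `Lz.\lmulmmt[M] | \lmulmmt[N] `. `a >$; this is the ``stack building'' pattern of Example~\ref{lmmt prep}. Third, apply rule $(`l)$ to obtain $\Mo`a . < \lmulmmt[N] | {\mt z . < \lmulmmt[M] | `a >} >$. Fourth, contract the $(\MuT)$-redex over $z$, producing $\Mo`a . < \lmulmmt[M]\tsubst[\lmulmmt[N]/z] | `a >$. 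Here I use Barendregt's convention, so $`a$ does not occur in $\lmulmmt[M]$ or $\lmulmmt[N]$. Fifth, since $`a \notele \FV{\lmulmmt[M]\tsubst[\lmulmmt[N]/z]}$, rule $(`h`m)$ removes the outer abstraction. This leaves exactly $\lmulmmt[M]\tsubst[\lmulmmt[N]/z]$, which is the reason $(`h`m)$ was kept among the rules. Because the sequence has at least one step, the relation obtained is $\tcredlmmt$ and not merely $\rtcredlmmt$.

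All steps take place at the top level of a command under the outer $\UL{`m}$, so no contextual reasoning inside environments is needed. I expect the only delicate point to arise when this proof is later reused for the strategies. For the full relation $\redlmmt$ the argument is unproblematic. For $\redlmmtV$, however, the second and fourth steps substitute $\lmulmmt[N]$ through $(\MuT)$, which is only permitted if $N$ is a value; for $\redlmmtN$ the sequence is fine, since only $(\MuT)$, $(`l)$ and $(`h`m)$ are used. For the present statement about $\redlmmt$ nothing more is needed. I would add a remark noting which steps are over $\UL{\mutilde}$-abstractions created by the interpretation, and that the restriction to values in the {\CBV} case exactly matches the restriction of $(`b)$ to $(`b_{\vsubscr})$ in $\rednclmuv$.
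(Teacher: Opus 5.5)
Your proposal is correct and follows the paper's proof essentially step for step: $(\tilde\mu)$ over $x$, $(\tilde\mu)$ over $y$, then $(\lambda)$, then $(\tilde\mu)$ over $z$, and finally $(\eta\mu)$ together with the term-substitution lemma (the paper invokes the lemma just before the $(\eta\mu)$ step rather than after, which makes no difference). Your remark on the strategies also agrees with the paper: the sequence is admissible in the call-by-name strategy, and in the call-by-value strategy exactly when $N$ is a value.
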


\setbox65=\hbox{\textit{\proofname:}\,}
 \begin{proof}
$ \kern-7mm \begin{array}[t]{lclclcl} \kern7mm
\lmulmmt[`@ (`Lz.M) N ] &\ByDef&
\Span{3}{
\lmulmmt[a {l z . M} N ] ~ \redlmmt \! (x) } 
	\\
\lmulmmt[b {l z . M} N ] & \redlmmt& \bsp (y) 
\Mo`a.< \lmulmmt[l x . M] | `N `. `a > & \redlmmt& \bsp (`l) 
\Mo`a.< `N | {\mt z.< \SemM | `a >} > 
	&\redlmmt& \bsp (z) \\
\Mo`a.< \SemM \tsubst [`N/z ] | `a > &=& \bsp (\ref{tsubst lemma}) 
\Mo`a.< \lmulmmt[M \tsubst N/z ] | `a > & \redlmmt& \bsp (`h`m) 
\lmulmmt[{M\tsubst N/z }]
 \end{array} $
\arrayQED
 \end{proof}
%
All reduction steps are good for $\redlmmtN$ and also for $\redlmmtV$, since then $N$ is a value, and so is $`N$; notice that $`lz.M$ and $`lz.\SemM$ are values as well.
Notice that all reductions are in $\redlmmtmo$, except for the $(`l)$ step.

 \subsubsection*{Right-\TeXorPDFstring{$`m$}{μ} Reduction}

The preservation of right-structural reduction is not as straightforward.
It is not the case that, for example,
 \[ \lmulmmt[{ `@ (`M`d.[`d]M) N }] \tcredlmmt \lmulmmt[{ `M`g .[`g](M \mursubst[N.`g/`d])N }] . \]
A proof of such would require that $\SemM{\cdot}$ commutes with right-structural substitution,
 \[ \SemM \mursubst [`N.`g/`d] \tcredlmmt \lmulmmt[M \mursubst N.`g/`d ], \]
an analogue of \Lmm \ref{interpretation respects term substitution}.
In fact, it is the \emph{converse} that holds,
 \[ \lmulmmt[M \mursubst N.`g/`d ] \tcredlmmt \SemM \mursubst [`N.`g/`d] . \]
In the corresponding proof, the case that necessitates a $\redlmmt$-reduction is when $M$ has the form $`M`b.[`d]P$:
 \[ \begin{array}{rcl}
\lmulmmt[{(`M`b . `d P ) \mursubst N.`g/`d }]
	&\ByDef& 
\lmulmmt[`M`b . `g `@ (P \mursubst N.`g/`d ) N ]
	\\ &\ByDef& 
\lmulmmt[m `b . `g {a {P \mursubst N.`g/`d } N }] 
	\\ &\redlmmtmo&\bsp (`a)
`M`b . < \lmulmmt[P \mursubst N.`g/`d ] | \Ng >
	\\ &\tcredlmmtmo& \bsp (\IH)
`M`b . < `P \tsubst [\Ng/`d] | \Ng > 
	\\ &=&
\lmulmmt[`M`b. `d P] \tsubst [\Ng/`d].
 \end{array} \]
The $\UL{`m}$-reduction step is of the shape $ \cell <\Mo `a.c | `g> \redlmmt c \tsubst[`g/`a] $, which we call a \emph{renaming cut}.

The issue is (again, as mentioned in \Rem \ref{added application}) that the structural substitution $ (`M`b.[`d]P) \mursubst[N.`g/`d] = `M`b . [`g] `@ (P \mursubst[N.`g/`d] ) N $ adds an application, and creates an extra $\UL{`m}$-redex in the translation;
$\lmulmmt[(P \mursubst N.`g/`d ) N]$ must generate a $\UL{`m}$-binding to encode application, but we know that this will immediately bind to the context to which $\gamma$ denotes.
This is precisely why \citet[Proposition 2.3]{Curien-Herbelin'00} state that their translations are `up to $`m$-expansion'.

The proofs for the preservation of right-structural reduction come in two parts.
First, in \Lmm \ref{interpretation respects mur substitution CBV} we show that right structural substitution is preserved modulo $\UL{`m}$-expansion.
As we will point out, the proof of \Thm \ref{interpretation respects mur reduction} will make $(\MuT)$-reduction steps that are not allowed in $\redlmmtn$.
However, in \Thm \ref{interpretation respects mur reduction CBN}, we will show that the interpretation preserves $\redlmmtN$.

To show that the interpretation respects full $(\mur)$-reduction, we first show a result for right substitution.
To improve readability, we will write $\Ng$ for $  \mto x . < `N | { \mto y . < x | y`.`g>} > $ when the exact structure of that term is not important; notice that then $\lmulmmt[`@ P Q ] = \lmulmmt[A P Q] $.

 \begin{lemma} [{$\lmulmmt[`.]$} respects $\slmu$'s right structural substitution up to $\UL{`m}$ expansion] \label{interpretation respects mur substitution CBV} 
 \[ \lmulmmt[{}M \mursubst N.`g/`d ] \rtcredlmmtmo \SemM \tsubst [\Ngf/`d] = \SemM \tsubst [\Ng/`d] \]
 \end{lemma}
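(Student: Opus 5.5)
The plan is a structural induction on $M$, following the clauses of \Def\ref{def:Lmu:StructuralSubstitution} for $M\mursubst[N.`g/`d]$. The reductions used are only $\redlmmtmo$ steps, i.e.\ contractions of $\UL{`m}$-redexes created by the interpretation of application. The final equality $\SemM\tsubst[\Ngf/`d] = \SemM\tsubst[\Ng/`d]$ is purely notational: $\Ng$ abbreviates the environment $\mto x . < `N | { \mto y . < x | y`.`g>} >$. So all the work lies in the reduction part, and for every case I would aim at the target $\SemM\tsubst[\Ng/`d]$.

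\textbf{Routine cases.} For $M = x$ there is nothing to do. For $M=`Lx.P$ and $M=`@ P Q$ the interpretation is compositional: $\lmulmmt[`@ P Q]$ is $\Mo`a.<`P | \mto x.<`Q|\mto y.<x|y`.`a>>>$. Structural substitution distributes over these constructors, and implicit substitution of $\Ng$ for $`d$ commutes with them. I would therefore apply the induction hypothesis to $P$ and $Q$ and close under the compatible closure of $\redlmmt$. Some of these steps happen inside environments, which is fine here, since the lemma is about full reduction. The fresh bound name $`a$ does not clash with $`d$ or $`g$ by Barendregt's convention.

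For $M=`M`b.[`s]P$ with $`s\neq`d$, the result is $`M`b.[`s](P\mursubst[N.`g/`d])$. Its interpretation is $\Mo`b.<\lmulmmt[P\mursubst[N.`g/`d]] | `s>$. This reduces by the induction hypothesis to $\Mo`b.<`P\tsubst[\Ng/`d]|`s>$, which is $(\lmulmmt[`M`b.[`s]P])\tsubst[\Ng/`d]$ because $`s\neq`d$.

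\textbf{Key case.} Now let $M=`M`b.[`d]P$. By definition, $M\mursubst[N.`g/`d] = `M`b.[`g]`@ (P\mursubst[N.`g/`d]) N$. Its interpretation is
\[ \Mo`b.< \Mo`a.< \lmulmmt[P\mursubst[N.`g/`d]] | \mto x.<`N|\mto y.<x|y`.`a>> > | `g >. \]
The inner command $\cell<\Mo`a.c|`g>$ is a renaming cut, a $\UL{`m}$-redex over $`a$. Contracting it substitutes $`g$ for $`a$ and gives $\Mo`b.<\lmulmmt[P\mursubst[N.`g/`d]] | \Ng>$. Then the induction hypothesis on $P$, applied inside the command, yields $\Mo`b.<`P\tsubst[\Ng/`d]|\Ng>$. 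This is exactly $(\Mo`b.<`P|`d>)\tsubst[\Ng/`d] = \SemM\tsubst[\Ng/`d]$.

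\textbf{Expected obstacle.} I expect the main difficulty to be bookkeeping rather than conceptual. First, one must check that $`a$, $x$, $y$ are fresh, so that substituting $`g$ for $`a$ leaves $`N$ untouched and produces precisely $\Ng$. Second, one must check that $`g$ does not occur free in $`P$, so the induction hypothesis lands on the right term. Finally, one must make sure that the only non-trivial step, the renaming cut, is genuinely an underlined $`m$ step, so that the whole sequence is in $\rtcredlmmtmo$.
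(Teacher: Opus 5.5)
Your proposal is correct and follows the paper's proof essentially step for step. It is an induction on $M$ in which the only contraction is the renaming cut over the interpretation-generated binder $`a$ in the case $M = `M`b.[`d]P$, followed by the induction hypothesis applied inside the command; in the other cases the induction hypothesis is applied under $`l$ and inside environments, which, as the paper remarks, is why this argument does not go through for the call-by-name and call-by-value strategies.
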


 \begin{proof}
By induction on the structure of terms:
 \begin{description}

 \myitem [$M = z$]
\lmulmmt[{z \mursubst N.`g/`d }] =
\lmulmmt[z] \ByDef
\lmulmmt[v z] =
\lmulmmt[v z] \tsubst [\Ng/`d] &\ByDef& 
\lmulmmt[z] \tsubst [\Ng/`d]
 \end{array} $

 \myitem[$M = `Lz.P$]
\lmulmmt[{(`Lz.P) \mursubst N.`g/`d }] &=& 
\lmulmmt[{`Lz.P \mursubst N.`g/`d }] &\ByDef& 
\lmulmmt[l z . {P\mursubst N.`g/`d }] &\rtcredlmmtmo& \bsp (\IH) \\
`L z . \lmulmmt[P] \tsubst [\Ng/`d] &=& 
(\lmulmmt[l z . P]) \tsubst [\Ng/`d] &\ByDef& 
\lmulmmt[`Lz.P] \tsubst [\Ng/`d]
 \end{array} $

 \myitem[$M = PQ$] 
\lmulmmt[{(`@ P Q ) \mursubst N.`g/`d }] 
	~=~ 
\lmulmmt[{`@ (P \mursubst N.`g/`d ) (Q \mursubst N.`g/`d ) }] 
	& \ByDef \\
\Mo `a . < \lmulmmt[{P\mursubst[N.`g/`d]}] | {\mto x . < \lmulmmt[{Q\mursubst[N.`g/`d]}] | {\mto y. <x|y`.`a>} >} > 
	&\rtcredlmmtmo& (\IH) \\
\Mo `a . < `P \tsubst [\Ng/`d] | {\mto x . < `Q\tsubst[\Ng/`d] | {\mto y. <x|y`.`a>} >} > 
	&=& \\
\lmulmmt[a P Q] \tsubst [\Ng/`d] 
	&\ByDef& 
\lmulmmt[`@ P Q ] \tsubst [\Ng/`d] 
 \end{array} $

 \myitem[{$M = `M`b.[`d]P$}]{lclcl}
\lmulmmt[{(`M`b . `d P ) \mursubst N.`g/`d }] 
	&\ByDef& 
\lmulmmt[`M`b . `g `@ (P \mursubst N.`g/`d ) N ] 
	&\ByDef& \\
\lmulmmt[m `b . `g {A {P \mursubst N.`g/`d } N}] 
	&\redlmmtmo& \bsp (`a) 
`M`b . < \lmulmmt[P \mursubst N.`g/`d ] | \Ng > 
	&\rtcredlmmtmo& \bsp (\IH) \\
`M`b . < `P \tsubst [\Ng/`d] | \Ng > 
	&=& 
\lmulmmt[m `b . `d P] \tsubst [\Ng/`d] 
	&\ByDef& 
\lmulmmt[`M`b. `d P] \tsubst [\Ng/`d]
 \end{array} $ 

 \myitem[{$M = `M`b.[`t]P$, $`t\not=`d$}]
&& \kern-10mm
\lmulmmt[{(`M`b.[`t]P)\mursubst N.`g/`d }] &=& 
\lmulmmt[{`M`b.[`t]P\mursubst N.`g/`d }] &\ByDef& \\
\lmulmmt[m `b . `t {P \mursubst N.`g/`d }] &\rtcredlmmtmo& \bsp (\IH) 
`M`b . < `P \tsubst [\Ng/`d] | `t > &=& 
\lmulmmt[m `b . `t P] \tsubst [\Ng/`d] &\ByDef& \\
\lmulmmt[`M`b. `t P] \tsubst [\Ng/`d]
 \end{array} $
\arrayqed

 \end{description}

\observe
The only contraction takes place for $ M = `M`b.[`d]P $, where a $(\UL{`m})$-step takes place (over $`a$), a renaming step that replaces $`a$ by $`g$, a step permitted in both $\redlmmtn$ and $\redlmmtv$.

 \end{proof}

So modelling right substitution requires some renaming expansions, but only of $\UL{`m}$-redexes generated through the interpretation.
This last lemma is stated without considering {\CBN} or \CBV; and in fact, since it allows induction to be applied under an abstraction and, in the application case, also on the term occurring in the context, this proof is not valid in $\redlmmtN$ or $\redlmmtV$ (but is in both $\redlmmtn$ and $\redlmmtv$).
Substitution is a free operation, in the sense that it is not restricted in either {\CBN} or {\CBV} reduction strategies, but by this result its implementation would not be free, since involving reduction everywhere.

With this result we can show that our encoding deals with $\mur$-reduction through equality, by showing they have a common reduct.


 \begin{theorem} [{$\lmulmmt[`.]$} respects $\mur$-reduction] 
 \label {interpretation respects mur reduction}

 \begin{enumerate}
 \firstitem $ \lmulmmt[{ `@ (`M`d.[`d]M) N }] \crlmmt \lmulmmt[{ `M`g .[`g]`@ M\mursubst[N.`g/`d] N }] $.
 \item $ \lmulmmt[{ `@ (`M`d.[`b]M) N }] \crlmmt \lmulmmt[{`M`g .[`b] (M \mursubst[N.`g/`d]})] $, $`b \not=`d$.
 \end{enumerate}
 \end{theorem}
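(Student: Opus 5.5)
The plan is to compute a common reduct directly, using \Lmm\ref{interpretation respects mur substitution CBV} to deal with the structural substitution. By \Def\ref{our lmu to lmmt}, $\lmulmmt[`@ P Q ] = \Mo`a . < `P | \Nga >$ with $\Nga$ written for the environment $\mto x . < `Q | {\mto y . < x | y`.`a>} >$; I keep the paper's abbreviation $\Ng$ for the instance with $`Q = `N$ and name $`g$. The key point is that the outer $\Mo`a$ created for the application on the left plays the role of the fresh $`g$ on the right, up to $`a$-conversion.

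For part 2 ($`b\neq`d$) the left-hand side unfolds to $\Mo`a . < {`M`d . < `M | `b >} | \Nga' >$, where $\Nga'$ is the instance of $\Nga$ with $`Q = `N$. One $(`m)$-step over $`d$ gives
\[ \Mo`a . < `M \tsubst[\Nga'/`d] | `b > . \]
This step substitutes the whole environment $\Nga'$, which is legal for $\redlmmt$ (and also for $\redlmmtv$, though not as a stack step for $\redlmmtn$). The right-hand side is $`M`g . < \lmulmmt[M\mursubst[N.`g/`d]] | `b >$. Applying \Lmm\ref{interpretation respects mur substitution CBV} under the $`M`g$ and inside the command gives
\[ `M`g . < `M\tsubst[\Ng/`d] | `b > . \]
Renaming $`g$ to $`a$ shows that the two reducts coincide, since $`a,`g$ are fresh and $`b$ is unaffected. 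Hence both sides reduce to a common term, which is $\crlmmt$.

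For part 1 the left side reduces in the same way, by one step over $`d$, to $\Mo`a . < `M\tsubst[\Nga'/`d] | \Nga' >$. The right side is
\[ `M`g . < \lmulmmt[`@ (M\mursubst[N.`g/`d]) N ] | `g > = `M`g . < {\Mo`a . < \lmulmmt[M\mursubst[N.`g/`d]] | \Nga' >} | `g > . \]
First I contract this renaming cut over $`a$ (a $\redlmmtmo$ step), obtaining $`M`g . < \lmulmmt[M\mursubst[N.`g/`d]] | \Ng >$. Then I apply \Lmm\ref{interpretation respects mur substitution CBV} in term position to reach $`M`g . < `M\tsubst[\Ng/`d] | \Ng >$, which is $`a$-equivalent to the left reduct. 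Freshness of $`g$ (and Barendregt's convention, so that $`a,`g\notele `N,`M$) guarantees that the renaming commutes with the substitution.

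The only real obstacle is bookkeeping rather than mathematics. I must check that the environment produced by the $(`m)$-step on the left is literally the same $\Nga'$ as the one built by the lemma, up to renaming of the bound $x,y$ and of $`a$ versus $`g$. I must also note that the reductions used by \Lmm\ref{interpretation respects mur substitution CBV} take place under binders and inside environments. That is harmless here, since the statement concerns full $\redlmmt$ only; the strategy versions are handled separately in \Thm\ref{interpretation respects mur reduction CBN}.
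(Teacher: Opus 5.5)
Your proposal is correct and is essentially the paper's own proof. On the left, a single $(`m)$-step over $`d$ pulls in the application's environment; on the right, \Lmm\ref{interpretation respects mur substitution CBV} (preceded in part 1 by the renaming $\UL{`m}$-step over $`a$) reduces to the same term, so the two sides have a common reduct. The paper simply names the bound variable of the interpreted application $`g$ from the start instead of $`a$-converting at the end, and you apply the lemma in the right direction, reducing the right-hand side towards that common reduct.
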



 \begin{proof} 
 \begin{enumerate}
\prooffirstitem
\lmulmmt[{ `@ (`M`d.[`d]M) N }] 
	&\ByDef& 
\lmulmmt[A <`g> {m `d . `d M} N ] 
	&\redlmmt& \bsp (`d)	\\
\Mo `g . < \SemM \tsubst [\Ng/`d] | \Ng >   
	&\rtcredlmmtmo& \bsp (\ref{interpretation respects mur substitution CBV}) 
\Mo `g . < \lmulmmt[M\mursubst N.`g/`d ] | \Ng  >
	& \lmmtdermo & \bsp (`a) \\
\lmulmmt[m `g . `g {A {M\mursubst N.`g/`d } N }]
	&\ByDef& 
\lmulmmt[{`M`g .[`g]M\mursubst N.`g/`d \hsk N }]
 \end{array} $

 \item 
{\def \qsk{} \def \hsk{}
$ \begin{array}[t]{lclclcl}
\lmulmmt[{ `@ (`M`d.[`b]M) N }] 
	&\ByDef& 
\lmulmmt[A <`g> {m `d . `b M} N ] 
	&\redlmmt& \bsp (`d) 
 `M`g . < \SemM \tsubst [`N`.`g/`d] | `b >   
	&\rtcredlmmtmo& \bsp (\ref{interpretation respects mur substitution CBV}) \\
`M`g . < \lmulmmt[M\mursubst N.`g/`d ] | `b   >
	&\ByDef& 
\lmulmmt[m `g . `g {A {M\mursubst N.`g/`d } N }]
	&\ByDef& 
\lmulmmt[{`M`g .[`b]M \mursubst[N.`g/`d] }]
 \end{array} $
\arrayqed
}

 \end{enumerate}

\observe
Notice that the $(`m)$-contractions over $`d$ pull in $\Ng = \mt x . < `N | { \mt y . < x | y`.`g>} > $, which is not a stack, so these steps would not be allowed in $\redlmmtN$; all steps are allowed in \CBV, apart from those in $\redlmmtmo$, which are in $\redlmmtv$.
Also, because the result is shown with some steps going against reduction, it is stated using $\crlmmt$.

 \end{proof}
 
 \subsubsection*{Left-\TeXorPDFstring{$`m$}{μ} Reduction}

We will now turn our focus on $\nclmu$'s left-structural substitution. The following lemma shows that we can implement that in $\lmmt$ without extending the reduction relation.
In the following two results, we will now write $\Ng$ for $ \mt y . < `N | y `. `g> $. 

 \begin{lemma} [{$\lmulmmt[`.]$ respects $\slmu$'s left structural substitution}] 
 \label{interpretation respects lmu substitution} 
 \[ \lmulmmt[\mulsubst N.`g/`d M] ~\rtcredlmmt~ \SemM \tsubst [{\mt y . < `N | y`.`g >} / `d] \ByDef \SemM \tsubst [\Ng/`d] \]
 \end{lemma}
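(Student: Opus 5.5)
The plan is to argue by induction on the structure of $M$, mirroring the proof of \Lmm\ref{interpretation respects mur substitution CBV}. Left-structural substitution $\mulsubst[N.`g/`d]$ commutes with every term constructor except a command named $`d$. Likewise, the $\lmmt$-substitution $\tsubst[\Ng/`d]$ commutes with every constructor of the interpretation. So all cases but one follow directly from the induction hypothesis, using the compatible closure of $\redlmmt$.

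The routine cases are as follows.
\begin{itemize}
\item For $M = z$ both sides are $z$.
\item For $M = `Lz.P$ we apply the induction hypothesis under the abstraction.
\item For $M = `@ P Q$ we unfold $\lmulmmt[a P Q]$ and apply the induction hypothesis to $\lmulmmt[P]$ and $\lmulmmt[Q]$ in place. The second of these occurs inside a $\UL{\mutilde}$-environment; this is permitted, since the statement concerns unrestricted $\rtcredlmmt$.
\item For $M = `M`b.[`t]P$ with $`t \not= `d$ the name $`t$ is untouched by both substitutions, and the induction hypothesis applied to $P$ inside $`M`b.\cell<\cdot|`t>$ suffices.
\end{itemize}
In each of these cases, Barendregt's convention guarantees that the bound variables $x,y,`a$ generated by the interpretation do not clash with $`d$, $`g$, or the free connectors of $\lmulmmt[N]$.

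The essential case is $M = `M`b.[`d]P$. Here $\mulsubst[N.`g/`d](`M`b.[`d]P) = `M`b.[`g]\,`@ N P'$, where $P' = \mulsubst[N.`g/`d]P$. Its interpretation is
\[ `M`b.\cell< \Mo`a.< \lmulmmt[N] | {\mto x.< \lmulmmt[P'] | {\mto y.< x | y`.`a >} >} > | `g > . \]
I would reduce this in three steps.
\begin{enumerate}
\item A renaming $\UL{`m}$-step over $`a$ replaces $`a$ by $`g$, giving $`M`b.< \lmulmmt[N] | {\mto x.< \lmulmmt[P'] | {\mto y.< x | y`.`g >} >} >$.
\item A $\UL{\mutilde}$-step over $x$ substitutes $\lmulmmt[N]$ for $x$. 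Since $x$ does not occur in $\lmulmmt[P']$, this yields $`M`b.< \lmulmmt[P'] | {\mto y.< \lmulmmt[N] | y`.`g >} >$, that is, $`M`b.<\lmulmmt[P']|\Ng>$.
\item The induction hypothesis rewrites $\lmulmmt[P']$ to $\lmulmmt[P]\tsubst[\Ng/`d]$.
\end{enumerate}
The result is $`M`b.<\lmulmmt[P]\tsubst[\Ng/`d] | \Ng> = (`M`b.<\lmulmmt[P]|`d>)\tsubst[\Ng/`d] = \lmulmmt[`M`b.[`d]P]\tsubst[\Ng/`d]$, as required.

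I expect this case to be the main point of care, for two reasons. First, the correct direction of reduction must be chosen: we reduce the interpretation of the substituted term towards the substituted interpretation. Second, the $\UL{\mutilde}$-step over $x$ substitutes $\lmulmmt[N]$, which need not be a value. It is therefore a legitimate $\redlmmt$-step, and also a $\redlmmtn$-step, but not a $\redlmmtv$-step unless $N$ is a value. This is harmless for the lemma as stated, but I would record it in an observation for the later {\CBV} theorem, where $N$ will be a value $V$ and hence $\lmulmmt[V]$ will be a value too. Unlike the right-structural case, no expansion steps are needed, so the lemma holds with $\rtcredlmmt$ rather than up to $\UL{`m}$-expansion.
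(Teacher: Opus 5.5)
Your proof is correct and follows the paper's argument: induction on $M$, with the only non-trivial case $M=\mu\beta.[\delta]P$ handled by one interpretation-generated $\tilde\mu$-step over $x$ and one generated renaming $\mu$-step, followed by the induction hypothesis inside the context $\mu\beta.\langle\,\cdot\mid\ldots\rangle$. The only difference is the order of these two steps: the paper contracts the $\tilde\mu$-redex first and the renaming cut second, and you do the reverse, which is immaterial because the two redexes are independent.
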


 \begin{proof}
By induction on the structure of terms\Shorter{; as in the proof for \Lmm\ref{interpretation respects mur substitution CBV}, the proof is straightforward, except for (the other steps need no reduction outside the inductive hypothesis):}\Longer{.}

 \begin{description}

\Longer
{
 \myitem [$M = x$]
\lmulmmt[\mulsubst N.`g/`d x] = 
\lmulmmt[x] &\ByDef& 
\lmulmmt[v x] &=& 
\lmulmmt [v x] \tsubst [\Ng / `d] &\ByDef&
\lmulmmt [x] \tsubst [\Ng / `d]
 \end{array} $

 \myitem [$M = `Lx.P$]
\lmulmmt[\mulsubst N.`g/`d (`Lx.P)] &=& 
\lmulmmt[`L x . \mulsubst N.`g/`d P] &\ByDef& 
\lmulmmt[l x . \mulsubst N.`g/`d P] &\rtcredlmmt& \bsp (\IH) \\
\lmulmmt [l x . P] \tsubst [\Ng / `d] &=&
(`Lx.\lmulmmt [P]) \tsubst [\Ng / `d] &\ByDef&
\lmulmmt [`Lx.P] \tsubst [\Ng / `d]
 \end{array} $

 \myitem [$M = `@ P Q $]
\lmulmmt[\mulsubst N.`g/`d (`@ P Q ) ] ~=~ 
\lmulmmt[`@ (\mulsubst N.`g/`d P) (\mulsubst N.`g/`d Q)]
&\ByDef& \\
\lmulmmt[a {\mulsubst N.`g/`d P} {\mulsubst N.`g/`d Q}] &\rtcredlmmt& \bsp (\IH) \\
`M`a . < `P \tsubst [\Ng / `d] | {\mt x . < `Q \tsubst [\Ng / `d] | {\mt y . < x | y`.`a >} >} > &=& \\
\lmulmmt [a P Q] \tsubst [\Ng / `d] &\ByDef& 
\lmulmmt [`@ P Q ] \tsubst [\Ng / `d]
 \end{array} $

 \myitem [$M = `M`a . `b P $, $`d \not=`b$] 
\lmulmmt[\mulsubst N.`g/`d (`M`a . `b P )] ~=~ 
\lmulmmt[`M`a . `b {\mulsubst N.`g/`d P} ] ~\ByDef~ 
\lmulmmt[m `a . `b {\mulsubst N.`g/`d P}] &\rtcredlmmt& \bsp (\IH) \\
`M`a . <`P \tsubst [\Ng / `d] | `b > ~ =~ 
\lmulmmt [m `a . `b P] \tsubst [\Ng / `d] ~\ByDef~ 
\lmulmmt [`M`a . `b P] \tsubst [\Ng / `d] 
 \end{array} $
}

 \myitem [$M = `M`a . `d P $]
\lmulmmt[\mulsubst N.`g/`d (`M`a . `d P )] 
	&\ByDef& 
\lmulmmt[`M`a . `g `@ N (\mulsubst N.`g/`d P) ] 
	&\ByDef& \\
\Span{3}
{ \begin{array}{@{}lclcl}
\lmulmmt[m `a . `g {a <`b,x> N {\mulsubst N.`g/`d P}}]  
	&\redlmmt& \bsp (x) \\
\Mo `a . <{ `M `b . < \lmulmmt [\mulsubst N.`g/`d P ] | {\mt y . <`N|y`.`b>} > }| `g > 
	&\redlmmtmo& \bsp (`b) 
 \end{array} } \\
`M`a . < \lmulmmt[\mulsubst N.`g/`d P] | \Ng > 
	&\rtcredlmmt& \bsp (\IH) 
`M`a . < `P \tsubst [\Ng / `d] | \Ng > 
	&=&  \\
\lmulmmt[m `a . `d P] \tsubst [\Ng / `d] 
	&\ByDef& 
\lmulmmt [`M`a . `d P] \tsubst [\Ng / `d]
 \end{array} $
\arrayQED

 \end{description}

 \end{proof}


Also here, by induction reduction takes place under an abstraction and, in the case for application, in the continuation.
As can be seen from the proof, the reduction required is more involved than just the renaming $\UL{`m}$ reduction (over $`b$); also the $(\MuT)$-step (over $x$) pulls in $`N$, which would be a value in $\redlmmtv$. 
This is needed to be able to write the resulting term as a substitution (to $`d$).
We do not need to check that $\redlmmtN$ is respected, since left structural substitution is not part of $\redlmmtN$.

With this result, we can now show that $\slmu$'s reduction rule $(\mul)$ can be simulated in $\lmmt$.

 \begin{theorem} [{$\lmulmmt[`.]$ respects $\slmu$'s $\mul$}] 
 \label {interpretation respects lmu reduction}
 \begin{enumerate}
 \firstitem
$ \lmulmmt[ `@ N (`M`a. `a M) ] ~\crlmmt~ \lmulmmt[ `M`g . `g {`@ N (\mulsubst N.`g/`a M )}] $.
 \item
$ \lmulmmt[ `@ N (`M`a. `d M) ] ~\crlmmt~ \lmulmmt[ `M`g . `d {\mulsubst N.`g/`a } M ] $, with $`a \not= `d$.
 \end{enumerate}
 \end{theorem}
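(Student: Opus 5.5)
The plan is to exhibit, for each part, an explicit common reduct of the two interpretations. Both sides are pushed forward by contracting the $\UL{`m}$- and $\UL{\mutilde}$-redexes introduced by the interpretation of application. \Lmm\ref{interpretation respects lmu substitution} then absorbs the structural substitution.

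Throughout, $\Ng$ abbreviates $\mt y . < `N | y`.`g >$. Recall that $\lmulmmt[`@ P Q ] = \Mo`a.< `P | \mto x . < `Q | \mto y . < x | y`.`a > > >$.

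For the first part, unfold the left-hand side as $\Mo`g.< `N | \mto x . < {`M`a.<`M|`a>} | \mto y . < x | y`.`g > > >$.
\begin{itemize}
\item A $(\MuT)$-step over $x$ substitutes $`N$ and yields $\Mo`g.< {`M`a.<`M|`a>} | \Ng >$.
\item A $(`m)$-step over $`a$ then pulls in the environment $\Ng$ and yields $\Mo`g.< `M \tsubst[\Ng/`a] | \Ng >$.
\end{itemize}
The right-hand side unfolds as $`M`g.< \Mo`b.< `N | \mto x . < \lmulmmt[\mulsubst N.`g/`a M] | \mto y . < x | y`.`b > > > | `g >$.
\begin{itemize}
\item A $(\MuT)$-step over $x$ gives $`M`g.< \Mo`b.< \lmulmmt[\mulsubst N.`g/`a M] | \mt y . < `N | y`.`b > > | `g >$.
\item A renaming $\UL{`m}$-step over $`b$ gives $`M`g.< \lmulmmt[\mulsubst N.`g/`a M] | \Ng >$.
\item By \Lmm\ref{interpretation respects lmu substitution}, the inner term reduces to $`M \tsubst[\Ng/`a]$, so the whole term reaches $`M`g.< `M \tsubst[\Ng/`a] | \Ng >$.
\end{itemize}
Up to $`a$-conversion of the bound $`g$ versus $\UL{`m}`g$, this is the same term as the reduct of the left-hand side, which gives $\crlmmt$.

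The second part ($`a\not=`d$) is handled the same way. The left-hand side reduces by the step over $x$ and then the step over $`a$ to $\Mo`g.< `M \tsubst[\Ng/`a] | `d >$. The right-hand side is $`M`g.< \lmulmmt[\mulsubst N.`g/`a M] | `d >$ directly. \Lmm\ref{interpretation respects lmu substitution}, applied under the $`m$-binder, reduces it to the same term.

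The main obstacle is bookkeeping rather than anything conceptual. One must check that the fresh name $`g$ chosen in $\mulsubst N.`g/`a$ can be identified with the name bound by the outer $\UL{`m}$ of the interpretation. One must also check that the environment pulled in over $`a$ is literally the $\Ng$ of \Lmm\ref{interpretation respects lmu substitution}; this requires doing the $(\MuT)$-step over $x$ first. Afterwards I would remark on the restrictions:
\begin{itemize}
\item The step over $x$ substitutes $`N$, which is a value when the rule used is $(\mulv)$.
\item The step over $`a$ substitutes an arbitrary environment, which is allowed in $\redlmmtv$.
\item The lemma's internal steps occur under binders and in environments, so they lie only in $\redlmmtv$ and not in the strategy.
\item No $\redlmmtN$ check is needed, since $(\mul)$ is absent in \CBN.
\end{itemize}
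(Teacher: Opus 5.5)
Your proof is correct and is essentially the paper's own. On the left you contract the interpretation's $\tilde\mu$-redex over $x$ and then the $\mu$-redex over $\alpha$; on the right you contract the administrative renaming and $\tilde\mu$-redexes and then apply Lemma~\ref{interpretation respects lmu substitution}, meeting at the same common reduct in each part as the paper. The one difference is on the right-hand side of part~1: the paper performs the renaming step before the $\tilde\mu$-step over $x$, so that both steps occur at the top of the command under the outer $\mu\gamma$ and are therefore $\redlmmtV$-steps, which is what Corollary~\ref{interpretation respects mul reduction CBV} uses; your order performs the $\tilde\mu$-step inside the term of the command whose environment is $\gamma$, which is harmless for $\crlmmt$ and $\redlmmtv$ but is not a step of the strategy.
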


 \begin{proof}
 \begin{enumerate}

\prooffirstitem
\lmulmmt[ `@ N (`M`a . `a M) ] 
	&\ByDef& 
\lmulmmt[a <`g,x> N {m `a . `a M}] 
	&\redlmmtmo & \bsp (x) \\ 
\lmulmmt[b <`g,y> N {m `a . `a M}] 
	&\redlmmt & \bsp (`a) 
\Mo `g . < \SemM \tsubst [\Ng / `a ] | \Ng > 
	&\rtclmmtdermo& \bsp (\ref{interpretation respects lmu substitution}) \\
\Mo `g . < \lmulmmt[{\mulsubst[N.`g/`a]M}] | \Ng > 
	&\lmmtdermo& \bsp (x) 
\Mo `g . < `N | {\mto x . < \lmulmmt[{\mulsubst[N.`g/`a]M}] | {\mto y . < x | y`.`g >} >} > 
	&\lmmtdermo& \bsp (`d) \\
\Span{3}{ \lmulmmt[m `g . `g {a <`d,x> N {\mulsubst N.`g/`a M}}] 
	~\ByDef~ 
\lmulmmt[`M`g . `g `@ N {(\mulsubst N.`g/`a M)} ] }
 \end{array} $

 \item
$ \begin{array}[t]{@{}lclrl}
\lmulmmt[ `@ N (`M`a . `d M) ] &\ByDef& 
\lmulmmt[a <`g,x> N {m `a . `d M}] &\redlmmtmo& \bsp (x) \\
\lmulmmt[b <`g,y> N {m `a . `d M}] &\redlmmtmo& \bsp (`a) 
\Mo `g . < \SemM \tsubst [\Ng / `a ] | `d > &\rtclmmtdermo& \bsp (\ref{interpretation respects lmu substitution}) \\
\lmulmmt[m `g . `d {\mulsubst N.`g/`a M}] &\ByDef& 
\lmulmmt[`M`g . `d {\mulsubst N.`g/`a M}]
 \end{array} $
\arrayQED

 \end{enumerate}

\observe
The $(\MuT)$-reductions on $x$ are permitted in $\redlmmtV$, since then $N$ and $`N$ will be values.
The reductions over $`a$ and $`d$ are also OK for $\redlmmtV$.
The reduction steps in $\rtclmmtdermo$ (\Lmm \ref{interpretation respects lmu substitution}) are in $\redlmmtv$.

 \end{proof}
Notice that all steps in this proof are in $\redlmmtmo$, except the $`m$-contractions over $`a$.

We can now state the main results for our encoding.

 \begin{theorem}[Preservation of $\slmu$ reduction] \label{beta mu reduction preserved}
If $M \rtcrednclmu N$, then $\SemM \crlmmt `N$.
 \end{theorem}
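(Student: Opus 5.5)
Since $\eqlmmt$ is by definition the equivalence closure of $\redlmmt$, I will show $\SemM \eqlmmt `N$ by induction on the length of the reduction $M \rtcrednclmu N$. The case of length zero is trivial, and transitivity of $\eqlmmt$ reduces the inductive step to a single step $M \rednclmu N$. That single step I will treat by induction on the compatible closure generating $\rednclmu$, distinguishing the contracted redex from the context in which it sits.

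For the base cases I take the five rules one by one. For $(`b)$, \Thm\ref{interpretation respects beta reduction} already gives $\lmulmmt[`@ (`Lz.M) N ] \tcredlmmt \lmulmmt[M\tsubst N/z ]$. For $(\mur)$ and $(\mul)$, in both of their shapes ($[`a]$ on the bound name or on another name), \Thm\ref{interpretation respects mur reduction} and \Thm\ref{interpretation respects lmu reduction} give a common reduct, and a common reduct implies equality. The two remaining rules need a short direct computation. For $(\Rename)$ we have $\lmulmmt[{`M`a.[`b]`M`g.[`d]P}] = `M`a.<{`M`g.<`P|`d>}|`b>$, and a single $(`m)$-step over $`g$ gives $`M`a.<`P|`d>\tsubst[`b/`g]$. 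I must check that this is $\lmulmmt[{`M`a.([`d]P)\tsubst[`b/`g]}]$, which needs a routine lemma: the interpretation commutes with renaming of names. I prove it by induction on terms, just as for \Lmm\ref{tsubst lemma}. For $(\Erase)$ we have $\lmulmmt[{`M`a.[`a]P}] = `M`a.<`P|`a>$, and since $`a$ does not occur in $`P$, one $(`h`m)$-step gives $`P$.

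For the contextual cases, the interpretation is compositional: $\lmulmmt[`.]$ places $\lmulmmt[P]$ and $\lmulmmt[Q]$ as subterms inside a fixed $\lmmt$-context (under $`L$, under $`M`a.<\,`.\,|`b>$, or inside the application pattern $\Mo`a.<`.|\mto x.<`.|\mto y.<x|y`.`a>>>$). Because $\redlmmt$ is closed under every syntactic context, including environments and $\mt$-abstractions, $\eqlmmt$ is a congruence. So from $\lmulmmt[P]\eqlmmt\lmulmmt[Q]$, given by the inner induction, I get equality of the images of $`Lx.P$, $`@ P R$, $`@ R P$ and $`M`a.[`b]P$.

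I expect no real obstacle here, because the hard work sits in the lemmas already established. The only points needing care are the renaming-commutation lemma for $(\Rename)$, including its subcase $`d=`g$, and remembering that several earlier results hold only up to $\UL{`m}$-expansion, which is exactly why the conclusion is stated with $\eqlmmt$ rather than with $\rtcredlmmt$.
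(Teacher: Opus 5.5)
Your single-step case analysis is the paper's. The $(\beta)$, $(\mur)$ and $(\mul)$ cases go via Theorems~\ref{interpretation respects beta reduction}, \ref{interpretation respects mur reduction} and~\ref{interpretation respects lmu reduction}, renaming by one $(\mu)$-step, erasing by one $(\eta\mu)$-step, and the contextual cases by compatibility. The problem is that you prove a weaker relation than the one stated. In the paper, $t \crlmmt t'$ is \emph{defined} as the existence of a common reduct: some $s$ with $t \rtcredlmmt s$ and $t' \rtcredlmmt s$. It is not the equivalence $\eqlmmt$, so your closing remark that the conclusion is stated with $\eqlmmt$ misreads it. Since $\redlmmt$ is not confluent (critical pair $\langle \mu\alpha.c_1 \mid \tilde\mu x.c_2\rangle$), convertibility does not give a common reduct. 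The paper even shows non-confluence on images of the interpretation. By discarding the common reducts in the $(\mur)$ and $(\mul)$ cases (``a common reduct implies equality''), you only show that the two images are $\eqlmmt$-equal. That is strictly weaker than the theorem, which is meant to say that no step is lost in the translation.

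For a single step the repair is immediate and yields exactly the paper's proof: keep the common reduct as the invariant.
\begin{itemize}
\item Theorem~\ref{interpretation respects beta reduction} even gives a reduction from the left image to the right one.
\item The $(\mur)$ and $(\mul)$ theorems give common reducts.
\item Your renaming and erasing computations are themselves reductions.
\item In each contextual case you push both reduction sequences into the same hole. This is legitimate because $\redlmmt$ is closed under every context, including environments and $\tilde\mu$-bodies, which you need for the argument position of an application.
\end{itemize}
What does not survive is your outer induction on the length of $M \rtcrednclmu N$, because it relies on transitivity. $\eqlmmt$ is transitive, but joinability need not be for a non-confluent relation. For two consecutive steps, both common reducts are reducts of the middle image, and nothing guarantees they are joinable. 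Concretely, the $(\mur)$ case contracts the top command $\langle\mu\delta.\cdots\mid\tilde\mu x.\cdots\rangle$ of an application's image on the $\mu$ side, while the $(\mul)$ case contracts the same kind of command on the $\tilde\mu$ side. The paper's own proof is also only a case analysis on a single step. So the multi-step form with $\crlmmt$ needs an extra argument that neither you nor the paper supplies. For reduction sequences, what is actually established is your $\eqlmmt$ version.
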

 \begin{proof} \def \rednclmu {\reduc}
 \begin{description}

 \item[{$ `@ (`Lz.M) N \rednclmu M\tsubst N/z $}]
By \Thm \ref{interpretation respects beta reduction}.

 \item[{$ `@ (`m `a . `C) N \rednclmu `m `g . `C \mursubst N.`g/`a $}]
By \Thm \ref{interpretation respects mur reduction}.

 \item[{$ `@ N (`m `a . `C) \rednclmu `M `g . \mulsubst N.`g/`a `C $}]
By \Thm \ref{interpretation respects lmu reduction}.

 \myitem[{$`M`a . [`b] { `M`g .[`d] M } \reduc `M`a . [`d] M \tsubst[`b/`g], ~ `g \not= `d $}]
\lmulmmt[`M`a . `b { `M`g . `d M }] &\ByDef & 
\lmulmmt[m `a . `b {m `g . `d M }] &\redlmmt & \\
`M `a . < \SemM \tsubst[`b/`g] | `d > ~ \ByDef ~
\lmulmmt[m `a . `d M{\tsubst[`b/`g]}] &\ByDef& 
\lmulmmt[{`M`a . [`d] M \tsubst[`b/`g] }]
 \end{array} $

 \myitem[{$`M`a . [`b] { `M`g .[`g] M } \reduc `M`a . [`b] M \tsubst[`b/`g] $}]
\lmulmmt[`M`a . `b { `M`g . `g M }] &\ByDef & 
\lmulmmt[m `a . `b {m `g . `g M }] &\redlmmt & \\
`M `a . < \SemM \tsubst[`b/`g] | `b > ~ \ByDef ~
\lmulmmt[m `a . `b M{\tsubst[`b/`g]}] &\ByDef& 
\lmulmmt[{`M`a . [`b] M \tsubst[`b/`g] }]
 \end{array} $

 \myitem[{$`M`a . [`a] M \rednclmu M ~ (`a\notele M) $}]
\lmulmmt[`M`a . `a M] &\ByDef & 
\lmulmmt[m `a . `a M] &\redlmmt (`h`m) & 
\lmulmmt[M]
 \end{array} $

 \myitem[$P \rednclmu Q \Then `Lx.P \rednclmu `Lx.Q$]
\lmulmmt[`Lx.P] &\ByDef & 
\lmulmmt[l x . P] &\crlmmt& \bsp (\IH)
\lmulmmt[l x . Q] &\ByDef & 
\lmulmmt[`Lx.Q] 
 \end{array} $

 \myitem[$P \rednclmu Q \Then `@ P R \rednclmu `@ Q R $]
\lmulmmt[`@ P R ] &\ByDef & 
\lmulmmt[a P R ] &\crlmmt& \bsp (\IH) \\
\lmulmmt[a Q R ] &\ByDef & 
\lmulmmt[`@ Q R ] 
 \end{array} $

 \myitem[$P \rednclmu Q \Then `@ R P \rednclmu `@ R Q $]
\lmulmmt[`@ R P ] &\ByDef & 
\lmulmmt[a R P ] &\crlmmt& \bsp (\IH) \\
\lmulmmt[a R Q ] &\ByDef & 
\lmulmmt[`@ R Q ] 
 \end{array} $

 \myitem[$P \rednclmu Q \Then `M `a . `b P \rednclmu `M `a . `b Q $]
\lmulmmt[`M `a . `b P ] &\ByDef & 
\lmulmmt[m `a . `b P ] &\crlmmt& \bsp (\IH)
\lmulmmt[m `a . `b Q ] &\ByDef & 
\lmulmmt[`M `a . `b Q ]
 \end{array} $
\arrayQED[-10pt]
 \end{description}
 \end{proof}

 \subsection{Results for \CBN}
We start by observing that the proof of \Thm \ref{interpretation respects beta reduction} also holds for \CBN:

 \begin{corollary} [{$\lmulmmt[`.]$ respects $\slmu$'s {\CBN} $`b$-reduction}]
 \label {interpretation respects CBN beta reduction}
$\lmulmmt[ `@ (`Lz.M) N ] \tcredlmmtN \lmulmmt[M\tsubst N/z ] $.
 \end{corollary}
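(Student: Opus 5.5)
The plan is to reuse, step by step, the reduction sequence from the proof of \Thm\ref{interpretation respects beta reduction}, and to check that each contraction in it is a $\redlmmtN$ step. That means checking that each step is a top-level contraction of the command under the outermost $`m$, or a $(`h`m)$ step at the root. The underlying full reductions are already established there, so the only work is this side condition.

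The sequence starts from $\lmulmmt[a {l z . M} N]$, which is a term $\Mo`a.< `Lz.\SemM | \mt x.< `N | \mt y.<x|y`.`a> > >$. It proceeds as follows.

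\begin{enumerate}
\item A $(\MuT)$ step over $x$. This substitutes $`Lz.\SemM$ for $x$ at the top-level command. $(\MuT)$ is unrestricted in $\redlmmtn$, and the redex sits in the command directly under $`m`a$, so the contextual rule $c \reduc c' \Imply `M`b.c \reduc `M`b.c'$ allows it.
\item A $(\MuT)$ step over $y$. This substitutes $`N$ and yields $\Mo`a.<`Lz.\SemM | `N`.`a>$. Again it is unrestricted, and it occurs at the command directly under $`m`a$.
\item A $(`l)$ step, giving $\Mo`a.< `N | \mt z.<\SemM|`a> >$. This is also a top-level command contraction.
\item A $(\MuT)$ step over $z$, giving $\Mo`a.<\SemM\tsubst[`N/z]|`a>$. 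This is permitted in \CBN\ whatever the shape of $`N$. It is exactly the point where \CBV\ needed $N$ to be a value.
\item By \Lmm\ref{tsubst lemma} the term equals $\Mo`a.<\lmulmmt[M\tsubst N/z]|`a>$. This is a syntactic identity, so no reduction is involved.
\item A final $(`h`m)$ step at the root, which $\redlmmtN$ explicitly includes, using that $`a$ is fresh.
\end{enumerate}

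The only point I expect to need care is that no step is a $(`m)$ contraction against a non-stack and no step occurs inside an environment or under a $`l$. Steps 2 and 4 substitute terms into the body of the command, but substitution is free. No reduction takes place inside $`N$ or $\SemM$, so the strategy's contextual restriction is never invoked. Thus the whole sequence lies in $\tcredlmmtN$. It is nonempty, since it contains at least the $(`l)$ step.
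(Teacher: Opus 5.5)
Your proposal is correct and is exactly the paper's argument: the paper proves this corollary by observing that every step in the proof of Theorem~\ref{interpretation respects beta reduction} is a top-level command contraction, or a root $(\eta\mu)$-step, allowed in the {\CBN} strategy. Those steps are the two $\tilde{\mu}$-steps over $x$ and $y$, the $(\lambda)$-step, the $\tilde{\mu}$-step over $z$, and the final $(\eta\mu)$-step. One harmless slip is in your aside on {\CBV}: the $\tilde{\mu}$-step over $y$ also substitutes the interpretation of $N$, so {\CBV} needs a value there as well, not only at the step over $z$; this does not affect the {\CBN} claim.
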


As for $(\mur)$, for $\redlmmtN$, we have observed that we cannot use the proof of \Thm \ref{interpretation respects mur reduction}; the situation is different from that of $\redlmmtV$, in that the reductions over $x$ in that proof are allowed in $\redlmmtV$.
However, we can state that our interpretation maps a {$\redlmuN$} reduction in $\slmu$ into a {$\redlmmtN$} reduction in $\lmmt$, but for reductions that implement substitutions and prepare a stack; these are still {$\redlmmtn$} reductions, but can take place in continuations, and are not arbitrary, but specifically generated by the interpretation or necessitated by the strategy.
 
We need a separate proof for that property of \Thm \ref{interpretation respects mur reduction} for \CBN.
In fact, we need to transform the terms such that the $`m$-contractions over $`d$ pull in \textit{a stack}, as demanded by \CBN; this asks for a different lemma that expresses the effect of right structural substitution, one that works with stacks, as an alternative to \Lmm \ref{interpretation respects mur substitution CBV}.

 \begin{lemma} 
 \label{interpretation respects mur substitution CBN} 
 $ \lmulmmt[{}M \mursubst N.`g/`d ] \rtcredlmmtmo \SemM \mursubst [`N.`g/`d] $.
 \end{lemma}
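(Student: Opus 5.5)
The plan is to follow the pattern of the proof of \Lmm\ref{interpretation respects mur substitution CBV}, by induction on the structure of $M$. The difference is that the substituted environment is now the stack $`N `. `g$, so the target is $\SemM \tsubst[`N `. `g/`d]$ rather than $\SemM \tsubst[\Ng/`d]$. I read $\SemM \mursubst [`N.`g/`d]$ this way, as is already done in the second part of the proof of \Thm\ref{interpretation respects mur reduction}. With the renaming conventions of that proof, this is the same as replacing every command $\cell<t|`d>$ in $\SemM$ by $\cell<t|`N`.`g>$.

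The cases $M = z$, $M = `Lz.P$, $M = `@ P Q$ and $M = `M`b.[`t]P$ with $`t \not= `d$ are pure congruence cases. Structural substitution commutes with the constructors. The induction hypothesis applied to the subterms gives $\rtcredlmmtmo$ steps, possibly under a $`l$ or inside the $\UL{\mt}$-environments of an application. Reassembling them yields exactly $\SemM\tsubst[`N`.`g/`d]$, since the substitution does not touch the interpretation-generated binders.

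The only real case is $M = `M`b.[`d]P$. Here $\lmulmmt[(`M`b.[`d]P)\mursubst[N.`g/`d]]$ unfolds to
$`M`b.< \Mo`a.< \lmulmmt[P\mursubst[N.`g/`d]] | \mto x.< `N | \mto y.< x | y`.`a > > > | `g >$. I will contract this in three steps.
\begin{enumerate}
\item A renaming $\UL{`m}$-step over $`a$ pulls in the name $`g$, which is a stack.
\item The $\UL{\mt}$-step over $x$ substitutes $\lmulmmt[P\mursubst[N.`g/`d]]$.
\item The $\UL{\mt}$-step over $y$ substitutes $`N$.
\end{enumerate}
Steps 2 and 3 together are exactly the $\redlmmtS$ pattern of Example~\ref{lmmt prep}. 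The result is $`M`b.< \lmulmmt[P\mursubst[N.`g/`d]] | `N`.`g >$. The induction hypothesis then rewrites the term under the command to $`P\tsubst[`N`.`g/`d]$. This gives $`M`b.<`P\tsubst[`N`.`g/`d] | `N`.`g>$, which is $\lmulmmt[`M`b.[`d]P]\tsubst[`N`.`g/`d]$.

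The only point needing care is the classification of these steps. All three contract redexes generated by the interpretation, so they are in $\redlmmtmo$. They are also admissible in $\redlmmtn$: the $`m$-step consumes a stack, and $(\mt)$ is unrestricted in \CBN. Unlike in \Lmm\ref{interpretation respects mur substitution CBV}, this is what will later allow the $`m$-contraction over $`d$ in \Thm\ref{interpretation respects mur reduction CBN} to pull in a stack. The $x$-step substitutes a possibly non-value term, so these steps are \emph{not} $\redlmmtv$ steps, but this is irrelevant here.

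As with the earlier lemma, the induction still reduces under $`l$ and inside environments. So the result holds for $\redlmmtn$ but not for the strategy $\redlmmtN$. That limitation is exactly what the footnoted provisos in (\ref{CBN reduction}) allow.
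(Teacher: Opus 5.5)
Your proposal is correct and is essentially the paper's own proof. It proceeds by induction on $M$, where only the case $M = `M`b.[`d]P$ needs reduction: the renaming $\UL{`m}$-step over $`a$, then the two $(\mt)$-steps of Example~\ref{lmmt prep} ($\redlmmtS$) that build the stack $`N`.`g$, after which the induction hypothesis finishes the case. Your reading of $\SemM \mursubst [`N.`g/`d]$ as $\SemM \tsubst [`N`.`g/`d]$ also matches the paper, as does your observation that all steps lie in ${\redlmmtmo} \subseteq {\redlmmtn}$ but not in $\redlmmtN$, since the induction reduces under abstractions and inside environments.
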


 \begin{proof} \def \Ng {\mursubst [`N.`g/`d] } \def \tsubst[#1/#2] {#1}
By induction on the structure of terms\Shorter{; as before, all cases are straightforward, but for}:
 \begin{description}

\Longer{
 \myitem [$M = z$]
\lmulmmt[{z \mursubst N.`g/`d }] =
\lmulmmt[z] \ByDef
\lmulmmt[v z] =
\lmulmmt[v z] \tsubst [\Ng/`d] &\ByDef& 
\lmulmmt[z] \tsubst [\Ng/`d]
 \end{array} $

 \myitem[$M = `Lz.P$]
\lmulmmt[{(`Lz.P) \mursubst N.`g/`d }] &=& 
\lmulmmt[{`Lz.P \mursubst N.`g/`d }] &\ByDef& 
\lmulmmt[l z . {P\mursubst N.`g/`d }] &\rtcredlmmtmo& \bsp (\IH) \\
`L z . \lmulmmt[P] \tsubst [\Ng/`d] &=& 
(\lmulmmt[l z . P]) \tsubst [\Ng/`d] &\ByDef& 
\lmulmmt[`Lz.P] \tsubst [\Ng/`d]
 \end{array} $

 \myitem[$M = PQ$] 
\lmulmmt[{(`@ P Q ) \mursubst N.`g/`d }] 
	~=~ 
\lmulmmt[{`@ (P \mursubst N.`g/`d ) (Q \mursubst N.`g/`d ) }] 
	& \ByDef \\
\Mo `a . < \lmulmmt[{P\mursubst[N.`g/`d]}] | {\mt x . < \lmulmmt[{Q\mursubst[N.`g/`d]}] | {\mt y. <x|y`.`a>} >} > 
	&\rtcredlmmtmo& (\IH) \\
\Mo `a . < `P \tsubst [\Ng/`d] | {\mt x . < `Q\tsubst[\Ng/`d] | {\mt y. <x|y`.`a>} >} > 
	&=& \\
\lmulmmt[a P Q] \, \tsubst [\Ng/`d] 
	&\ByDef& 
\lmulmmt[`@ P Q ] \tsubst [\Ng/`d] 
 \end{array} $

}

 \myitem[{$M = `M`b.[`d]P$}]
\lmulmmt[{(`M`b . `d P ) \mursubst N.`g/`d }] 
	&\ByDef& 
\lmulmmt[`M`b . `g `@ (P \mursubst N.`g/`d ) N ] 
	&\ByDef& \\
{\lmulmmt[m `b . `g {A {P \mursubst N.`g/`d } N}] }
	&\redlmmtmo& \bsp (`a) 
`M`b . < \lmulmmt[P \mursubst N.`g/`d ] | { \mt x . < `N | { \mt y . < x | y`.`g > } > } > 
	&\redlmmtS& \bsp (\ref{lmmt prep}) \\
`M`b . < \lmulmmt[P \mursubst N.`g/`d ] | `N`.`g > 
	&\rtcredlmmtmo& \bsp (\IH) 
`M`b . < `P \tsubst [\Ng/`d] | `N`.`g > 
	&=& \\
\lmulmmt[m `b . `d P] \tsubst [\Ng/`d] 
	&\ByDef& 
\lmulmmt[`M`b. `d P] \tsubst [\Ng/`d]
 \end{array} $ 

\Longer{
 \myitem[{$M = `M`b.[`t]P$, $`t\not=`d$}]
\lmulmmt[{(`M`b.[`t]P)\mursubst N.`g/`d }] ~=~ 
\lmulmmt[{`M`b.[`t]P\mursubst N.`g/`d }] ~\ByDef~ 
\lmulmmt[m `b . `t {P \mursubst N.`g/`d }] &\rtcredlmmtmo& \bsp (\IH) \\
`M`b . < `P \tsubst [\Ng/`d] | `t > ~=~ 
\lmulmmt[m `b . `t P] \tsubst [\Ng/`d] ~\ByDef~ 
\lmulmmt[`M`b. `t P] \tsubst [\Ng/`d]
 \end{array} $
}

\arrayqed

 \end{description}

\observe
As before, the only contractions \Longer{take place for $ M = `M`b.[`d]P $; these }are a renaming cut in ${\redlmmtmo} \subseteq {\redlmmtn} $ over $`a$ introduced by the interpretation to deal with the added application, and the steps $ {\redlmmtS} 
\subseteq {\redlmmtmo} \subseteq {\redlmmtn} $ to build the stack $`N`.`g$.
And again, under induction reduction can take place everywhere.

 \end{proof}

Notice that this proof works with $`N`.`g$, not $\Ng = \mto x . < `N | { \mto y . < x | y`.`g >} > $ as 
was the case with \Lmm \ref{interpretation respects mur substitution CBV}.

 \begin{definition}
We will write $\lmulmmt[M] \crlmmtNn `N$ when there exists $t,t' \ele \lmmt$ such that $\lmulmmt[M] \rtcredlmmtN t \rtcredlmmtn t' \rtclmmtderN `N $, and the reduction steps in $\redlmmtn$ only effectuate $\lmu$'s structural substitution, and similar for $\crlmmtVv$; notice that this includes the case that $t'=t$, and also when $\lmulmmt[M] \crlmmtn `N$
 \end{definition}

 \begin{theorem} [{$\lmulmmt[`.]$ respects $\mur$-reduction in \CBN}] 
 \label {interpretation respects mur reduction CBN}

 \begin{enumerate}
 \firstitem $ 
\lmulmmt[{ `M`g .[`g]`@ M\mursubst[N.`g/`d] N }] 
\crlmmtNn 
\lmulmmt[{ `@ (`M`d.[`d]M) N }] )
$.

 \item $ 
\lmulmmt[{`M`g .[`b] (M \mursubst[N.`g/`d]})] 
\crlmmtNn 
\lmulmmt[{ `@ (`M`d.[`b]M) N }] )
$, $`b \not=`d$.
 \end{enumerate}
 \end{theorem}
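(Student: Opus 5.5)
The plan is to exhibit, for each part, the three terms required by $\crlmmtNn$ explicitly. Starting from the interpretation of the reduct (the left-hand side), a few $\redlmmtN$ steps reach a term $t$. By \Lmm \ref{interpretation respects mur substitution CBN}, $t$ runs in $\redlmmtn$ (through steps that only effectuate the structural substitution) to a term $t'$. Finally, $t'$ is reached from the interpretation of the redex (the right-hand side) purely by $\redlmmtN$ steps. The essential difference with the proof of \Thm \ref{interpretation respects mur reduction} is that the $(`m)$-contraction over $`d$ must now pull in the stack $`N`.`g$ rather than $\Ng$, as $\redlmmtN$ demands. This is exactly why \Lmm \ref{interpretation respects mur substitution CBN} (which works with $`N`.`g$) is used here instead of \Lmm \ref{interpretation respects mur substitution CBV}.

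For the redex side of part 1, I unfold $\lmulmmt[{ `@ (`M`d.[`d]M) N }]$ to $\Mo`g.< `M`d.<`M|`d> | {\mto x.<`N|{\mto y.<x|y`.`g>}>} >$. Under the outer $\UL{`m}$, which is permitted by the contextual rule of $\redlmmtN$, the command is a $(\MuT)$-redex, and $(\MuT)$ is unrestricted in $\redlmmtN$. I contract over $x$, giving $\cell<`N|{\mto y.< `M`d.<`M|`d> | y`.`g >}>$, and then over $y$, giving $\cell< `M`d.<`M|`d> | `N`.`g >$. Now $`N`.`g$ is a stack, so $(\mun)$ over $`d$ applies and yields
\[ t' = \Mo`g.< `M\tsubst[`N`.`g/`d] | `N`.`g > . \]

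For the reduct side, $\lmulmmt[{`M`g.[`g]`@ (M\mursubst[N.`g/`d]) N}]$ unfolds to $`M`g.< \Mo`a.< M^\star | {\mto x.<`N|{\mto y.<x|y`.`a>}>} > | `g >$, where I write $M^\star$ for $\lmulmmt[M\mursubst[N.`g/`d]]$. I first contract the renaming cut over $`a$; $`g$ is a stack, so this is a $\redlmmtN$ step under $`M`g$. I then contract the two $\UL{\mutilde}$-redexes over $x$ and $y$, exactly as in \Ex \ref{lmmt prep} ($\redlmmtS$); these are top-level $(\MuT)$ steps inside $`M`g$, hence also $\redlmmtN$ steps. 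This reaches
\[ t = `M`g.< M^\star | `N`.`g > . \]
Applying \Lmm \ref{interpretation respects mur substitution CBN} to $M^\star$ inside $t$ gives $t \rtcredlmmtn `M`g.<`M\tsubst[`N`.`g/`d] | `N`.`g>$, which is $t'$ up to the name of the bound variable. This $\redlmmtn$ part only implements the structural substitution, as the definition of $\crlmmtNn$ requires.

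Part 2 follows the same pattern. On the redex side, the steps over $x$ and $y$ give $\cell< `M`d.<`M|`b> | `N`.`g >$, and $(\mun)$ over $`d$ then yields $t' = `M`g.<`M\tsubst[`N`.`g/`d] | `b>$. On the reduct side, $\lmulmmt[{`M`g.[`b] M\mursubst[N.`g/`d]}] = `M`g.<M^\star|`b>$ already plays the role of $t$ with zero $\redlmmtN$ steps, and \Lmm \ref{interpretation respects mur substitution CBN} gives $t \rtcredlmmtn t'$.

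The main obstacle is bookkeeping rather than anything conceptual. I must check that every step attributed to $\redlmmtN$ happens at the top of a command directly under a $`m$ and never inside an environment, and that the only $(`m)$ step over a non-name environment receives the stack $`N`.`g$. The deep, in-environment reductions all sit inside the appeal to \Lmm \ref{interpretation respects mur substitution CBN}, which is precisely the permitted $\redlmmtn$ segment.
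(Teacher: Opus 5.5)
Your proof is correct and is essentially the paper's argument. On the reduct side you contract the renaming cut over $\alpha$ and the two $(\MuT)$-steps of Example~\ref{lmmt prep} ($\redlmmtS$), then use Lemma~\ref{interpretation respects mur substitution CBN} as the $\redlmmtn$ segment; on the redex side you reach the same term by $\redlmmtS$ followed by a $(\mun)$-step over $\delta$ that pulls in the stack. Your attribution of the lemma's steps to $\redlmmtn$ is slightly more accurate than the paper's part~2, which labels that step with the strategy even though the lemma reduces under abstractions and inside environments.
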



 \begin{proof} 
 \begin{enumerate}
 \firstitem \label{first part CBN}
$ \kern-7.3mm \begin{array}[t]{lclclcl} \kern7.3mm
\lmulmmt[{`M`g .[`g](M\mursubst N.`g/`d) \hsk N }]
	&\ByDef& 
\lmulmmt[m `g . `g {A {M\mursubst N.`g/`d } N }]
	&\redlmmtmo & \bsp 
	(`a)  && \\  
`M`g . < \lmulmmt[m `d . `d {M\mursubst N.`g/`d }] | \Ng >
	& \redlmmtN& 
\bsp (`d) \lmulmmt[a <`g,x> {M\mursubst N.`g/`d } N ] 
	& \redlmmtS& \bsp (\ref{lmmt prep}) \\
`M`g . < \lmulmmt[M \mursubst N.`g/`d ]  | `N`.`g >   
	&\rtcredlmmtn& \bsp (\ref{interpretation respects mur substitution CBN}) 
`M`g . < \SemM \tsubst [`N`.`g/`d] | `N`.`g >   
	&\lmmtderN & \bsp 
	(`d) && \\
`M`g . < \lmulmmt[m `d . `d M] | `N`.`g >  
	&\lmmtderS& 
\Span{3}{
	\bsp (\ref{lmmt prep}) 
\lmulmmt[A {m `d . `d M} N ] 
	~\ByDef~   
\lmulmmt[{ `@ (`M`d.[`d]M) N }] 
}
 \end{array} $

 \item 
{\def \qsk{} \def \hsk{}
$ \begin{array}[t]{lclclcl}
\lmulmmt[{`M`g .[`b]M \mursubst[N.`g/`d] }]
	&\ByDef& 
\Span{3}{
\lmulmmt[m `g . `b {M \mursubst N.`g/`d }] 
	~\rtcredlmmtN (\ref{interpretation respects mur substitution CBN}) ~ 
`M`g . < \SemM \tsubst [\Ng/`d] | `b > 
}
	&\redlmmt& \bsp (`d) \\
`M`g . < \lmulmmt[m `d . `b M] | \Ng > 
	&\lmmtderS& \bsp (\ref{lmmt prep}) 
\lmulmmt[A <`g
> {m `d . `b M} N ] 
	&\ByDef& 
\lmulmmt[{ `@ (`M`d.[`b]M) N }] 
 \end{array} $
\arrayqed
}

 \end{enumerate}

\observe
Notice that the $(`m)$-contractions are over $`d$ and $`a$ and pull in $`N`.`g$ and $`g$, respectively, which are stacks, so these steps are allowed in $\redlmmtN$.
Also, we need extra $\redlmmtS$ reductions (from \Exm \ref{lmmt prep}) in both directions to create the stack, which are in $\redlmmtN$.
 
 \end{proof}
 
We do not need to check results for $(\mul$) reduction, since that is not part of \CBN.
In total, we can state:
 
 \begin{theorem}[Preservation of \CBN-strategy] 
If $M \rtcrednclmuN N$, then $\lmulmmt[M] \crlmmtNn `N $.
 \end{theorem}
 \begin{proof} 
 \begin{description}

 \item[{$ `@ (`Lz.M) N \rednclmuN M\tsubst N/z $}]
By \Cor \ref{interpretation respects CBN beta reduction}.

 \item[{$ `@ (`M `a . \Cmd) N \rednclmuN `M `g . \Cmd \mursubst N.`g/`a $}]
By \Thm \ref{interpretation respects mur reduction CBN}.

 \myitem[{$`M`a . [`b] { `M`g .[`d] M } \rednclmuN `M`a . [`d] M \tsubst[`b/`g], ~ `g \not= `d $}]
\lmulmmt[`M`a . `b { `M`g . `d M }] &\ByDef & 
\lmulmmt[m `a . `b {m `g . `d M }] &\redlmmtN & \\
`M `a . < \lmulmmt[M] \tsubst[`b/`g] | `d > ~ \ByDef ~
\lmulmmt[m `a . `d M{\tsubst[`b/`g]}] &\ByDef& 
\lmulmmt[{`M`a . [`d] M \tsubst[`b/`g] }]
 \end{array} $

 \myitem[{$`M`a . [`b] { `M`g .[`g] M } \rednclmuN `M`a . [`b] M \tsubst[`b/`g] $}]
\lmulmmt[`M`a . `b { `M`g . `g M }] &\ByDef & 
\lmulmmt[m `a . `b {m `g . `g M }] &\redlmmtN & \\
`M `a . < \lmulmmt[M] \tsubst[`b/`g] | `b > ~ \ByDef ~
\lmulmmt[m `a . `b M{\tsubst[`b/`g]}] &\ByDef& 
\lmulmmt[{`M`a . [`b] M \tsubst[`b/`g] }]
 \end{array} $

 \myitem[{$`M`a . [`a] M \rednclmuN M ~ (`a\notele M) $}]
\lmulmmt[`M`a . `a M] &\ByDef & 
\lmulmmt[m `a . `a M] &\redlmmtN (`h`m) & 
\lmulmmt[M]
 \end{array} $

 \myitem[$P \rednclmuN Q \Then `@ P R \rednclmuN `@ Q R $]
\lmulmmt[`@ P R ] &\ByDef& 
\lmulmmt[A P R ] &\crlmmtNn& \bsp (\IH) 
\lmulmmt[A Q R ] &\ByDef & 
\lmulmmt[`@ Q R ]. %
 \end{array} $

 \myitem[$P \rednclmuN Q \Then `M `a . `b P \rednclmuN `M `a . `b Q $]
\lmulmmt[`M `a . `b P ] &\ByDef & 
\lmulmmt[m `a . `b P ] &\crlmmtNn& \bsp (\IH)
\lmulmmt[m `a . `b Q ] &\ByDef & %
	\\
\lmulmmt[`M `a . `b Q ].
 \end{array} $
\arrayQED
 \end{description}

\observe
Notice that reduction only takes places in terms, never in contexts.

 \end{proof}

By all the observations we have made above on the details of the proofs and the permissibility of the reduction steps involved in \CBN, we also have the following result.

 \begin{corollary}[Preservation of $\redlmmtn$] \label{CBN reduction preserved}
$M \rtcrednclmun N$, then $\lmulmmt[M] \crlmmtn `N$.
 \end{corollary}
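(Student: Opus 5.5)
The plan is to prove the corollary by induction on the length of the reduction $M \rtcrednclmun N$, reducing to the single-step case. Transitivity of $\crlmmtn$ is not automatic, since $\redlmmtn$ is not confluent. So I will read the statement, as the paper does for $\crlmmt$ in \Thm\ref{beta mu reduction preserved}, as the equivalence generated by $\redlmmtn$-steps in either direction. For a single step, I will do an inner induction on the compatible closure defining $\rednclmun$. That closure now allows reduction in every position: under $`L$, in both sides of an application, and under $`M`a.[`b]$. Only the root rules are restricted, to $(`b)$, $(\mur)$, $(\Rename)$ and $(\Erase)$.

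For the root cases I will reuse the proofs already given, noting that every step in them is a $\redlmmtn$-step.
\begin{itemize}
\item For $(`b)$, \Thm\ref{interpretation respects beta reduction} applies directly. Its $`m$-steps are renaming cuts against the name $`a$, its $\MuT$-steps are permitted in \CBN, and the $(`h`m)$-step is included in $\redlmmtn$.
\item For $(\mur)$, I will not use \Thm\ref{interpretation respects mur reduction}, because its contraction over $`d$ pulls in the non-stack $\Ng$. I will use \Thm\ref{interpretation respects mur reduction CBN} instead. Its steps are renaming cuts, the $\redlmmtS$ steps of \Exm\ref{lmmt prep} (a $\MuT$-step followed by a $`m$-step against a stack), and applications of \Lmm\ref{interpretation respects mur substitution CBN}. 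As observed after that lemma, these lie in $\redlmmtmo\subseteq\redlmmtn$ even where they occur under binders or in environments.
\item For $(\Rename)$ and $(\Erase)$, the single $`m$-step against a name, respectively the $(`h`m)$-step, is already in $\redlmmtn$, exactly as in the proof of \Thm\ref{beta mu reduction preserved}.
\end{itemize}

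For the contextual cases I will follow the last four cases of \Thm\ref{beta mu reduction preserved}. The interpretation is compositional: $\lmulmmt[`Lx.P]$, $\lmulmmt[`@ P R]$, $\lmulmmt[`@ R P]$ and $\lmulmmt[`M`a.[`b]P]$ each contain $\lmulmmt[P]$ as a subterm. The contextual rules of $\redlmmtn$ are unrestricted, unlike those of the strategy $\redlmmtN$. Hence any $\redlmmtn$-path, whether forward or backward, from $\lmulmmt[P]$ to $\lmulmmt[Q]$ lifts to one between the enclosing terms. This covers in particular the argument position, where $\lmulmmt[P]$ sits inside $\mto x.\cell<\lmulmmt[P]|\dots>$, that is, inside an environment.

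The main obstacle is the $(\mur)$ case. The proof of \Thm\ref{interpretation respects mur reduction CBN} is stated for the strategy and mixes $\redlmmtN$ steps with backward $\redlmmtS$ steps. I need to check that none of its $`m$-contractions substitutes a non-stack, so that the whole chain stays within $\redlmmtn$. If this check succeeds, the conclusion is the promised $\lmulmmt[M]\crlmmtn\lmulmmt[N]$, with $\redlmmtn$-steps used only to effect structural substitution and to build stacks.
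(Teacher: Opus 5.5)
Your proposal is the paper's own argument made explicit. The paper proves this corollary only by remarking that every step in \Thm\ref{interpretation respects beta reduction}, in the stack-based \Thm\ref{interpretation respects mur reduction CBN}, and in the renaming and erasing cases is permissible in $\redlmmtn$. The contextual cases then follow from the unrestricted contextual closure of $\redlmmtn$, which is exactly your decomposition.

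The check you leave conditional does succeed. In \Thm\ref{interpretation respects mur reduction CBN} and \Lmm\ref{interpretation respects mur substitution CBN}, the only $(\mu)$-contractions are the renaming cut against the name $\gamma$ and the contraction over $\delta$ against the stack $\lmulmmt[N]\cdot\gamma$. The $\redlmmtS$ steps of \Exm\ref{lmmt prep} are two $(\tilde\mu)$-steps, not a $(\tilde\mu)$-step followed by a $(\mu)$-step. Likewise the $(\beta)$ case contains no $(\mu)$-steps at all, only $(\tilde\mu)$-, $(\lambda)$- and $(\eta\mu)$-steps.

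What I would correct is your premise that $\redlmmtn$ is not confluent. That premise is your only reason for weakening $\crlmmtn$ (joinability, as the paper defines it) to the equivalence generated by $\redlmmtn$, and it is doubtful for three reasons.
\begin{itemize}
\item Restricting $(\mu)$ to stacks removes the only critical pair $\langle \mu\alpha.c_1 \mid \tilde\mu x.c_2\rangle$ of $\lmmt$.
\item Stacks are stable under the substitutions $\redlmmtn$ performs, because names are only ever replaced by stacks.
\item The remaining overlaps with $(\eta\mu)$ close. For example, $\langle \mu\alpha.\langle t\mid\alpha\rangle \mid \tilde\mu x.c\rangle$ reaches $c[t/x]$ both via $c[\mu\alpha.\langle t\mid\alpha\rangle/x]$ and via $\langle t\mid\tilde\mu x.c\rangle$.
\end{itemize}
So $\redlmmtn$ should be confluent; this is precisely what the Curien--Herbelin call-by-name restriction is designed to achieve.

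Each of your single-step cases already produces a common reduct, and contexts preserve common reducts. With confluence these compose, and you obtain the corollary as stated, with joinability, rather than only the weaker equivalence. Your caution about composing common reducts is warranted for full $\redlmmt$ in \Thm\ref{beta mu reduction preserved}, where the paper gives no such argument, but it is not needed here.
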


 \def \crlmmtVmo {\crlmmtV^{\UL{`m}}}
 \def \redlmmtVmo {\redlmmtV^{\UL{`m}}}

 \subsection{Results for \CBV}

We start by observing that the proof of \Thm \ref{interpretation respects beta reduction} also holds for \CBV:

 \begin{corollary} [{$\lmulmmt[`.]$ respects $\slmu$'s {\CBV} $`b$-reduction}]
 \label {interpretation respects CBV beta reduction}
$\lmulmmt[ `@ (`Lz.M) V ] \tcredlmmtV \lmulmmt[M\tsubst V/z ] $.
 \end{corollary}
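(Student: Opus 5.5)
The plan is to reuse the reduction sequence from the proof of \Thm \ref{interpretation respects beta reduction} unchanged and check each step against the definition of $\redlmmtV$ in \Def \ref{CBN and CBV in lmmt}. That definition requires three things: only values are substituted through $(\MuT)$, and $(`m)$ is unrestricted; reduction happens only at top level or under a $`m$-binder; and no reduction takes place in environments or under a $`l$.

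The sequence is
\[ \lmulmmt[`@ (`Lz.M) V ] = `M`a.< `Lz.\lmulmmt[M] | \mt x.< \lmulmmt[V] | \mt y.< x | y`.`a > > > \]
\[ \redlmmt (x)~ `M`a.< \lmulmmt[V] | \mt y.< `Lz.\lmulmmt[M] | y`.`a > > \redlmmt (y)~ `M`a.< `Lz.\lmulmmt[M] | \lmulmmt[V]`.`a > \]
\[ \redlmmt (`l)~ `M`a.< \lmulmmt[V] | \mt z.< \lmulmmt[M] | `a > > \redlmmt (z)~ `M`a.< \lmulmmt[M]\tsubst[\lmulmmt[V]/z] | `a > , \]
followed by \Lmm \ref{tsubst lemma} to rewrite the substituted term as $\lmulmmt[M\tsubst V/z]$, and then $(`h`m)$.

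The key observation is that the interpretation preserves values: $\lmulmmt[x] = x$ and $\lmulmmt[`Lx.M] = `Lx.\lmulmmt[M]$. So each of the three $(\MuT)$-steps substitutes a value. Over $x$ the substituted term is $`Lz.\lmulmmt[M]$; over $y$ it is $\lmulmmt[V]$, a value because $V$ is; over $z$ it is again $\lmulmmt[V]$. Hence all three are instances of $(\mtv)$. The $(`l)$-step is unrestricted in $\redlmmtv$.

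Next I check the evaluation positions. Every step contracts the command directly under the outermost $`M`a$, so only the contextual rule $c \reduc c' \Imply `M`a.c \reduc `M`a.c'$ is used. Nothing is reduced inside an environment, inside a $\mt$-body, or under a $`l$. The final $(`h`m)$-step acts on the whole term and is included among the rules of $\redlmmtv$. Because the first step already exists, the relation is indeed $\tcredlmmtV$ rather than just its reflexive closure.

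I expect the main obstacle to be bookkeeping rather than mathematics. One point is to confirm that the strategy allows the outer $(`h`m)$-contraction at top level, which it does since $(`h`m)$ is a rule of $\redlmmtv$ and applying it to the whole term needs no contextual closure. The other is to confirm that \Lmm \ref{tsubst lemma} is a syntactic equality, so it contributes no reduction steps at all.
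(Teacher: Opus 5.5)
Your proposal is correct and is the paper's own argument: the paper likewise reuses the reduction sequence from the proof of Theorem~\ref{interpretation respects beta reduction}. It observes that every step is admissible in the call-by-value strategy, because the interpretation maps values to values (so each $\tilde\mu$-step substitutes a value) and all contractions happen at top level under the outer $\mu$-binder. Your extra checks, that the final $(\eta\mu)$-step is taken at top level and that Lemma~\ref{tsubst lemma} is a syntactic identity, are details the paper leaves implicit.
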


\Comment{
In the proof of \Lmm \ref{interpretation respects mur substitution CBV}, we observed that the implementation of right structural substitution requires unrestricted reduction, but that all reduction steps are $\UL{`m}$-steps, permitted in $\redlmmtv$.
Using that result, we will now show that $\slmu$'s {\CBV} $(\mur)$ reduction gets respected by the interpretation, permitting for the additional $\UL{`m}$ expansions necessitated by the implementation of right structural substitution.

By the proof of \Thm \ref{interpretation respects mur reduction}, we can state that our interpretation maps a {$\redlmuV$} reduction in $\slmu$ into a {$\redlmmtV$} reduction in $\lmmt$, but for reductions that implement substitutions; these are still {$\redlmmtv$} reductions, but can take place in continuations, and are not arbitrary, but specifically generated by the interpretation.
%
In fact, we can postpone those reductions, until they become executable when moved into the executable position by reduction, and accept that when modelling $\redlmmtV$, we are left with some $\redlmmtv$-redexes still to contract, that are needed to complete the substitution.
\Comment{
\Mo `a . < \lmulmmt[{P\mursubst[N.`g/`d]}] | {\mt x . < \lmulmmt[{Q\mursubst[N.`g/`d]}] | {\mt y. <x|y`.`a>} >} > 
	&\crlmmtN& (\IH) \\
\Mo `a . < `P \tsubst [\Ng/`d] | {\mt x . < \lmulmmt[{Q\mursubst[N.`g/`d]}] | {\mt y. <x|y`.`a>} >} > 
	&\redlmmtN& (x) \\
\Mo `a . < \lmulmmt[{Q\mursubst[N.`g/`d]}] | {\mt y. < `P\tsubst[\Ng/`d] |y`.`a>} > 
	&\crlmmtN& (\IH) \\
\Mo `a . < `Q\tsubst[\Ng/`d] | {\mt y. < `P\tsubst[\Ng/`d] |y`.`a>} > 
	&\lmmtderN& (x) \\
}
For example, when we can run $`P \tsubst [\Ng/`d] $ in $\redlmmtV$ to a value $V$, we would get
 \[ \begin{array}{lcl}
\Mo `a . < `P \tsubst [\Ng/`d] | {\mt x . < \lmulmmt[{Q\mursubst[N.`g/`d]}] | {\mt y. <x|y`.`a>} >} > 
	&\rtcredlmmtV& \\
\Mo `a . < V | {\mt x . < \lmulmmt[{Q\mursubst[N.`g/`d]}] | {\mt y. <x|y`.`a>} >} > 
	&\redlmmtV& (x) \\
\Mo `a . < \lmulmmt[{Q\mursubst[N.`g/`d]}] | {\mt y. < V |y`.`a>} > 
 \end{array} \]
If running $P\mursubst[N.`g/`d] $ never runs to a value, of course we can never $\mutilde$ reduce over $x$, and the ($\redlmmtv$) reductions needed to run $ \lmulmmt[{Q\mursubst[N.`g/`d]}] $ will never be executed.

As to $(\mul)$, to show the relation between $\lmu$'s $\mur$ {\CBV} reduction and that of $\lmmt$, we can only show a result `modulo $\UL{`m}$-expansion', so $`m$-expansion of redexes generated by the interpretation to deal with the fact that $\lmu$'s reduction can create new applications. 
%
We will write $\redlmmtVmo$ for $\redlmmtV$, while allowing for extra $\redlmmtmo$ renaming expansion steps. 
}

Since the proofs of \Lmm \ref{interpretation respects mur substitution CBV} and \Thm \ref{interpretation respects mur reduction} are also valid for $\redlmmtV$, by their structure and the observations made after the proof of \Thm \ref{interpretation respects mur reduction}, we can now state:

 \begin{corollary} [{$\lmulmmt[`.]$} respects $\mur$-reduction] 
 \label {interpretation respects mur reduction CBV} 

 \begin{enumerate}

 \firstitem $ 
\lmulmmt[{ `M`g .[`g]`@ M\mursubst[N.`g/`d] N }] 
\crlmmtVv 
\lmulmmt[{ `@ (`M`b.[`b]M) N }] )
$.

 \item $ 
\lmulmmt[{`M`g .[`b] (M \mursubst[N.`g/`d]})] 
\crlmmtVv 
\lmulmmt[{ `@ (`M`d.[`b]M) N }] )
$, $`b \not=`d$.
 \end{enumerate}
 \end{corollary}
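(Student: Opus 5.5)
The plan is to reuse the reduction sequences already constructed in the proof of \Thm\ref{interpretation respects mur reduction} and check, step by step, that each one lies in $\redlmmtV$ or is one of the permitted $\redlmmtv$ steps. The relation $\crlmmtVv$ only requires a decomposition $\lmulmmt[M] \rtcredlmmtV t \rtcredlmmtv t' \rtclmmtderV `N$, where the $\redlmmtv$ steps only effectuate structural substitution. So for both parts we read the chain of that proof from the interpretation of the $\mur$-redex towards the common form $\Mo`g.<\SemM\tsubst[\Ng/`d]|\ldots>$. We do the same from the interpretation of the contractum, reversing the $\lmmtdermo$ step.

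\textbf{First part.} Starting from $\lmulmmt[{`@ (`M`d.[`d]M) N}]$, the outermost command is $\Mo`g.<\lmulmmt[m `d . `d M]|\Ng>$. The $(`m)$-contraction over $`d$ is unrestricted in $\redlmmtv$ and sits at top level under a $`m$, so it is a $\redlmmtV$ step. It yields $\Mo`g.<\SemM\tsubst[\Ng/`d]|\Ng>$.

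From the other side, $\lmulmmt[{`M`g.[`g]`@ M\mursubst[N.`g/`d] N}]$ performs one top-level renaming $\UL{`m}$-step over $`a$. This is again a $(`m)$-step at top level, so it is in $\redlmmtV$. It gives $\Mo`g.<\lmulmmt[M\mursubst N.`g/`d]|\Ng>$. By \Lmm\ref{interpretation respects mur substitution CBV} this reduces by $\rtcredlmmtmo$ steps to the same term. Those steps are all renaming cuts $\cell<\Mo`a.c|`g>$, hence instances of $(`m)$ and therefore in $\redlmmtv$. They serve only to carry out the structural substitution, as the definition of $\crlmmtVv$ demands.

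\textbf{Second part.} The argument is identical. The top-level $(`m)$ step over $`d$ is in $\redlmmtV$, and \Lmm\ref{interpretation respects mur substitution CBV} supplies the $\redlmmtv$ tail.

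\textbf{Main obstacle.} The delicate point is ensuring that no $(\MuT)$ step substitutes a non-value; in that case $\redlmmtv$ would be violated, not merely $\redlmmtV$. We check that the proof of \Lmm\ref{interpretation respects mur substitution CBV} performs only $(`m)$ renaming contractions and no $\mutilde$-steps. The term $\Ng$ is substituted for the name $`d$ via $(`m)$, which $\redlmmtv$ allows for arbitrary environments. The steps inside abstractions and environments therefore leave $\redlmmtV$ but stay within $\redlmmtv$, which is exactly the slack the relation $\crlmmtVv$ permits. Orienting the two chains as in the statement, which lists the contractum first, completes the proof.
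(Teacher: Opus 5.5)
Your proposal is correct and follows the paper's own justification. The paper argues exactly that the reduction chains in the proofs of Lemma~\ref{interpretation respects mur substitution CBV} and Theorem~\ref{interpretation respects mur reduction} contain only $(\mu)$-steps and no $\tilde\mu$-steps, with the top-level ones being $\redlmmtV$-steps and the renaming steps inside the lemma lying in $\redlmmtv$; you make this explicit with the right orientation (the interpretation of the contractum reduces via $\rtcredlmmtV$ then $\rtcredlmmtv$ to the common term, which the interpretation of the redex reaches in one $\redlmmtV$-step) and with the lemma used in its stated direction.
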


As to $(\mul)$, since the proofs of \Lmm \ref{interpretation respects lmu substitution} and \Thm \ref{interpretation respects lmu reduction} are also valid for $\redlmmtV$, by their structure we can now state:

 \begin{corollary} [{$\lmulmmt[`.]$ respects $\slmu$'s {$\redlmmtV$} $\mul$}]  
 \label {interpretation respects mul reduction CBV} 
 \begin{enumerate}

 \firstitem 
 $ \lmulmmt[`M`g . `g {`@ V (\mulsubst N.`g/`a M )}] 
\crlmmtVv 
 \lmulmmt[ `@ V (`M`a. `a M) ] $.

 \item 
 $ \lmulmmt[`M`g . `d {\mulsubst V.`g/`a } M ]  
\crlmmtVv 
 \lmulmmt[ `@ V (`M`a. `d M) ] $, 
 with $`a \not= `d$.
 
 \end{enumerate}

 \end{corollary}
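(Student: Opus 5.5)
The plan is to reuse, essentially verbatim, the two reduction sequences from the proof of \Thm\ref{interpretation respects lmu reduction}, now instantiated with a value $V$ in place of $N$. For each step we check that it is permitted in $\redlmmtV$, or else that it belongs to the $\redlmmtv$ steps which only carry out the structural substitution, as the definition of $\crlmmtVv$ allows. Since $V$ is a $\slmu$-value, $\lmulmmt[V]$ is a variable or an abstraction, so it is an $\lmmt$-value. Moreover, $\lmulmmt[V]$ is unaffected by term substitution for names, so $\lmulmmt[V]\tsubst[e/`a]$ remains a value for any environment $e$.

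For the first part, start from $\lmulmmt[`@ V (`M`a . `a M)]$. Contract the outer $\UL{\mutilde}$ over $x$, which pulls in $\lmulmmt[V]$; this is a $(\mtv)$ step at top level under the outermost $\Mo`g$, so it lies in $\redlmmtV$. Next contract the $(`m)$-redex over $`a$. It pulls in the environment $\mt y.<\lmulmmt[V]|y`.`g>$, and $(`m)$ is unrestricted in \CBV, so this is again a $\redlmmtV$ step at top level, yielding $\Mo`g.<\lmulmmt[M]\tsubst[\Ng/`a]|\Ng>$. Now apply \Lmm\ref{interpretation respects lmu substitution} backwards to replace $\lmulmmt[M]\tsubst[\Ng/`a]$ by $\lmulmmt[\mulsubst V.`g/`a M]$. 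Its reduction steps consist of the $(\mt)$ steps over $x$ pulling in $\lmulmmt[V]$, which are value steps and hence $(\mtv)$, together with renaming $\UL{`m}$ steps. These steps occur inside terms and under binders, so they are $\redlmmtv$ steps but not $\redlmmtV$ steps. They only effectuate the left structural substitution, however, so they are exactly the middle $\rtcredlmmtv$ segment allowed in $\crlmmtVv$. Finally, expand $\Mo`g.<\lmulmmt[\mulsubst V.`g/`a M]|\Ng>$ back into $\lmulmmt[`M`g . `g `@ V (\mulsubst V.`g/`a M)]$ through the inverse steps over $x$ (a value) and $`d$ (a renaming cut at top level). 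Both of these are $\redlmmtV$ steps read from the target side. This gives the common reduct $t'$ required by $\crlmmtVv$.

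The second part is the same argument with $[`d]$ in place of $[`a]$. After the steps over $x$ and $`a$ we reach $\Mo`g.<\lmulmmt[M]\tsubst[\Ng/`a]|`d>$. \Lmm\ref{interpretation respects lmu substitution} then connects this to $\lmulmmt[m `g . `d {\mulsubst V.`g/`a M}]$ using only $\redlmmtv$ substitution-completing steps, and no final re-expansion is needed.

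The main obstacle is bookkeeping the direction and location of the steps so that they fit the precise shape $\rtcredlmmtV\,t\,\rtcredlmmtn\,t'$ required by $\crlmmtVv$. In particular, the steps from \Lmm\ref{interpretation respects lmu substitution} occur under $`l$ and inside environments. I would handle this by taking $t=\Mo`g.<\lmulmmt[M]\tsubst[\Ng/`a]|\Ng>$ as the point reached by strategy steps from one side. From the other side, $\lmulmmt[\mulsubst V.`g/`a M]$ reduces under the same $\Mo`g$ context by $\redlmmtv$ steps to the same substituted term, using the forward direction of the lemma. This exhibits a common reduct in which the only non-strategy steps are the substitution-completing ones.
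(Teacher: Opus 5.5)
Your proposal is correct and follows the paper's own route. The paper justifies this corollary only by noting that the sequences in the proofs of \Lmm\ref{interpretation respects lmu substitution} and \Thm\ref{interpretation respects lmu reduction} stay valid when the function part is a value. Your step-by-step check makes that explicit: the top-level $(\mtv)$ and $(`m)$ steps under the outer binder are strategy steps, and the lemma's inner steps form the middle $\rtcredlmmtv$ segment on the contractum's side, as the asymmetric shape of $\crlmmtVv$ requires.

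Two slips are cosmetic only. The image of $V$ need not be unaffected by name substitution; it merely stays a value, which is all you use. And the term you call $t$ in your last paragraph plays the role of $t'$ in the definition.
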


We can now state:

 \begin{theorem}[Preservation of \CBV-strategy] 
If $M \rednclmuV N$, then  $`N \crlmmtVv \SemM  $.
 \end{theorem}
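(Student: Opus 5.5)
We proceed exactly as for the \CBN-strategy: by induction on the definition of $\rednclmuV$, checking one case per base rule and one per contextual rule allowed in \Def\ref{cbn cbv nclmu}, using the \CBV\ corollaries already established.

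\textbf{Base cases.}
\begin{itemize}
\item The rule $(`b_{\vsubscr})$ is \Cor\ref{interpretation respects CBV beta reduction}. It gives a genuine $\tcredlmmtV$ reduction, hence in particular $`N \crlmmtVv \SemM$; here we use that the image of a value is a value.
\item The $(\mur)$ step is \Cor\ref{interpretation respects mur reduction CBV}.
\item The $(\mulv)$ step is \Cor\ref{interpretation respects mul reduction CBV}.
\item The two $(\Rename)$ cases and $(\Erase)$ are handled as in the proof of \Thm\ref{beta mu reduction preserved}. There a single $(`m)$ step over $`g$, respectively an $(`h`m)$ step, is taken at the top of the term. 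These are permitted in $\redlmmtV$, since $(`m)$ is unrestricted under \CBV\ and the step takes place inside a $`M`a$-term.
\end{itemize}

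\textbf{Contextual cases.}
\begin{itemize}
\item For $`M`a.[`b]P \rednclmuV `M`a.[`b]Q$, the translation is $`M`a.<`P|`b>$. Reduction under $`M`a$ of the command $<`P|`b>$ is allowed in $\redlmmtV$ only if the reduction happens in the term $`P$. This is the case, since $`P$ is itself a $`m$-term whose reduction is confined to its body. So the induction hypothesis lifts directly.
\item For $`@ P M \rednclmuV `@ Q M$, the translation is $\Mo`a.<`P|\Ng>$ (with $\Ng$ now standing for the $\mto x$-environment containing $`M$). Again the reduction of $`P$ is in term position, so the induction hypothesis lifts.
\item For $`@ V P \rednclmuV `@ V Q$ with $P$ an application, the translation is $\Mo`a.<`V|\mto x.<`P|\mto y.<x|y`.`a>>>$. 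We first make the permitted $\redlmmtV$ step over $x$, since $`V$ is a value, reaching $\Mo`a.<`P|\mto y.<`V|y`.`a>>$. This puts $`P$ in head term position, where the induction hypothesis applies. The same step is then taken backwards on the $Q$ side. Both sides thus meet at a common term in the shape required by $\crlmmtVv$.
\end{itemize}

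\textbf{Main obstacle.} The delicate point is composing the witnesses. The relation $\crlmmtVv$ has the shape $\rtcredlmmtV t \rtcredlmmtv t' \rtclmmtderV$. In the contextual cases we must make sure two things hold. First, the intermediate $\redlmmtv$ steps, which only complete structural substitutions (from \Lmm\ref{interpretation respects mur substitution CBV} and \Lmm\ref{interpretation respects lmu substitution}), remain $\redlmmtv$ steps when placed under the surrounding context. This is immediate, since $\redlmmtv$ is closed under all contexts. Second, the outer $\redlmmtV$ segments remain strategy steps, i.e.\ happen only in head term position. For the $`@ V P$ case this holds precisely because of the preparatory $\mutilde$-step over $x$. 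We must verify that this step commutes with the induction-hypothesis reductions: they take place inside $`P$ and do not touch $`V$.

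The theorem concerns a single step, so no transitivity of $\crlmmtVv$ is needed; this is fortunate, since that relation is not obviously transitive.
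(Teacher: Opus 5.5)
Your proposal is correct and follows the paper's proof essentially case for case. The base cases are discharged by the three CBV corollaries and the renaming/erasing computations. For the context $V\,[\,]$ you use exactly the paper's device: contract the $\tilde\mu$-redex over $x$ first (allowed since the image of $V$ is a value), apply the induction hypothesis in head position, and undo that step on the other side.

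One small correction: the lifted chain for the image of $P$ does not always stay inside its body. In the $\beta_{\mathrm{v}}$ case it ends with a top-level $(\eta\mu)$ step. Inside a command, however, such a step is matched by a $(\mu)$ contraction, which the CBV strategy allows, so your lifting argument still goes through.
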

 \begin{proof}
 \begin{description}

 \item[{$ `@ (`Lz.M) V \rednclmuV M\tsubst V/z $}]
By \Cor \ref{interpretation respects CBV beta reduction}.

 \item[{$ `@ (`M `a . \Cmd) V \rednclmuV `M `g . \Cmd \mursubst V.`g/`a $}]
By \Cor \ref{interpretation respects mur reduction CBV}.

 \item[{$ `@ V (`M `a . \Cmd) \rednclmuV `M `g . \mulsubst V.`g/`a \Cmd $}]
By \Cor \ref{interpretation respects mul reduction CBV}.

 \myitem[{$`M`a . [`b] { `M`g .[`d] M } \rednclmuV `M`a . [`d] M \tsubst[`b/`g], ~ `g \not= `d $}]
\lmulmmt[`M`a . `b { `M`g . `d M }] &\ByDef & 
\lmulmmt[m `a . `b {m `g . `d M }] &\redlmmtV & \\
`M `a . < \SemM \tsubst[`b/`g] | `d > ~ \ByDef ~
\lmulmmt[m `a . `d M{\tsubst[`b/`g]}] &\ByDef& 
\lmulmmt[{`M`a . [`d] M \tsubst[`b/`g] }]
 \end{array} $

 \myitem[{$`M`a . [`b] { `M`g .[`g] M } \rednclmuV `M`a . [`b] M \tsubst[`b/`g] $}]
\lmulmmt[`M`a . `b { `M`g . `g M }] &\ByDef & 
\lmulmmt[m `a . `b {m `g . `g M }] &\redlmmtV & \\
`M `a . < \SemM \tsubst[`b/`g] | `b > ~ \ByDef ~
\lmulmmt[m `a . `b M{\tsubst[`b/`g]}] &\ByDef& 
\lmulmmt[{`M`a . [`b] M \tsubst[`b/`g] }]
 \end{array} $

 \myitem[{$`M`a . [`a] M \rednclmuV M ~ (`a\notele M) $}]
\lmulmmt[`M`a . `a M] &\ByDef & 
\lmulmmt[m `a . `a M] &\redlmmtV (`h`m) & 
\lmulmmt[M]
 \end{array} $

 \myitem[$P \rednclmuV Q \Then `@ P R \rednclmuV `@ Q R $]
\lmulmmt[`@ P R ] &\ByDef& 
\lmulmmt[A P R ] & \crlmmtVv & \bsp (\IH) 
\lmulmmt[A Q R ] &\ByDef & 
\lmulmmt[`@ Q R ]. 
 \end{array} $

 \myitem[$P \rednclmuV Q \Then `@ V P \rednclmuV `@ V Q $]
\lmulmmt[`@ V P ] &\ByDef & 
\lmulmmt[a V P ] & \redlmmtV & \bsp (x) \\
\lmulmmt[b V P ] & \crlmmtVv & \bsp (\IH)
\lmulmmt[b V Q ] &\lmmtderV & \bsp (x) \\
\lmulmmt[a V Q ] &\ByDef & 
\lmulmmt[`@ V Q ]. 
 \end{array} $

 \myitem[$P \rednclmuV Q \Then `M `a . `b P \rednclmuV `M `a . `b Q $]
\lmulmmt[`M `a . `b P ] &\ByDef & 
\lmulmmt[m `a . `b P ] &\crlmmtVv& \bsp (\IH)
\lmulmmt[m `a . `b Q ] &\ByDef & \\
\lmulmmt[`M `a . `b Q ].
 \end{array} $
\arrayQED
 \end{description}

\observe
Notice that the $(\MuT)$-contractions over $x$ are allowed in the penultimate part, since $`V$ is a value.
 \end{proof}

By all the observations we have made above on the details of the proofs and the permissibility of the reduction steps involved in \CBV, we also have the following result.

 \begin{corollary}[Preservation of $\redlmmtv$] \label{CBV reduction preserved}
If $M \rtcrednclmuV N$, then $\SemM \crlmmtv `N$.
 \end{corollary}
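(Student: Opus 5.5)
The plan is to argue by induction on the length of the reduction sequence $M \rtcrednclmuV N$, and, for a single step, by induction on the derivation of that step, collecting the observations already made after each of the earlier proofs. Because the conclusion is only $\crlmmtv$ rather than the strategy relation, every reduction step may be read in the unrestricted-context relation $\redlmmtv$. So the contextual closure costs nothing as long as each redex contracted is a $\redlmmtv$-redex: a $(\MuT)$-step must pull in a value, while $(`m)$, $(`l)$ and $(`h`m)$ steps are always allowed.

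For the base cases of a single step I take the three logical rules in turn.
\begin{itemize}
\item For $(`b_{\vsubscr})$ I use \Cor\ref{interpretation respects CBV beta reduction}. All steps of \Thm\ref{interpretation respects beta reduction} are in $\redlmmtv$, since $\lmulmmt[V]$ is a value whenever $V$ is.
\item For $(\mur)$ I use \Thm\ref{interpretation respects mur reduction} together with \Lmm\ref{interpretation respects mur substitution CBV}. The only steps there are $(`m)$-contractions over $`d$ and renaming $\UL{`m}$-cuts, both permitted in $\redlmmtv$. The fact that \Lmm\ref{interpretation respects mur substitution CBV} reduces under $`l$ and inside environments is harmless here.
\item For $(\mulv)$ I use \Thm\ref{interpretation respects lmu reduction} and \Lmm\ref{interpretation respects lmu substitution}. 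Here the $(\MuT)$-steps over $x$ substitute $\lmulmmt[V]$, which is a value, and the step over $y$ substitutes a variable.
\end{itemize}

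The renaming rule and the erasing rule $(\Erase)$ are single $(`m)$ and $(`h`m)$ steps, exactly as in \Thm\ref{beta mu reduction preserved}.

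The contextual cases are $`Lx.\cdot$, left and right application, and $`M`a.[`b]\cdot$. The induction hypothesis gives $\lmulmmt[P] \rtcredlmmtv t$ and $\lmulmmt[Q] \rtcredlmmtv t$. The interpretations of the compound terms then reduce to the same term, namely $t$ placed in the same surrounding context, because $\redlmmtv$ is closed under all contexts of $\lmmt$.

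The remaining step, which I expect to be the main obstacle, is composing single steps along a sequence. The relation $\crlmmtv$ is not obviously transitive. From $\lmulmmt[M_1] \crlmmtv \lmulmmt[M_2]$ and $\lmulmmt[M_2] \crlmmtv \lmulmmt[M_3]$ we obtain common reducts $t_1$ and $t_2$ of $\lmulmmt[M_2]$, and these must themselves be joined. I would obtain this from confluence of $\redlmmtv$. Restricting $(\MuT)$ to values removes the unique critical pair $\Pair<(`m),(\MuT)>$; the remaining overlaps involving $(`h`m)$ are trivial. One can then either cite \cite{Curien-Herbelin'00} or sketch a parallel-reduction argument. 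If confluence including the $(`h`m)$ rule proves delicate, the fallback is to weaken the multi-step statement to the equivalence $\eqlmmt$ generated by $\redlmmtv$. That weaker statement still follows directly from the single-step case.
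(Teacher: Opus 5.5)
Your argument is correct and is essentially the paper's own. The paper proves this corollary only by appealing to the observations, made after each single-step result, that every contraction used there is a $\redlmmtv$-step. One small slip on your side: in the $(\mul)$ case the only $(\MuT)$-steps are those over $x$, which substitute the image of the value $V$; there is no step over $y$.

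What you add is the explicit composition of joins along a multi-step sequence via confluence of $\redlmmtv$, whose critical pairs, including those involving $(\eta\mu)$, are all trivial. The paper leaves this point implicit, and it is genuinely needed, since joinability need not compose for the non-confluent $\redlmmt$.
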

 
This concludes our results for the interpretation of $\slmu$ into $\lmmt$. 
We have shown that there exists a single interpretation from $\nclmu$ to $\lmmt$ that respects reduction, and the $\redlmmtN$ and $\redlmmtV$ reduction strategies (modulo $`m$ and $\mutilde$-expansions of redexes generated by the interpretation), thus establishing a strong relation between $\nclmu$, $\lmu$, and $\lmuV$ on one side, and $\lmmt$ on the other, as well as between the respective {\CBN} and {\CBV} sub-reduction systems and strategies in $\nclmu$ and $\lmmt$.
We have seen that, due to the fact that structural reduction in $\slmu$ adds applications, we cannot map substitutions to substitutions, and therefore not strategies to strategies, but also that the damage is limited.


%

\section{On Left-\TeXorPDFstring{$`m$}{Mu} Contraction in {\lmmtPDF}} 
    \label{add `m_l to lmmt}

Currently, there is no analogue of $\lmu$'s $(\mul)$-reduction in $\lmmt$.
We have remarked above that the term $ `M `g . < `Lx.t | {`M`a.<t'|`b>} `. `g > $ is not a $\lmmt$ critical pair, whereas its $\nclmu$-counterpart $ `@ (`Lx.M) (`M`a.[`b]N) $ \emph{is} a $\nclmu$ critical pair.
Notice that 
 \[ \begin{array}{rcl} 
`@ (`Lx.M) (`M`a.[`b]N) &\rednclmuV& `M`g.[`b]\lmusubst[{(`Lx.M)}.`g/`a] N ,
 \end{array} \]
but $ \SEMcbv{`@ (`Lx.M) (`M`a.[`b]N) } = `M `g . < `Lx.\SEMcbv{M} | {`M`a.<\SEMcbv{N}|`b>} `. `g > 
$ is not reducible under {\CBN} or {\CBV}.
We have shown that we can represent (implement) this reduction step under $\lmulmmt[`.]$, but there is an alternative path to this.

 \begin{definition}
We extend $\lmmt$'s notion of reduction by adding the reduction rule: 
 \[ \begin{array}{rrcl}
(`m_l) : & \cell < t | (`M`a. c ) `. `g > & \reduc & c \tsubst[ \mt y.< t | y`.`g > / `a ] 
 \end{array} \]
This rule is added for full reduction, but excluded from {\CBN} and restricted in the normal way for \CBV.
 \end{definition}
 
We can show that this reduction makes logical sense: first we show that the substitution $ {`.} \tsubst[ \mt y.< t | y`.`g > / `a ] $ is type sound.

 \def \Subst {\,\textsl{S}}

 \begin{lemma} \label{rule m_l lemma}
Let $ \Subst = \tsubst[ {\mt y.< t | y`.`g >} / `a ] $, and assume $ \derlmmt `G |- t : A\arr B | `g`:B,`D $.
Then:
 \begin{enumerate}
 \item
If $ \derlmmt c : `G |- `a`:A,`D $ then $ \derlmmt { c \Subst } : `G |- `g`:B,`D $. 

 \item
If $ \derlmmt `G |- t' : C | `a`:A,`D $ then $ \derlmmt `G |- {t' \Subst} : C | `g`:B,`D $. 

 \item
If $ \derlmmt `G | e : C |- `a`:A,`D $ then $ \derlmmt `G | {e \Subst} : C |- `g`:B,`D $. 

 \end{enumerate}
 \end{lemma}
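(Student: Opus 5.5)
The three parts will be proved simultaneously, by induction on the structure of the derivations (equivalently, on the structure of $c$, $t'$ and $e$). The key building block is that the substituted environment is well typed:
\[ \derlmmt `G | {\mt y.< t | y`.`g >} : A |- `g`:B,`D . \]
This follows from the assumption $\derlmmt `G |- t : A\arr B | `g`:B,`D $. Weakening gives $\derlmmt `G,y`:A |- t : A\arr B | `g`:B,`D $. Rule $(\arrL)$, applied to $(\AxT)$ for $y$ and $(\AxC)$ for $`g$, gives $\derlmmt `G,y`:A | y`.`g : A\arr B |- `g`:B,`D $. Then $(\Cut)$ and $(\mt)$ yield the claim, and $y$ is chosen fresh so that $y \notele \fv(t)$.

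The induction proceeds by a case split on the last rule applied.

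\textbf{Rules $(\Cut)$, $(\arrR)$, $(\arrL)$, $(`m)$, $(\mt)$.} The substitution distributes over the sub-expressions. We apply the induction hypothesis to the premises and re-apply the same rule.
\begin{itemize}
\item For $(`m)$ and $(\mt)$, the bound name or variable is chosen fresh by Barendregt's convention, so it differs from $`a$, $`g$ and $y$ and does not occur free in $t$.
\item Before applying the induction hypothesis under a binder, we weaken the typing of $\mt y.< t | y`.`g >$ with the extra statement $x`:C'$ (for $(\arrR)$ and $(\mt)$) or $`b`:C'$ (for $(`m)$).
\end{itemize}

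\textbf{Rule $(\AxT)$.} The substitution leaves a variable $x$ untouched, so $x$ remains typeable with $`g`:B,`D$ in the co-context.

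\textbf{Rule $(\AxC)$.} If the name is some $`b \not= `a$, the judgement $\derlmmt `G | `b : C |- `g`:B,`D $ holds directly, since $`b`:C \ele `D$. If the name is $`a$ itself, then $C = A$ and $e\Subst = \mt y.< t | y`.`g >$, whose type $A$ under $`G$ and $`g`:B,`D$ was established above.

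Two points need care. The first is the bookkeeping when $`g$ already occurs in $`D$, or occurs free in $c$. By compatibility of co-contexts it must then carry type $B$, and we use the compatible union $`g`:B,`D$. This is consistent with how $(`m_l)$ arises: there $`g$ is the name in the stack $(`M`a.c)`.`g$, so $c$ is typed with $`a`:A$ and may itself mention $`g`:B$. Since the hypothesis only types $c$ with $`a`:A,`D$, we take $`D$ to already contain $`g`:B$ whenever $`g$ is free in $c$.

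The second, and the main obstacle, is the $(\AxC)$ case for $`a$. It relies on getting the weakening of the typing of $t$ right: $y$ must be fresh for $t$, and the premise $\derlmmt `G |- t : A\arr B | `g`:B,`D $ must be weakened to whatever contexts were accumulated under binders. Both steps are covered by the weakening lemma for $\lmmt$ (the analogue of Lemma~\ref{lmn weakening lemma}), together with Barendregt's convention. Apart from this, the argument parallels Lemma~\ref{lmmt subst lemma}(2).
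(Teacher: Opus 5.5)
Your proof is correct and follows the same route as the paper. The paper also proceeds by induction on the typing derivation. Every case except the co-axiom for $\alpha$ goes through by the induction hypothesis, and that case is handled by building the same derivation for $\tilde\mu y.\langle t\,|\,y\cdot\gamma\rangle$: weakening, $(\to L)$ on the two axioms, cut, then $\tilde\mu$. The paper leaves implicit two points you make explicit, namely weakening the hypothesis on $t$ under binders and the case where $\gamma$ already occurs in $\Delta$; your version is the more precise of the two.
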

 \begin{Proof}{By induction on the definition of type assignment.}{\Lmm \ref{rule m_l lemma}}
By induction on the definition of type assignment.

 \begin{description}

 \item[$\Cut$]
Then $ c = \cell<t|v> $, and both $ \derLmmt `G |- t : A | `g`:B,`D $ and $ \derLmmt `G | e : A |- `g`:B,`D $.
Then by induction, both $ \derLmmt `G |- { t \Subst } : A | `g`:B,`D $ and $ \derLmmt `G | { e \Subst }
 : A |- `g`:B,`D $.
By rule $(\Cut)$ we have 

 \[ \def \TurnLmmt {\Turn}
\Inf	{ \derLmmt `G |- {t \Subst} : A | `g`:B,`D \quad \derLmmt `G | {e \Subst} : A |- `g`:B,`D 
	}{\derLmmt { \cell< t \Subst | e \Subst>} : `G |- `D } 
 \]
and $ \cell< t \Subst | e \Subst> = \cell< t | e > \Subst $.

 \item[$\AxT$]
Then $t = x$; since $ x`:C \ele `G $, by rule $(\AxT)$ also $ \derlmmt `G |- x : C | `g`:B,`D $
 
 \item[$\AxC$]
Then $e = `b$, and $ `b`:C \ele `D $.
We have two cases:

 \begin{description}
 \item[$`a = `b$]
Then $C = A$; we can construct: 
 \[ \def \Turnlmmt {\Turn}
\Inf	[\MuT]
	{\Inf	[\Cut]
		{\Inf	[\Weak]{\InfBox { \derlmmt `G |- t : A\arr B | `g`:B,`D }
			}{ \derlmmt `G,y`:A |- t : A\arr B | `g`:B,`D }
		 \Inf	[\arrL]
			{\Inf	[\AxT]{ \derlmmt `G,y`:A |- y : A | `g`:B,`D }
		 	 \Inf	[\AxC]{ \derlmmt `G,y`:A | `g : B |- `g`:B,`D }
			}{ \derlmmt `G,y`:A | y`.`g : A\arr B |- `g`:B,`D }
		}{ \derlmmt {\cell<t|y`.`g>} : `G,y`:A |- `g`:B,`D }
	}{ \derlmmt `G | \mt y.<t|y`.`g> : A |- `g`:B,`D }
 \]
and $ `a \Subst = \mt y.< t | y`.`g > $.

 \item[$`a\not=`b$]
Since $ `b`:C \ele `D $, by rule $(\AxC)$ also $ \derlmmt `G | `b : C |- `g`:B,`D $.
 
 \end{description}
 \item[$\arrL)$, $(\arrR)$, $(`m)$, $(\MuT$]
By induction.
\QED

 \end{description}
 \end{Proof}

With this result, we can now show:

 \begin{theorem}[Soundness for rule $(`m_l)$]
If $ \derLmmt { \cell < t | { `M`a. c} `. `g > } : `G |- `g`:B,`D $, then $ \derLmmt { c \tsubst[ {\mt y.< t | y`.`g >} / `a ] } : `G |- `g`:B,`D $.
 \end{theorem}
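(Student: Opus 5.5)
The plan is to invert the derivation of the redex and then apply \Lmm\ref{rule m_l lemma}. Since the command $\cell < t | { `M`a. c} `. `g >$ is built by a unique syntax-directed rule at each node, its derivation ends with $(\Cut)$ on some type $A\arr B$. The left premise is $\derLmmt `G |- t : A\arr B | `g`:B,`D $. The right premise is $\derLmmt `G | {(`M`a.c)`.`g} : A\arr B |- `g`:B,`D $. Inverting $(\arrL)$ on the right premise gives $\derLmmt `G |- `M`a.c : A | `g`:B,`D $ together with $\derLmmt `G | `g : B |- `g`:B,`D $. The second is an $(\AxC)$ instance, which fixes the type of $`g$ in the co-context to $B$. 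Inverting $(`m)$ on the first gives $\derLmmt c : `G |- `a`:A,`g`:B,`D $.

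We now apply the first part of \Lmm\ref{rule m_l lemma}, taking the lemma's $`D$ to be $`g`:B,`D$ (that lemma's $`D$ is arbitrary). Its hypothesis on $t$ then reads $\derLmmt `G |- t : A\arr B | `g`:B,`g`:B,`D $, which is just our left premise, since contexts are sets. Its hypothesis on $c$ is exactly what we obtained from inverting $(`m)$. The lemma therefore yields $\derLmmt { c \tsubst[ {\mt y.< t | y`.`g >} / `a ] } : `G |- `g`:B,`D $, which is the claim.

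The main point to verify is this bookkeeping of $`g$. The name $`g$ occurs free both in the stack and in the substituted environment $\mt y.<t|y`.`g>$, so $`g`:B$ must already be present when $c$ is typed. This is why we instantiate the lemma's $`D$ with $`g`:B,`D$ rather than with $`D$. If needed, we would use weakening (\Lmm\ref{lmn weakening lemma}, and its analogue for $\Turnlmmt$) to make the co-context of $c$ literally match the lemma's hypothesis. By Barendregt's convention $`a\neq`g$, so no clash arises.

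Everything else is direct inversion, relying on the syntax-directedness of $\Turnlmmt$.
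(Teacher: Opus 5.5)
Your proposal is correct and is essentially the paper's proof: invert the derivation through the cut, the left arrow rule and the $\mu$ rule to obtain the typings of $t$ and $c$, then apply Lemma~\ref{rule m_l lemma}. Your explicit instantiation of that lemma's co-context with $\gamma{:}B,\Delta$ (absorbing the duplicate $\gamma{:}B$ by compatible union) makes precise a step the paper leaves implicit.
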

 \begin{proof}
If $ \derLmmt { \cell < t | { `M`a. c} `. `g > } : `G |- `D $, then the derivation is shaped like:
 \[ \def \Turnlmmt {\Turn} 
\Inf	{\InfBox { \derlmmt `G |- t : A\arr B | `g`:B,`D }
	\quad
	 \Inf	{\Inf	{\InfBox{ \derlmmt c : `G |- `a`:A,`g`:B,`D }
			}{ \derlmmt `G |- `M`a.c : A | `g`:B,`D }
	 	 \Inf	{ \derlmmt `G | `g : B |- `g`:B,`D }
		}{ \derlmmt `G | `M`a.c`.`g : A\arr B |- `g`:B,`D }
	}{ \derlmmt {\cell<t|{`M`a.c}`.`g>} : `G |- `g`:B,`D }
 \]
Then, in particular, we have $ \derlmmt `G |- t : A\arr B | `g`:B,`D $ and $ \derlmmt c : `G |- `a`:A,`g`:B,`D $, and by \Lmm \ref{rule m_l lemma}, we have $ \derLmmt { c \tsubst[ {\mt y.< t | y`.`g >} / `a ] } : `G |- `g`:B,`D $.
\QED
 \end{proof}

So this new rule is sound.
We have shown above that we have full representation of the three notions of reductions we focus on for standard $\lmmt$, so do not need to extend reduction on $\lmmt$ to achieve this.
It will nonetheless be interesting to see if adding this reduction step to $\lmmt$ would yield different interpretations of $\nclmu$; we leave this for future work.

\Comment{
And in fact, we can also show that our interpretation respects this additional reduction step.

 \[ \begin{array}{lcl}
\lmulmmt[{`@ (`Lx.M) (`M`a. [`b] N ) }] & \ByDef & \\
`M `g . < `Lx.\SemM | { `M`a. < `N | `b > } `. `g > & \redlmmt & \\
`M`g . < \lmusubst [{(`Lx.\SemM)}.`g/`a] `N | `b > & \ByDef & \\
\lmulmmt[{`M`g . [`b] \lmusubst [{(`Lx.M)}.`g/`a] N }]
 \end{array} \]

 \begin{theorem}
 \end{theorem}

 \[ \begin{array}{lcl}
\lmulmmt[{`@ M (`M`a. [`b] N ) }] = \\
\lmulmmt[a <`g,x> M {m `a . `b N}] &\reduc (x)& \\
`M`g.<{`M`a.<\SEMcbn{N}|`b>} | {\mt y . < \SemM | y`.`g>} > &\reduc& \\
 \end{array} \]

$ \SEMcbn{`@ M (`M`a. [`b] N ) } = `M`g.< \SEMcbn{M} | {`M`a.<\SEMcbn{N}|`b>}`.`g > 
\reduc(`m_l) 
`M`g.<\SEMcbn{N}|`b> \tsubst[ {\mt y.< \SEMcbn{M} | y`.`g >} / `a ] 
$

$ \SEMcbv{`@ V (`M`a. [`b] N ) } = `M`a.< {`M`a.<\SEMcbn{N}|`b>} | {\mt x . < \SEMcbv{V} | x`.`a >} > $

Notice that
 \[ \def \Turnlmmt {\Turn} \begin{array}{c}
\Inf	{\InfBox { \derlmmt `G |- t : A\arr B | `g`:B,`D }
	\quad
	 \Inf	{\Inf	{\InfBox{ \derlmmt c : `G |- `a`:A,`g`:B,`D }
			}{ \derlmmt `G |- `M`a.c : A | `g`:B,`D }
	 	 \Inf	{ \derlmmt `G | `g : B |- `g`:B,`D }
		}{ \derlmmt `G | `M`a.c`.`g : A\arr B |- `g`:B,`D }
	}{ \derlmmt {\cell<t|{`M`a.c}`.`g>} : `G |- `g`:B,`D }
\\ [4mm]
\Inf	{\Inf	{\InfBox { \derlmmt `G,y`:A |- t : A\arr B | `g`:B,`D }
		\quad
		 \Inf	{\Inf	{ \derlmmt `G,y`:A |- y : A | `g`:B,`D }
		 	 \Inf	{ \derlmmt `G,y`:A | `g : B |- `g`:B,`D }
			}{ \derlmmt `G,y`:A | y`.`g : A\arr B |- `g`:B,`D }
		}{ \derlmmt {\cell<t|y`.`g>} : `G,y`:A |- `g`:B,`D }
	}{ \derlmmt `G | \mt y.<t|y`.`g> : A |- `g`:B,`D }
 \end{array} \]

$ \SEMcbn{MN} = `M`a.< \SEMcbn{M} | \SEMcbn{N}`.`a > $

$ \SEMcbv{MN} = `M`a.< \SEMcbv{N} | {\mt x . < \SEMcbv{M} | x`.`a >} > $

$ \lmulmmt[`@ M N ] = \lmulmmt[a M N] $

 \[ \begin{array}{lcl}
\lmulmmt[{ `@ (`Lx.M) (`M`a . [`b] N ) }] = 
\lmulmmt[a {l x . M} {m `a . `b N }] 
 \end{array} \]

 \[ \begin{array}{lcl}
`@ (`Lx.M) (`M`a. [`b] N ) & \rednclmu & 
`M`g . [`b] \lmusubst [{`Lx.t}.`g/`a] N 
 \end{array} \]

 \[ \begin{array}{lcl}
\SEMcbn{`@ (`Lx.M) (`M`a. [`b] N ) } & \ByDef & \\
`M `g . < `Lx.\SEMcbn{M} | { `M`a. < \SEMcbn{N} | `b > } `. `g > & \redlmmt & \\
`M`g . < \lmusubst [{(`Lx.\SEMcbn{M})}.`g/`a] \SEMcbn{N} | `b > & \ByDef & \\
\SEMcbn{`M`g . [`b] \lmusubst [{(`Lx.M)}.`g/`a] N }
 \end{array} \]
}


%


\setbox112=\hbox{ \large $\lmmt$ in $`X$}
\section{Interpreting {\lmmtPDF} in {\XiCalcPDF}}
 \label{lmmt in X}
 \label{sec:InterpLmmtX}

\Comment{
There exists an obvious translation from $`X$ into $\lmmt$:

 \[ \begin{array}{rcl}
\SemX { \caps <x,`a>} & \ByDef & \cell<x|`a> \\
\SemX { \exp x P `a . `b } & \ByDef & \wcell < `L x.`M`a . \SemX{P} | `b >\\
\SemX { \imp P `a [x] y Q } & \ByDef & \wcell <x | `M`a. \SemX{P}`. \mt y . \SemX{Q } > \\
\SemX { \cut P `a + x Q } & \ByDef & \wcell < `M `a.\SemX{P} | \mt x . \SemX{Q } >
 \end{array} \]
}

\citet{Lengrand'03} presents a translation of $\lmmt$ into $\X$ (called $`l`x$ there) which preserves the typing and shows that it respects reduction; since here we use a slight variation, that uses substitution rather than a renaming cut when dealing with $`M`a.c$ or $\mt x.c$, we need to give new proofs for this result.

We could say that, in $\lmmt$, not all inputs (normally present as term variables) and outputs (context variables) are explicitly named; for example, a term like $`Lx.t$ has no named output, which we can make explicit through $`h$-expanding it into $ `M`b . < `Lx.t | `b > $; similarly, $t`.e$ has no named input, which we can make explicit through $ \mt x . < x | t`.e > $.
This is the essence of the translation we define now: it interprets terms under an output, and environments under an input, making the implicit names explicit. 
The first version was defined by \citet{Lengrand'03} through:
%



 \[ \begin{array}{c}
 \begin {array}[t]{rcl}
\lmmtOrg (\cell<t|e>) & = & \lmmtOrg (C <t|e>) ~ (`a,x \textit{ fresh}) \hspace*{2cm} \\ 
 \end{array} 
 \\ 
 \begin {array}[t]{rcll}
\lmmtOrg (x){`a} & = & \lmmtOrg (V x){`a} \\
\lmmtOrg (`Lx.t){`a} & = & \lmmtOrg (L x . t){`a} ~ (`b \textit{ fresh}) \\ 
\lmmtOrg (`M`b.c){`a} & = & \lmmtOrg (M `b . c){`a} \\ 
 \end{array}
 \dquad 
 \begin {array}[t]{rcll}
\lmmtOrg (`a){x}	& = & \lmmtOrg (N `a){x} \\
\lmmtOrg (t`.e){x} & = & \lmmtOrg (X t . e){x} ~ (`g,z \textit{ fresh}) \\
\lmmtOrg (\mt y.c ){x} & = & \lmmtOrg (T y . c){x}
 \end{array}
 \end {array} \]
In \cite{Bakel-Lescanne-MSCS'08}, results were obtained for an interpretation where the last two alternatives are defined through:

 \[ \begin {array}[t]{rcll}
\lmmtTran{`M`b.c}{`a} & = & \cut { \lmmtTran{c}} `b + x { \caps<x,`a>} \\
\lmmtTran{ \mutilde y.c}{x} & = & \cut \caps<x,`b> `b + y { \lmmtTran{c}}
 \end {array} \]
showing that even here implicit substitutions are not needed.
Since we can show:
 \[ \begin {array}[t]{rclclclcl}
\cut { \lmmtTran{c}} `b + x { \caps<x,`a>} & \redXs & 
\cutL { \lmmtTran{c}} `b + x { \caps<x,`a>} & \rtcredXs & \bsp (\ref{renaming lemma}) 
\lmmtTran {`M`b.c}{`a} 
	\\
\cut \caps<x,`b> `b + y { \lmmtTran{c}} & \redXs & 
\cutR \caps<x,`b> `b + y { \lmmtTran{c}} & \rtcredXs & \bsp (\ref{renaming lemma})  
\lmmtTran {\mutilde y.c}{x}
 \end {array} \]
and in \Lmm\ref{mu simulation} and \ref {mut simulation} we want to be able to simulate $\lmmt$-substitution through $`X$-substitution, here we prefer to avoid adding the extra cuts, but use implicit substitution in the interpretation.

A disadvantage of this encoding is that it treats $\cell<x|`a>$ as a cut, rather than as $\caps<x.`a>$. 
However, we can show:
 \[ \begin{array}{rclclclcl}
\lmmtTran {\cell<y|`b>} 
&\ByDef& \lmmtTran {C <y|`b>} 
&\ByDef& \lmmtTran {C <{V y}|{N `b}>} 
& \redX& \bsp (\Cap) \caps<y,`b> 
	\\
\lmmtTran {\cell<y|t`.e>} 
&\ByDef& \lmmtTran {C <y|t`.e>} 
&\ByDef& \lmmtTran {C <{V y}|{X t . e}>} 
&\redX& \bsp (\Imp) \lmmtTran {X t . e}{y} 
&\ByDef& \lmmtTran {t`.e}{y} 
	\\ 
\lmmtTran {\cell<y|\mt z.c>} 
&\ByDef& \lmmtTran {C <y|\mt z.c>} 
&\ByDef& \lmmtTran {C <{V y}|{T z . c}>} 
&\redX& \bsp (\ref{renaming lemma}) \lmmtTran {T z . c}{y} 
&\ByDef& \lmmtTran {\mt z.c}{y} 
	\\ 
\lmmtTran {\cell<`Ly.t|`b>} 
&\ByDef& \lmmtTran {C <`Ly.t|`b>} 
&\ByDef& \lmmtTran {C <{L y . t}|{N `b}>} 
&\redX& \bsp (\Exp) \lmmtTran {L y . t}{`b} 
&\ByDef& \lmmtTran {`ly.t}{y} 
	\\ 
\lmmtTran {\cell<`M`g.c|`b>} 
&\ByDef& \lmmtTran {C <`M`g.c|`b>} 
&\ByDef& \lmmtTran {C <{M `g . c}|{N `b}>} 
&\redX& \bsp (\ref{renaming lemma}) \lmmtTran {M `g . c}{`b} 
&\ByDef& \lmmtTran {`M`g.c}{`b} 
 \end{array} \]
 
The interpretation we will consider here is an optimised version, that uses these observations and will induce less reduction in the image: this will yield stronger results.

 \begin {definition}[Translation of $\lmmt$ into $`X$] \label{from lmmt to x} ~
 \[ \begin{array}{c@{ \quad}c@{ \quad}c}
 \begin {array}[t]{rcll}
\lmmtX(\cell<y|`b>) & = & \caps<y,`b> \\ 
\lmmtX(\cell<y|e>) & = & \lmmtX(e){y} \\ 
\lmmtX(\cell<t|`b>) & = & \lmmtX(t){`b} \\
\lmmtX(\cell<t|e>) & = & \lmmtX(C <t|e>) & \\ && \quad (`a,x \textit{ fresh, otherwise}) &
 \end{array} 
 &
 \begin {array}[t]{rcll}
\lmmtX(x){`a} & = & \lmmtX(V x){`a} \\
\lmmtX(`M`b.c){`a} & = & \lmmtX(M `b . c){`a} \\ 
\lmmtX(`Lx.t){`a} & = & \lmmtX(L x . t){`a} \\ && \quad (`b \textit{ fresh}) \\ 
 \end{array}
 & 
 \begin {array}[t]{rcll}
\lmmtX(`a){x}	& = & \lmmtX(N `a){x} \\
\lmmtX(\mt y.c){x} & = & \lmmtX(T y . c){x} \\
\lmmtX(t`.e){x} & = & \lmmtX(X t . e){x} \\ && \quad (`g,z \textit{ fresh}) \\
 \end{array}
 \end {array} \]
 \end {definition}

There are overlapping cases in this definition.
However, we have 
 \[ \begin {array}[t]{rcccccl}
\lmmtX(\cell<y|`b>) & = & \lmmtX(`b){y} & = & { \lmmtX(y){`b} } &=& \caps<y,`b> 
 \end{array} \]
so this does not inconvenience the proofs.
This translation will only create a cut when interpreting a command that does not involve a variable or name.


 \begin{remark}
If $t'$ is a $\lmmt$-value that does not contain $`a$, then $ \lmmtX(t'){`a} $ introduces $`a$, and if $e'$ is a $\lmmt$-stack that does not contain $x$, then $ \lmmtX(e'){x} $ introduces $x$:
 \[ \begin{array}{rcl}
\lmmtX(x){`a} &=& \lmmtX(V x){`a} \\
\lmmtX(`lx.t){`a} &=& \lmmtX(L x . t){`a}
 \end{array} 
 \dquad
 \begin{array}{rcl}
\lmmtX(`a){x} &=& \lmmtX(N `a){x} \\
\lmmtX(t`.e){x} &=& \lmmtX(X t . e){x} 
 \end{array} \]
We will need this observation later.
 \end{remark}

This interpretation preserves typeability:

 \begin {lemma} 
 \begin {flatenumerate}
 \firstitem 
 If $\derlmmt `G | t : A |- `D $, then $ \derX \lmmtX(t){`a} : `G |- `a`:A,`D $;
 \item If $\derlmmt `G |- e : A | `D $, then $\derX \lmmtX(e){x} : `G,x`:A |- `D $; and
 \item If $\derlmmt c : `G |- `D $, then $\derX { \lmmtX(c) } : `G |- `D $.
 \end {flatenumerate}
 \end {lemma}
 \begin{proof}
This follows from a similar result for the original interpretation as shown in \cite{Bakel-Lescanne-MSCS'08}, the type preservation result shown there, and the observation we made above about the relation between the interpretations. \QED
 \end{proof}

We will now show the relation between implicit substitution in $`X$ and in $\lmmt$.
A similar result was shown in \cite{Lengrand'03}, but with respect to $\X$.
We show the proof in full since we need to know if reduction will stay within {\CBN} or \CBV.

First we show that $`X$ successfully encodes $\lmmt$'s context substitution $ `. \tsubst e/`a $ through right substitution $\cutL {`.} `a + x { \lmmtX(e){x} } $.

 \begin {lemma} \label {mu simulation} 
 \begin {flatenumerate}
 \firstitem $\cutL { \lmmtX(c) } `a + x {\lmmtX(e'){x} } \substo \lmmtX(c{ \tsubst[e'/`a] }) $; 
 \item If $`a \not= `b$, then $\cutL {{ \lmmtX(t){`b} } } `a + x {\lmmtX(e'){x} }  \substo \lmmtX(t{ \tsubst[e'/`a] }){`b} $; and 
 \item $\cutL \lmmtX(e){y} `a + x {\lmmtX(e'){x} }  \substo \lmmtX(e{ \tsubst[e'/`a] } ){y} $.
 \end {flatenumerate}
 \end {lemma}

 \begin{Proof}{By simultaneous induction on the structure of nets.}{\Lmm \ref{mu simulation}}
By simultaneous induction on the structure of nets.
 \begin{enumerate}
 \item 
 \begin{description}

 \pushmyitem{2mm}[$c = \cell<y|`a>$] && \kern-5mm 
\cutL { \lmmtX(\cell<y|`a>) } `a + x \lmmtX(e'){x} &\ByDef& 
\cutL \caps<y.`a> `a + x \lmmtX(e'){x} &\substo& \bsp (\deactL) 
\lmmtX(e'){x} \tsubst[y/x] &=&
\lmmtX(e'){y} &\ByDef& \\
\lmmtX(\cell<y|e'>) &=& 
\lmmtX(\cell<y|`a>{ \tsubst[e'/`a] }) 
 \end{array}$

 \myitem[$c = \cell<y|`b>, ~`a \not= `b$]
\cutL { \lmmtX(\cell<y|`b>) } `a + x \lmmtX(e'){x} &\ByDef& 
\cutL \caps<y.`b> `a + x \lmmtX(e'){x} &\substo& 
\caps<y.`b> &\substo& \bsp (\Li) 
\lmmtX(\cell<y|`b>) &=& \\
\lmmtX(\cell<y|`b>{ \tsubst[e'/`a] }) 
 \end{array}$

\Comment
{
 \myitem[$c = \cell<t|`a>$]
\cutL \lmmtX(\cell<t|`a>) `a + x \lmmtX(e'){x} &\ByDef& 
\cutL { \lmmtX(t){`a} } `a + x \lmmtX(e'){x} & \substo& \bsp (\IH) \\
\lmmtX(t{ \tsubst[e'/`a] }){`a} &\ByDef& 
\\ 
\\ 
\lmmtX(\cell<t{ \tsubst[e'/`a] }|e'>) &\ByDef& 
\lmmtX(\cell<t|`a>{ \tsubst[e'/`a] }) 
 \end{array}$
}

 \myitem[$c = \cell<`lz.t|`a>$]
\cutL \lmmtX(\cell<`lz.t|`a>) `a + x \lmmtX(e'){x} &\ByDef& 
\cutL {( \lmmtX(L z . t){`a} )} `a + x \lmmtX(e'){x} &\substo& \bsp (\Lii) \\
\cut { \exp z { \cutL {{ \lmmtX(t){`b} } } `a + x \lmmtX(e'){x} } `b . `g } `g + x \lmmtX(e'){x} &\substo& \bsp (\IH,`a\not=`b) 
\cut { \exp z { \lmmtX(t{ \tsubst[e'/`a] }) } `b . `g } `g + x \lmmtX(e'){x} &\ByDef& \\
\lmmtX(\cell<`lz.t{ \tsubst[e'/`a] }|e'>) &\ByDef& 
\lmmtX(\cell<`lz.t|`a>{ \tsubst[e'/`a] }) 
 \end{array}$

 \myitem[$c = \cell<`M`b.c|`a>$]
\cutL \lmmtX(\cell<`M`b.c|`a>) `a + x \lmmtX(e'){x} &\ByDef& 
\cutL \lmmtX(`M`b.c){`a} `a + x \lmmtX(e'){x} &\ByDef& \\
\cutL {\lmmtX(M `b . c){`a} } `a + x \lmmtX(e'){x} &=& 
\cutL { \lmmtX(c\tsubst[`a/`b]) } `a + x \lmmtX(e'){x} &=& \\
\cutL {\omt  \cutL { \lmmtX(c) } `a + x \lmmtX(e'){x} } `b + x \lmmtX(e'){x} &=& 
\cutL {\omt \cutL { \lmmtX(c) } `a + x \lmmtX(e'){x} \tsubst `a/`b } `a + x \lmmtX(e'){x} &\substo& \bsp (\IH) && \\
\cutL { \lmmtX(M `b . {c{ \tsubst[e'/`a] }}){`a} } `a + x \lmmtX(e'){x} &\ByDef& 
\cutL { \lmmtX(`M`b.c{ \tsubst[e'/`a] }){`a} } `a + x \lmmtX(e'){x} &\ByDef& \\
\cutL { \lmmtX({ \cell<`M`b.c{ \tsubst[e'/`a] }|`a>}) } `a + x \lmmtX(e'){x} &\substo&
\Span{5}{ \bsp (\IH)
\lmmtX(\cell<{`M`b.c \tsubst[e'/`a] }|e'>) ~=~ 
\lmmtX(\cell<`M`b.c|`a>{ \tsubst[e'/`a] }) 
} \end{array}$ 

 \myitem[$c = \cell<y|e>$]
\cutL \lmmtX(\cell<y|e>) `a + x \lmmtX(e'){x} &\ByDef& 
\cutL \caps<y.e> `a + x \lmmtX(e'){x} &\ByDef& 
\cutL \lmmtX(e){y} `a + x \lmmtX(e'){x} & \substo& \bsp (\IH) 
\lmmtX(e{ \tsubst[e'/`a] }){y} &\ByDef& \\
\lmmtX(\cell<y|e{ \tsubst[e'/`a] }>) &\ByDef& 
\lmmtX(\cell<y|e>{ \tsubst[e'/`a] }) 
 \end{array}$

\Comment{
 \myitem[$c = \cell<y|t`.e>$] 
\cutL { \lmmtX(\cell<y|t`.e>) } `a + x \lmmtX(e'){x} &\ByDef& 
\cutL \lmmtX(t`.e){y} `a + x \lmmtX(e'){x} &\ByDef& \\ 
\cutL { \lmmtX(X t . e){y}} `a + x \lmmtX(e'){x} &\ByDef& 
\imp { \cutL \lmmtX(t){`g} `a + x \lmmtX(e'){x} } `g [y] z { \cutL { \lmmtX(e){z} } `a + x \lmmtX(e'){x} } & \substo& \bsp (\IH) \\ 
\imp { \lmmtX(t{ \tsubst[e'/`a] }){`g} } `g [y] z { \lmmtX(e{ \tsubst[e'/`a] }){z} } &\ByDef& 
\lmmtX(\cell<y|t{ \tsubst[e'/`a] }`.e{ \tsubst[e'/`a] }>) &=& \\
\lmmtX(\cell<y|t`.e>{ \tsubst[e'/`a] }) 
 \end{array}$

 \myitem[$c = \cell<y|\mt z.c>$] 
\cutL \lmmtX(\cell<y|\mt z.c>) `a + x \lmmtX(e'){x} &\ByDef& 
\cutL \lmmtX(\mt z.c){y} `a + x \lmmtX(e'){x} &\ByDef& 
\cutL { \lmmtX(T z . c){y}} `a + x \lmmtX(e'){x} &\substo& \\ 
\cutL { \lmmtX(c) } `a + x \lmmtX(e'){x} \tsubst[y/z] & \substo& \bsp (\IH) 
\lmmtX(c{ \tsubst[e'/`a] }) \tsubst z/y &\ByDef& 
\lmmtX(\cell<y|\mt z.c\tsubst[e'/`a]>) &=& \\
\lmmtX(\cell<y|\mt z.c>{ \tsubst[e'/`a] }) 
 \end{array}$
}

 \myitem[$c = \cell<t|e>$]
\cutL { \lmmtX(\cell<t|e>) } `a + x \lmmtX(e'){x} &\ByDef& 
\cutL { \lmmtX(C |`b,y| <t|e>) } `a + x \lmmtX(e'){x} &\substo& \bsp (\Lv) \\
\cut { \cutL { \lmmtX(t){`b} } `a + x \lmmtX(e'){x} } 
	`b + y { \cutL \lmmtX(e){y} `a + x \lmmtX(e'){x} } & \substo& \bsp (\IH,`a\not=`b) 
\lmmtX(C |`b,y| <t{ \tsubst[e'/`a] }|e{ \tsubst[e'/`a] }>) &\ByDef& \\
\lmmtX(\cell<t{ \tsubst[e'/`a] }|e{ \tsubst[e'/`a] }>) &\ByDef& %
\lmmtX(\cell<t|e>{ \tsubst[e'/`a] }) 
 \end{array}$
 \end{description}

  \item 
 \begin{description}
 \pushmyitem{2mm}[$t = y$]
\cutL { \lmmtX(y){`b} } `a + x \lmmtX(e'){x} &\ByDef& 
\cutL { \lmmtX(V y){`b} } `a + x \lmmtX(e'){x} &\substo& \bsp (\Li) 
\lmmtX(V y){`b} &\ByDef& 
\lmmtX(y){`b} &\ByDef&
\lmmtX({y \tsubst[e'/`a] }){`b}
 \end{array}$
 
 \myitem[$t = `Ly.t$]
\cutL { \lmmtX(`Ly.t){`b} } `a + x \lmmtX(e'){x} &\ByDef& 
\cutL {( \lmmtX(L |`d| y . t){`b} )} `a + x \lmmtX(e'){x} &\substo& \bsp (\Lii) \\ 
\exp y { \cutL { \lmmtX(t){`d} } `a + x \lmmtX(e'){x} } `d . `b & \substo& \bsp (\IH, `a \not = `d) 
\lmmtX(L |`d| y . {t\tsubst[e'/`a] }){`b} &\ByDef& 
\lmmtX(`Ly.t{\tsubst[e'/`a]}){`b} &\ByDef& 
\lmmtX({}{(`Ly.t) \tsubst[e'/`a] }){`b}
 \end{array}$

 \myitem[$t = `M`g.c$]
\cutL { \lmmtX(`M`g.c){`b} } `a + x \lmmtX(e'){x} &\ByDef& 
\cutL { \lmmtX(M `g . c){`b} } `a + x \lmmtX(e'){x} & \substo& \bsp (\IH, `a \not = `b,`g) 
\lmmtX(c{ \tsubst[e'/`a] }) \tsubst[`b/`g] &\ByDef& \\
\lmmtX(`M`g.c{ \tsubst[e'/`a] }){`b} &\ByDef& 
\lmmtX({(`M`g.c) \tsubst[e'/`a] }){`b}
 \end{array}$

 \end{description}

 \item 

 \begin{description}

 \pushmyitem{2mm}[$e = `a$]
\cutL { \lmmtX(`a){y} } `a + x \lmmtX(e'){x} &\ByDef& 
\cutL \caps<y.`a> `a + x \lmmtX(e'){x} &\substo& \bsp (\deactL) 
\lmmtX(e'){x} \tsubst[y/x] &\ByDef& 
\lmmtX(e'){y} &\ByDef & 
\lmmtX(`a{ \tsubst[e'/`a] }){y}
 \end{array}$
 
 \myitem[$e = `b \not= `a$]
\cutL { \lmmtX(`b){y} } `a + x \lmmtX(e'){x} &=& 
\cutL \caps<y.`b> `a + x \lmmtX(e'){x} &\substo& \bsp (\Li) 
\caps<y.`b> &\ByDef& 
\lmmtX(`b){y} &\ByDef& 
\lmmtX(`b{ \tsubst[e'/`a] }){y}
 \end{array}$
 
 \myitem[$e = t`. e''$]
\cutL { \lmmtX(t`.e''){y} } `a + x \lmmtX(e'){x} &\ByDef& 
\cutL { \lmmtX(X t . e''){y} } `a + x \lmmtX(e'){x} ~ \substo (\Liv) \\ 
\imp { \cutL \lmmtX(t){`g} `a + x \lmmtX(e'){x} } `g [y] z { \cutL { \lmmtX(e''){z} } `a + x \lmmtX(e'){x} } & \substo& \bsp (\IH, `a \not= `g) 
\imp { \lmmtX(t{ \tsubst[e'/`a] }){`g} } `g [y] z { \lmmtX(e''{ \tsubst[e'/`a] }){z} } &\ByDef& \\
\lmmtX(X {t{ \tsubst[e'/`a] }} . {e''{ \tsubst[e'/`a] }}){y} &\ByDef& 
\lmmtX({}{(t`.e'') \tsubst[e'/`a] }){y} 
 \end{array}$
 
 \myitem[$e = \mt y.c$]
\cutL { \lmmtX(\mt z.c){y} } `a + x \lmmtX(e'){x} &\ByDef& 
\cutL { \lmmtX(T z . c){y} } `a + x \lmmtX(e'){x} &\ByDef& 
\lmmtX(\mt z.c{ \tsubst[e'/`a] }){y} &\ByDef& 
\lmmtX({}{(\mt z.c) \tsubst[e'/`a] }){y}
 \end{array}$
 \qed 

 \end{description}
 \end{enumerate}
 \observe
Notice that no reduction steps are used in this proof.
 \end{Proof}

Likewise, we can show that $`X$ successfully encodes $\lmmt$'s term substitution $ `. \tsubst t/x $ through left substitution $\cutR { \lmmtX(t){`a} } `a + x `. $.

 \begin {lemma} \label {mut simulation}
 \begin {flatenumerate}
 \firstitem $\cutR \lmmtX(t'){`a} `a + x { \lmmtX(c) } = \lmmtX(c\tsubst[t'/x]) $;
 \item $\cutR \lmmtX(t'){`a} `a + x { \lmmtX(t){`b} } = \lmmtX(t\tsubst[t'/x]){`b} $; and
 \item If $z \not= x$, then $\cutR \lmmtX(t'){`a} `a + x { \lmmtX(e){z} } = \lmmtX(e\tsubst[t'/x]){z} $.
 \end {flatenumerate}
 \end {lemma}

 \begin{Proof}{By simultaneous induction on the structure of nets.}{\Lmm \ref{mut simulation}} 
By simultaneous induction on the structure of nets.
 \begin{enumerate}
 \item 
 \begin{description}

 \pushmyitem{2mm}[$c = \cell<y|`a>$] \kern-8mm && 
\cutR \lmmtX(t'){`a} `a + x \lmmtX(\cell<x|`b>) &\ByDef& 
\cutR \lmmtX(t'){`a} `a + x \caps<x.`b> &\substo& \bsp (\deactR) 
\lmmtX(t'){`a} \tsubst[`b/`a] &=&
\lmmtX(t'){`b} &=& \\
\lmmtX(\cell<t'|`b>) &=& 
\lmmtX(\cell<x|`b>\tsubst[t'/x]) 
 \end{array}$

 \myitem[$c = \cell<y|`a>$, $y \not= x$] \kern-3mm && 
\cutR \lmmtX(t'){`a} `a + x \lmmtX(\cell<y|`b>) &\ByDef& 
\cutR \lmmtX(t'){`a} `a + x \caps<y.`b> &\substo& \bsp (\Ri) 
\caps<y.`b> &=& 
\lmmtX(\cell<y|`b>) &=& \\
\lmmtX(\cell<y|`b>\tsubst[t'/x]) 
 \end{array}$

 \myitem[$c = \cell<t|`a>$]
\cutR \lmmtX(t'){`a} `a + x \lmmtX(\cell<t|`b>) &\ByDef& 
\cutR \lmmtX(t'){`a} `a + x \lmmtX(t){`b} & \substo& \bsp (\IH) 
\lmmtX(t\tsubst[t'/x]){`b} &\ByDef& 
\lmmtX(\cell<t\tsubst[t'/x]|`b>) &\ByDef& \\
\lmmtX(\cell<t|`b>\tsubst[t'/x]) 
 \end{array}$

\myitem[$c = \cell<y|e>$]
\cutR \lmmtX(t'){`a} `a + x { \lmmtX(\cell<y|e>) } &\ByDef& 
\cutR \lmmtX(t'){`a} `a + x \caps<y.e> &\ByDef& 
\cutR \lmmtX(t'){`a} `a + x \lmmtX(e){y} & \substo& \bsp (\IH) 
\lmmtX(e\tsubst[t'/x]){y} &\ByDef& \\
\lmmtX(\cell<y|e\tsubst[t'/x]>) &\ByDef& 
\lmmtX(\cell<y|e>\tsubst[t'/x]) 
 \end{array}$

 \myitem[$c = \cell<y|t`.e>$] 
\cutR \lmmtX(t'){`a} `a + x \lmmtX(\cell<y|t`.e>) &\ByDef& 
\cutR \lmmtX(t'){`a} `a + x \lmmtX(t`.e){y} &\ByDef& \\ 
\cutR \lmmtX(t'){`a} `a + x { \lmmtX(X t . e){y} } &\substo& 
\imp { \cutR \lmmtX(t'){`a} `a + x \lmmtX(t){`g} } `g [y] z { \cutR \lmmtX(t'){`a} `a + x { \lmmtX(e){z} } } & \substo& \bsp (\IH) \\ 
\imp { \lmmtX(t\tsubst[t'/x]){`g} } `g [y] z { \lmmtX(e\tsubst[t'/x]){z} } &\ByDef& 
\lmmtX(\cell<y|t\tsubst[t'/x]`.e\tsubst[t'/x]>) ~=~ 
\lmmtX(\cell<y|t`.e>\tsubst[t'/x]) 
 \end{array}$

 \myitem[$c = \cell<x|\mt z.c>$] 
\cutR \lmmtX(t'){`a} `a + x { \lmmtX(\cell<x|\mt z.c>) } &\ByDef& 
\cutR \lmmtX(t'){`a} `a + x { \lmmtX(\mt z.c){x} } &\ByDef& \\ 
\cutR \lmmtX(t'){`a} `a + x {(\lmmtX(T z . c){x} )} &=& 
\cutR \lmmtX(t'){`a} `a + z {\omt \cutR \lmmtX(t'){`a} `a + x \lmmtX(c) } & \substo& \bsp (\IH) \\ 
\cutR \lmmtX(t'){`a} `a + x {( \lmmtX(c\tsubst[t'/x]) \tsubst x/z )} &\ByDef& 
\cutR \lmmtX(t'){`a} `a + x { \lmmtX(\mt z.{c\tsubst[t'/x] }){x} } &\ByDef& \\ 
\cutR \lmmtX(t'){`a} `a + x { \lmmtX(\cell<x|\mt z.{c\tsubst[t'/x] }>) } & \substo& 
\Span{4}{ \bsp (\IH) 
\lmmtX(\cell<t'|\mt z.c\tsubst[t'/x]>) ~=~ 
\lmmtX(\cell<x|\mt z.c>\tsubst[t'/x]) 
} \end{array}$

 \myitem[$c = \cell<y|\mt z.c>$, $y \not= x$] && \kern-8mm 
\cutR \lmmtX(t'){`a} `a + x { \lmmtX(\cell<y|\mt z.c>) } &\ByDef& 
\cutR \lmmtX(t'){`a} `a + x { \lmmtX(\mt z.c){y} } &\ByDef& \\ 
\cutR \lmmtX(t'){`a} `a + x { \lmmtX(T z . c){y} } &=& 
\cutR \lmmtX(t'){`a} `a + x { \lmmtX(c) \tsubst[y/z] } & \substo& \bsp (\IH) 
\lmmtX(c\tsubst[t'/x]) \tsubst[z/y]&\ByDef& \\ 
\lmmtX(\cell<y|\mt z.c\tsubst[t'/x]>) &=& 
\lmmtX(\cell<y|\mt z.c>\tsubst[t'/x]) 
 \end{array}$

\Comment{

 \myitem[$c = \cell<t|e>$]
\cutR \lmmtX(t'){`a} `a + x \lmmtX(\cell<t|e>) &\ByDef& 
\cutR \lmmtX(t'){`a} `a + x { \lmmtX(C <R `b,y t|e>) } &=& \bsp (\Rv) \\ 
\Span{3}{
	\cut { \cutR \lmmtX(t'){`a} `a + x { \lmmtX(t){`b} } } `b + y { \cutR \lmmtX(t'){`a} `a + x \lmmtX(e){y} } 
}	& \substo& \bsp (\IH, x \not= y) \\ 
\cut \lmmtX(t\tsubst[t'/x]){`b} `b + y \lmmtX(e\tsubst[t'/x]){y} &=& 
\lmmtX(\cell<t\tsubst[t'/x]|e\tsubst[t'/x]>) &\ByDef& 
\lmmtX(\cell<t|e>\tsubst[t'/x]) 
 \end{array}$
}

 \end{description}

 \item

 \begin{description}
 \pushmyitem{2mm}[$t = x$] 
\cutR \lmmtX(t'){`a} `a + x { \lmmtX(x){`b} } &\ByDef& 
\cutR \lmmtX(t'){`a} `a + x { \lmmtX(V x){`b} } &=& \bsp (\deactR) 
\lmmtX(t'){`a} \tsubst[`b/`a] &=& 
\lmmtX(t'){`b} &\ByDef& 
\lmmtX(x\tsubst[t'/x]){`b}
 \end{array}$

 \myitem[$t = y \not= x$]
\cutR \lmmtX(t'){`a} `a + x { \lmmtX(y){`b} } &\ByDef& 
\cutR \lmmtX(t'){`a} `a + x { \lmmtX(V y){`b} } &\substo& \bsp (\Ri) 
\lmmtX(V y){`b} &\ByDef& 
\lmmtX(y){`b} &\ByDef& 
\lmmtX(y\tsubst[t'/x]){`b}
 \end{array}$

 \myitem[$t = `Ly.t''$]
\cutR \lmmtX(t'){`a} `a + x { \lmmtX(`Ly.t''){`b} } &\ByDef& 
\cutR \lmmtX(t'){`a} `a + x { \lmmtX(L |`d| y . t''){`b} } &\substo& \bsp (\Rii) \\ 
\exp y { \cutR \lmmtX(t'){`a} `a + x { \lmmtX(t''){`d} } } `d . `b & \substo& \bsp (\IH) 
\exp y { \lmmtX(t''\tsubst[t'/x]){`d} } `d . `b &\ByDef& 
\lmmtX(`Ly.t''\tsubst[t'/x]){`b} &\ByDef& \\
\lmmtX((`Ly.t'')\tsubst[t'/x]){`b}
 \end{array}$

 \myitem[$t = `M`g.c$]
\cutR \lmmtX(t'){`a} `a + x { \lmmtX(`M`g.c){`b} } &\ByDef& 
\cutR \lmmtX(t'){`a} `a + x { \lmmtX(c) \tsubst `b/`g } & \substo& \bsp (\IH) 
\lmmtX(c\tsubst[t'/x]) \tsubst[`b/`g] &\ByDef& \\
\lmmtX(`M`g.c\tsubst[t'/x]){`b} &\ByDef& 
\lmmtX({(`M`g.c) \tsubst[t'/x] }){`b}
 \end{array}$
 \end{description}

 \item 

 \begin{description}

 \pushmyitem{2mm}[$e = `b$]
\cutR \lmmtX(t'){`a} `a + x { \lmmtX(`b){z} } &\ByDef& 
\cutR \lmmtX(t'){`a} `a + x \caps<z.`b> &\substo& \bsp (\Ri) 
\caps<z.`b> &\ByDef& 
\lmmtX(z){`b} &\ByDef& 
\lmmtX(z\tsubst[t'/x]){`b}
 \end{array}$
 
 \myitem[$e = t`. e$]
\cutR \lmmtX(t'){`a} `a + x { \lmmtX(t`.e){z} } &\ByDef& 
\Span{3}{
	\cutR \lmmtX(t'){`a} `a + x { \imp { \lmmtX(t){`b} } `b [z] y \lmmtX(e){y} } ~\substo~ \bsp (\Riv)
}	\\ 
\Span{3}{
	\imp { \cutR \lmmtX(t'){`a} `a + x { \lmmtX(t){`b} } } `b [z] y { \cutR \lmmtX(t'){`a} `a + x \lmmtX(e){y} }
}	& \substo& \bsp (\IH, x\not=y) \\ 
\imp { \lmmtX(t\tsubst[t'/x]){`b} } `b [z] y { \lmmtX(e\tsubst[t'/x]){y} } &\ByDef& 
\lmmtX(t\tsubst[t'/x] `.~ e\tsubst[t'/x]){z} &\ByDef& 
\lmmtX({(t`. e) \tsubst[t'/x] }){z} 
 \end{array}$ 
 
 \myitem[$e = \mt y.c$]
\cutR { \lmmtX(t'){x} } `a + x { \lmmtX(\mt y.c){z} } &\ByDef& 
\cutR { \lmmtX(t'){x} } `a + x { \lmmtX(c) \tsubst[z/y] } &\ByDef& 
( \cutR { \lmmtX(t'){x} } `a + x { \lmmtX(c) } ) \tsubst[z/y] & \substo& \bsp (\IH) \\
\lmmtX(c\tsubst[t'/x]) \tsubst[z/y] &\ByDef& 
\lmmtX(\mt y.c\tsubst[t'/x]){z} &\ByDef& 
\lmmtX({(\mt y.c) \tsubst[t'/x] }){z}
 \end{array}$
 \arrayqed 

 \end{description}
 \end{enumerate}

 \observe
As above, no reduction steps are used in this proof. 

 \end{Proof}

We now strengthen these results by stating that this encoding preserves reduction:

 \begin {theorem}[Simulation of $\redlmmt$]\label{simul:lmmt}
 \begin{flatenumerate}
 \firstitem If $c\redlmmt c' $ then $\lmmtX(c) \tcredXs \lmmtX(c') $; 
 \item If $t\redlmmt t' $ then $\lmmtX(t){`a} \tcredXs \lmmtX(t'){`a} $; and 
 \item If $e\redlmmt e' $ then $\lmmtX(e){x} \tcredXs \lmmtX(e'){x} $.
 \end{flatenumerate}
 \end {theorem}

 \begin{proof}
Simultaneously by induction of the definition of $\redlmmt$.

 \begin{description}

 \myitem [$`l$] 
\lmmtX(\cell<`Lx.t_1|t_2`.e>) &\ByDef& 
\lmmtX(C <{<L y . t_1>}|{X |`g,z| {t_2} . e}>) &\redX& (\Ins) \\
\cut { \lmmtX(t_2){`g} } `g + y {\cut { \lmmtX(t_1){`b} } `b + z { \lmmtX(e){z} }  } & \ByDef &
\lmmtX(C |`g,y| <t_2|{T x . {C |`b,z| <t_1|e>}}>) & \ByDef & \\
\lmmtX(C |`g,y| <t_2|{ \mt x.<t_1|e>}>) & \ByDef & 
\lmmtX(\cell<t_2|{ \mt x.<t_1|e>}>) 
 \end{array} $


 \myitem [$`m$] 
\lmmtX(\cell<`M`b.c|e>) & \ByDef &
\lmmtX(C <`M`b.c|e>) & \ByDef & 
\lmmtX(C <{M `b . c}|e>) & =_{`a} & \bsp(`a \textit{ fresh}) \\
\lmmtX(C |`b,x| <c|e>) & \redX & \bsp (\actL) 
\cutL { \lmmtX(c) } `b + x {\lmmtX(e){x} } & = & \bsp (\ref {mu simulation}) 
\lmmtX(c{ \tsubst[e/`b] }) 
 \end{array} $


 \myitem [$\MuT$] 
\lmmtX(\cell<t|\mt y.c>) & \ByDef &
\lmmtX(C <t|\mt y.c>) & \ByDef & 
\lmmtX(C <t|{T y . c}>) & =_{`a} & \bsp(x \textit{ fresh}) \\
\cut {\lmmtX(t){`a} } `a + y { \lmmtX(c) } & \redX & \bsp (\actR)
\cutR {\lmmtX(t){`a} } `a + y { \lmmtX(c) } & = & \bsp (\ref {mut simulation}) 
\lmmtX(c{ \tsubst[t/y] }) 
 \end{array} $



 \myitem [$`h`m$]
\lmmtX(`M`a.<t|`a>){`b} &\ByDef&
\lmmtX(M `a . {\cell <t|`a>}){`b} &\ByDef& 
\lmmtX(\cell<t|`b>) &\ByDef& 
\lmmtX(t){`b} 
  \end{array} $

 \myitem [$`h\mutilde$]
\lmmtX(\mt x.<x|e>){y} &\ByDef&
\lmmtX(T x . {\cell<x|e>}){y} &\ByDef& 
\lmmtX(\cell<y|e>) &\ByDef& 
\lmmtX(e){y} 
 \end{array} $

\Comment{

 \myitem[$ t \redlmmt t' \Implies `Lx.t \redlmmt `Lx.t' $]
\lmmtX(`Lx.t){`a} &\ByDef& 
\lmmtX(L x.t){`a} &\rtcredXs& \bsp (\IH) 
\lmmtX(L x.t'){`a} &\ByDef& 
\lmmtX(`Lx.t'){`a} 
 \end{array} $


 \myitem[$ t \redlmmt t' \Implies \cell<t|e> \redlmmt \cell<t'|e> $]
\lmmtX(\cell<t|e>) &\ByDef& 
\lmmtX(C <t|e>) &\rtcredXs& \bsp (\IH) 
\lmmtX(C <t'|e>) &\ByDef& 
\lmmtX(\cell<t'|e>) 
 \end{array} $
 

 \myitem[$ t \redlmmt t' \Implies t`.e \redlmmt t'`.e $]
\lmmtX(t`.e){x} &\ByDef& 
\lmmtX(X t . e){x} &\rtcredXs& \bsp (\IH) 
\lmmtX(X t' . e){x} &\ByDef& 
\lmmtX(t'`.e){x} 
 \end{array} $


 \myitem[$ e \redlmmt e' \Implies \cell<t|e> \redlmmt \cell<t|e'> $]
\lmmtX(\cell<t|e>) &\ByDef& 
\lmmtX(C <t|e>){x} &\rtcredXs& \bsp (\IH) 
\lmmtX(C <t|e'>){x} &\ByDef& 
\lmmtX(\cell<t|e'>) 
 \end{array} $


 \myitem[$ e \redlmmt e' \Implies t`.e \redlmmt t`.e' $]
\lmmtX(t`.e){x} &\ByDef& 
\lmmtX(X t . e){x} &\rtcredXs& \bsp (\IH) 
\lmmtX(X t . e'){x} &\ByDef& 
\lmmtX(t`.e'){x} 
 \end{array} $


 \myitem[$ c \redlmmt c' \Implies `M`a.c \redlmmt `M`a.c' $]
\lmmtX(`M`a.c){`b} &\ByDef& 
\lmmtX(M `a . c){`b} &\rtcredXs& \bsp (\IH) 
\lmmtX(M `a . c'){`b} &\ByDef& 
\lmmtX(`M`a.c'){`b} 
 \end{array} $


 \myitem[$ c \redlmmt c' \Implies \mt x.c \redlmmt \mt x.c' $]
\lmmtX(\mt y.c){x} &\ByDef& 
\lmmtX(T y . c){x} &\rtcredXs& \bsp (\IH) 
\lmmtX(T y . c'){x} &\ByDef& 
\lmmtX(\mt y.c'){x} 
 \end{array} $
 \arrayqed[-4\point] 


}

 \end{description}

 \observe
The contextual rules all follow by straightforward induction. \QED

 \end{proof}

\newqsymbol{`S}{\textbf{S}}

We cannot model the $`h$-reduction rule $`L x . `M`b.< t | x`.`b > \redX t ~ (x,`b \notele \FV{t})$, since the `surrounding' $`l$-abstraction produces an export term that can never be removed; $`X$ itself is not extensional.

\Comment{
 $ \begin{array}{lcl}
\lmmtX(`L z . `M`s.<t|z`.`s>){`g} &=& \\
\lmmtX(L |`d| z . {`M`s.<t|z`.`s>}){`g} &=& \\
\lmmtX(L |`d| z . {M `s . { \cell <t|z`.`s>}}){`g} &=& \\
\lmmtX(L |`d| z . {M `s . {<t|z`.`s>}}){`g} &=& \\
\lmmtX(L |`d| z . {M `s . {<t|{X z . `s}>}}){`g} &=& \\
\lmmtX(L |`d| z . {M `s . {<t|{X {V z} . {N `s}}>}}){`g} &=& \\
\lmmtX(L |`s| z . {<t|{X {V z} . {N `d}}>}){`g} &=& \\
 \end{array} $}

We can also show that the {\CBN} strategy is respected.
We need to check that $(\actLN)$ gets correctly applied, and no reduction takes place in the environment.

 \begin {theorem}[Simulation of $\redlmmtN$] \label{simul:lmmtN}
If $c \redlmmtN c' $ then $\lmmtX(c) \tcredXsN \lmmtX(c') $; and 
if $t \redlmmtN t' $ then $\lmmtX(t){`a} \tcredXsN \lmmtX(t'){`a} $.
 \end {theorem}

 \begin{proof}
Simultaneously by induction of the definition of reduction.
The proof is mostly as that of \Thm\ref{simul:lmmt}, which showed this result for full reduction; we will highlight the differences.

 \begin{description}

 \item [$`l$]
No {\CBN} issues.

 \myitem [$`m$] 
\lmmtX(\cell<`M`b.c|`S>) & \ByDef &
\lmmtX(C <`M`b.c|`S>) & \ByDef & 
\lmmtX(C <{M `b . c}|`S>) & = & \bsp (`a \textit{ fresh})
\cut { \lmmtX(c) } `b + x { \lmmtX(`S){x} } 
 \end{array} $

\noindent
Notice that $\lmmtX(`S){x} $ introduces $x$; also, $ c = \cell<t|e> $, so $\lmmtX(c){`b} $ is a cut and does not introduce $`b$, and we have
 \[ \begin{array}{@{}lclclc}
\lmmtX(C |`b,x| <c|`S>) & \redX & \bsp (\actLN) 
\cutL { \lmmtX(c) } `b + x {\lmmtX(`S){x} } & = & \bsp (\ref {mu simulation}) 
\lmmtX(c{ \tsubst[S/`b] }) 
 \end{array} \]
Since $\lmmtX(`S){x} $ introduces $x$, $(\actLN)$ is permitted.

 \item [$\MuT$] No {\CBN} issues.

 \item [$`h`m$]
The $(\actL)$ step in the proof of \Thm\ref{simul:lmmt} becomes $(\actLN)$ since $\caps<z.`b>$ introduces $z$.

 \item [$`h\MuT$] No {\CBN} issues.




 \item[$ c \redlmmtN c' \Implies `M`a.c \redlmmtN `M`a.c' $]
Reduction inside a cut is allowed in \CBN.


 \end{description}
The other rules are not part of $\redlmmtN$.
\QED

 \end{proof}

We can also show that the {\CBV} strategy is respected.
Now we need to check that now $(\actRV)$ gets correctly applied, and again no reduction takes place in the environment.

 \begin {theorem}[Simulation of $\redlmmtV$]\label{simul:lmmtV}
If $c \redlmmtV c' $ then $\lmmtX(c) \tcredXsV \lmmtX(c') $, and
if $ t \redlmmtV t' $ then $\lmmtX(t){`a} \tcredXsV \lmmtX(c'){`a} $.
 \end {theorem}

 \begin{proof}
Simultaneously by induction of the definition of reduction.
The proof is mostly as that for \Thm\ref{simul:lmmt}; we will highlight the differences.

 \begin{description}

 \item [$`l$]
No {\CBV} issues.

 \item [$`m$] 
No {\CBV} issues.

\myitem [$\MuT$] 
\lmmtX(\cell<V|\mt y.c>) & \ByDef &
\lmmtX(C <{}V|\mt y.c>) & \ByDef &
\lmmtX(C <{}V|{T y . c}>) & = & \bsp (x \textit{ fresh}) 
\cut {\lmmtX({}V){`a} } `a + y { \lmmtX(c) } 
 \end{array} $

\noindent
Notice that $\lmmtX({}V){`a} $ introduces $`a$;
also, $ c = \cell<t|e> $, so $\lmmtX(c)$ is a cut that does not introduce $y$, and we have%
 \[ \begin{array}{lclclc}
\cut { \lmmtX({}V){`a} } `a + y { \lmmtX(c) } & \redX & \bsp (\actRV) 
\cutR { \lmmtX({}V){`a} } `a + y { \lmmtX(c) } & = & \bsp (\ref {mut simulation}) 
\lmmtX(c{ \tsubst[V/y] }) 
 \end{array} \]

 \item [$`h`m$]
No {\CBV} issues.

 \item [$`h\MuT$]
The $(\actR)$ step in the proof of \Thm\ref{simul:lmmt} becomes $(\actRV)$ since $\caps<y.`a>$ introduces $`a$.

 \item[$ c \redlmmt c' \Implies `M`a.c \redlmmt `M`a.c' $]
Allowed in \CBV.

 \myitem[$`M`b.<t|`b> \redlmmtV t$, $ `b \notele \FV{t} $] 
\lmmtX(`M`b.<t|`b>){`a} & \ByDef & 
\lmmtX(M `b . {\cell<t|`b>}){`a} & \ByDef & 
\lmmtX(t){`b} \tsubst[`a/`b] &\ByDef&  
\lmmtX(t){`a} 
 \end{array} $

 \end{description}
The other rules are not part of $\redlmmtV$.
\QED

  \end{proof}

So $\lmmtX(`.) $, the natural encoding of $\lmmt$ in $\X$ (and $`X$), that names the implicit term and context variables, strongly connects the two calculi: not only the standard $\lmmt$ reduction is embedded into the reduction of $`X$, but also the {\CBN} and {\CBV} reduction strategies of $\lmmt$ are respected by their $`X$ counterparts.

In a similar way, we can show that the sub-reduction systems $\redlmmtn$ and $\redlmmtv$ can be faithfully simulated.

 \begin{corollary}
 \begin{enumerate}
 \firstitem If $c \redlmmtn c' $ then $\lmmtX(c) \rtcredXsn \lmmtX(c') $, and 
if $t \redlmmtn t' $ then $\lmmtX(t){`a} \rtcredXsn \lmmtX(t'){`a} $.
 \item If $c \redlmmtv c' $ then $\lmmtX(c) \rtcredXsv \lmmtX(c') $, and
if $ t \redlmmtv t' $ then $\lmmtX(t){`a} \rtcredXsv \lmmtX(c'){`a} $.
 \end{enumerate}
 \end{corollary}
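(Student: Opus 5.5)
We prove both parts simultaneously by induction on the definition of $\redlmmtn$ (respectively $\redlmmtv$), reusing the case analysis from the proof of \Thm\ref{simul:lmmt}. For each base rule we reuse exactly the chain of steps constructed there. The only $\Xs$-reduction steps in that proof are one logical step or one activation step, followed by an appeal to \Lmm\ref{mu simulation} or \Lmm\ref{mut simulation}, which use no reduction at all. So it suffices to check that the single activation or logical step used in each case is allowed in $\redXsn$ (respectively $\redXsv$). Both sub-reductions permit arbitrary contextual closure, so the contextual cases of \Thm\ref{simul:lmmt} go through unchanged by the induction hypothesis. This gives $\rtcredXsn$ and $\rtcredXsv$ rather than the strategies.

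For \CBN, the $(`l)$ case uses the first variant of $(\Ins)$, which is the one permitted in $\redXsn$. The $(\mt)$ case uses $(\actR)$, which is unrestricted in \CBN. The $(`h`m)$ and $(`h\mt)$ cases need no reduction, since the translations coincide by definition. The remaining case is the restricted rule $(\mun)$, $\cell<`M`b.c|S> \redlmmtn c\tsubst[S/`b]$. Here we argue as in \Thm\ref{simul:lmmtN}: by the Remark following \Def\ref{from lmmt to x}, $\lmmtX(S){x}$ introduces $x$ when $S$ is a stack and $x$ is fresh. Also, $\lmmtX(c)$ does not introduce $`b$: either $c$ is a proper cut, or, if $c$ is of the form $\cell<y|`b>$ or $\cell<t|`b>$, we observe that the translation $\lmmtX(`M`b.c){`a}$ is first computed by renaming. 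So $(\actLN)$ applies, and \Lmm\ref{mu simulation} finishes the case.

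The \CBV\ case is symmetric. $(`m)$ uses $(\actL)$, which is unrestricted in \CBV. $(\mtv)$ substitutes only values $V$, and $\lmmtX(V){`a}$ introduces $`a$ by the same Remark, so $(\actRV)$ applies and \Lmm\ref{mut simulation} concludes. $(`l)$ again uses the permitted first variant of $(\Ins)$.

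The main obstacle is the edge cases where the cut produced by the translation degenerates. These arise because \Def\ref{from lmmt to x} collapses commands with a variable or name into capsules or non-cut nets, so the side condition ``does not introduce'' must be checked against the optimised translation rather than the generic $\cut{\cdot}`a+x{\cdot}$ form. I would first dispatch these cases using \Lmm\ref{renaming lemma}: the degenerate commands translate directly, or via a $(\Cap)$, $(\Exp)$ or $(\Imp)$ logical step, and logical steps are unrestricted in both sub-reductions. If no reduction is needed at all, the translations are equal, which is why the statement uses $\rtcredXsn$ and $\rtcredXsv$ rather than the transitive closure.
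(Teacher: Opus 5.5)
Your route is the paper's own. There the corollary is justified only by ``in a similar way'', i.e.\ by re-running Theorem~\ref{simul:lmmt} with the side-condition checks of Theorems~\ref{simul:lmmtN} and~\ref{simul:lmmtV}. That route has a genuine hole, and you inherit it at ``Lemma~\ref{mu simulation} finishes the case'' (dually Lemma~\ref{mut simulation} for $(\mtv)$). In fact the statement itself fails.

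Write $\mathcal{T}$ for the translation of Definition~\ref{from lmmt to x}, with subscripts for the connector under which terms and environments are translated. Definition~\ref{from lmmt to x} uses the optimised clauses $\mathcal{T}(\mu\gamma.c)_\alpha=\mathcal{T}(c)[\alpha/\gamma]$ and $\mathcal{T}(\tilde\mu z.c)_x=\mathcal{T}(c)[x/z]$. Because of them, those two lemmas are not equalities. Suppose the body of the contracted $\mu\alpha.c$ contains $\langle\mu\gamma.c''\,|\,\alpha\rangle$ with $\gamma$ and $\alpha$ both free in $c''$. The clause merges $\gamma$ into $\alpha$, so activating the cut over $\alpha$ also fills the $\gamma$-positions. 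The result is the translation of the \emph{next} $(\mu)$-contractum, not of the intended one.

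Concretely, let $S=s\cdot\delta$, $c''=\langle y\,|\,(\mu\eta.\langle w|\gamma\rangle)\cdot\alpha\rangle$ and $c_0=\langle\mu\alpha.\langle\mu\gamma.c''\,|\,\alpha\rangle\,|\,S\rangle$. Then $c_0\redlmmtn c_0'=\langle\mu\gamma.c''[S/\alpha]\,|\,S\rangle$ by a root $(\mun)$-step. Put $I_x=\langle s.\rho\rangle\,\hat\rho\,[x]\,\hat v\,\langle v.\delta\rangle=\mathcal{T}(S)_x$, with $I_w,I_z$ its renamings. Then
\[ \mathcal{T}(c_0)=(\langle w.\pi\rangle\,\hat\kappa\,[y]\,\hat z\,\langle z.\pi\rangle)\,\hat\pi\dagger\hat x\,I_x , \qquad \mathcal{T}(c_0')=(\langle w.\pi\rangle\,\hat\kappa\,[y]\,\hat z\,I_z)\,\hat\pi\dagger\hat x\,I_x . \]
In $\mathcal{T}(c_0)$ the left premise is an import, so it does not introduce $\pi$; $I_x$ introduces $x$; and there are no inner cuts. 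Hence its only step is $(\actL)$, which is also $(\actLN)$. By $(\Liv)$ and $(\deactL)$ this gives the cut-free net $I_w\,\hat\kappa\,[y]\,\hat z\,I_z=\mathcal{T}(\langle y\,|\,(\mu\eta.\langle w|S\rangle)\cdot S\rangle)$, which is the translation of the reduct of $c_0'$. So $\mathcal{T}(c_0)$ never reaches the cut $\mathcal{T}(c_0')$, even in full $\Xs$. The corollary fails as posed, and so does the $(\mu)$-case of Theorems~\ref{simul:lmmt} and~\ref{simul:lmmtN}.

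The $\tilde\mu$-clause breaks part~2 in the same way. We have $\langle\lambda w.w\,|\,\tilde\mu x.\langle x\,|\,\tilde\mu z.\langle f\,|\,x\cdot z\cdot\delta\rangle\rangle\rangle\redlmmtv\langle\lambda w.w\,|\,\tilde\mu z.\langle f\,|\,(\lambda w.w)\cdot z\cdot\delta\rangle\rangle$. The translation of the source has a unique reduct, by $(\actRV)$: the cut-free $\mathcal{T}(\langle f\,|\,(\lambda w.w)\cdot(\lambda w.w)\cdot\delta\rangle)$. The translation of the target, however, is a cut. Both examples are simply typeable, and in both the two translations meet only in a common reduct. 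So what can be salvaged is a common-reduct formulation, not a one-directional simulation.

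Separately, in the $(\mun)$ case you claim the translated body never introduces the bound name; this is false.
\begin{itemize}
\item For $c=\langle y|\beta\rangle$ the left premise $\mathcal{T}(\mu\beta.c)_\alpha$ is the capsule $\langle y.\alpha\rangle$. So $(\actLN)$ is inapplicable, and for $S$ not a name you need the logical step $(\Imp)$ instead.
\item For $c=\langle\lambda z.t\,|\,\beta\rangle$ with $\beta\notin t$, the two translations coincide and no step is taken.
\item Dually, in the $(\mtv)$ case the right premise can introduce $x$ (e.g.\ $c=\langle x|\delta\rangle$), so $(\actRV)$ is inapplicable and a logical step is needed.
\end{itemize}
Your last paragraph points the right way. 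Note, though, that Lemma~\ref{renaming lemma} also uses activation steps, not only logical ones. These are admissible in both sub-reductions only because a capsule introduces its connector, and you should check that explicitly. These degenerate cases can be repaired; the name and variable merging described above cannot.
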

 


 \def \LKmm{\textsf{LK}_{`m\mutilde}}
 \def \lmmtE {\lmmt{}^{\kern-.5\point \mbox{\scriptsize \textsc{e}}}}
 \def \eqetapar(#1){\mathrel{{=_{`h}}\,(#1)}}
 \def \eqeta {\ifnextchar(
 	{\eqetapar}{=_{`h}}}
 \def \etamu {`h`m}
 \def \etamt {`h\MuT}
 
 \def \eqmmtpar(#1){\mathrel{{=_{`m\MuT}}\,(#1)}}
 \def \eqmmt {\ifnextchar(
 	{\eqmmtpar}{=_{`m\MuT}}}
 
\section{Embedding {\XiCalcPDF} in {\lmmtPDF}}
 \label{sec:EmbedXiLMMT}
 \label {Xi in lmmt}

In this section we will study the reverse of the previous section, and investigate if the natural interpretation of $\X$ into $\lmmt$, as first suggested in \cite{Curien-Herbelin'00}, respects the notions of reduction we focus on in this paper.
 
 \begin{definition}[Translation of $\X$ into $\lmmt$ \cite{Curien-Herbelin'00,Lengrand'03}] 
The interpretation of terms of $\X$ into commands of $\lmmt$ is defined by:
 \[ \begin{array}{rcl}
\Xlmmt[ \caps <x,`a> ] & \ByDef & \Xlmmt[<x.`a>] \\
\Xlmmt[ \exp x P `a . `b ] & \ByDef & \Xlmmt[e x P `a . `b ] \\
\Xlmmt[{ \imp P `a [y] x Q }] & \ByDef & \Xlmmt[i P `a y x Q ] \\
\Xlmmt[ \cut P `a + x Q ] & \ByDef & \Xlmmt[c P `a + x Q ]
 \end{array} \]
 \end{definition}

In fact, this is the origin of $\X$: in Remark 4.1 of \cite{Curien-Herbelin'00}, Curien and Herbelin give a hint on a way to connect $\LKmm$ (as presented there) and {\LK}.
The proofs of {\LK} embed in $\LKmm$ by considering the following sub-syntax of $\lmmt$:
 \[ \begin{array}{rcl}
c &::=& \cell<x | `a> \mid \cell < `l x.`m `a.c | `b > \mid \cell <y| `m `a.c `. \mt x.c > \mid \cell < `m `a.c | \mt x.c >
 \end{array} \]
Later it was discovered that this corresponded closely to Urban's idea in \cite{Urban'00}; however, since Urban focussed on normalisation, the approaches to the definition of reduction differ.

The interpretations from $\lmmt$ to $\Xs$ and back are strongly related, as that they act as each other's inverse, albeit with some reductions involved, as can be expected.
These mainly deal with converting implicit names to explicit names, as discussed above.

First we look at $ \lmmt \mapsto `X \mapsto \lmmt $, and show that $\Xlmmt[`.]$ is $ \lmmtX (`.) $'s left-inverse up to extensionality.
We will write $\eqeta$ for the equivalence relation on $\lmmt$ terms generated by rules $(\etamu)$ and $(\etamt)$.
 
 \begin{theorem} \label{left inverse}
 \begin{flatenumerate}
 \item $ \Xlmmt[ \lmmtX (c) ] \eqeta c $, 
 \item $ `M`a.\Xlmmt[ \lmmtX (t){`a} ] \eqeta t $, and 
 \item $ \mt x.\Xlmmt[ \lmmtX (e){x} ] \eqeta e $.
 \end{flatenumerate}
 \end{theorem}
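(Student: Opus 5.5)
The plan is a simultaneous structural induction on $c$, $t$ and $e$. The case split follows exactly the clauses of \Def\ref{from lmmt to x}: four for commands, three for terms, three for environments. Each case unfolds $\lmmtX(`.)$ once, then unfolds $\Xlmmt[`.]$ on the resulting net, and closes the gap with the induction hypothesis plus at most one $(\etamu)$ or $(\etamt)$ step. Since $\eqeta$ is an equivalence relation generated by rules used under arbitrary contexts, it is a congruence, so hypotheses may be applied inside $`l$, $`m$, $\mt$, stacks and commands.

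First I prove an auxiliary fact by induction on nets: $\Xlmmt[`.]$ commutes with renaming of connectors, $\Xlmmt[ P\tsubst[`b/`a] ] = \Xlmmt[P]\tsubst[`b/`a]$, and likewise for sockets. This is immediate, since $\Xlmmt[`.]$ is compositional and never inspects names. With it, the term and environment cases are routine.
\begin{itemize}
\item For $t=x$ we get $`M`a.\cell<x|`a> \eqeta x$ by $(\etamu)$.
\item For $t=`Lx.u$ we get $`M`a.\cell<`Lx.`M`b.\Xlmmt[\lmmtX(u){`b}]|`a>$. One $(\etamu)$ step removes the outer binder, and part~(2) of the hypothesis turns $`M`b.\Xlmmt[\lmmtX(u){`b}]$ into $u$.
\item For $t=`M`g.c$ we have $`M`a.\Xlmmt[\lmmtX(c)\tsubst[`a/`g]] = `M`a.\Xlmmt[\lmmtX(c)]\tsubst[`a/`g]$. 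This is $`a$-equal to $`M`g.\Xlmmt[\lmmtX(c)]$, and part~(1) gives $`M`g.c$.
\end{itemize}
The environment cases $`b$, $t`.e'$ and $\mt z.c$ are dual, using $(\etamt)$ and parts (2) and (3) of the hypothesis.

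Among the commands, $\cell<y|`b>$ is literal. The general case $\cell<t|e>$ translates to a cut $\cut {\lmmtX(t){`a}} `a + x {\lmmtX(e){x}}$, which comes back as $\cell<`M`a.\Xlmmt[\lmmtX(t){`a}] | \mt x.\Xlmmt[\lmmtX(e){x}]>$; parts (2) and (3) of the hypothesis then give $\cell<t|e>$ directly. For $\cell<t|`b>$ and $\cell<y|e>$ I sub-split on the shape of $t$, respectively $e$.
\begin{itemize}
\item The shapes $`Lx.u$ and $t'`.e'$ give $\cell<`Lx.`M`g.\dots|`b>$, respectively $\cell<y|`M`g.\dots`.\mt z.\dots>$. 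These are handled by congruence and the hypothesis, as for terms.
\item The shapes $`M`g.c$ and $\mt z.c$ do not work this way. Here $\lmmtX(\cell<`M`g.c|`b>) = \lmmtX(c)\tsubst[`b/`g]$, so by the renaming fact and part~(1) the round trip yields $c\tsubst[`b/`g]$, not $\cell<`M`g.c|`b>$. Symmetrically, $\cell<y|\mt z.c>$ yields $c\tsubst[y/z]$.
\end{itemize}

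I expect this last sub-case to be the main obstacle. The two sides are related by a single renaming $(`m)$ or $(\MuT)$ step, $\cell<`M`g.c|`b> \redlmmt c\tsubst[`b/`g]$, and not by $`h$. This is precisely the cost of the optimisation that removed renaming cuts in \Def\ref{from lmmt to x}.

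What I would try first is to strengthen the induction so that such renaming cuts are absorbed. The natural way is to read $\eqeta$ as also containing renaming contractions, \ie $(`m)$ steps against a name and $(\MuT)$ steps against a variable. These are also the steps that \Lmm\ref{renaming lemma} shows to be free on the $\Xs$ side. In that case the sub-case closes with one such step.

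Should the strict reading of $\eqeta$ be intended, the fallback is different. I would restrict these command clauses so that they produce a capsule-like result only when the command is already in the image shape. Failing that, I would state explicitly that the claim holds modulo renaming cuts, noting that every other case is settled purely by $(\etamu)$ and $(\etamt)$.
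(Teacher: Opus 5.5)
You follow the paper's route: a simultaneous induction along the clauses of Definition~\ref{from lmmt to x}, closing each case with the hypothesis and at most one $(\eta\mu)$ or $(\eta\tilde\mu)$ step. Your diagnosis of the sub-cases $\langle\mu\alpha.c'\mid\beta\rangle$ and $\langle y\mid\tilde\mu z.c'\rangle$ is also right, and the gap here is in the paper's proof, not in yours. The paper reaches $c'[\beta/\alpha]$ and $c'[y/z]$ exactly as you do. It then closes both cases with steps it labels $(\eta\mu)$ and $(\eta\tilde\mu)$, but those steps are renaming instances of $(\mu)$ and $(\tilde\mu)$. They coincide with $\eta$-steps only in special situations. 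One is $c'=\langle t\mid\alpha\rangle$ with $\alpha$ not free in $t$. Another is when the command is the body of $\mu\beta$ with $\beta$ not free in $c'$, since then $\mu\beta.\langle\mu\alpha.c'\mid\beta\rangle\to_{(\eta\mu)}\mu\alpha.c'$.

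Hence the theorem is false for $=_\eta$ as the paper defines it (generated by $(\eta\mu)$ and $(\eta\tilde\mu)$ only), and no strengthening of the induction can save the strict reading. Take $c=\langle\mu\gamma.\langle x\mid y\cdot\gamma\rangle\mid\beta\rangle$. The left-hand side of part (1) is then $\langle x\mid\mu\delta.\langle y\mid\delta\rangle\cdot\tilde\mu z.\langle z\mid\beta\rangle\rangle$, whose $\eta$-normal form is $\langle x\mid y\cdot\beta\rangle$, while $c$ is itself $\eta$-normal. The two $\eta$-rules do not overlap, preserve free connectors and decrease size, so $\eta$-reduction is confluent and terminating, and these two commands are not $\eta$-equal. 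Taking $t=\mu\alpha_0.c$ and $e=\tilde\mu w.c$, with $\alpha_0,w$ not occurring in $c$, refutes parts (2) and (3) in the same way.

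Your first repair is the right one, but present it as a corrected statement, not as a reading of $=_\eta$. Use the equivalence generated by $(\eta\mu)$, $(\eta\tilde\mu)$ and the renaming cuts $\langle\mu\alpha.d\mid\beta\rangle\to d[\beta/\alpha]$ and $\langle y\mid\tilde\mu x.d\rangle\to d[y/x]$. For it your case analysis goes through as written, once you note that this relation is closed under renaming of free connectors; you need that to push the hypothesis through $[\beta/\alpha]$ and $[y/z]$.

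Your second fallback restricts the clauses for $\langle t\mid\beta\rangle$ and $\langle y\mid e\rangle$ to $\lambda$-abstractions and to environments $t'\cdot e'$. That would also make the pure-$\eta$ claim provable, but it changes the translation for which Lemma~\ref{mu simulation} and Theorem~\ref{simul:lmmt} are proved; the $(\eta\mu)$ case of the latter uses the unrestricted clause. Correcting the conclusion is the less invasive fix.
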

 \begin{proof}
Simultaneous by induction\Shorter{; we only show the interesting cases}.

 \begin{enumerate} \itemsep 2pt
 \item
 \begin{description}

\Longer{
 \pushitem[$ c = \cell<y|`b> $]
$ \Xlmmt[ \lmmtX (\cell<y|`b>) ] \ByDef \Xlmmt[\caps<y.`b>] \ByDef \cell<y|`b> $

 \item[$ c = \cell<`ly.t|`b> $]
$ \Xlmmt[ \lmmtX (\cell<`ly.t|`b>) ] \ByDef 
\Xlmmt[ \lmmtX (`ly.t){`b} ] \ByDef 
\cell<`ly.`m`g.\Xlmmt[\lmmtX (t){`g} ]|`b> \eqeta (\IH) 
\cell<`ly.t|`b> $
}

 \Longer{\item}
 \Shorter{\pushitem}[$ c = \cell<`m`a.c'|`b> $]
$ \Xlmmt[ \lmmtX (\cell<`m`a.c'|`b>) ] \ByDef 
\Xlmmt[ \lmmtX (`m`a.c'){`b} ] \ByDef 
\Xlmmt[\lmmtX (M `a . c'){`b} ] \ByDef 
\Xlmmt[\lmmtX (c') ] \tsubst `b/`a \eqeta (\IH) \\
c' \tsubst `b/`a \cuder(\etamu) 
\cell<`m`a.c'|`b> $

\Longer{
 \item[$ c = \cell<y|t`.e> $]
$ \Xlmmt[ \lmmtX (\cell<y|t`.e>) ] \ByDef 
\Xlmmt[\lmmtX (t`.e){y} ] \ByDef 
\Xlmmt[\lmmtX (X |`a,z| t . e){y} ] \ByDef \\
\cell< y | `M `a . \Xlmmt[ \lmmtX (t){`a} ] `. \mt z.\Xlmmt[\lmmtX (e){z} ] > \eqeta (\IH) ~
\cell<y|t`.e> $
}

 \item[$ c = \cell<y|\mt z.c'> $]
$ \Xlmmt[ \lmmtX (\cell<y|\mt z.c'>) ] \ByDef 
\Xlmmt[\lmmtX (\mt z.c'){y} ] \ByDef 
\Xlmmt[\lmmtX (T z . c'){y} ] \ByDef 
\Xlmmt[\lmmtX (c') ] \tsubst y/z \eqeta (\IH) \\
c' \tsubst y/z \cuder(\etamt) 
\cell<y|\mt z.c'> $

\Longer{
 \item[$ c = \cell<t|e> $]
$ \Xlmmt[ \lmmtX (\cell<t|e>) ] \ByDef \Xlmmt[ \lmmtX (C <t|e>) ] \ByDef \cell < `M `a . \Xlmmt [ \lmmtX (t){`a} ] | \mt x . \Xlmmt [\lmmtX (e){x} ] > \eqeta (\IH) \cell<t|e> $
}

 \end{description}

 \item
 \begin{description}

 \pushitem[$ t = x $]
$ `M`a.\Xlmmt[ \lmmtX (x){`a} ] \ByDef `M`a.\Xlmmt[{ \lmmtX (V x){`a} }] \ByDef `M`a.<x|`a> \redlmmt (\etamu) x $

 \item[$ t = `Lx.t' $]
$ `M`a.\Xlmmt[ \lmmtX (`Lx.t'){`a} ] \ByDef 
`M`a.\Xlmmt[ \lmmtX (L x . t'){`a} ] \ByDef 
`M`a.\Xlmmt[e x {\lmmtX (t'){`b}} `b . `a ] \eqeta (\IH) \\
`M`a.< `Lx.t' | `a > \redlmmt (\etamu) `Lx.t' $.

\Longer{
 \item[$ t = `M`b.c $]
$ `M`a.\Xlmmt[ \lmmtX (`M`b.c){`a} ] \ByDef 
`M`a.\Xlmmt[ \lmmtX (M `b . c){`a} ] \ByDef
`M`a.\Xlmmt[ \lmmtX ({c\tsubst[`b/`a]}) ] \eqeta (\IH) \\
`M`a.c \tsubst `b/`a =_{`a}
`M`b.c $
}

 \end{description}

 \item
 \begin{description}

 \pushitem[$ e = `a $]
$ \mt x.\Xlmmt[ \lmmtX (`a){x} ] \ByDef 
\mt x.\Xlmmt[ \lmmtX (N `a){x} ] \ByDef 
\mt x.<x|`a> \redlmmt (\etamt) `a $

 \item[$ e = t`.e' $]
$ \mt x.\Xlmmt[ \lmmtX (t`.e'){x} ] \ByDef 
\mt x.\Xlmmt[ \lmmtX (X t . e'){x} ] \ByDef 
\mt x.\Xlmmt[i {\lmmtX (t){`g} } `g x z { \lmmtX (e'){z} }] \eqeta (\IH) \\
\mt x.<x|t`.e'> 
\redlmmt (\etamt) ~ t`.e' $
\Longer{

 \item[$ e = \mt y.c $]
$ \mt x.\Xlmmt[ \lmmtX (\mt y.c){x} ] \ByDef 
\mt x.\Xlmmt[ \lmmtX (T y . c){x} ] \ByDef
\mt x.\Xlmmt[ \lmmtX ({c\tsubst[x/y]}) ] \eqeta(\IH) 
\mt x.c \tsubst x/y =_{`a} 
\mt y.c $
}
\QED

 \end{description}
 \end{enumerate}

 \observe
Notice that the only reduction steps needed here are $(\etamu)$ and $(\etamt)$, in both directions, so the compositions of encodings gives identity modulo $`h$-reduction, \emph{i.e.} extensional equality. 

 \end{proof}

\Comment{
 Of course this result does not come as a surprise, as 
 \[ \begin{array}{rclclclcl}
`M`a.\Xlmmt[{ \lmmtX (t){`a} }] &\ByDef& 
`M`a.\Xlmmt[{ \lmmtX (\cell<t|`a>) }] &\ByDef& 
`M`a.\Xlmmt[{ \caps<t.`a> }] &\ByDef& 
`M`a.<t|`a> &\reduc& t \\
\mt x.\Xlmmt[{ \lmmtX (e){x} }] &\ByDef& 
\mt x.\Xlmmt[{ \lmmtX (\cell<x|e>) }] &\ByDef& 
\mt x.\Xlmmt[{ \caps<x.e> }] &\ByDef& 
\mt x.<x|e> &\reduc& e 
 \end{array} \] 
}

Had we stuck to the interpretation $\lmmtOrg(.)$ as defined in \cite{Lengrand'03}, then reduction would have been involved.

 \begin{example}

 \begin{enumerate}
 \firstitem
$ \lmmtOrg (\Xlmmt[\caps<y,`b>]) = \lmmtOrg (\cell<y|`b>) = 
\lmmtOrg (C <y|`b>) = \cut \caps<y.`a> `a + x \caps<x.`b> \redXs \caps<y.`b> $

 \item
$ \lmmtOrg (\Xlmmt[{\exp y P `g . `b }]) = 
 \lmmtOrg ({}{ \Xlmmt[e y P `g . `b ] }) = 
\cut { \exp y { \lmmtOrg({`M`g.{\Xlmmt [P]} }){`d} } `d . `a } `a + x \caps<x.`b> \redXs 
\exp y { \lmmtOrg(`M`g.{\Xlmmt [P]}){`d} } `d . `b = \\
\exp y { \lmmtOrg (\Xlmmt[P]) } `g . `b \rtcredXs (\IH) ~ 
\exp y P `g . `b $

 \item
$ \lmmtOrg ({ \Xlmmt [{ \imp P `a [y] x Q }] }) = 
\lmmtOrg ({}{ \Xlmmt [i P `a y x Q ] }) = 
\cut \caps<y.b> `b + z { \imp \lmmtOrg (\Xlmmt [P]) `a [z] x \lmmtOrg (\Xlmmt [Q]) } \redXs \\
\imp \lmmtOrg (\Xlmmt [P]) `a [y] x \lmmtOrg (\Xlmmt [Q]) \rtcredXs (\IH) ~
\imp P `a [y] x Q $

 \end{enumerate}

 \observe
Notice that, in these cases, the rules $(\Cap)$, $(\Exp)$, and $(\Imp)$ are used.
 \end{example}

Now we look at $ `X \mapsto \lmmt \mapsto `X $, and show that $\Xlmmt[`.]$ is $ \lmmtX (`.) $'s right-inverse.

\Comment{
 \[ \begin{array}{c}
 \begin {array}[t]{rcl}
\lmmtX (\cell<t|e>) & = & \lmmtX (<t|e>) ~ (`a,x \textit{ fresh}) \\ 
 \end{array} 
 \\ 
 \begin {array}[t]{rcll}
\lmmtX (x){`a} & = & \lmmtX (V x){`a} \\
\lmmtX (`Lx.t){`a} & = & \lmmtX (L x . t){`a} ~ (`b \textit{ fresh}) \\ 
\lmmtX (`M`b.c){`a} & = & \lmmtX (M `b . c){`a} \\ 
 \end{array}
 \dquad 
 \begin {array}[t]{rcll}
\lmmtX (`a){x}	& = & \lmmtX (N `a){x} \\
\lmmtX (t`.e){x} & = & \lmmtX (X t.e){x} ~ (`g,z \textit{ fresh}) \\
\lmmtX (\mt y.c){x} & = & \lmmtX (T y . c){x}
 \end{array}
 \end {array} \]
}

 \begin{theorem} \label{identity}
$ \lmmtX ({ \Xlmmt[P] }) = P $.
 \end{theorem}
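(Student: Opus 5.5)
The proof is a straightforward structural induction on the net $P$, using only the definitions of $\Xlmmt[`.]$ (\Def of the translation of $\X$ into $\lmmt$) and of $\lmmtX(`.)$ (\Def\ref{from lmmt to x}). No reduction is involved, which is why the statement is an equality rather than an equivalence. The key point is that $\Xlmmt[`.]$ always lands in the sub-syntax of $\lmmt$ singled out by Curien and Herbelin, namely commands of the shapes $\cell<x|`a>$, $\cell<`lx.`m`a.c|`b>$, $\cell<y|`m`a.c`.\mt x.c'>$ and $\cell<`m`a.c|\mt x.c'>$. For each of these shapes exactly one relevant clause of $\lmmtX(`.)$ fires, and it undoes the translation.

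I take the four cases in turn.

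\begin{enumerate}
\item For a capsule $\caps<x.`a>$ we have $\Xlmmt[\caps<x.`a>] = \cell<x|`a>$. The first clause of \Def\ref{from lmmt to x} gives $\caps<x.`a>$ immediately. The overlapping clauses agree here, as already observed after that definition.

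\item For an export $\exp x P `a . `b$ the image is $\cell<`Lx.`M`a.\Xlmmt[P]|`b>$. The clause for $\cell<t|`b>$ yields $\lmmtX(`Lx.`M`a.\Xlmmt[P]){`b}$. This unfolds to $\exp x {\lmmtX(`M`a.\Xlmmt[P]){`d}} `d . `b$ with $`d$ fresh. The clause for $`M$ replaces it by $\lmmtX(\Xlmmt[P])\tsubst[`d/`a]$. By $`a$-conversion (choosing $`d=`a$) and the induction hypothesis this is $\exp x P `a . `b$.

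\item For an import $\imp P `a [y] x Q$ the image is $\cell<y|`M`a.\Xlmmt[P]`.\mt x.\Xlmmt[Q]>$. The clause for $\cell<y|e>$ gives $\lmmtX(e){y}$. The stack clause then produces $\imp {\lmmtX(`M`a.\Xlmmt[P]){`g}} `g [y] z {\lmmtX(\mt x.\Xlmmt[Q]){z}}$. The $`M$ and $\mt$ clauses rename $`a$ to $`g$ and $x$ to $z$. Renaming back by $`a$-conversion and applying the induction hypothesis to $P$ and $Q$ yields $\imp P `a [y] x Q$.

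\item For a cut $\cut P `a + x Q$ the image is $\cell<`M`a.\Xlmmt[P]|\mt x.\Xlmmt[Q]>$. The term is not a variable and the environment is not a name, so the general clause creates the cut $\cut {\lmmtX(`M`a.\Xlmmt[P]){`b}} `b + y {\lmmtX(\mt x.\Xlmmt[Q]){y}}$. As before, the renamings together with the induction hypothesis give the result.
\end{enumerate}

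Alongside the induction I will use the fact that $\lmmtX(`.)$ commutes with renaming of free connectors, that is, $\lmmtX(c\tsubst[`b/`a]) = \lmmtX(c)\tsubst[`b/`a]$ and similarly for variables. This is an easy induction on $c$.

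The main obstacle is the overlapping and order-sensitive clauses in \Def\ref{from lmmt to x}. I must check that in each case the intended clause is the one that applies, or that any competing clause gives the same net. The delicate spots are the following.
\begin{itemize}
\item In the import case the term $y$ sits against a stack, so the clause for $\cell<y|e>$ takes precedence over the general cut clause.
\item In the cut case, $`M`a.\Xlmmt[P]$ is never a variable and $\mt x.\Xlmmt[Q]$ is never a name, so no collapse to a capsule or to a single side occurs.
\end{itemize}
I would state this disambiguation explicitly before the case analysis, reading the clauses of \Def\ref{from lmmt to x} as tried top to bottom.
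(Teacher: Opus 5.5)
Your proof is correct and follows the paper's own argument: a structural induction on $P$ over the four cases capsule, export, import and cut. Each case is settled by unfolding the two translations, $\alpha$-converting the fresh connectors produced by the $\mu$- and $\tilde{\mu}$-clauses, and applying the induction hypothesis, with no reduction used. Reading the overlapping clauses as ordered and noting that the translation commutes with renaming only spell out what the paper leaves implicit in its $=_{\alpha}$ steps.
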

 \begin{proof} 
By induction.

 \begin{enumerate}
 \item
$ \lmmtX ({ \Xlmmt[\caps <y,`b>] }) \ByDef 
\lmmtX ({ \Xlmmt[<y.`b>] }) \ByDef 
\caps<y.`b> $

 \item
$ \begin{array}[t]{lclclclcl}
\lmmtX ({ \Xlmmt[{ \exp y P `g . `b }] }) &\ByDef& 
\lmmtX ({}{ \Xlmmt[e y P `g . `b] }) &\ByDef& 
\lmmtX (`ly.{`M`g.\Xlmmt[P]}){`b} &\ByDef& 
\exp y P `g . `b 
 \end{array} $

 \item
$ \begin{array}[t]{lclclcl}
\lmmtX ({ \Xlmmt[{ \imp P `a [y] x Q }] }) &\ByDef& 
\lmmtX (\cell<y|{`M`a.\Xlmmt[P] `. \mt x.\Xlmmt[Q]}>) &\ByDef& 
\lmmtX (X |`g,z| {M `a . \Xlmmt[P]} . {T x . \Xlmmt[Q]}){y} &=_{`a}& \\
\imp \lmmtX(\Xlmmt[P]) `a [y] x \lmmtX(\Xlmmt[Q]) &=& \bsp (\IH) 
\imp P `a [y] x Q 
 \end{array} $

 \item
$ \begin{array}[t]{lclclcl}
\lmmtX ({ \Xlmmt[{ \cut P `a + x Q }] }) ~\ByDef~
\lmmtX ({}{ \Xlmmt[c P `a + x Q ] }) ~\ByDef~ 
\cut { \lmmtX (`m`a.{ \Xlmmt[P] }){`b} } `b + z { \lmmtX (\mt x.{ \Xlmmt[Q] }){z} } ~\ByDef~ \\ 
\cut {\lmmtX ({\Xlmmt[P]\tsubst[`b/`a]})} `b + z {\lmmtX ({\Xlmmt[Q]\tsubst[z/x]})} ~=_{`a}~ 
\cut \lmmtX(\Xlmmt[P]) `a + x \lmmtX(\Xlmmt[Q]) ~=~ \bsp (\IH) 
\cut P `a + x Q 
 \end{array} $
\arrayQED

 \end{enumerate}
 
 \observe
Notice that reduction is not used here.
 \end{proof}


We will now show that reduction in $`X$ is respected by the interpretation $\Xlmmt[`.]$, for which, as suggested above, we need to extend $\lmmt$.
 
 \begin{definition}[Extended $\lmmt$] \label{lmmtE}
We define $\lmmtE$ by adding the rule:
 \[ \begin{array}{rrcl@{\quad}l}
(`l'): & \cell<`Ly.t | t'`.e > &\reduc& 
\cell <{`M`g . < t' | {\mt y.<t|`g> } > }|e> & (`g \textit{ fresh}) 
 \end{array} \]
 \end{definition}

We will first show that the two implicit substitutions of $\Xs$ are respected by the interpretation; we need to involve reduction for these results as well, but in \emph{both} directions.

We will write $\eqmmt$ for the equivalence relation on $\lmmtE$ generated by rule $(`m)$ and $(\MuT)$.
 
 \resetqsymbol{`Q}{ \Xlmmt[Q] }
 \resetqsymbol{`P}{ \Xlmmt[P] }
 \resetqsymbol{`R}{\Xlmmt[R]}
 \resetqsymbol{`S}{\Xlmmt[S]}

 \begin{lemma} \label{sub respected}
 \begin{enumerate}
 \firstitem
$ \Xlmmt[\cutL Q `a + x P ] \eqmmt {`Q} \tsubst[\mt x.`P/`a] $. 
 \item
$ \Xlmmt[\cutR P `a + x Q ] \eqmmt {`Q} \tsubst[`m`a.`P/x] $. 
 \end{enumerate}
 \end{lemma}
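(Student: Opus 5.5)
The proof goes by induction on the structure of $Q$, separately for the two parts. In each case I unfold the defining clause of right (resp. left) substitution on $\Xs$ and translate both sides with $\Xlmmt[`.]$. Recall that $\Xlmmt[`.]$ sends a capsule to $\cell<x|`a>$, an export $\exp y R `b . `g$ to $\cell<`Ly.`M`b.\Xlmmt[R]|`g>$, an import to $\cell<y|`M`b.\Xlmmt[R]`.\mt z.\Xlmmt[S]>$, and a cut to $\cell<`M`b.\Xlmmt[R]|\mt z.\Xlmmt[S]>$. Since $\eqmmt$ is by definition an equivalence closed under the contextual rules of $\lmmtE$, I may apply the induction hypothesis under any constructor, including inside environments. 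I may also use it in either direction, since the statement is an equality rather than a reduction. Throughout I use the standard fact that implicit substitution on $\lmmt$ commutes with the term constructors. Bound connectors are kept distinct from $`a$ and $x$ by Barendregt's convention.

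For part 1, the cases follow the rules for $\cutL Q `a + x P$.

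\begin{itemize}
\item Rule $(\deactL)$, $Q = \caps<y.`a>$. The left-hand side is $\Xlmmt[P]\tsubst[y/x]$. The right-hand side is $\cell<y|\mt x.`P>$, which reduces by one $(\MuT)$-step to $`P\tsubst[y/x]$.
\item Rule $(\Li)$. Here $`a$ does not occur in $Q$, so both sides are $`Q$ and the case is trivial.
\item Rule $(\Lii)$, $Q = \exp y R `b . `a$. The left-hand side translates to
\[ \cell<`M`g.\cell<`Ly.`M`b.\Xlmmt[\cutL R `a + x P]|`g> | \mt x.`P>. \]
One $(`m)$-step over the fresh $`g$ turns this into $\cell<`Ly.`M`b.\Xlmmt[\cutL R `a + x P]|\mt x.`P>$. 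Applying the induction hypothesis to $R$ then gives $\cell<`Ly.`M`b.\Xlmmt[R]\tsubst[\mt x.`P/`a]|\mt x.`P>$, and this is exactly $\Xlmmt[Q]\tsubst[\mt x.`P/`a]$.
\item Rules $(\Liii)$, $(\Liv)$, $(\Lv)$. These are purely structural: the substitution is pushed into the immediate subnets. Both sides then agree by the induction hypothesis applied under the corresponding $\lmmt$ constructor.
\end{itemize}

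Part 2 is dual.

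\begin{itemize}
\item Rule $(\deactR)$, $Q = \caps<x.`b>$. The right-hand side is $\cell<`M`a.`P|`b>$, which reduces by a $(`m)$-step to $`P\tsubst[`b/`a]$, the translation of $P\tsubst[`b/`a]$.
\item Rule $(\Ri)$. Trivial, as $x$ does not occur in $Q$.
\item Rule $(\Riii)$, $Q = \imp R `b [x] y S$. The left-hand side translates to
\[ \cell<`M`a.`P|\mt z.\cell<z|`M`b.\Xlmmt[\cutR P `a + x R]`.\mt y.\Xlmmt[\cutR P `a + x S]>>. \]
One $(\MuT)$-step over $z$ gives $\cell<`M`a.`P|`M`b.\ldots`.\mt y.\ldots>$. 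The induction hypothesis on $R$ and $S$ then turns this into $\Xlmmt[Q]\tsubst[`m`a.`P/x]$.
\item Rules $(\Rii)$, $(\Riv)$, $(\Rv)$. Structural, handled by the induction hypothesis as in part 1.
\end{itemize}

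I expect the main obstacle to be bookkeeping rather than any single case. The steps to watch are the renaming conventions (the $\XOut{P}{`a}$ and $\XIn{x}{Q}$ abbreviations, and the fresh $`g$ in $(\Lii)$ and $z$ in $(\Riii)$). I also need to check that the one-step $(`m)$/$(\MuT)$ contractions in the $(\Lii)$ and $(\Riii)$ cases are plain, unrestricted steps. They go in the forward direction on one side while the induction hypothesis may be used backward elsewhere, which is why the lemma is stated with the symmetric $\eqmmt$ rather than with a reduction. Rule $(`l')$ of $\lmmtE$ should play no role here, since no $(\Ins)$-step occurs inside a substitution.
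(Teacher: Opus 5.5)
Your proposal is correct and follows the paper's proof essentially step for step. It proceeds by induction on the net $Q$. The capsule cases are closed by a single $(\tilde\mu)$ (resp.\ $(\mu)$) step used as an expansion. The export case with the substituted plug (resp.\ the import case on the substituted socket) takes one $(\mu)$ (resp.\ $(\tilde\mu)$) contraction over the fresh connector, followed by the induction hypothesis. All remaining cases follow from the induction hypothesis applied under the contextual closure of the equivalence; this is why the statement uses the symmetric relation rather than reduction.
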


 \begin{proof}{By induction on the structure of nets.}{\Lmm \ref{sub respected}}
 \begin{enumerate} \itemsep3pt 
 \firstitem
By induction on the structure of terms in $`X$\Shorter{; we only show the interesting cases}.

 \begin{description}
 
 \item [$ Q = \caps<y.`b> $] 
We have two cases:
 \begin{description}
 \myitem[$`a = `b$]
\Xlmmt[ \cutL \caps<y.`a> `a + x P ] &=& 
\Xlmmt[P \tsubst y/x ] ~=~ 
`P \tsubst[y/x] &\lmmtder (x) & 
\cell<y|\mt x.`P> &=& \\
\cell<y|`a> \tsubst[\mt x.`P/`a] &\ByDef& 
\Xlmmt[\caps<y.`a>] \tsubst[\mt x.`P/`a] 
 \end{array} $

 \myitem[$`a \not= `b$]
\Xlmmt[ \cutL \caps<y.`b> `a + x P ] &=& 
\Xlmmt[\caps<y.`b>] &=& 
\Xlmmt[\caps<y.`b>] \tsubst[\mt x.`P/`a] 
 \end{array} $

 \end{description}

\newqsymbol{`Z}{R\tsubst[`d/`a]}
 \item [$ Q = \exp y R `b . `g $] We have two cases:
 \begin{description}
 \myitem[$`g = `a$]
\Xlmmt[\cutL {\exp y R `b . `a } `a + x P ] &=& 
\Xlmmt[ \cut { \exp y { \cutL R `a + x P } `b . `d } `d + x P ] &\ByDef& \\
\Xlmmt[c {e y { \cutL R `a + x P } `b . `d } `d + x P ] &\redlmmt& \bsp (`d) 
\cell < `ly.`m`b.\Xlmmt[\cutL R `a + x P ] | \mt x.`P > &\eqmmt& \bsp (\IH) \\
\cell < `ly.`m`b.\Xlmmt[R]\tsubst[\mt x.`P/`a] | \mt x.`P > &=& 
\Xlmmt[e y R `b . `a ] \tsubst[\mt x.`P/`a] &\ByDef& \\
\Xlmmt[\exp y R `b . `a ] \tsubst[\mt x.`P/`a] 
 \end{array} $

 \myitem[$`g \not= `a$]
\Xlmmt[\cutL {\exp y R `b . `g } `a + x P ] &\ByDef& 
\Xlmmt[ \exp y { \cutL R `a + x P } `b . `g ] ~\ByDef~ 
\Xlmmt[e y { \cutL R `a + x P } `b . `g ] \\ 
\eqmmt (\IH) ~
\cell<`ly.`m`b.\Xlmmt[R]\tsubst[\mt x.`P/`a] | `g > &\ByDef& 
	\Xlmmt[e y R `b . `g] \tsubst[\mt x.`P/`a]
 \end{array} $

 \end{description}

\Longer{
 \myitem[{$ Q = \imp Q `b [z] y R $}] 
\Xlmmt[\cutL {\imp Q `b [z] y R } `a + x P ] &\ByDef& 
\Xlmmt[{ \imp { \cutL Q `a + x P } `b [z] y { \cutL R `a + x P } }] &\ByDef& \\
\Xlmmt[i { \cutL Q `a + x P } `b z y { \cutL R `a + x P }] &\eqmmt& \bsp (\IH) 
\Span{3}{
	\cell < z | `m`b.\Xlmmt[Q] \tsubst[\mt x.`P/`a] `. \mt y . \Xlmmt[R] \tsubst[\mt x.`P/`a] > 
	} \\ 
= ~ \cell < z | `m`b.\Xlmmt[Q] `. \mt y . \Xlmmt[R] > \tsubst \mt x.`P/`a &\ByDef& 
\Xlmmt[{\imp Q `b [z] y R }] \tsubst[\mt x.`P/`a]
 \end{array} $

 \myitem[{$ Q = \cut Q `b + y R $}] 
\Xlmmt[\cutL {\cut Q `b + y R } `a + x P ] &\ByDef& 
\Xlmmt[{ \cut { \cutL Q `a + x P } `b + y { \cutL R `a + x P } }] &\ByDef& \\
\Xlmmt[c { \cutL Q `a + x P } `b + y { \cutL R `a + x P }] &\eqmmt& \bsp (\IH) 
\cell < `m`b.\Xlmmt[Q] \tsubst \mt x.`P/`a | \mt y . \Xlmmt[R] \tsubst \mt x.`P/`a > &=& \\
\cell < `m`b.\Xlmmt[Q] | \mt y . \Xlmmt[R] > \tsubst \mt x.`P/`a &\ByDef& 
\Xlmmt[\cut Q `b + y R ] \tsubst[\mt x.`P/`a]
 \end{array} $
}

 \end{description}
\Shorter{The other cases follow, as the last one, by induction.}

 \item
By induction on the structure of terms\Shorter{; we only show the interesting cases}.
 \begin{description}
 \item [$ Q = \caps<y.`b> $] 
We have two cases:
 \begin{description}
 \myitem[$y = x$]
\Xlmmt[ \cutR P `a + x \caps<x.`b> ] &\ByDef& 
\Xlmmt[P \tsubst `b/`a ] ~=~ 
`P \tsubst[`b/`a] &\lmmtder (`a) & 
\cell<`M `a.{`P} | `b > &=& \\
\cell<x|`b> \tsubst[`m`a.`P/x] &\ByDef& 
\Xlmmt[\caps<x.`b>] \tsubst[`m`a.`P/x] 
 \end{array} $

 \myitem[$y \not= x$]
\Xlmmt[ \cutR P `a + x \caps<y.`b> ] &\ByDef& 
\Xlmmt[\caps<y.`b>] &\ByDef& 
\Xlmmt[\caps<y.`b>] \tsubst[`m`a.`P/x] 
 \end{array} $
 
 \end{description}

 \myitem[{$Q = \imp R `b [x] y S $}]
\Xlmmt[{\cutR P `a + x { \imp R `b [x] y S } }] &\ByDef& \\
\Xlmmt[\cut P `a + z {\imp {\cutR P `a + x R } `b [z] y {\cutR P `a + x S } } ] &\ByDef& \\ 
\Xlmmt[c P `a + z {i {\cutR P `a + x R } `b z y {\cutR P `a + x S }}] &\redlmmt& \bsp (z) \\
\cell < `m`a.`P | `M`b.\Xlmmt[{\cutR P `a + x R }] `.~ \mt y.\Xlmmt[{\cutR P `a + x S }] > &\eqmmt& \bsp (\IH) \\
\cell < `m`a.`P | `M`b.{ `R \tsubst[`m`a.`P/x] } `.~ \mt y.`S{\tsubst `m`a.`P/x } > &=& \\
\Xlmmt[i R `b x y S ] \tsubst[`m`a.`P/x] &=& 
\Xlmmt[{\imp R `b [x] y S }] \tsubst[`m`a.`P/x] 
 \end{array} $
\Longer{\arrayQED}

 \end{description}
\Shorter{The other cases follow by induction.\QED}

 \end{enumerate}

 \observe
Notice that in the first part reduction is limited to two steps, in both directions; $(\mt)$ in the first, and $(`m)$ in the second part.
The latter is not a valid step in \CBN.
In the second part reduction is limited to two steps, using rules $(\MuT)$ and $(`m)$, but in opposite direction with respect to the previous proof.
Notice that the $(\mt)$ step (over $z$) in the last part is not allowed in \CBV.

 \end{proof}

We can now show that the interpretation preserves reduction.

 \begin{theorem} \label{redX respected}
If $ P \redX Q $, then $ `P \eqlmmt {`Q} $.
 \end{theorem}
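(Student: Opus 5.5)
We prove the statement by induction on the definition of $\redX$ (here $\redXs$), treating first the root steps and then the contextual closure. For the root steps there are three groups: the logical rules $(\Cap)$, $(\Exp)$, $(\Imp)$, $(\Ins)$; the activation rules $(\actL)$ and $(\actR)$; and the two variants of $(\Ins)$. Throughout we work in $\lmmtE$, because one variant of $(\Ins)$ needs rule $(`l')$, and we establish equality $\eqlmmt$ by showing a common reduct or a zig-zag of steps.

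For the logical rules we unfold $\Xlmmt[`.]$ and contract the outer cut. For $(\Cap)$, $\Xlmmt[\cut \caps<y.`a> `a + x \caps<x.`b> ] = \cell<`M`a.<y|`a> | \mt x.<x|`b>>$. A $(`m)$ step, or either $\eta$ step, yields $\cell<y|`b>$. For $(\Exp)$ we have $\cell<`M`a.<`Ly.`M`b.`P|`a> | \mt x.<x|`g>>$, and a $(`m)$ step over $`a$ gives $\cell<`Ly.`M`b.`P|\mt x.<x|`g>>$. A further $(\MuT)$ step gives $\cell<`Ly.`M`b.`P|`g>$, which is the image of $\exp y P `b . `g $. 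The case $(\Imp)$ is symmetric, using $(\MuT)$ first and then the substitution of $y$ for $x$ in $\cell<x|`M`g.`Q`.\mt z.`R>$. For $(\Ins)$ the image of the redex is, after one $(`m)$ step and one $(\MuT)$ step, $\cell<`Ly.`M`b.`P | `M`g.`Q `. \mt z.`R>$.
\begin{itemize}
\item Rule $(`l)$ turns this into $\cell<`M`g.`Q | \mt y.<`M`b.`P|\mt z.`R>>$, which is exactly $\Xlmmt[\cut Q `g + y {\cut P `b + z R }]$.
\item Rule $(`l')$ gives $\cell<`M`d.<`M`g.`Q|\mt y.<`M`b.`P|`d>>|\mt z.`R>$. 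This equals the image of $\cut{\cut Q `g + y P } `b + z R $ up to $\alpha$-renaming of $`d$ to $`b$ via a $(`m)$ step (a renaming cut); we check that a single step in one direction suffices.
\end{itemize}

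For the activation rules, $(\actL)$ sends $\cut P `a + x Q $ to $\cutL P `a + x Q $. On the $\lmmt$ side, $\cell<`M`a.`P|\mt x.`Q>$ reduces by $(`m)$ to $`P\tsubst[\mt x.`Q/`a]$, and by \Lmm\ref{sub respected}(1) this is $\eqmmt$ to $\Xlmmt[\cutL P `a + x Q ]$. Dually, $(\actR)$ reduces by $(\MuT)$ to $`Q\tsubst[`M`a.`P/x]$, and we apply \Lmm\ref{sub respected}(2). Since $\eqmmt$ is contained in $\eqlmmt$, these cases close.

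The contextual cases follow from the induction hypothesis, since $\Xlmmt[`.]$ is compositional and $\eqlmmt$ is a congruence. If $P\redX P'$ inside $\exp y P `b . `a $, we obtain $\cell<`Ly.`M`b.`P|`a> \eqlmmt \cell<`Ly.`M`b.`P'|`a>$, and the other constructors work the same way.

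The main obstacle is the second variant of $(\Ins)$. The bound name introduced by $(`l')$ and the term shapes have to match the translation exactly, possibly only modulo one extra renaming $(`m)$-step, and that is why $\lmmtE$ is needed rather than $\lmmt$. A secondary subtlety is that the logical rules require the connectors to be introduced; this fixes the shapes used above, so no substitution lemma is needed there.
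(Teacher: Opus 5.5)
Your proof is correct and follows the paper's proof essentially step for step. You unfold the translation and contract the outer cut with one $(\mu)$ and one $(\tilde\mu)$ step for $(\Cap)$, $(\Exp)$, $(\Imp)$ and $(\Ins)$, close the two $(\Ins)$ variants with $(\lambda)$, respectively with $(\lambda')$ followed by one renaming $(\mu)$-step over $\beta$ and $\alpha$-conversion, and treat $(\actL)$ and $(\actR)$ via Lemma~\ref{sub respected}; the contextual cases you add are left implicit in the paper. One small slip: in the $(\Cap)$ case a single $(\mu)$ step only yields $\langle y \,|\, \tilde\mu x.\langle x|\beta\rangle\rangle$, so you need a further $(\tilde\mu)$ step (or both $\eta$-steps together, not either one alone) to reach $\langle y|\beta\rangle$, which is harmless since only equality is claimed.
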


 \begin{proof} 
 \begin{description}
 \prooffirstmyitem [$\Cap$]
\Xlmmt[\cut \caps<y.`a> `a + x \caps<x.`b> ] \ByDef 
\Xlmmt[c {<y.`a>} `a + x {<x.`b>}]	\reduc (`m) 
\cell <y | {\mt x . < x | `b > } > 	 \reduc (\MuT) 
\Xlmmt[<y.`b>] ~ \ByDef ~ 
\Xlmmt[\caps<y.`b>]
 \end{array} $

 \myitem[$\Exp$]
\Xlmmt[ \cut{ \exp y P `b . `a } `a + x { \caps<x.`g> } ] &\ByDef& 
\Xlmmt[c { e y P `b . `a } `a + x {<x.`g>}] 
	&\reduc& \bsp (`m) \\
\Xlmmt[e y P `b . {\mt x . < x | `g > } ] 
	&\reduc& \bsp (\MuT) 
\Xlmmt[e y P `b . `g ] &\ByDef& 
\Xlmmt[ \exp y P `b . `g ]
 \end{array} $

 \myitem[$\Imp$]
\Xlmmt[ \cut { \caps<y.`a> } `a + x { \imp Q `b [x] z R } ] &\ByDef& 
\Xlmmt[c {<y.`a>} `a + x {i Q `b x z R}]
	&\reduc& \bsp (`m) \\
\cell< y | \mt x . \Xlmmt[i Q `b x z R] > 
	&\reduc& \bsp (\MuT) 
\Xlmmt[i Q `b y z R] 
	&\ByDef&
\Xlmmt[{ \imp Q `b [y] z R }]
 \end{array} $

 \myitem[$\Ins$]
\Xlmmt[{\cut { \exp y P `b . `a } `a + x { \imp Q `g [x] z R } }] ~\ByDef~ 
\Xlmmt[c {e y P `b . `a } `a + x {i Q `g x z R}] \reduc (`m) \\
\Xlmmt[e y P `b . { \mt x . \Xlmmt[i Q `g x z R] } ] ~\reduc (\MuT) ~ 
\cell < `ly.`m`b.\Xlmmt[P] | `M`g.\Xlmmt[Q] `. \mt z . {`R} > \reduc (`l) \\
\Xlmmt[c Q `g + y {c P `b + z R}] ~\ByDef~
\Xlmmt[ \cut Q `g + y { \cut P `b + z R } ]
 \end{array} $

\def \backleft {\kern-14mm}

 \item[$\Ins$] 
 $ \begin{array}[t]{lclcl}
 \multicolumn{3}{l}{
\Xlmmt[{\cut { \exp y P `b . `a } `a + x { \imp Q `g [x] z R } }] 
	~\ByDef~ 
\Xlmmt[c {e y P `b . `a } `a + x {i Q `g x z R}] } 
	 &\reduc& \bsp (`m) \\
\backleft
\Xlmmt[e y P `b . { \mt x . \Xlmmt[i Q `g x z R]}]
	&\reduc& \bsp (\MuT) 
\cell < `ly.`m`b.`P | `m`g.`Q `. \mt z . `R > 
	&\reduc& \bsp (`l') \\
\backleft
\cell <{`M`d . < `m`g.`Q | {\mt y.< `m`b.`P |`d> } > } | \mt z . `R > 
	&\reduc& \bsp (`m) 
\cell <{`M`d . < `m`g.`Q | {\mt y. `P \tsubst[`d/`b] } > }| \mt z . `R > 
	&=_{`a}& \\
\backleft
\Xlmmt[c {c Q `g + y P } `b + z R ] 
	&\ByDef& 
\Xlmmt[ \cut { \cut Q `g + y P } `b + z R ]
 \end{array} $

 \myitem[$\actL$] 
\Xlmmt[ \cut P `a + x Q ] &\ByDef& 
\Xlmmt[c P `a + x Q ] &\reduc& \bsp (`m) 
`P \tsubst[\mt x.`Q/`a] &\eqmmt& \bsp (\ref{sub respected}) 
\Xlmmt[\cutL P `a + x Q ]
 \end{array} $

 \myitem[$\actR$] 
\Xlmmt[ \cut P `a + x Q ] &\ByDef& 
\Xlmmt[c P `a + x Q ] &\reduc& \bsp (\MuT) 
`Q \tsubst[`m`a.`P/x] &\eqmmt& \bsp (\ref{sub respected}) 
\Xlmmt[\cutR P `a + x Q ] && \qed
 \end{array} $

 \end{description}
 \observe
Notice that in the first four cases we can swap the $(`m)$ and $(\MuT)$ reduction steps.

 \end{proof}

Since non-\CBN/\CBV{} reduction is used in the simulation of substitution (\Lmm\ref{sub respected}), we cannot show a similar result for the {\CBN} and {\CBV} reduction strategies or even sub-reduction systems. 
Also, the $(`m)$-reduction step in the second $(\Ins)$ case (over $`b$) takes place in the environment, which would not be allowed in either {\CBN} or {\CBV}, which strengthens our choice to exclude the second alternative of rule $(\Ins)$ for both those strategies on $`X$.

This means that, even when changing the active cuts of $\X$ into substitution, and the strong relationship we have established between $\lmmt$ and $`X$, these calculi are fundamentally different.
The absence of implicit variables and names gives $`X$ a more direct control over cut-elimination, and mapping $`X$'s substitution onto $\lmmt$ creates additional $(`m)$ and $(\MuT)$ redexes that do not adhere to either {\CBN} or {\CBV} reduction or strategies.

\Comment{
}


\setbox112=\hbox{\large $\lmmt$ in $`X$}
\section{Interpreting {\slmuPDF} in {\XiCalcPDF}}
 \label{slmu in X}

In this section we will, encouraged by the results we achieved in the previous sections, study an interpretation of $\slmu$ into $\X$.
In view of the detailed study in \Sect \ref{nclmu in lmmt}, and the observations we made of the impact of the fact that $\slmu$ adds application had on the proofs and the exact formulation of the result, it will be clear that we could do the same here, and focus on {\CBN} and \CBV.
However, the restrictions on the results will be very similar (using notions like $\crlmmtNn$), and we will therefore concentrate on full reduction $\redlmmt$, $\redlmmtn$, and $\redlmmtv$.

We started investigating if we can improve on the results shown in \cite{Bakel-Lescanne-MSCS'08}.
That paper defined the following encoding of $\lmu$ into $\X$:

 \def \leftSem {\LeftTop}
 \def \rightSem {\Rightbot}
 \def \SemEl#1{\mathord{\LeftTop{#1}\RightBot}}

 \[ \begin {array}{rcl}
\SemLmu{x}{`a} & \ByDef & \caps<x,`a>
	\\
\SemLmu{`lx.M}{`a} & \ByDef & \exp x \SemLmu{M}{`b} `b . `a 
	\\
\SemLmu{MN}{`a} & \ByDef & \cut \SemLmu{M}{`g} `g + x { \imp \SemLmu{N}{`b} `b [x] y \caps<y,`a> }
	\\
\SemLmu{`m`d.[`g]M}{`a} & \ByDef & \cut \SemLmu{M}{`g} `d + x \caps<x,`a>
 \end {array} \]
which is an extension of an interpretation of the $`l$-calculus, by adding the last alternative.
The way the application is dealt with is inspired by how the natural deduction $(\arrE)$ rule gets translated into the sequent calculus.

That paper goes on to show:

 \begin{lemma} [\cite{Bakel-Lescanne-MSCS'08}] \label{mu substitution cr}
 \begin{enumerate}
 \firstitem $ \SemLmu{M}{`d} \innercutL{`d}{x} {( \imp \SemLmu{N}{`b} `b [x] y \caps<y,`g> )} \crX \SemLmu{M [ N{`.}`g / `d ] N }{`g}$.
 \item $ \SemLmu{M}{`n} \innercutL{`d}{x} {( \imp \SemLmu{N}{`b} `b [x] y \caps<y,`g> )} \crX \SemLmu{M [ N{`.}`g / `d ] }{`n}$, if $`d \not= `n$.
 \end{enumerate}
 \end{lemma}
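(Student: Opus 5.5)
The plan is a simultaneous induction on the structure of $M$, proving parts 1 and 2 together. In each case the left-hand side is computed with the substitution rules $(\deactL)$ and $(\Li)$--$(\Lv)$. These are equalities ($\substo$), not reduction steps. Only the right-hand side should then need to be reduced towards a common term. The tools are the logical rules $(\Cap)$, $(\Imp)$, $(\Exp)$ of $\Xs$ and \Lmm\ref{X renaming lemma}, which says that a cut against a capsule reduces to a renaming. Where the induction hypothesis is used on several immediate subterms at once, the witnesses of $\crX$ are combined componentwise in the surrounding constructor. This is sound because $\rtcredXs$ is compatible.

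\emph{Variable, abstraction and application.} For $M=x$, in part 2 ($`n\neq`d$) the capsule is untouched by $(\Li)$. In part 1 the capsule $\caps<x,`d>$ introduces $`d$, so $(\deactL)$ gives $\imp \SemLmu{N}{`b} `b [x] y \caps<y,`g>$. On the other side, $\SemLmu{xN}{`g}$ is a cut of $\caps<x,`g'>$ against this import, and one $(\Imp)$ step reduces it to the same net. For $M=`lz.P$ in part 2 we use $(\Liii)$ and the induction hypothesis. In part 1 we use $(\Lii)$, which produces exactly a cut of an export against the import. Up to the induction hypothesis on $P$, this is the translation of $(`lz.P\mursubst[N.`g/`d])N$. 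For $M=PQ$, rule $(\Lv)$ and then $(\Liv)$ distribute the substitution over the cut and the import; the induction hypothesis applies to $P$ (part 2, output $`g'$) and to $Q$ (part 2). Part 2 then finishes directly. In part 1 the capsule $\caps<y,`d>$ at the end of the import becomes $\imp \SemLmu{N}{`b} `b [y] y' \caps<y',`g>$. The translation of $(P'Q')N$ instead contains the inner application output through an extra cut against $\caps<y,`g''>$. That cut is removed by a renaming reduction (\Lmm\ref{X renaming lemma}, or $(\Imp)$), after which the two sides coincide.

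\emph{The $`m$-case, $M=`m`s.[`r]P$.} Here $\SemLmu{M}{`n}=\cut \SemLmu{P}{`r} `s + x \caps<x,`n>$, and I expect this case to be the main obstacle. If $`r=`d$, the substitution reaches $\SemLmu{P}{`d}$ with output exactly $`d$, so part 1 of the induction hypothesis applies. It yields $\SemLmu{P\mursubst[N.`g/`d]\,N}{`g}$, which is precisely the premise of the translation of $`m`s.[`g](P\mursubst[N.`g/`d]\,N)$. This is how the application added by structural substitution (\Rem\ref{added application}) is absorbed. If $`r\neq`d$, part 2 applies instead. For part 1 of the lemma (outer output $`d$), the trailing capsule $\caps<x,`d>$ additionally turns into an import via $(\deactL)$. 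The right-hand side $\SemLmu{(`m`s.[`r]P')N}{`g}$ contains the nested cut $\cut{\cut \SemLmu{P'}{`r} `s + z \caps<z,`g'>} `g' + x {\imp\cdots}$. Its inner cut against a capsule reduces by \Lmm\ref{X renaming lemma} to $\SemLmu{P'}{`r}$ with $`s$ renamed to $`g'$, and $\alpha$-conversion then matches the left-hand side.

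The delicate point is that $\crX$ is not transitive. I will therefore arrange that in every case all genuine reductions are on the right-hand side (or inside the induction hypotheses), never chained through an intermediate term. I will also check that each renaming cut is applicable: a capsule always introduces its connector, so $(\Cap)$, $(\Imp)$ or $(\actL)$ followed by \Lmm\ref{X renaming lemma} always fires.
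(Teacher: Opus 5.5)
Your proof is correct. The paper does not prove this lemma itself; it quotes it from \cite{Bakel-Lescanne-MSCS'08}. Your argument is a simultaneous induction on $M$, split on whether the output name is $\delta$, and this is the same scheme the paper uses for its own analogues, Lemmas~\ref{interpretation respects mur substitution CBV} and~\ref{lmuX mu right substitution}.

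Two small remarks.

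First, in the application case of part~1 the extra cut over $\gamma''$ is not a cut against a capsule, so Lemma~\ref{X renaming lemma} does not literally apply. Its left net is itself a cut, so it does not introduce $\gamma''$, and the required step is $(\actL)$. The resulting substitution then passes through the translations of $P'$ and $Q'$ unchanged by $(\Li)$, since $\gamma''$ is fresh. It is absorbed at the capsule $\langle y.\gamma''\rangle$ by $(\deactL)$, which is the $(\Imp)$-like renaming you had in mind.

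Second, in $\Xs$ your bookkeeping shows more than joinability. By induction, in every case the right-hand side reduces to the left-hand side, so you get $\rtcredXs$ from right to left. This is exactly the one-directional form in which the paper states Lemma~\ref{lmuX mu right substitution}, and it removes any worry about $\crX$ not being transitive. In the explicit $\X$ of the cited paper, the propagation steps $(\Li)$--$(\Lv)$ are genuine reductions on the left-hand side rather than equalities. That is harmless for the statement as given, since $\crX$ allows reduction on both sides.
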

from which it shows:
 \begin {theorem} [Simulation of {\CBN} for $`l`m$ \cite{Bakel-Lescanne-MSCS'08}] 
 \label{simul:lm:cbn cr}
If $M \redCBN N $ then $\SemLmu{M}{`a} \crXCBN \SemLmu{N}{`a} $.
 \end {theorem}

Trying to make this interpretation also work with $(\mul)$ and left-structural substitution proved impossible:

 \[ \begin{array}{lcl}
\Sem{`@ V (`M`a.[`a]N)}_{`d} & \ByDef & 
\cut \Sem{V}_{`g} `g + x { \imp 
	{\cut \Sem{N}_{`a} `a + z \caps<z,`b> } 
 `b [x] y \caps<y,`d> }
 \end{array} \]
Notice that both $`g$ and $x$ are introduced here, so we could only apply a logical rule, and would not be able to express left-structural substitution.

So our attention shifted to combining the interpretation $\lmulmmt[`.]$ and $\lmmtX(`.)$, which results in: 

 \[ \begin {array}{rclclcl}
\lmmtX(\lmulmmt[x]){`d} &\ByDef& 
\lmmtX({}{\lmulmmt[v x]}){`d} ~\ByDef~ 
\slmuX (V x){`d} 
	\\
\lmmtX(\lmulmmt[`lx.t]){`d} &\ByDef& 
\lmmtX({}{\lmulmmt[l x . t]}){`d} ~\ByDef~ 
\exp x { \lmmtX(\lmulmmt[t])`b } `b . `d 
	\\
\lmmtX(\lmulmmt[PQ]){`d} &\ByDef& 
\lmmtX({}{\lmulmmt[a P Q]}){`d} \\ &\ByDef& 
\cut { \lmmtX(\lmulmmt[P]){`g} } `g + x { \cut {\lmmtX(\lmulmmt[Q]){`b} } `b + y { \cut \caps <x,`h> `h + z { \imp \caps<y,`v> `v [z] v \caps<v,`d> } } } 
	\\
\lmmtX(\lmulmmt[{`M`b.[`g]P}])`d &\ByDef&  
\lmmtX({}{\lmulmmt[m `b . `g P]}){`d} 
	~\ByDef~
\lmmtX(\lmulmmt[P]){`g} \tsubst [`d/`b] 
 \end {array} \]

It might be tempting to improve on this in the case for application, and pre-contract some of the cuts.
But that would produce:
 \[ \begin{array}{lcl}
\cut { \lmmtX(\lmulmmt[P]){`g} } `g + x { \cut {\lmmtX(\lmulmmt[Q]){`b} } `b + y { \cut \caps <x,`e> `e + e { \imp \caps<y,`v> `v [e] v \caps<v,`d> } } } 
	&\redXs& \\
\cut { \lmmtX(\lmulmmt[P]){`g} } `g + x { \cut {\lmmtX(\lmulmmt[Q]){`b} } `b + y { \imp \caps<y,`v> `v [x] v \caps<v,`d> } } 
	&\redXs& \\
\cut { \lmmtX(\lmulmmt[P]){`g} } `g + x { \imp {\lmmtX(\lmulmmt[Q]){`b} } `b [x] v \caps<v,`d> } 
 \end{array} \]
where we end up with exactly the interpretation of \cite{Bakel-Lescanne-MSCS'08}; as suggested above, in the proofs we need the fact that $x$ is not introduced, and that is the case for the second term.
When we interpreted $\slmu$ into $\lmmt$ in \Sect \ref{nclmu in lmmt}, we could also have optimised the interpretation of application,
 \[ \begin{array}{c@{~}c@{~}c@{~}c@{~}c@{~}c@{~}c}
\lmulmmt[`@ P Q ] & \ByDef &
\lmulmmt[a P Q] & \reduc & 
\lmulmmt[b P Q] & \reduc &
`M`a . < `P | `Q `. `a > 
 \end{array} \] 
but we have shown through the proofs of that section that we need the additional cuts, and, more than anything, the \emph{choice} to run them.

Moreover, notice that the combination of interpretations yields a different encoding for $`M`b.[`g]P$ under $`d$ from the one used in \cite{Bakel-Lescanne-MSCS'08}.
We have a choice between
$ \Sem[P]_{`g} \tsubst[`d/`b] $, $ \cut \Sem{P}_{`g} `b + x \caps<x,`d> $, $ \cut { \Sem[P]_{`g} \tsubst[`d/`b]} `b + x \caps<x,`d> $ or even 
$ \cut { \Sem[{P \tsubst [`d/`b]}]_{`g} } `b + x \caps<x,`d> $, all semantically equivalent, and each with their own advantages and disadvantages.
For example, using the former would give us
 \[ \begin{array}{lcl}
\Sem{`@ (`M`a.[`a]P) Q }{`d} & \ByDef & 
\cut { \Sem[P]_{`a} \tsubst[`g/`a] } `g + x { \cut {\Sem[Q]_{`b} } `b + y { \cut \caps <x,`e> `e + e { \imp \caps<y,`v> `v [e] v \caps<v,`d> } } } 
 \end{array} \]
where $`g$ might be introduced in $ \Sem[P]_{`a} \tsubst[`g/`a] $ or not, affecting which reduction rules can be applied, complicating the structure of the proofs.
It is not in $ \cut \Sem{P}_{`a} `a + x \caps<x,`g> $, which seems the better choice; as in $\lmulmmt[{`M`a.[`a]P}] \ByDef \lmulmmt[m `a . `a P]$, we need the name $`a$ to be accessible. 
So we cannot just take the composition of the two interpretations $\lmulmmt[`.]{`.}$ and $\lmmtX(`.){`.} $, but have to tweak that.
This led us to:

 \begin {definition}[Translation of $\slmu$ into $\X$] \label{from slmu to x}
 \[ \begin {array}{rcll}
\slmuX (x)`d &\ByDef& \slmuX (V x)`d \\
\slmuX (`lx.P)`d &\ByDef& \slmuX (L x . P)`d \\ 
\slmuX (PQ)`d &\ByDef& \slmuX (b |`t,t| P Q)`d \\
\slmuX (`M`b.[`g]P)`d &\ByDef& \slmuX (m `b . `g P)`d 
 \end {array} \]
 
 \end {definition}

The term $ \slmuX (m `b . `g P)`d $ is of course reducible to $ \slmuX(P\tsubst[`d/`b])`g $, but, as in other situations, we need the extra cuts for expressivity. 
Notice that in the former, $`d$ is not introduced, whereas it is not clear if it is in the latter.

 \begin{example} \label{reduce in continuation}
Although under the reduction strategies we consider here, we cannot reduce in the continuation, sometimes we can simulate it:
 \[ \begin{array}{lclcl}
\slmuX (b |`t,t| P Q)`d 
	&\redXsN& \bsp (\actR) 
\subL { \slmuX (P)`t } `t + t { \cut { \slmuX (Q)`a } `a + a {\imp \caps<a,`v> `v [t] v \caps<v,`g> } }
	&\substo& \\
\cut { \slmuX (Q)`a } `a + a { \cut { \slmuX (P)`t } `t + t {\imp \caps<a,`v> `v [t] v \caps<v,`g> } }
	&\redXsN& \bsp (\actR) 
\subL { \slmuX (Q)`a } `a + a { \cut { \slmuX (P)`t } `t + t {\imp \caps<a,`v> `v [t] v \caps<v,`g> } }
	&\substo& \\
\cut { \slmuX (P)`t } `t + t {\imp { \slmuX (Q)`a } `a [t] v \caps<v,`g> } 
 \end{array} \]
Notice that $ \cut { \slmuX (Q)`a } `a + a {\imp \caps<a,`v> `v [t] v \caps<v,`g> } $ has been replaced by $ \imp { \slmuX (Q)`a } `a [t] v \caps<v,`g> $, as if that cut-elimination was performed directly.
As in \Exm \ref{lmmt prep}, we write $ \redXsS $ for this reduction.
 \end{example}
 
What follows is rather syntax heavy, so we will use some abbreviations of sub-terms that are inactive: we will write:
 \[ \begin{array}{r@{\textit{ for }}l}
\yxd{y}{x}{`d} & \yxdF{y}{x}{`d} \\
\Vxd{Q}{x}{`d} & \vxd{Q}{x}{`d} 
 \end{array} \]
when convenient.

We start by showing that the interpretation respects $`b$-reduction; first we show that term substitution is preserved under our interpretation.

 \begin{lemma}\label{slmuX respects term substitution}
$ \subL { \slmuX (Q)`r } `r + x {\slmuX (P)`g } = \slmuX (P\tsubst[Q/x])`g $.
 \end{lemma}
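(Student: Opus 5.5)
The equation is proved by structural induction on $P$, reading it as an equality. No reduction is used: substitution in $\Xs$ is implicit, so $\substo$ steps are just equalities. The macro $\subL{\cdot}{\cdot}{\cdot}{\cdot}$ is a substitution that pushes $\slmuX (Q)`r$ into the socket $x$ of $\slmuX (P)`g$, so each case unfolds the definition of $\slmuX(`.)`.$ and then applies the corresponding rules $(\deactR)$, $(\Ri)$–$(\Rv)$ of left substitution.

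There is a second way to reach the same statement, by composing \Lmm\ref{tsubst lemma} ($\lmulmmt[`.]$ commutes with term substitution) with \Lmm\ref{mut simulation} ($\lmmtX(`.)$ turns $\lmmt$ term substitution into $\Xs$ substitution). This works only modulo the tweak made in \Def\ref{from slmu to x} for the $`m$-case, so the direct induction is safer. Even so, the shape of each case can be read off from those two lemmas.

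\begin{itemize}
\item \textbf{Variable, $P = x$.} Here $\slmuX (x)`g = \caps<x.`g>$. Rule $(\deactR)$ gives $\slmuX (Q)`r \tsubst[`g/`r] = \slmuX (Q)`g$, using that the output name is a free parameter of the interpretation, which is a trivial renaming property.
\item \textbf{Variable, $P = y \not= x$.} Rule $(\Ri)$ applies immediately.
\item \textbf{Abstraction, $P = `ly.P'$.} Rule $(\Rii)$ pushes the substitution under the export, and the induction hypothesis finishes, with $y$ fresh by Barendregt's convention.
\item \textbf{Command, $P = `M`b.[`g']P'$.} The substitution passes through the cut $\cut{\cdot}{`b}{+}{z}{\caps<z,`g>}$ by $(\Rv)$. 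It vanishes on the capsule by $(\Ri)$ since $z\not=x$, and the induction hypothesis handles $\slmuX (P')$.
\end{itemize}

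The application case $P = P_1P_2$ is the only one with real bookkeeping. There $\slmuX(P_1P_2)`g$ is a nest of three cuts ending in $\imp \caps<a,`v> `v [t] v \caps<v,`g>$. Rule $(\Rv)$ distributes the substitution over each cut. The induction hypothesis covers the two recursive components $\slmuX(P_1)`t$ and $\slmuX(P_2)`a$. On the remaining inner parts (capsules and the import) $x$ does not occur, because the connectors $t,a,v,`v$ are fresh, so $(\Ri)$ and $(\Riv)$ together with \Lmm\ref{garbage collection} ($(\Rgc)$) remove the substitution.

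The main point to watch is that rule $(\Riii)$ never fires. That rule would create a new cut, and it fires only when $x$ is the imported socket of an import. In our translation imports are only ever built on fresh connectors such as $t$, never on a source variable, so $(\Riii)$ does not apply and the result is a genuine syntactic equality. It is also where I would check carefully that bound connectors are renamed apart from the free connectors of $\slmuX(Q)`r$.
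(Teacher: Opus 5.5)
Your proposal is correct and is essentially the paper's proof. Both do a structural induction on $P$, unfold the interpretation, and apply the left-substitution rules with no reduction; you also make explicit what the paper leaves as ``by induction'' for the application and $\mu$-cases, in particular that $(\Riii)$ never fires because every import in the image is on a fresh bound socket.

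Two slips, neither of which affects your argument. First, the interpretation of an application is two nested cuts around the import, not three. Second, the composition route you mention would need adjusting beyond the $\mu$-case, because the optimised translation of $\lmmt$ into $\Xs$ creates no cut for a command whose term is a variable.
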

 \begin{proof}
By induction on the structure of terms.
 \begin{description} 

 \myitem [$ P = x $]
\subL { \slmuX (Q)`r } `r + x {\slmuX (x)`g }
	&\ByDef& 
\subL { \slmuX (Q)`r } `r + x {\slmuX (V x)`g }
	&\substo& 
\slmuX (Q)`g
	&=& 
\slmuX (x\tsubst[Q/x])`g 
 \end{array} $

 \myitem [$ P = y \not= x $]
\subL { \slmuX (Q)`r } `r + x {\slmuX (y)`g }
	&\ByDef& 
\subL { \slmuX (Q)`r } `r + x {\slmuX (V y)`g }
	&\substo& 
\slmuX (V y)`g
	&=& 
\slmuX (y\tsubst[Q/x])`g 
 \end{array} $

 \myitem [$ P = `Ly.R $]
\subL { \slmuX (Q)`r } `r + x {\slmuX ({}{(`Ly.R)})`g }
	&\ByDef& 
\subL { \slmuX (Q)`r } `r + x {(\slmuX (L |`d| y . R)`g )}
	&\substo& 
\exp y {\subL { \slmuX (Q)`r } `r + x {\slmuX (R)`d } } `d . `g
	&=& (\IH) \\
\exp y {\slmuX (R\tsubst[Q/x])`d } `d . `g
	&\ByDef& 
\slmuX (`Ly.{R\tsubst[Q/x]})`g
	&=& 
\slmuX ({(`Ly.R)}\tsubst[Q/x])`g
 \end{array} $

 \item [{$ P = `@ M N $, $`M `a . [`d] M $}]
By induction.
\QED

 \end{description}
 \observe
Norice that reduction does not play a role in this proof.
 \end{proof}
 
With this result we can now show that $`b$-reduction is preserved.

 \begin{theorem} [$\slmuX(`.)`. $ respects beta reduction] \label{slmuX respects beta reduction}
$ \slmuX ({}{`@ (`lx.P) Q })`g \rtcredX \slmuX (P\tsubst[Q/x])`g $.
 \end{theorem}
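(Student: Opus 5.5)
The plan is to compute directly with Definition~\ref{from slmu to x}, unfolding the image of the application and letting the outer cut propagate through the net by substitution. For $P = `lx.P'$ we have $\slmuX(`lx.P')`t = \exp x {\slmuX(P')`b} `b . `t$, where $`b$ is fresh so that $`t$ does not occur in the body. This net introduces $`t$. The right-hand side of the outer cut of $\slmuX(`@ (`lx.P') Q)`g$ is itself a cut, so it does not introduce the bound socket $t$. The first step will therefore be $(\actR)$. Rule $(\Rv)$ then pushes the left substitution into both components of the inner cut. On $\slmuX(Q)`a$ it is discarded by \Lmm~\ref{garbage collection}($\Rgc$), since $t$ is not free there. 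On $\imp \caps<a,`v> `v [t] v \caps<v,`g>$ rule $(\Riii)$ applies, and this rebuilds a cut $\cut {\exp x {\slmuX(P')`b} `b . `t} `t + z {\imp \caps<a,`v> `v [z] v \caps<v,`g>}$ with $z$ fresh. The inner substitutions in that import vanish, again by ($\Rgc$). This is the step that moves the abstraction next to the import, as in \Exm~\ref{reduce in continuation}.

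The rebuilt cut now has $`t$ introduced on the left and $z$ introduced on the right, so the logical rule $(\Ins)$ fires. I will use its first variant, which yields $\cut {\caps<a,`v>} `v + x {\cut {\slmuX(P')`b} `b + v {\caps<v,`g>}}$. Next I apply \Lmm~\ref{renaming lemma} twice. Part~1 turns $\cut {\slmuX(P')`b} `b + v {\caps<v,`g>}$ into $\slmuX(P')`g$, which equals $\slmuX(P')`b \tsubst[`g/`b]$. Part~2 turns $\cut {\caps<a,`v>} `v + x {\slmuX(P')`g}$ into $\slmuX(P')`g \tsubst[a/x]$. Both lemmas are stated with $\rtcredXs$, so these steps are legitimate reductions.

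After this we are left with $\cut {\slmuX(Q)`a} `a + a {\slmuX(P')`g \tsubst[a/x]}$. To finish I will activate this cut and apply \Lmm~\ref{slmuX respects term substitution}, after $`a$-renaming $a$ back to $x$, to obtain $\slmuX(P'\tsubst[Q/x])`g$.

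The main obstacle I expect is exactly this last activation, because which rule applies depends on which connectors are introduced. If $P'$ does not have the shape that makes $\slmuX(P')`g\tsubst[a/x]$ introduce $a$, then $(\actR)$ applies, the resulting left substitution has the form $\subL{\cdot} `a + a {\cdot}$ used in \Lmm~\ref{slmuX respects term substitution}, and that lemma concludes. When $P' = x$ the net is $\caps<a,`g>$, which introduces $a$, and I will instead use a logical rule or \Lmm~\ref{renaming lemma}(2) read in the other direction to reach $\slmuX(Q)`g$ directly. A smaller bookkeeping issue is checking that the fresh names used by $(\Riii)$ and by the renamings never clash with free connectors of $\slmuX(Q)`a$, so that each ($\Rgc$) step is justified.
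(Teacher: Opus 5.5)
Your argument is correct and is essentially the paper's proof: activate the outer cut with $(\actR)$, let $(\Rv)$ and $(\Riii)$ rebuild the export--import cut, fire the first variant of $(\Ins)$, remove the renaming cuts, and conclude with \Lmm~\ref{slmuX respects term substitution}. The paper orders the steps differently: it uses a second $(\actR)$ to substitute the image of $Q$ into the import before $(\Ins)$, and it activates the $Q$-cut while that cut's right-hand side is still a cut, so all its steps are at the root and your case split on $P'=x$ never arises; in that case of yours, the result you want is part~1 of \Lmm~\ref{renaming lemma} (capsule on the right, which also covers the logical-rule subcase), not part~2.
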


 \begin{proof}
$ \kern-7mm \begin{array}[t]{lclcl} \kern7mm
\slmuX ({}{`@ (`lx.P) Q })`g
	&\ByDef& 
\cut {\exp x {\slmuX (P)`b } `b . `d } `d + d {\cut {\slmuX (Q)`a } `a + a \yxd{a}{d}{`g} }
	&\redXs& \bsp (\actR) \\ 
\subL {\exp x {\slmuX (P)`b } `b . `d } `d + d {\cut {\slmuX (Q)`a } `a + a \yxd{a}{d}{`g} }
	&\substo& 
\cut {\slmuX (Q)`a } `a + a {\cut {\exp x {\slmuX (P)`b } `b . `d } `d + d \yxd{a}{d}{`g} }
	&\redXs& \bsp (\actR) \\ 
\subL {\slmuX (Q)`a } `a + a 
	{\cut {\exp x {\slmuX (P)`b } `b . `d } `d + d \yxd{a}{d}{`g} }
 	&\substo& 
\cut {\exp x {\slmuX (P)`b } `b . `d } `d + d 
	{\imp { \slmuX (Q)`a } `a [d] v \caps<v,`g> }
	&\redXs& \bsp (\Ins) \\
\cut { \slmuX (Q)`a } `a + x {\cut { \slmuX (P)`b } `b + v \caps<v,`g> }
	&\redXs& \bsp (\actR) 
\subL { \slmuX (Q)`a } `a + x {\cut { \slmuX (P)`b } `b + v \caps<v,`g> }
	&\substo& \\
\cut { \subL { \slmuX (Q)`a } `a + x {\slmuX (P)`b } } `b + v \caps<v,`g>
	&\redXs& \bsp (\actL) 
\subR { \subL { \slmuX (Q)`a } `a + x {\slmuX (P)`b } } `b + v \caps<v,`g>
	&\substo& \bsp (`b \textit{ fresh}) \\
\subL { \slmuX (Q)`a } `a + x {\slmuX (P)`g }
	&=& \bsp (\ref{slmuX respects term substitution}) 
\slmuX (P\tsubst[Q/x])`g 
 \end{array} $ %
\arrayQED
 \observe
Notice that all steps are permitted in $\redXsn$ and $\redXsv$ (since $ \exp x {\slmuX (P)`b } `b . `d $ introduces $`d$, and then $Q$ is a value so $\lmuX(Q)`a $ introduces $`a$).

 \end{proof}
Notice that no $`h$ like step is needed here so $\slmuX(`.)`. $ provides a more direct interpretation than $\lmulmmt[`.]$.

In proofs that follow, we will sometimes directly apply the substitutions resulting from the reduction steps $(\actL)$ and $(\actR)$.

We will now show that $\slmuX (`.)`. $ respects $\mur$-substitution; since it interprets terms under a name, different from \Lmm \ref{interpretation respects mur substitution CBV}, this is not a single result.
We will write $ M \RightSub[Q.`g/`b] $ for $ M \RightSubF[Q.`g/`b] $.

 \begin{lemma} [Simulation of $\slmu$' right substitution through $\slmuX(`.)`. $] ~
 \label{lmuX mu right substitution} \label{lmuX Right substitution preservation}
 \begin{enumerate}

 \item 
$ \slmuX ({}{`@ (M\mursubst [Q.`g/`b]) Q })`g ~\rtcredXs~ \slmuX (D {}M [Q.`g/`b])`b ~\ByDef ~ \slmuX (d {}M [Q.`g/`b])`b $.

 \item $ \slmuX ({}{M\mursubst [Q.`g/`b]})`d ~\rtcredXs~ \slmuX (D {}M [Q.`g/`b])`d ~\ByDef~ \slmuX (d {}M [Q.`g/`b])`d $, if $`d \not= `b$.

 \end{enumerate}
 \end{lemma}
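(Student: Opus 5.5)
The plan is a simultaneous structural induction on $M$, proving parts 1 and 2 together, in the pattern of \Lmm\ref{interpretation respects mur substitution CBV} and \Lmm\ref{interpretation respects mur substitution CBN}. I read the macro $\slmuX (D {}M [Q.`g/`b])`d$ as the right substitution $\cutL { \slmuX (M)`d } `b + x { \Vxd{Q}{x}{`g} }$. This is the $\Xs$-image of the stack $\lmulmmt[Q]`.`g$ (respectively of $\Ng$), plugged into $`b$. Implementing $\lmmt$'s $\tsubst[e/`b]$ by $`X$'s right substitution is exactly what \Lmm\ref{mu simulation} tells us. So I would unfold $\cutL{\cdot}`b+x{\cdot}$ with the rules $(\deactL)$ and $(\Li)$--$(\Lv)$ from the definition of substitution on $\Xs$, and compare the result with $\slmuX(\cdot)$ of the substituted term.

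The routine cases go as follows.
\begin{enumerate}
\item Variables: $(\Li)$ applies, since $\slmuX(V x)`d$ does not contain $`b$.
\item Abstraction $`Lz.P$: use $(\Liii)$, then the induction hypothesis (part 2) on $P$ under the fresh name.
\item Application $`@ P R$: the interpretation is a nest of cuts, so $(\Lv)$ distributes the substitution. The inner capsule/import part has no $`b$ and is removed by $(\Li)$/\Lmm\ref{garbage collection}. Then part 2 applies to $P$ and $R$ under the fresh names $`t,`a$.
\item The $`m$-case $`M`e.[`n]P$ with $`n\neq`b$: here $\slmuX(m `e . `n P)`d$ is a cut of $\slmuX(P)`n$ against a capsule, so $(\Lv)$ and the induction hypothesis (part 2, since $`n\neq`b$) suffice.
\end{enumerate}
For part 1 the outer application is handled by the same unfolding. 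I would first reduce $\slmuX(`@ (M\mursubst[Q.`g/`b]) Q)`g$ by the $\redXsS$ steps of \Exm\ref{reduce in continuation} to $\cut{\slmuX(M\mursubst[Q.`g/`b])`t}`t+t{\imp{\slmuX(Q)`a}`a[t]v\caps<v,`g>}$. After that the outer cut is the activated substitution of that context into $`t$, and part 2 applies to $M$ with $`t\neq`b$. The remaining work is recognising the result as the right substitution on name $`b$ after renaming $`t\mapsto`b$, which \Lmm\ref{X renaming lemma} provides.

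The main obstacle is the $`m$-case $`M`e.[`b]P$, where the structural substitution adds an application: $(`M`e.[`b]P)\mursubst[Q.`g/`b] = `M`e.[`g]`@ (P\mursubst[Q.`g/`b]) Q$. Its interpretation under $`d$ is a cut whose left part is $\slmuX(`@ (P\mursubst[Q.`g/`b])Q)`g$ and whose right part is a capsule to $`d$. Here I would apply part 1 of the induction hypothesis to $P$, and this is why the two parts must be proved simultaneously. That turns the left part into the right substitution of $\slmuX(P)`b$. What is left is to check that the substitution of the capsule cut commutes with that right substitution, which I expect follows from $(\Lv)$ together with \Lmm\ref{X renaming lemma}(1). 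The step I expect to be delicate is whether the $(\actL)$/$(\actR)$ steps used here, including those in \Exm\ref{reduce in continuation}, are really permitted. They require the relevant connector to be introduced or not introduced, and that has to be verified from the shape of the interpretation (cuts never introduce). This is what the paper's choice of a cut in $\slmuX(m `b . `g P)`d$, rather than a plain renaming, was made for.
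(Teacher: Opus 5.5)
\textbf{Part 2 is fine; part 1 has a genuine gap.} Your part~2 matches the paper's argument. You do induction on $M$ and push the right substitution through with $(\Li)$, $(\Liii)$ and $(\Lv)$, using part~1 only in the case $`m`s . [`b] P$.

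\textbf{The gap.} The $\redXsS$ steps of \Exm\ref{reduce in continuation} replace the top-level context $\Vxd{Q}{t}{`g}$ by the import $\imp { \slmuX (Q)`a } `a [t] v \caps<v,`g>$. The term you must reach is $\cutL { \slmuX (M)`b } `b + x { \Vxd{Q}{x}{`g} }$, and it carries the cut $\Vxd{Q}{x}{`g}$ at the top-level output as well. No $\redXs$-step can turn that import back into such a cut.
\begin{itemize}
\item For $M=y$, your route can only continue with $(\Imp)$. It ends in $\imp { \slmuX (Q)`a } `a [y] v \caps<v,`g>$, which is a proper reduct of the target $\Vxd{Q}{y}{`g}$, not the target itself.
\item In the $`m$-cases you end with the import in the position where the target has $\Vxd{Q}{z}{`g}$.
\end{itemize}
So you obtain at best a common reduct, not the $\rtcredXs$ the lemma claims. \Lmm\ref{X renaming lemma} cannot repair this. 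Merging the outer substitution on $`t$ with the inner one on $`b$ would require the two substituted contexts to coincide, and here they do not. This preparation is exactly the device of the CBN variant, \Lmm\ref{lmuX Right CBN substitution preservation}, whose right-hand side uses the import in place of $\Vxd{Q}{x}{`g}$ throughout.

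\textbf{A second problem.} Activating the outer cut is impossible when $M$ is a variable or an abstraction. Then $\slmuX ({}{M\mursubst [Q.`g/`b]})`t$ introduces $`t$ and the import introduces $t$, so only $(\Imp)$ or $(\Ins)$ can fire. An $(\Ins)$ step is a real computation that does not occur in the target.

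\textbf{What works.} Essentially the paper's route is a direct case analysis for part~1 on $\cut { \slmuX ({}{M\mursubst [Q.`g/`b]})`t } `t + x \Vxd{Q}{x}{`g}$. You reduce only that top cut and use the induction hypothesis on subterms.
\begin{itemize}
\item For $M=y$, a single $(\actR)$ is allowed because the right part is a cut. It yields $\Vxd{Q}{y}{`g}$, which is the target by $(\deactL)$.
\item For $`lz.P$, no top-level step is needed. Part~2 applied under the export gives literally the $(\Lii)$-unfolding of the target.
\item For $PR$, apply part~2 to $P$ and $R$. The top cut, which does not introduce $`t$, must then still be activated to push $\Vxd$ into the capsule $\caps<v,`t>$.
\item For $`m`s . [`n] P$ and $`m`s . [`b] P$, one $(\actL)$ is allowed because the left part is a cut. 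It collapses the renaming cut: for example, $\cut { \cut { \slmuX (P\mursubst[Q.`g/`b]\,Q)`g } `s + z \caps<z,`t> } `t + t \Vxd{Q}{t}{`g}$ becomes $\cut { \slmuX (P\mursubst[Q.`g/`b]\,Q)`g } `s + t \Vxd{Q}{t}{`g}$. Then part~2 (for $[`n]$) or part~1 (for $[`b]$) of the induction hypothesis, together with the $(\Lv)$ and $(\deactL)$ unfolding of the target, finishes the case.
\end{itemize}
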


 \begin{proof} 
By simultaneous induction on the structure of terms; we only reduce the top cut in each step, but apply induction to sub-terms.

 \begin{enumerate} \itemsep 3pt
 \item

 \begin{description} 

 \pushmyitem{2mm}[$M = y$] 
\slmuX ({}{`@ (y\mursubst [Q.`g/`b]) Q })`g
	~=~
\slmuX ({}{`@ y Q })`g
	&\ByDef& 
\slmuX (a {V y} Q)`g
	&\redXs& \bsp (\actR) \\
\subL \caps<y,`s> `s + s \vxd{Q}{s}{`g} 
	&\substo& 
\vxd{Q}{y}{`g} 
	&\fromsubs& \\
\slmuX (D {V y} [Q.`g/`b])`b 
	&\ByDef& 
\slmuX (d y [Q.`g/`b])`b 
 \end{array} $

 \myitem[$M = `lz.P$]
\slmuX ({}{(`lz.P)\mursubst [Q.`g/`b] \, Q })`g
	&\ByDef& 
\cut { \slmuX (`lz.{P\mursubst[Q.`g/`b]})`t } `t + x \Vxd{Q}{x}{`g} 
	&\ByDef& \\
{ \cut {( \slmuX (L |`s| z . {I P  [Q.`g/`b]})`t )} `t + x \Vxd{Q}{x}{`g} }
	& \rtcredXs & \bsp ({\IH (2)}) 
{ \cut { \exp z {\slmuX (d P [Q.`g/`b])`s } `s . `t } `t + x \Vxd{Q}{x}{`g} }
	&\fromsubs& \\
\slmuX (D {<L |`s| z . P>} [Q.`g/`b])`b
	&\ByDef& 
\slmuX (d {`lz.P} [Q.`g/`b])`b 
 \end{array} $

 \myitem [$M = PR$] 
\slmuX ({}{(PR)\mursubst [Q.`g/`b]\,Q})`g
	~\ByDef~ 
\slmuX (A |`t,x| {P\mursubst[Q.`g/`b]\,R\mursubst [Q.`g/`b]} Q)`g
	&\ByDef& \\
\slmuX (A |`t,x| {a |`s,s| {I P  [Q.`g/`b]} {I R [Q.`g/`b]}} Q)`g 
	& \rtcredXs & \bsp ({\IH (2)}) \\
\slmuX (A |`t,x| {a |`s,s| {d P [Q.`g/`b]} {d R [Q.`g/`b]}} Q)`g 
	&\fromsubs& \\
\firstrenamefalse \slmuX (d {<a |`s,s| P R>} [Q.`g/`b])`b 
	~\ByDef~ 
\slmuX (d PR [Q.`g/`b])`b
 \end{array} $
 
 \myitem [${M = `m`s . [`b] P}$]{@{}lcl@{}cl}
\slmuX ({}{(`m`s.[`b]P)\mursubst[Q.`g/`b]\,Q})`g 
	&\ByDef& 
\slmuX ({}{(`m`s.[`g](P\mursubst[Q.`g/`b]\,Q))\,Q})`g 
	&\ByDef& \\
\slmuX (A |`t,t| {{}{`m`s.[`g]P\mursubst[Q.`g/`b]\,Q}} Q)`g 
	&\ByDef& 
\cut { \cut { \slmuX (P\mursubst[Q.`g/`b]\,Q)`g } `s + z \caps<z,`t> } `t + t \Vxd{Q}{t}{`g}  
	&\redXs& \bsp (\actL) \\
\Span{3}{
\subR { \cut { \slmuX (P\mursubst[Q.`g/`b]\,Q)`g } `s + z \caps<z,`t> } `t + t \Vxd{Q}{t}{`g}  
	~\substo~ 
\cut { \slmuX (P\mursubst[Q.`g/`b]\,Q)`g } `s + t \Vxd{Q}{t}{`g} 
}
	& \rtcredXs & \bsp ({\IH (1)}) \\
\cut { \slmuX (d P [Q.`g/`b])`b } `s + t \Vxd{Q}{t}{`g} 
	&\fromsubs& 
\slmuX (f {m `s . `b P} [Q.`g/`b])`b 
	&\ByDef& \\
\slmuX (d {`m`s.[`b]P} [Q.`g/`b])`b 
 \end{array} $

 \myitem [${M = `m`s . [`n] P}$]
\slmuX ({}{(`m`s.[`n]P)\mursubst[Q.`g/`b]\,Q})`g 
	&\ByDef& 
\slmuX ({}{(`m`s.[`n](P\mursubst[Q.`g/`b]))\,Q})`g 
	&\ByDef& \\
\slmuX (A {m `s . `n {I P [Q.`g/`b]}} Q)`g 
	&\redXs& \bsp (\actL) 
\subR { \cut { \slmuX(I P [Q.`g/`b])`n } `s + x \caps<x,`a> } `a + a \Vxd{Q}{a}{`g} 
	&\substo& \\
\cut { \slmuX ({}{P\mursubst[Q.`g/`b]})`n } `s + x \Vxd{Q}{x}{`g} 
	&\rtcredXs& \bsp ({\IH (2)}) 
\cut { \slmuX (d P [Q.`g/`b])`n } `s + x \Vxd{Q}{x}{`g} 
	&\fromsubs& \\
\slmuX (D {<m `s . `n P>} [Q.`g/`b])`b 
	&\ByDef& 
\slmuX (d {`m`s.[`n]P} [Q.`g/`b])`b 
 \end{array} $

 \end{description}

 \item %
  \Shorter{Straightforward by induction; no reduction is needed here.\QED} \Longer
{
 \begin{description}


 \pushmyitem{2mm}[$M = y$]
 \slmuX (y\mursubst [Q.`g/`b])`d
	&=& 
\slmuX (y)`d
	&\ByDef& 
\slmuX (V y)`d
	&=& 
\slmuX (d {V y} [Q.`g/`b])`d 
	&\ByDef& 
\slmuX (d {}y [Q.`g/`b])`d 
 \end{array} $

 \myitem[$M = `lz.P$]
\slmuX ({}{(`lz.P)\mursubst [Q.`g/`b]})`d
	&\ByDef& 
\slmuX ({}{`lz.I P  [Q.`g/`b]})`d
	&\ByDef& 
\slmuX (L |`s| z . {I P  [Q.`g/`b]})`d 
	& \rtcredXsv & \bsp ({\IH (2)}) \\
\exp z {\slmuX (d P [Q.`g/`b])`s } `s . `d
	&=& 
\slmuX (d {<L |`s| z . P>} [Q.`g/`b])`d
	&\ByDef& 
\slmuX (d {`lz.P} [Q.`g/`b])`d 
 \end{array} $

 \myitem [$M = PR$]{l@{}cl} 
\slmuX ({}{(PR)\mursubst [Q.`g/`b]})`d
	~\ByDef~ 
\slmuX (a |`s,s| {I P  [Q.`g/`b]} {I R  [Q.`g/`b]})`d 
	~ \rtcredXsv ({\IH (2)}) \\
\slmuX (a |`s,s| {d P [Q.`g/`b]} {d R [Q.`g/`b]})`d 
	~=~ 
\slmuX (d {<a P R>} [Q.`g/`b])`d 
	~\ByDef~ \\
\slmuX (d PR [Q.`g/`b])`d
 \end{array} $
 
 \myitem [${M = `m`s . [`b] P}$]
\slmuX ({}{(`m`s.[`b]P)\mursubst[Q.`g/`b]})`d 
	&\ByDef& 
\slmuX ({}{`m`s.[`g]P\mursubst[Q.`g/`b]\,Q})`d 
	&\ByDef& \\
\slmuX (m `s . `g {P\mursubst[Q.`g/`b]\,Q})`d 
	&\rtcredXsv& \bsp ({\IH (1)}) 
\slmuX (m `s . `b {d P [Q.`g/`b]})`d 
	&=& 
\slmuX (d {`m`s.[`d]P} [Q.`g/`b])`b 
 \end{array} $

 \myitem [${M = `m`s . [`n] P}$]
\slmuX ({}{(`m`s.[`n]P)\mursubst[Q.`g/`b]})`d 
	&\ByDef& 
\slmuX ({}{`m`s.[`n](P\mursubst[Q.`g/`b])})`d 
	&\ByDef& 
\slmuX (m `s . `n {I P [Q.`g/`b]})`d 
	&\rtcredXsv& \bsp ({\IH (2)}) \\
\slmuX (m `s . `n {d P [Q.`g/`b]})`d 
	&=& 
\slmuX (d {`m`s.[`n]P} [Q.`g/`b])`d 
 \end{array} $
 \arrayQED

 \end{description}
}

 \end{enumerate}
 
 \observe
The only reduction steps take place in three\footnote{Notice that this is not so in the proof of \Lmm \ref{interpretation respects mur substitution CBV}, where reduction takes place in only one part.} cases in the first part, $M=y$, $ M = `m `s . [`n] P$, and $M = `m`s . [`b] P $, which are valid in \CBV; in the latter two cases, this $(\actL)$ step is not valid in $\redXn$, since $x$ is not introduced in $\vxdfull{Q}{x}{`d} $; it is valid in $\redXv$, since $`s$ is not introduced.
Induction gets applied also in the continuation and inside exports, so this is not a proof for \CBV, but for $\redXsv$.

 \end{proof}


The following is needed below:

 \begin{lemma} \label{rescue lemma}
\resetqsymbol{`V}{V} 

$ \begin{array}[t]{lclclcl}
\cut { \slmuX (P)`b } `b + t \Vxd{Q}{t}{`d} 
	&\rtcredXsv& 
\slmuX(f P [Q.`d/`b])`b 
 \end{array} $.
 \end{lemma}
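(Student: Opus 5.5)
The plan is to proceed by induction on the structure of $P$, following the pattern of the proof of \Lmm\ref{lmuX mu right substitution}. The first step is to push the cut against $\Vxd{Q}{t}{`d}$ through the image of $P$ by a single substitution activation. Thereafter only the implicit right substitution rules $(\Li)$--$(\Lv)$ and $(\deactL)$ are used, together with the induction hypothesis on sub-terms. The term $\Vxd{Q}{t}{`d}$ is itself a cut, so it does not introduce $t$. This means that the only activation that can ever be needed is $(\actL)$, and that is always permitted in $\redXsv$.

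I first split on whether $\slmuX(P)`b$ introduces $`b$. If $P$ is a value not containing $`b$ (a variable $y$, or an abstraction $`ly.R$), then $\slmuX(P)`b$ is $\caps<y,`b>$ or an export introducing $`b$. In that case I expect $\slmuX(f P [Q.`d/`b])`b$ to be, by definition, just the cut itself, so zero steps suffice. For an abstraction in which $`b$ occurs free inside $R$, I will check this against the definition of the $f$-notation. If necessary, the alternative is to do one $(\actL)$ step and use $(\Lii)$, which regenerates exactly the outer cut and reduces the inner part to the induction hypothesis on $R$. In the remaining cases, $P = `@ P_1 P_2$ and $P = `m`s.[`n]P'$, the image is a cut (via $\slmuX(b |`t,t| P Q)`d$ and $\slmuX(m `b . `g P)`d$), so it does not introduce $`b$. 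I then apply $(\actL)$ and distribute the resulting right substitution over the cut structure by $(\Lv)$. For the application case, this reduces to the induction hypothesis on $P_1$ and $P_2$, using the substitution lemma \Lmm\ref{X renaming lemma} (parts $(\Lgc)$ and $(\deactL)$) for the inner capsule/import gadget $\yxd{x}{y}{`d}$, where $`b$ does not occur.

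The $`m$-cases carry the real content. For $P = `m`s.[`b]P'$ the image contains $\slmuX(P')`b$ placed under a cut over $`s$ with a capsule $\caps<z,`d'>$. The substitution reaches the named occurrence $[`b]P'$, and there it must leave behind a fresh cut $\cut{\slmuX(P')`b}`b+t{\Vxd{Q}{t}{`d}}$. This follows either from $(\Lii)$ or by viewing the occurrence as a new cut to which the induction hypothesis applies directly. That recreated cut is exactly what the $f$-notation prescribes at $`b$-named subterms. For $P = `m`s.[`n]P'$ with $`n \ne `b$, the substitution passes through by $(\Lv)$ and $(\Li)$, and induction finishes the case.

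I expect the main obstacle to be bookkeeping rather than strategy. The task is to confirm, case by case, that after each $(\Lii)$ or $(\Lv)$ step the term produced is syntactically the $f$-form up to $`a$-conversion, since the abbreviation hides fresh names. I also need to check that no step is an $(\actR)$, which would need $\actRV$ and hence that $\slmuX(P)`b$ introduces $`b$. Finally, as in \Lmm\ref{lmuX mu right substitution}, induction is applied inside exports and continuations. The result is therefore stated for $\rtcredXsv$ and not for the strategy $\redXsV$, and I will note this explicitly at the end.
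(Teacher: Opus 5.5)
Your argument breaks down in the case $P = y$ because it mixes up the side conditions of the two activation rules. $(\actL)$ is conditioned on the \emph{left} net: it requires that $\slmuX (P)`b$ not introduce $`b$. That $\Vxd{Q}{t}{`d}$ does not introduce $t$ has no bearing on it.

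For $P = y$ we have $\slmuX (y)`b = \caps<y,`b>$, which does introduce $`b$. So $(\actL)$ is unavailable, and no logical rule applies either, since the right net is a cut. Zero steps do not suffice here. The target $\slmuX(f y [Q.`d/`b])`b$ is the implicit right substitution of $\Vxd{Q}{t}{`d}$ for $`b$ in $\caps<y,`b>$, which by $(\deactL)$ is $\Vxd{Q}{y}{`d}$, not the cut $\cut { \slmuX (y)`b } `b + t \Vxd{Q}{t}{`d}$ itself.

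What is needed is exactly the step you set out to exclude: one $(\actR)$ step. In $\redXsv$ this is the permitted $(\actRV)$, because $\caps<y,`b>$ introduces $`b$ and $\Vxd{Q}{t}{`d}$ does not introduce $t$. The resulting left substitution equals $\Vxd{Q}{y}{`d}$ by \Lmm\ref{X renaming lemma}. So the fact that $\Vxd{Q}{t}{`d}$ does not introduce $t$ is what licenses $(\actRV)$ when the left net introduces $`b$. Together with $(\actL)$ when it does not, this dichotomy is the whole content of the lemma.

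Once that is repaired, the induction on $P$ is unnecessary. $\slmuX(f P [Q.`d/`b])`b$ denotes the implicit right substitution itself, which is exactly what one $(\actL)$ step produces. So whenever $\slmuX (P)`b$ does not introduce $`b$, you are done after that single step. This covers applications, $`m$-terms, and abstractions in which $`b$ is free. There is no need to unfold through $(\Lv)$ or to appeal to an induction hypothesis. Such a hypothesis would not match the subterms produced in any case, since their output names differ from $`b$.

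If $\slmuX (P)`b$ does introduce $`b$, then $P$ is a value not containing $`b$. The variable case is handled as above. For an abstraction, the cut already coincides with the target via $(\Lii)$ and $(\Lgc)$, as you correctly anticipated.
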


 \begin{proof}
If $`b$ is not introduced, then the activation of the substitution is permitted (notice that $t$ is not introduced in $\Vxd{Q}{t}{`d}$). 
Otherwise, $P$ is a value and $`b$ does not occur in $P$; we have two cases:

 \begin{description}
 
 \myitem [$V = y$]
\cut { \slmuX (y)`b } `b + t \vxd{Q}{t}{`d} 
	&\ByDef& 
\cut { \slmuX (V y)`b } `b + t \vxd{Q}{t}{`d} 
	&\redXsv& \bsp (\actR) \\
\subL { \slmuX (V y)`b } `b + t \vxdfull{Q}{t}{`d} 
	&\substo& 
\vxdfull{Q}{y}{`d} 
	&\fromsubs& \\
\slmuX(F {V y} [Q.`d/`b])`b 
 \end{array} $

 \myitem [$V = `ly.P$]
\cut {\slmuX (`Ly.P)`b } `b + t \Vxd{Q}{t}{`d} 
	&\ByDef& 
\cut {\slmuX (<L |`g| y . P>)`b } `b + t \Vxd{Q}{t}{`d} 
	&\fromsubs& \bsp ({`b \notele \slmuX(P)`g }) \\
\slmuX(f {<L |`g| y . P>} [Q.`d/`b])`b 
	&\ByDef&
\slmuX(f `ly.P [Q.`d/`b])`b \\ [2mm]
 \end{array} $
\arrayqed

 \end{description}
 \end{proof}

We can now show that the interpretation preserves $\mur$ reduction.

  \begin{theorem} [Preservation of $\mur$ contraction through $\slmuX(`.)`. $] 
 \label{lmuX Right reduction preservation} ~
 \begin{enumerate}
 \item $ \slmuX ({}{`@ (`M`b.[`b]P) Q })`d \crXs \slmuX ({}{`M`g.[`g] `@ {P\mursubst[Q.`g/`b]} Q })`d $.
 \item $ \slmuX ({}{`@ (`M`b.[`n]P) Q })`d \crXs \slmuX ({}{`M`g.[`b] P\mursubst[Q.`g/`b]})`d $.
 \end{enumerate}
 \end{theorem}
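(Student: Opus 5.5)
The plan is to find, for each of the two equations in the statement, a common reduct of the two sides in $\redXs$. Both sides are unfolded via Definition~\ref{from slmu to x}, the top cuts are contracted, and the structural-substitution terms are then normalised into the shape $\slmuX(d P [Q.`g/`b])`b$ (or $\slmuX(f P [Q.`g/`b])`b$) using Lemma~\ref{lmuX mu right substitution} and Lemma~\ref{rescue lemma}. The two cases follow the same pattern as Theorem~\ref{interpretation respects mur reduction}; there the common reduct in $\lmmt$ was obtained by pulling the interpreted argument into the substituted name.

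For part~(1), unfold $\slmuX({}{`@ (`M`b.[`b]P) Q })`d$ to $\cut { \cut { \slmuX(P)`b } `b + z \caps<z,`t> } `t + t \Vxd{Q}{t}{`d}$.
\begin{itemize}
\item The inner cut is a $`m$-interpretation in which $`t$ is not introduced, so apply $(\actL)$ to the outer cut. The resulting right substitution, via $(\Lv)$, $(\Li)$ and $(\deactL)$, yields $\cut { \slmuX(P)`b } `b + t \Vxd{Q}{t}{`d}$.
\item Lemma~\ref{rescue lemma} then reduces this to $\slmuX(f P [Q.`d/`b])`b$, where the substituted name is fresh up to renaming.
\end{itemize}
For the right-hand side, unfold $\slmuX({}{`M`g.[`g] `@ {P\mursubst[Q.`g/`b]} Q })`d$ to $\slmuX(m `g . `g {`@ {P\mursubst[Q.`g/`b]} Q})`d$. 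Apply part~1 of Lemma~\ref{lmuX mu right substitution} inside the cut to obtain $\cut { \slmuX(d P [Q.`g/`b])`b } `g + x \caps<x,`d>$. This cut is then discharged by renaming: Lemma~\ref{renaming lemma}, or $(\actL)$ followed by $(\deactL)$-style substitution.

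It remains to check that both sides reach the same net up to $`a$-conversion of $`g$, $`d$ and $`b$. I expect this to work because $\slmuX(d P[Q.`g/`b])`b$ denotes a net with $\Vxd{Q}{\cdot}{`g}$ plugged at each $`b$-occurrence. Renaming the final output $`g$ to $`d$ should give exactly the net produced on the left, where $`d$ was the output from the start.

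For part~(2) the argument is analogous. On the left, $(\actL)$ and substitution yield $\cut { \slmuX(P)`n } `b + x \Vxd{Q}{x}{`d}$. By Lemma~\ref{rescue lemma}, applied at $`n\neq`b$ and where activation is always possible, this becomes a substituted form at output $`n$. On the right, $\slmuX(m `g . `b {P\mursubst[Q.`g/`b]})`d$ unfolds to a cut $\cut {\slmuX(P\mursubst[Q.`g/`b])`n} `g + x \caps<x,`d>$, after renaming the paper's indices to this case. Part~2 of Lemma~\ref{lmuX mu right substitution} rewrites the left component, and Lemma~\ref{renaming lemma} collapses the renaming cut.

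The main obstacle is bookkeeping rather than any new idea. One point is that on the left the continuation is $\Vxd{Q}{t}{`d}$, while on the right it is $\Vxd{Q}{t}{`g}$ followed by renaming $`g\mapsto`d$; this commutation has to be justified. I would handle it by noting that $\substo$ commutes with renaming of free connectors, which follows by a routine induction on the definition of substitution. The other point is to track whether $`b$ is introduced, since that determines whether $(\actL)$ is applicable. Where it is introduced (values), I would follow the value case split already used in Lemma~\ref{rescue lemma}.
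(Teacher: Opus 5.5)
Your proposal is correct and follows the paper's proof. In both parts the left-hand side is driven by $(\actL)$ on the outer cut, and then further reduced to the net obtained by right-substituting the interpreted context of $Q$, with output $\delta$, for $\beta$ in the interpretation of $P$. In part~1 that further step is Lemma~\ref{rescue lemma}; in part~2 it is a second, direct $(\actL)$ step, since $\beta$ is never introduced in the interpretation of $P$ under $\nu$, and Lemma~\ref{rescue lemma} itself is only stated for matching names. The right-hand side reaches the same net via Lemma~\ref{lmuX mu right substitution} and contraction of the renaming cut over $\gamma$. The only difference is the order on the right: the paper first contracts the renaming cut, so that the structural substitution is already at $\delta$, and then applies Lemma~\ref{lmuX mu right substitution} with $\delta$. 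You apply it under $\gamma$ and rename afterwards, which is why you need the routine commutation of renaming with substitution. You also correctly read the $[\beta]$ in the right-hand side of part~2 as $[\nu]$, as the paper's own proof does.
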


 \begin{proof} 
 \begin{enumerate}
 
 \firstitem $ \kern-7.5mm \begin{array}[t]{lclcl} \kern7.5mm 
\slmuX({}{`@ (`M`b.[`b]P) Q })`d 
	&\ByDef& 
\cut { \slmuX (U `b . `b P)`t } `t + t \Vxd{Q}{t}{`d}
	&\redXs& \bsp (\actL) \\ 
\subR { \slmuX (U `b . `b P)`t } `t + t \Vxd{Q}{t}{`d}
	&\substo& 
\cut { \slmuX (f {}P [Q.`d/`t])`b } `b + t \Vxd{Q}{t}{`d} 
	&\substo& (`t \notin \slmuX (P)`b ) \\
\cut { \slmuX (P)`b } `b + t \Vxd{Q}{t}{`d} 
	&\rtcredXs& \bsp (\ref{rescue lemma}) 
\slmuX(f P [Q.`d/`b])`b 
	&\worraXs& \bsp ({\ref{lmuX Right substitution preservation}\,(1)}) \\
\slmuX(P\mursubst[Q.`d/`b]\,Q)`d 
	&\fromsubs& 
\subR { \slmuX(P\mursubst[Q.`g/`b]\,Q)`g } `g + z \caps<z,`d> 
	&\worraXs& \bsp (\actL) \\
\slmuX(m `g . `g {P\mursubst[Q.`g/`b]\,Q})`d 
	&\ByDef& 
\slmuX({}{`M`g.[`g]P\mursubst[Q.`g/`b]\,Q})`d 
 \end{array} $

   \item
 $ \begin{array}[t]{lclclcl} 
\slmuX ({}{`@ (`M`b.[`n]P) Q })`d
	&\ByDef& 
\slmuX (A |`t,t|  {m `b . `n P} Q)`d
	&\redXs& \bsp (\actL) \\ 
\subR { \slmuX (m `b . `n P)`t } `t + t \Vxd{Q}{t}{`d}
	&\substo& 
		\bsp ({`t \notin \slmuX (P)`n }) 
\cut { \slmuX (P)`n } `b + t \Vxd{Q}{t}{`d} 
	&\redXs& \bsp (\actL) (`b \textit{ not introduced}) \\ 
\subR {\slmuX(P)`n } `b + t \Vxd{Q}{t}{`d} 
	&\worraXs& \bsp ({\ref{lmuX Right substitution preservation}\,(2)}) 
\slmuX (I P [Q.`d/`b])`n  
	&=& \\ 
\subR {\slmuX (I P [Q.`g/`b])`d } `g + z \caps<z.`d> 
	&\worraXs& \bsp (\actL) 
\slmuX (m `g . `n {I P [Q.`g/`b]})`d
	&\ByDef& 
\slmuX ({}{`M`g.[`n] P\mursubst[Q.`g/`b]})`d
 \end{array} $
\arrayQED 
 \end{enumerate}
 
 \observe
The reduction steps in this proof are part of \CBV; notice that both $ \redXs (\actL) $ that pull in $\Vxd{Q}{t}{`d} = \vxdfull{Q}{t}{`d} $ are not permitted in \CBN, since $t$ is not introduced,

 \end{proof}

\def \RightSub[#1.#2/#3]{\def\sysid{`m} \sk \ToX({#1{`.}#2/#3})_{\ftn}}
\def \RightSubF[#1.#2/#3]
 {\def\sysid{`m} \omtrue \cutL {} #3 + x {\vxdo{#1}{x}{#2} } }
\def \RightSubFull[#1.#2/#3]
 {\def\sysid{`m} \omtrue \cutL {} #3 + x {\imp {\slmuX(#1)`a } `a [x] v \caps<v,#2> } } 
 
We will now show that the interpretation also respects \CBN; as in \Sect \ref{nclmu in lmmt}, we need a different proof for that, that prepares the term for a {\CBN} reduction.
First we show that result for right substitution.
The proofs will work with 
$\vxdofull{Q}{x}{`d}$
rather than $\vxdfull{Q}{x}{`d} $, so, as in \Sect \ref{nclmu in lmmt}, we need a new proof for the lemma as well.
We will now write $ M \RightSub[Q.`g/`b] $ for $ M \RightSubFull[Q.`g/`b] $.

 \begin{lemma} [Simulation of $\slmu$' {\CBN} right substitution through $\slmuX(`.)`. $] ~
 \label{lmuX Right CBN substitution preservation}
 \begin{enumerate}

 \item  
$ \slmuX ({}{`@ (M\mursubst [Q.`g/`b]) Q })`g \rtcredXsn \slmuX (F {}M [Q.`g/`b])`b \ByDef \slmuX (d {}M [Q.`g/`b])`b $

 \item $ \slmuX ({}{M\mursubst [Q.`g/`b]})`d \rtcredXsn \slmuX (F {}M [Q.`g/`b])`d \ByDef \slmuX (d {}M [Q.`g/`b])`d $, if $`d \not= `b$.

 \end{enumerate}
 \end{lemma}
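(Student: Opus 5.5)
The plan mirrors the proof of \Lmm\ref{lmuX Right substitution preservation}: a simultaneous induction on the structure of $M$, proving both parts together. The difference is that the substitution now pulls in the explicit import $\imp {\slmuX(Q)`a } `a [x] v \caps<v,`g> $ instead of the cut-shaped context $\vxdfull{Q}{x}{`g}$. The import introduces $x$, so whenever a cut over $`b$ has to be activated to the left, the step is an $(\actLN)$ step and is therefore allowed in $\redXsn$. As in the proof of \Lmm\ref{interpretation respects mur substitution CBN}, I reduce only the top cut in each case and apply the induction hypothesis to sub-terms, possibly inside exports and contexts. So the result is claimed for $\redXsn$, not for the strategy $\redXsN$.

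For part (2), with $`d \not= `b$, the cases $M = y$ and $M=`lz.P$ are immediate: the substitution commutes with the export by $(\Lii)$, $(\Liii)$ and $(\Li)$, and we then apply the induction hypothesis (2). The cases $M = PR$ and $M = `m`s.[`n]P$ with $`n\not=`b$ follow the same way via $(\Lv)$ and the induction hypothesis (2). The case $M = `m`s.[`b]P$ unfolds to $\slmuX(m `s . `g {P\mursubst[Q.`g/`b]\,Q})`d$. Here I invoke the induction hypothesis (1) on the inner $\slmuX(P\mursubst[Q.`g/`b]\,Q)`g$, and the goal is reached by rewriting the result as the outer substitution.

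For part (1), I first unfold $\slmuX({}{`@ N Q })`g$ as a cut of $\slmuX(N)`t$ against $\cut {\slmuX(Q)`a } `a + a \yxd{a}{t}{`g}$. The key preparatory step is the $\redXsS$ reduction of \Exm\ref{reduce in continuation}. It uses two $(\actR)$ steps, which are legitimate in $\redXsn$, and yields $\cut { \slmuX(N)`t } `t + t {\imp {\slmuX(Q)`a } `a [t] v \caps<v,`g> }$; this is exactly the import shape required by the statement. The cases then split on $N = M\mursubst[Q.`g/`b]$ as follows.
\begin{itemize}
\item $M = y$: an $(\actR)$ step with $(\deactL)$, or directly the $(\Imp)$ logical rule, gives the import at $y$.
\item $M=`lz.P$ and $M=PR$: I apply the induction hypothesis (2) inside, and then fold the result into the substitution form using $(\Lii)$ or $(\Lv)$ read backwards.
\item $M=`m`s.[`n]P$ with $`n \not= `b$: the $`m$-encoding $\cut {\cdots} `s + x \caps<x,`t>$ is first removed by $(\actR)$ with $(\deactR)$, which does not need $x$ introduced. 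The remaining cut over $`s$ against the import then becomes exactly the substitution.
\item $M=`m`s.[`b]P$: after the same removal, I apply the induction hypothesis (1) to $\slmuX(P\mursubst[Q.`g/`b]\,Q)`g$.
\end{itemize}

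I expect the last case, $M=`m`s.[`b]P$, to be the main obstacle. After the induction hypothesis we must still perform the $\redXsS$ preparation on the term $\slmuX({}{P\mursubst[Q.`g/`b]\,Q})`g$, which is itself an application, and check that this fits in $\redXsn$. We must also verify that the name $`s$ is not introduced, so that the final cut is genuinely the substitution and not a logical redex. If $`s$ turned out to be introduced (for instance when $P$ is a value), I would fall back on an analogue of \Lmm\ref{rescue lemma} with the import context, treating variables and abstractions directly.
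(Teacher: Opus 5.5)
Your plan is essentially the paper's: a simultaneous induction, the $\redXsS$ preparation of \Exm\ref{reduce in continuation} to turn the continuation into an import, and the induction hypotheses applied inside sub-terms, so the result holds for $\redXsn$ rather than for the strategy. The gap is in the two cases that are the point of this lemma, $M=\mu\sigma.[\nu]P$ and $M=\mu\sigma.[\beta]P$ in part (1).

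Write $I_t$ for the import $Q'\,\hat{\alpha}\,[t]\,\hat{v}\,\langle v\cdot\gamma\rangle$, where $Q'$ is the encoding of $Q$. After $\redXsS$ you have $(R\,\hat{\sigma}\dagger\hat{z}\,\langle z\cdot\tau\rangle)\,\hat{\tau}\dagger\hat{t}\,I_t$, with $R$ the encoding of the body. You remove the inner cut by $(\actR)$ followed by $(\deactR)$. But $(\actR)$ may only fire when the right-hand side does \emph{not} introduce the bound socket, and a capsule $\langle z\cdot\tau\rangle$ always introduces $z$. So this is not a step of $\Xs$ at all, not even of unrestricted $\redXs$. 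For the same reason, after $\redXsS$ only your $(\Imp)$ option is available when $M=y$.

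What works, and what the paper does, is to activate the \emph{outer} cut over $\tau$. Its left-hand side is a cut, which never introduces $\tau$, and $I_t$ introduces $t$, so this is an $(\actLN)$ step. Computing the substitution by $(\Lv)$, then $(\Li)$ on $R$ (which does not contain $\tau$), then $(\deactL)$ on the capsule, yields $R\,\hat{\sigma}\dagger\hat{z}\,I_z$. Now IH (2) (for $\nu\neq\beta$; your sketch of that case never invokes it) or IH (1) (for $\nu=\beta$) rewrites $R$ into the corresponding right substitution, which is exactly the unfolded target.

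Alternatively, you can remove the inner cut via \Lmm\ref{X renaming lemma}(1). Its proof uses $(\Cap)$/$(\Exp)$ when $\sigma$ is introduced, and otherwise an activation that is $(\actLN)$ here, since the capsule introduces $z$. This reaches the same term up to $\alpha$-conversion.

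Either way, the worries in your last paragraph disappear. The final cut equals the target simply by computing the right substitution, so whether $\sigma$ is introduced is irrelevant and no analogue of \Lmm\ref{rescue lemma} is needed. Also, the $\redXsS$ preparation of the inner application is already contained in IH (1).

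There is a smaller gap of the same kind in the case $M=PR$ of part (1). After the induction hypotheses there is still a top-level cut over $\tau$ against $I_t$. In the target, however, the import has been pushed into the position of the capsule that closes the application encoding. Reading $(\Lv)$ backwards therefore cannot match the two terms. You need the same $(\actLN)$ activation of the outer cut, which is legal for the same reason, followed by $(\Lv)$, $(\Li)$, $(\Liv)$ and $(\deactL)$. In the abstraction case no extra step is needed, since $(\Lii)$ recreates the outer cut.
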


 \begin{proof} 
By simultaneous induction on the structure of terms. 
The proof is as that for \Lmm \ref{lmuX mu right substitution}, but for the parts where reduction takes place that is not valid in $\redXsn$, to wit:

 \begin{enumerate} \itemsep 3pt
 \item

 \begin{description}

 \pushmyitem{2mm}[${M = `m`s . [`b] P}$]
\slmuX ({}{(`m`s.[`b]P)\mursubst[Q.`g/`b]\,Q})`g 
	&\ByDef& %
	\\
\slmuX ({}{(`m`s.[`g](P\mursubst[Q.`g/`b]\,Q))\,Q})`g 
	&\ByDef& \\
\slmuX (a |`t,x| {{}{`m`s.[`g]P\mursubst[Q.`g/`b]\,Q}} Q)`g 
	&\ByDef& \bsp (`t,x \textit{ fresh}) \\
\slmuX (a |`t,x| {m `s . `g {P\mursubst[Q.`g/`b]\,Q}} Q)`g 
	&\redXsS& \bsp (\ref{reduce in continuation}) \\
\cut { \cut { \slmuX (P\mursubst[Q.`g/`b]\,Q])`b } `s + z \caps<z,`t> } `t + x {\imp {\slmuX(Q)`a } `a [x] v \caps<v,`g> } 
	&\redXsn& \bsp (\actL) (`t \textit{ not introduced}) \\ 
\subR { \cut { \slmuX (P\mursubst[Q.`g/`b]\,Q)`b } `s + z \caps<z,`t> } `t + x {\imp {\slmuX(Q)`a } `a [x] v \caps<v,`g> } 
	&\substo& \\
\cut { \slmuX (P\mursubst[Q.`g/`b]\,Q)`g } `s + x {\imp {\slmuX(Q)`a } `a [x] v \caps<v,`g> } 
	&\rtcredXsn& \bsp ({\IH (1)}) \\
\cut { \slmuX (d P [Q.`g/`b])`b } `s + x \vxdofull{Q}{x}{`g}
	&=& \bsp (\textit{subst}) \\
\slmuX (F {<m `s . `b P>} [Q.`g/`b])`b 
	&\ByDef& %
	\\ 
\slmuX (d {`m`s.[`b]P} [Q.`g/`b])`b
 \end{array} $ 
 \arrayqed

 \myitem [${M = `m`s . [`n] P}$]
\slmuX ({}{(`m`s.[`n]P)\mursubst[Q.`g/`b]\,Q})`g 
	&\ByDef& 
\slmuX ({}{(`m`s.[`n](P\mursubst[Q.`g/`b]))\,Q})`g 
	&\ByDef& \\
\slmuX (A {m `s . `n {I P [Q.`g/`b]}} Q)`g 
	&\redXsS& \bsp (\ref{reduce in continuation}) \\
\Span{3}{ \cut { \cut { \slmuX (I P [Q.`g/`b])`n } `s + z \caps<z,`t> } `t + x {\imp {\slmuX(Q)`a } `a [x] v \caps<v,`g> } 
	~\redXs~ (\actL) } \\
\cut { \slmuX ({}{P\mursubst[Q.`g/`b]})`n } `s + x \Vxd{Q}{x}{`g} 
	&\rtcredXs& \bsp ({\IH (2)}) 
\cut { \slmuX (d P [Q.`g/`b])`n } `s + x \Vxd{Q}{x}{`g} 
	&\fromsubs& \\
\slmuX (d {<m `s . `n P>} [Q.`g/`b])`b 
	&\ByDef& 
\slmuX (d {`m`s.[`n]P} [Q.`g/`b])`b 
 \end{array} $

 \end{description}
 \end{enumerate}
 
 \observe
Reduction steps $\redXsS$ (which are in $\redXsn$) are needed to prepare the $(\actL)$ step, so all reductions are in $\redXsn$; notice that $ \imp {\slmuX(Q)`a } `a [x] v \caps<v,`g> $ introduces $x$, so the right-substitution can now be activated.
Since we show the property for $\redXsn$, induction can get applied inside sub-terms, like in the cases for abstraction and application.

 \end{proof}

We can now show a preservation result under $\redXsn$ for the interpretation. 
As with the previous lemma, some steps have to be executed to rewrite the term so that the previous lemma can be applied.

 \begin{theorem} [Preservation of $\mur$ $\redXsn$ reduction through $\slmuX(`.)`. $] 
 \label{lmuX Right CBN reduction preservation} ~
 \begin{enumerate}
 \item $ \slmuX ({}{`@ (`M`b.[`b]P) Q })`d \crXsn \slmuX ({}{`M`g.[`g] `@ {P\mursubst[Q.`g/`b]} Q })`d $.
 \item $ \slmuX ({}{`@ (`M`b.[`n]P) Q })`d \crXsn \slmuX ({}{`M`g.[`n] P\mursubst[Q.`g/`b]})`d $.
 \end{enumerate}
 \end{theorem}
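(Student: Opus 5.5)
The plan is to follow the proof of \Thm\ref{lmuX Right reduction preservation}, replacing the use of \Lmm\ref{lmuX Right substitution preservation} by \Lmm\ref{lmuX Right CBN substitution preservation}. The difference lies in how the substituted stack is presented. In \CBN\ we may only activate a left-to-right substitution of the cut with $`t$ when the right-hand net introduces the bound socket. The net $\Vxd{Q}{t}{`d}$ does not introduce $t$, whereas $\imp {\slmuX(Q)`a } `a [t] v \caps<v,`d>$ does. So each side is first rewritten with the $\redXsS$ steps of \Exm\ref{reduce in continuation} (which lie in $\redXsn$) into the form where the argument is already placed inside an import. Only then is $(\actLN)$ applied.

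For part (1), start from $\slmuX({}{`@ (`M`b.[`b]P) Q })`d$, unfold the application case of \Def\ref{from slmu to x}, and apply $\redXsS$. This yields $\cut { \cut { \slmuX (P)`b } `b + z \caps<z,`t> } `t + t {\imp {\slmuX(Q)`a } `a [t] v \caps<v,`d> }$. The outer cut does not introduce $`t$ on the left, while the import introduces $t$. Hence $(\actLN)$ is allowed, and the right substitution, via \Lmm\ref{X renaming lemma}, gives $\cut {\slmuX(P)`b } `b + t {\imp {\slmuX(Q)`a } `a [t] v \caps<v,`d> }$.

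Next there are two cases. If $`b$ is not introduced in $\slmuX(P)`b$, apply $(\actLN)$ again and reach $\slmuX(P)`b$ under the CBN-form substitution, i.e.\ $\slmuX(F P [Q.`d/`b])`b$. This is the CBN analogue of \Lmm\ref{rescue lemma}. If $`b$ is introduced, then $P$ is a value. For a variable, $(\Imp)$ applies; for an abstraction, the term already equals the substituted form, since $`b$ does not occur in the body. Either way the common target $\slmuX(F P [Q.`d/`b])`b$ is reached.

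On the right-hand side, $\slmuX({}{`M`g.[`g] `@ {P\mursubst[Q.`g/`b]} Q })`d$ is a cut of $\slmuX(P\mursubst[Q.`g/`b]\,Q)`g$ against $\caps<z,`d>$, which introduces $z$. So $(\actLN)$ renames it to $\slmuX(P\mursubst[Q.`d/`b]\,Q)`d$. Part (1) of \Lmm\ref{lmuX Right CBN substitution preservation} then reduces this in $\redXsn$ to the same target, which establishes $\crXsn$.

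Part (2) is the same argument, using part (2) of \Lmm\ref{lmuX Right CBN substitution preservation}. After the $\redXsS$ preparation and the $(\actLN)$ removal of the $`t$ cut, we get $\cut {\slmuX(P)`n } `b + t {\imp {\slmuX(Q)`a } `a [t] v \caps<v,`d> }$. Here $`b$ is not introduced, since it is only used inside named subterms, so a further $(\actLN)$ step gives $\slmuX(F P [Q.`d/`b])`n$. The other side reaches the same net through the renaming $(\actLN)$ and the lemma.

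I expect the main obstacle to be the value case in part (1), where $`b$ is introduced. I must check that the CBN form of the substitution on a value agrees exactly with the result of $(\Imp)$ or of garbage collection, and that no right-activation on a non-introduced socket is secretly required there. A second point to verify is that every $\redXsS$ step really lands in the $\redXsn$ fragment: each of its $(\actR)$ activations needs the right-hand side not to introduce the socket. This holds here because those right-hand sides are cuts.
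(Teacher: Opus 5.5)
Your proposal is correct and follows the paper's own proof. You prepare each side with the $\redXsS$ steps of \Exm\ref{reduce in continuation}, activate with the CBN form of $(\actL)$, and meet the other side via the renaming $(\actL)$ step and \Lmm\ref{lmuX Right CBN substitution preservation}. You are in fact more careful than the paper in two places. First, you make explicit the passage from the unactivated cut on $\beta$ to the implicit substitution, a CBN analogue of \Lmm\ref{rescue lemma} including the value cases, which the paper leaves implicit. Second, you insert the $\redXsS$ preparation in part (2), which the paper omits; without it, the paper's $(\actL)$ steps there are taken against a cut that does not introduce the socket, and so are not legal in $\redXsn$.
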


 \begin{proof} 
 \begin{enumerate}
 
 \firstitem $ \kern-7.5mm \begin{array}[t]{lclcl} \kern7.5mm 
\slmuX({}{`@ (`M`b.[`b]P) Q })`d 
	&\ByDef& 
\slmuX(a |`t,t| {m `b . `b P} Q)`d 
	&\redXsS& \bsp (\ref{reduce in continuation}) \\ 
\cut { \slmuX (U `b . `b P)`t } `t + t \vxdofull{Q}{t}{`d} 
	&\redXsn& \bsp (\actL) 
\subR { \slmuX (U `b . `b P)`t } `t + t \vxdofull{Q}{t}{`d} 
	&\substo& \\
\cut { \slmuX (P)`b } `b + t \vxdofull{Q}{t}{`d} 
	&\rtcworraXsn& \bsp (\ref{lmuX Right CBN substitution preservation}) 
\slmuX(P\mursubst[Q.`d/`b]\,Q)`d 
	&\worraXsn& \bsp (\actL) \\
\slmuX(U `g . `g {P\mursubst[Q.`g/`b]\,Q})`d 
	&\ByDef& 
\slmuX({}{`M`g.[`g]P\mursubst[Q.`g/`b]\,Q})`d 
 \end{array} $

 \item
 $ \begin{array}[t]{lclcl}
\slmuX ({}{`@ (`M`b.[`n]P) Q })`d
	&\ByDef& 
\slmuX (a {m `b . `n P} Q)`d
	&\redXsn& \bsp (\actL) \\ 
\subR { \slmuX (m `b . `n P)`s } `s + s { \vxd{Q}{s}{`d} }
	&=& 
\cut {\slmuX(P)`n } `b + t \vxd{Q}{t}{`d} 
	&\redXsn& \bsp (\actL) \\
\subR { \slmuX(P)`n } `b + t \vxd{Q}{t}{`d} 
	&\rtcworraXsn& \bsp (\ref{lmuX Right CBN substitution preservation}) 
\slmuX (I P [Q.`d/`b])`n 
	&=& \\ 
\Span{3}{
\subR {\slmuX (I P [Q.`g/`b])`n } `g + z \caps<z.`d> 
	~\worraXs (\actL) ~ %
\slmuX (m `g . `n {I P [Q.`g/`b]})`d
	~ \ByDef~ %
\slmuX ({}{`M`g.[`n] P\mursubst[Q.`g/`b]})`d
}
 \end{array} $
\arrayQED 
 \end{enumerate} 
 \end{proof}

 \resetqsymbol{`V}{Q} 
We will now turn to the interpretation of left structural reduction.
First we will show that left structural substitution is preserved;
we will write $ \vxdv{Q}{a}{`g} $ for $ \vxdvfull{Q}{a}{`g} $.
Notice that $ 
\vxdv{Q}{a}{`g} 
\not= \Vxd{Q}{a}{`g} = \cut { \slmuX(Q)`d } `d + d { \imp \caps<d,`v> `v [a] v \caps<v,`g> } $: inside the import, $d$ and $a$ have swapped position.

 \begin{lemma} [$\slmuX(`.)`. $ preserves $\slmu$'s left structural substitution] \label{mul substitution in X} ~
 \begin{enumerate}
 \item $ 
 \slmuX (Q\,{\mulsubst[Q.`g/`b]M})`g 
	\rtcredXs 
 \subR { \slmuX ({}M)`b } `b + a \vxdV{Q}{a}{`g} 
 	\ByDef
 \LeftSub [Q.`g/`b] \slmuX({}M)`b $.
 
 \item $ 
\slmuX (\mulsubst[Q.`g/`b]M)`d 
\rtcredXs 
\subR { \slmuX ({}M)`d } `b + a \vxdV{Q}{a}{`g}  
\ByDef 
\LeftSub [Q.`g/`b] \slmuX({}M)`d 
$.
 \end{enumerate}

 \end{lemma}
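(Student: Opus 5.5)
The plan is to prove both parts by simultaneous induction on the structure of $M$, following the proof of \Lmm\ref{lmuX mu right substitution}. We reduce only the top cut of each interpreted term and apply the induction hypothesis to sub-terms. In part (2) we may reduce in the continuation and under exports, since the statement concerns $\rtcredXs$ and not a strategy. Throughout, the left-structural substitution is represented by substituting, over $`b$, the net $\vxdV{Q}{a}{`g}$, as in the right-hand side of the statement. This net differs from the one used for $\mur$ in that $a$ occurs as the argument inside the import rather than as the function.

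For part (2), most cases need no reduction. If $M$ is a variable, $`b$ does not occur, and the substitution disappears by the garbage-collection clauses of \Lmm\ref{garbage collection}. For abstraction and application we push the substitution through the export and cut constructors using the propagation clauses, and then apply \IH(2) to the components. For $M = `m`s.[`n]P$ with $`n \not= `b$, $\mulsubst[Q.`g/`b]$ commutes with the constructor, and \IH(2) on $P$ finishes the case. The one interesting case is $M = `m`s.[`b]P$. Here $\mulsubst[Q.`g/`b](`m`s.[`b]P) = `m`s.[`g]`@ Q (\mulsubst[Q.`g/`b]P)$, and its interpretation under $`d$ is the cut $\cut { \slmuX(Q\,{\mulsubst[Q.`g/`b]P})`g } `s + z \caps<z,`d>$. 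Applying \IH(1) to $P$ rewrites the left component into the substituted form of $\slmuX(P)`b$. It then remains to check that this equals the substitution over $`b$ of $\cut{\slmuX(P)`b} `s + z \caps<z,`d>$, which follows by one propagation step over the cut.

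For part (1), we first unfold $\slmuX(Q\,{\mulsubst[Q.`g/`b]M})`g$ as the interpretation of an application. We then perform the two activations of \Exm\ref{reduce in continuation}, moving $\slmuX(Q)$ into the inner import, so that the term becomes a cut of $\slmuX(\mulsubst[Q.`g/`b]M)$ against exactly the net $\vxdV{Q}{a}{`g}$. Next, \IH(2), applied under a fresh name, turns the left component into $\slmuX(M)$ with the left substitution already performed over $`b$. Finally we activate the remaining cut, by $(\actL)$ when its name is not introduced, and otherwise by a case split on values as in \Lmm\ref{rescue lemma}, and we merge it with that substitution. The merge works because both substitute the same net, and this yields $\subR{\slmuX(M)`b} `b + a \vxdV{Q}{a}{`g}$.

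The main obstacle is this last step. One has to show that a cut against $\vxdV{Q}{a}{`g}$ placed on top of a term that already carries the $`b$-substitution of the same net collapses into a single substitution over $`b$. One also has to verify that the side conditions for activation hold, namely whether $a$ and the output name are introduced. If a direct merge fails, I would strengthen the induction hypothesis to state part (1) for an arbitrary output name and then derive the stated form by renaming.
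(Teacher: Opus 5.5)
Your part (2) is the paper's argument, including the $\mu\sigma.[\beta]P$ case, which uses (1) on $P$ and one propagation step over the cut. Part (1) is also sound, with one slip: perform only the first activation of Example~\ref{reduce in continuation}. That single $(\actR)$ step already gives $[\![M^{*}]\!]_\sigma\,\hat{\sigma}\dagger\hat{r}\,N$, where $M^{*}$ is the left-substituted term and $N$ is the net the statement substitutes over $\beta$. The second activation would move $[\![M^{*}]\!]_\sigma$ itself into the import and destroy that shape.

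The merge you call the main obstacle does hold. Write $S_\beta(P)$ for the substitution of $\hat{a}N$ for $\beta$ in $P$. Then $S_\sigma(S_\beta(P)) = S_\beta(P[\beta/\sigma])$ whenever $\beta,\sigma$ are not free in $N$, which holds here by freshness and Barendregt's convention. This is a routine induction on $P$. Only the clauses $(\deactL)$ and $(\Lii)$ do more than propagate. In both, $S_\sigma$ leaves the copies of $N$ produced by $S_\beta$ unchanged, by Lemma~\ref{garbage collection}. It also acts on an occurrence of $\sigma$ exactly as $S_\beta$ acts on the corresponding $\beta$ on the right-hand side. Apply this to $P=[\![M]\!]_\sigma$, with $P[\beta/\sigma]=[\![M]\!]_\beta$. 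For applications and $\mu$-terms the left component is still a cut, so one $(\actL)$ step closes the case. For variables use $(\actR)$. For abstractions the cut already is the $(\Lii)$-unfolding, as in Lemma~\ref{second rescue lemma}. The fallback you mention is not needed.

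The difference from the paper is how part (1) is organised. You always go through part (2) for $M$ itself under a fresh name and then merge; for values this coincides with the paper. For applications and $\mu$-terms the paper arranges things so that no composition lemma is needed:
\begin{itemize}
\item In the $\mu$-cases it first fires $(\actL)$ on the interpreted $\mu$-term, which replaces the output capsule by $N$. Only then does it apply (1) or (2) to the body.
\item In the application case it names the outer cut $\beta$ itself, which is legitimate since $\beta$ no longer occurs in $M^{*}$. It then applies (2) to the two components, so the merge collapses to idempotence of $S_\beta$. Its final step there needs an $(\actL)$ that is left implicit.
\end{itemize}
Your route gives a uniform argument, makes that step explicit, and yields a reusable fact about implicit substitution, at the cost of the auxiliary lemma. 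The paper's route needs no extra lemma but repeats the case analysis. Both need the induction ordered so that (2) at $M$ precedes (1) at $M$.
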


 \begin{proof}
Simultaneously by induction on the structure of terms; as before, we only reduce the top cut in each step, but apply induction to sub-terms.

 \begin{enumerate} \itemsep 3pt
 \item 

 \begin{description}

 \pushmyitem{2mm}[$M = y$]
\slmuX (Q\,{\mulsubst[Q.`g/`b]y})`g 
	~=~
\slmuX (Qy)`g 
	&\ByDef& 
\cut {\slmuX (Q)`t } `t + t { \cut \caps<y,`a> `a + a \yxdfull{a}{t}{`g} } 
	&\redXs& \bsp (\actR) \\ 
\subL {\slmuX (Q)`t } `t + t { \cut \caps<y,`a> `a + a \yxdfull{a}{t}{`g} } 
	&\substo& 
\cut \caps<y,`a> `a + a {\cut {\slmuX (Q)`t } `t + t \yxdfull{a}{t}{`g} } 
	&\redXs& \bsp (\actR) \\ 
\subL \caps<y,`a> `a + a {\cut {\slmuX (Q)`t } `t + t \yxdfull{a}{t}{`g} } 
	&\substo& 
\vxdV{Q}{y}{`g} 
	\hfill \fromsubs& \\
{\subR { \slmuX (V y)`b } `b + a \vxdV{Q}{a}{`g} } 
	&\ByDef& 
\LeftSub [Q.`g/`b] \slmuX (y)`b 
 \end{array} $

 \myitem[$M = `ly.P$]
\slmuX (Q\,\mulsubst[Q.`g/`b]{(`ly.P)})`g 
	&\ByDef& 
\cut {\slmuX (Q)`t } `t + t { \cut { \slmuX (J {{(`ly.P)}} [Q.`g/`b])`r } `r + r \yxd{r}{t}{`g} }
	&\redXs& \bsp (\actR) \\ 
\subL {\slmuX (Q)`t } `t + t { \cut { \slmuX (J {{(`ly.P)}} [Q.`g/`b])`r } `r + r \yxd{r}{t}{`g} }
	&\substo& %
\cut { \slmuX (J {{(`ly.P)}} [Q.`g/`b])`s } `s + r {\cut {\slmuX (Q)`t } `t + t \yxd{r}{t}{`g} } 
 	&\ByDef& \\
\cut { \slmuX (J {{(`ly.P)}} [Q.`g/`b])`s } `s + r \vxdv{Q}{r}{`g} 
	&\rtcredXs& \bsp ({\IH (2)}) 
\cut { \slmuX (s {`ly.P} [Q.`g/`b])`s } `s + r \vxdv{Q}{r}{`g} 
	&\rtcredXs& \\
\Span{3}{ \begin{array}{@{}lclcl}
\cut { \exp y { \LeftSub [Q.`g/`b] {\slmuX(P)`n } } `n . `s } `s + r { \vxdv 
{Q}{r}{`g} }
	&\fromsubs& 
\subR {({ \slmuX (L |`n| y . P)`b })} `b + r \vxdv{Q}{r}{`g} 
	&\ByDef& 
\LeftSub [Q.`g/`b] \slmuX (`ly.P)`b 
 \end{array} }
 \end{array} $

 \myitem [$M = PR$] 
\slmuX (Q\,{\mulsubst[Q.`g/`b]{(P R)}})`g 
	~\ByDef~ 
\slmuX (a |`s,s| Q {\mulsubst[Q.`g/`b]{(P R)}})`g 
	&\ByDef& \\
\slmuX (a |`s,s| Q {a |`t,t| {\mulsubst[Q.`g/`b]P} {\mulsubst[Q.`g/`b]R}})`g 
	& \redXs & \bsp (\actR) \\ 
\subL { \slmuX(Q)`s } `s + s { \cut { \slmuX (a |`t,t| {\mulsubst[Q.`g/`b]P} {\mulsubst [Q.`g/`b]R})`a } `a + a \yxd{a}{s}{`g} }
	&\substo& \\
\cut { \cut { \slmuX(\mulsubst[Q.`g/`b]P)`t } 
	`t + t { \cut { \slmuX(\mulsubst[Q.`g/`b]R)`a } 
			`a + a \yxd{a}{t}{`b} } 
		} `b + z \vxdv{Q}{z}{`g} 
	& \rtcredXs & \bsp ({\IH (2)}) \\
\cut { \cut { \LeftSub [Q.`g/`b] \slmuX(P)`t } 
	`t + t { \cut { \LeftSub [Q.`g/`b] \slmuX(R)`a } 
			`a + a \yxd{a}{t}{`b} } 
		} `b + z \vxdv{Q}{z}{`g} 
	&\fromsubs& \\
\subR {\firstrenamefalse \slmuX (a |`t,t| P R)`b } `b + z \vxdv{Q}{z}{`g} 
	&\ByDef& \\
\LeftSubF [Q.`g/`b] {\slmuX (PR)`b }
	~\ByDef~ 
\LeftSub [Q.`g/`b] \slmuX (PR)`b 
 \end{array} $

 \myitem [${M = `m`s . [`b] P}$]
\slmuX (Q\,{\mulsubst[Q.`g/`b](`m`s.[`b]P)})`g 
	&=& 
\slmuX (Q\,{(`m`s.[`g]Q\mulsubst[Q.`g/`b]P)})`g 
	~\ByDef~ \\
\Span{3}{ \begin{array}{@{}lclcl}
\slmuX (a Q {m `s . `g {Q\mulsubst[Q.`g/`b]P}})`g 
	&\redXs& \bsp (\actR) \\ 
\slmuX (l |`d,d| Q {m `s . `g {Q\mulsubst[Q.`g/`b]P}})`g 
 	&\substo& \\
\cut { \slmuX (m `s . `g {Q\,{\mulsubst[Q.`g/`b]P}})`r } `r + r { \cut { \slmuX(Q)`d } `d + d \yxd{r}{d}{`g} }
	&\ByDef& 
  \end{array} } \\
\cut { \slmuX (m `s . `g {Q\,{\mulsubst[Q.`g/`b]P}})`r } `r + r \vxdv{Q}{r}{`g} 
	&\redXs& \bsp (\actL) 
\subR { \slmuX (m `s . `g {Q\,{\mulsubst[Q.`g/`b]P}})`r } `r + r \vxdv{Q}{r}{`g} 
	&\substo& \\
\cut { \slmuX (Q\,\mulsubst[Q.`g/`b]P)`g } `s + r \vxdv{Q}{r}{`g}
	&\rtcredXs& \bsp ({\IH (1)}) 
\cut { \slmuX (s P [Q.`g/`b])`b } `s + r \vxdv{Q}{r}{`g}
	&\fromsubs& \\
\LeftSubF [Q.`g/`b] { \slmuX (m `s . `b P)`b }
	&\ByDef& 
\LeftSub [Q.`g/`b] \slmuX (`m`s.[`b]P)`b 
 \end{array} $
 

 \myitem [${M = `m`s . [`n] P}$]
\slmuX (Q\,{\mulsubst[Q.`g/`b](`m`s.[`n]P)})`g 
	&=& 
\slmuX ({Q\,(`m`s.[`n]\mulsubst[Q.`g/`b]P)})`g 
	&=& \\
\Span{3}{ \begin{array}{@{}lclcl}
\slmuX (a Q {m `s . `n {\mulsubst[Q.`g/`b]P}})`g 
	&\redXs& \bsp (\actR) \\
  \end{array} } \\
\cut { \cut { \slmuX (\mulsubst[Q.`g/`b]P)`n } `s + z \caps<z,`r> } `r + r \vxdv{Q}{r}{`g}
	&\redXs& \bsp (\actL) 
\cut { \slmuX (\mulsubst[Q.`g/`b]P)`n } `r + r \vxdv{Q}{r}{`g}
	& \rtcredXs & \bsp ({\IH (2)}) \\
\cut { \slmuX (s P [Q.`g/`b])`n } `s + r \vxdv{Q}{r}{`g}
	&\fromsubs& 
\LeftSubF [Q.`g/`b] { \slmuX (m `s . `n P)`b }
	&\ByDef& \\
\LeftSubF [Q.`g/`b] { \slmuX (`m`s.[`n]P)`b }
	&\ByDef& 
\LeftSub [Q.`g/`b] \slmuX (`m`s.[`n]P)`b 
 \end{array} $

 \end{description}

 \item Much like that for the previous part, but easier since the $\mur$ substitution does not insert terms. \QED 
 \Comment
 { 
 
 \begin{description}

 \pushmyitem{2mm}[$M = z$]
\slmuX (\mulsubst[Q.`g/`b]y)`d 
	&=& 
\slmuX (y)`d 
	&\ByDef& 
\slmuX (V y)`d 
	&=& 
\subR { \slmuX (V y)`d } `b + z {\cut {\slmuX (Q)`a } `a + a \yxd{a}{z}{`g} } 
	&\ByDef& 
\LeftSub [Q.`g/`b] \slmuX(y)`d 
 \end{array} $

 \myitem[$M = `lz.P$]
\slmuX (\mulsubst[Q.`g/`b]{(`lz.P)})`d
	&\ByDef& 
\exp y { \slmuX (\mulsubst[Q.`g/`b]P)`n } `n . `d 
	&\rtcredXs& \bsp ({\IH (2)}) 
\exp y {\slmuX(S P [Q.`g/`b])`n } `n . `d 
	&\substo& \\ 
\subR {({ \slmuX (L |`n| y . P)`d })} `b + z \vxdv {Q}{z}{`g} 
	&\ByDef& 
\LeftSub [Q.`g/`b] \slmuX(`lz.P)`d 
 \end{array} $

 \myitem [$M = PR$] 
\LeftSubF [Q.`g/`b] {\slmuX (PR)`d }
	&\ByDef& 
\subR { \slmuX (a |`t,t| P R)`d } `b + z \vxdv{Q}{z}{`g} 
	&\substo& \\
\Span{3}{ \begin{array}{@{}lclcl}
\slmuX(a {s P [Q.`g/`b]} {s R [Q.`g/`b]})`d }
	&\rtcworraXs& \bsp ({\IH (2)}) \\
\slmuX (a {\mulsubst[Q.`g/`b]P} {\mulsubst[Q.`g/`b]R})`g 
	&\ByDef& 
\slmuX (\mulsubst[Q.`g/`b]PR)`d 
 \end{array} $

 \myitem [${M = `m`s . [`b] P}$]
\LeftSub [Q.`g/`b] \slmuX (`m`s.[`b]P)`d 
	&\ByDef& 
\LeftSubF [Q.`g/`b] { \slmuX (m `s . `b P)`d } 
	&=& \\
\cut { \slmuX (S P [Q.`g/`b])`b } `s + z \caps<z,`d> 
	&\rtcworraXs& \bsp ({\IH (1)}) \\
\slmuX (m `s . `g {a Q \mulsubst[Q.`g/`b]P})`d 
	&\ByDef& \\
\slmuX ({`m`s.[`g]Q\,\mulsubst[Q.`g/`b]P})`d 
	~\ByDef~ 
\slmuX (\mulsubst[Q.`g/`b]{(`m`s.[`b]P)})`d 
 \end{array} $

 \myitem [${M = `m`s . [`n] P}$]
\LeftSub [Q.`g/`b] \slmuX (`m`s.[`n]P)`d 
	&\ByDef& 
\LeftSubF [Q.`g/`b] { \slmuX (m `s . `n P)`d } 
	~ = \\
\cut { \slmuX (s P [Q.`g/`b])`n } `s + z \caps<z,`d> 
	&\rtcworraXs& \bsp ({\IH (2)}) 
\cut { \slmuX (\mulsubst[Q.`g/`b]P)`n } `s + z \caps<z,`d> 
 	~\ByDef~ 
\slmuX (\mulsubst[Q.`g/`b]{(`m`s.[`n]P)})`d 
 \end{array} $

 \end{description}
}

 \end{enumerate}

 \observe
All cases need cut-eliminations, that are all valid in $\redXsv$; notice that if $Q$ is a value, then $\slmuX(Q)`s $ introduces $`s$.

 \end{proof}

With this result we can now show that our encoding respects $\mul$ contraction.
First we need to show:

 \begin{lemma} \label{second rescue lemma}
\resetqsymbol{`V}{V} 

$ \begin{array}[t]{lclclcl}
\cut { \slmuX (P)`b } `b + r \vxdv{Q}{r}{`d} 
	&\rtcredXsv& 
\slmuX(S P [Q.`d/`b])`b 
 \end{array} $.
 \end{lemma}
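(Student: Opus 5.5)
The plan mirrors the proof of \Lmm\ref{rescue lemma}, with the left-substitution context replacing the right one. The target $\slmuX(S P [Q.`d/`b])`b$ abbreviates the left substitution $\subR {\slmuX(P)`b} `b + r \vxdv{Q}{r}{`d}$, where $\vxdv{Q}{r}{`d}$ denotes the net $\cut {\slmuX(Q)`t} `t + t {\imp \caps<r,`v> `v [t] v \caps<v,`d>}$. In this net $r$ occurs only as the capsule inside an import that is nested under a cut, so it is never \emph{introduced} by $\vxdv{Q}{r}{`d}$. I would split the argument on whether $\slmuX(P)`b$ introduces $`b$.

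\emph{First case: $\slmuX(P)`b$ does not introduce $`b$.} Then $(\actL)$ applies directly, since its side condition only asks that $`b$ is not introduced. That step is also allowed in $\redXsv$, because $(\actL)$ is unrestricted there. It yields $\subR {\slmuX(P)`b} `b + r \vxdv{Q}{r}{`d}$, which is by definition the right-hand side. Alternatively, $(\actRV)$ could not be used in this case, so $(\actL)$ is the only activation available.

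\emph{Second case: $\slmuX(P)`b$ introduces $`b$.} Inspecting \Def\ref{from slmu to x}, the interpretation of an application or of a $`m$-abstraction always produces a cut at top level, and a cut introduces nothing. Hence $P$ must be a value and, by Barendregt's convention, $`b$ does not occur in $P$. There are two subcases.
\begin{itemize}
\item If $P=y$, then $\slmuX(y)`b = \caps<y,`b>$, which introduces $`b$, while $r$ is not introduced on the right. So $(\actRV)$ fires, giving $\subL {\caps<y,`b>} `b + r \vxdv{Q}{r}{`d}$. By \Lmm\ref{renaming lemma} this equals $\vxdv{Q}{y}{`d}$. The same net is obtained by pushing $\subR {\caps<y,`b>} `b + r {\cdot}$ through $\vxdv{Q}{r}{`d}$ using $(\deactR)$ on the capsule.
\item If $P=`ly.P'$, then $\slmuX(P)`b$ is the export $\exp y {\slmuX(P')`g} `g . `b$ and $`b \notele \slmuX(P')`g$. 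Here I would reduce with $(\actRV)$ and compute $\subL {\cdot} `b + r \vxdv{Q}{r}{`d}$ via the rule $(\Lii)$, checking that the result agrees with $\subR {\exp y {\slmuX(P')`g} `g . `b} `b + r \vxdv{Q}{r}{`d}$. The latter pushes the substitution through the export by $(\Rii)$, but finds no occurrence of $r$ inside the export, so by $(\Rgc)$ it leaves the export unchanged.
\end{itemize}

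The hard step is the abstraction subcase. The two substitutions $\subL{\cdot}$ and $\subR{\cdot}$ do not compute to syntactically identical nets: $(\Lii)$ re-introduces a cut against $\vxdv{Q}{r}{`d}$, whereas the $\subR{\cdot}$ side leaves the export intact. I therefore expect to follow \Lmm\ref{rescue lemma} and read the notation $\slmuX(S P [Q.`d/`b])`b$, for a value $P$, as the net left after the remaining cut, so that equality holds up to the conventions used in that lemma. If this reading fails, the fallback is to finish the computation by an $(\Ins)$ step against the import inside $\vxdv{Q}{r}{`d}$, after first activating the outer cut on $`t$ with $(\actRV)$. That activation is legitimate precisely when $\slmuX(Q)`t$ introduces $`t$, which holds whenever $Q$ is a value; this matches the CBV setting in which the lemma is used.
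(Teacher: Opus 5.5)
Your first case and the variable subcase are correct and agree with the paper. The abstraction subcase, however, has a genuine gap, and it comes from having the two substitution operators swapped.

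The target $\slmuX(S P [Q.`d/`b])`b$ is the right substitution $\subR { \slmuX (P)`b } `b + r \vxdv{Q}{r}{`d}$ that $(\actL)$ produces. It is computed by the rules $(\deactL)$, $(\Li)$, $(\Lii)$, \ldots. The rules $(\Rii)$ and $(\Rgc)$ belong to the left substitution that $(\actRV)$ produces. Likewise, in the variable subcase the rule you need is $(\deactL)$, not $(\deactR)$.

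Computing the target with the correct rules settles the abstraction subcase without any reduction:
\begin{itemize}
\item For $P = \lambda y.N$, the net $\slmuX (P)`b$ is $\exp y {\slmuX (N)`g } `g . `b$.
\item $(\Lii)$ rewrites the target to a cut against $\vxdv{Q}{r}{`d}$. On its left is an export with a fresh output, whose body has the same right substitution pushed into it.
\item Since $`b$ does not occur in $\slmuX (N)`g$, $(\Lgc)$ removes that inner substitution. This non-occurrence is what it means for the export to introduce $`b$; it does not come from Barendregt's convention.
\item What remains is, up to $\alpha$-conversion, the starting net $\cut { \slmuX (P)`b } `b + r \vxdv{Q}{r}{`d}$.
\end{itemize}
So the lemma holds in this subcase by reflexivity of $\rtcredXsv$. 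This is exactly the paper's one-line argument, and the abstraction case of \Lmm\ref{rescue lemma} works the same way.

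Instead, you fire $(\actRV)$. In $\redXsv$ this is the only activation available at the top of the starting net, because the export introduces $`b$ and so $(\actL)$ is blocked. That step moves the export into the place of the capsule on $r$ inside the import of $\vxdv{Q}{r}{`d}$. The resulting net really is different from the target, and from there you would have to reduce back to the very net you started from.

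Neither of your remedies does this. The notation $\slmuX(S P [Q.`d/`b])`b$ cannot be reread for values: it is fixed in \Lmm\ref{mul substitution in X} and used with that meaning in \Thm\ref{lmuX Left reduction preservation}. The $(\Ins)$ fallback only reduces further away from the target. It also needs $Q$ to be a value, which the lemma does not assume.
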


 \begin{proof}
If $`b$ is not introduced, then the activation of the substitution is permitted (notice that $t$ is not introduced in $\vxdv{Q}{t}{`d} = \vxdvfull{Q}{t}{`d} $). 
Otherwise, $P$ is a value and $`b$ does not occur in $P$; we have two cases:

 \begin{description}
 
 \myitem [$V = y$]
\cut { \slmuX (V y)`b } `b + r \vxdv{Q}{r}{`d} 
	&\redXsv& \bsp (\actR) 
\subL { \slmuX (V y)`b } `b + r \vxdvfull{Q}{r}{`d} 
	&\substo& \\
\vxdvfull{Q}{y}{`d} 
	&\fromsubs& 
\slmuX(G {V y} [Q.`d/`b])`b 
 \end{array} $

 \myitem [$V = `ly.P$]
\cut {\slmuX (<L |`g| y . P>)`b } `b + t \vxdv{Q}{t}{`d} 
	&\fromsubs& \bsp ({`b \notele \slmuX(P)`g }) 
\slmuX(g {<L |`g| y . P>} [Q.`d/`b])`b 
	&\ByDef& \\
\slmuX(g `ly.P [Q.`d/`b])`b 
 \end{array} $
\arrayqed

 \end{description}
 \end{proof}

With this result we can now show:

 \begin{theorem} [Preservation of $\mul$ contraction through $\slmuX(`.)`. $] \label{lmuX Left reduction preservation} ~
 \begin{enumerate}
 \item $ \slmuX ({}{`@ Q (`M`b.[`b]P)})`d \crXs \slmuX ({}{`M`g.[`g] `@ Q \mulsubst[Q.`g/`b]P })`d $.
 \item $ \slmuX ({}{`@ Q (`M`b.[`h]P)})`d \crXs \slmuX ({}{`M`g.[`h] \mulsubst[Q.`g/`b]P })`d $.
 \end{enumerate}
 \end{theorem}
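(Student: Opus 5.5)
The proof mirrors that of \Thm\ref{lmuX Right reduction preservation}, with right-structural substitution replaced by its left counterpart. For each part we build a common reduct of both sides in $\redXs$. On the left we contract top-level cuts in the interpretation of the application. On the right we expand the interpretation of the $`m$-abstraction, then appeal to \Lmm\ref{mul substitution in X}.

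For the first part, unfold
\[
\slmuX ({}{`@ Q (`M`b.[`b]P)})`d \;=\; \cut { \slmuX(Q)`t } `t + t { \cut { \slmuX(`M`b.[`b]P)`a } `a + a \yxd{a}{t}{`d} }.
\]
We first activate the outer cut with $(\actR)$, which pushes $\slmuX(Q)`t$ into the inner structure, exactly as in the case $M = `m`s.[`b]P$ of \Lmm\ref{mul substitution in X}(1). This yields $\cut { \slmuX(`M`b.[`b]P)`r } `r + r \vxdv{Q}{r}{`d}$. Unfolding $\slmuX(m `b . `b P)`r$ gives a cut over $`b$ with the capsule $\caps<z,`r>$, which is not introduced. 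Activating it by $(\actL)$ and applying renaming (\Lmm\ref{renaming lemma}) leaves $\cut { \slmuX(P)`b } `b + r \vxdv{Q}{r}{`d}$. Then \Lmm\ref{second rescue lemma} reduces this to $\LeftSub [Q.`d/`b] \slmuX(P)`b$.

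From the other side, $\slmuX({}{`M`g.[`g] `@ Q \mulsubst[Q.`g/`b]P})`d$ is a cut of $\slmuX(Q\,\mulsubst[Q.`g/`b]P)`g$ against $\caps<z,`d>$. One $(\actL)$ step with renaming turns it into $\slmuX(Q\,\mulsubst[Q.`d/`b]P)`d$. By \Lmm\ref{mul substitution in X}(1) this reduces to $\LeftSub [Q.`d/`b] \slmuX(P)`b$, with the output renamed to $`d$. This closes the diagram, up to tracking that the renaming of $`g$ to $`d$ commutes with the left-substitution.

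The second part ($`h \neq `b$) is easier. On the left, after the same $(\actR)$ step, the capsule cut over $`b$ in $\slmuX(m `b . `h P)`r$ is activated. Since $`r \notin \slmuX(P)`h$, this leaves $\cut { \slmuX(P)`h } `b + r \vxdv{Q}{r}{`d}$. Here $`b$ is not introduced, as $P$ names its output $`h$, so a further $(\actL)$ gives $\LeftSub [Q.`d/`b] \slmuX(P)`h$. On the right, $\slmuX(m `g . `h {\mulsubst[Q.`g/`b]P})`d$ reduces by $(\actL)$ and renaming to $\slmuX(\mulsubst[Q.`d/`b]P)`h$, which \Lmm\ref{mul substitution in X}(2) sends to the same term.

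The main obstacle is the first reduction step on the left-hand side. Whether the outer $(\actR)$ is allowed depends on whether $\slmuX(Q)`t$ introduces $`t$, and in the $\redXsv$ setting $Q$ must then be a value. So we must check that the cut really ends up in the form $\vxdv{Q}{r}{`d}$, with $d$ and $a$ swapped inside the import, rather than $\Vxd{Q}{r}{`d}$. If the direct $(\actR)$ is blocked, we first try the rearrangement of \Exm\ref{reduce in continuation} ($\redXsS$) to bring the term into that shape.
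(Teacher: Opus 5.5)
Your proposal is correct and follows the paper's proof essentially step for step. On the left you use one $(\actR)$ on the outer cut, which does yield the context with the import's inputs swapped, then an $(\actL)$ through the interpretation of the $\mu$-abstraction, then \Lmm\ref{second rescue lemma} in part~1 (a further $(\actL)$ in part~2); on the right you use a single $(\actL)$ on the capsule cut followed by \Lmm\ref{mul substitution in X}.

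The worry in your last paragraph is misplaced for this statement. Full $(\actR)$ only requires that the right-hand net not introduce the bound input, and this holds automatically because that net is a cut. Whether the interpretation of $Q$ introduces its output, and hence any detour via \Exm\ref{reduce in continuation}, only matters for $(\actRV)$ under $\redXsv$, where $Q$ is a value anyway.
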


 \begin{proof} 
 \begin{enumerate}
 
 \firstitem $ \kern-7.5mm \begin{array}[t]{lclcl} 
\kern7.5mm 
\slmuX ({}{`@ Q (`M`b.[`b]P)})`d
	&\ByDef& 
\slmuX (a |`t,t| Q {m `b . `b P})`d
	&\redXs& \bsp (\actR) \\
\slmuX (l |`t,t| Q {m `b . `b P})`d
	&\substo& 
\cut { \slmuX (m `b . `b P)`r } `r + r {\cut {\slmuX (Q)`t } `t + t \yxd{r}{t}{`g} } 
	&\ByDef& \\
\cut { \slmuX (m `b . `b P)`r } `r + r \vxdv{Q}{r}{`d} 
	&\redXs& \bsp (\actL) 
\subR { \slmuX (m `b . `b P)`r } `r + r \vxdv{Q}{r}{`d}  
	&\substo& \\
\cut { \slmuX (P)`b } `b + r \vxdv{Q}{r}{`d} 
	  &\rtcworraXs& \bsp (\ref{second rescue lemma}) 
\slmuX (S P [Q.`d/`b])`b 
	&\rtcworraXs& \bsp (\ref{mul substitution in X}) \\ 
\Span{3}{ \begin{array}{@{}lclcl}
\slmuX ({}{`@ Q \mulsubst[Q.`d/`b]P })`d
	~\worraXs (\actL) ~ 
\slmuX (m `g . `g {{}{`@ Q \mulsubst[Q.`g/`b]P }})`d
	~\ByDef~ 
\slmuX ({}{`M`g.[`g] `@ Q \mulsubst[Q.`g/`b]P })`d
 \end{array} } 
 \end{array} $ 
 
 \item
$ \begin{array}[t]{lclclcl} 
\slmuX ({}{`@ Q (`M`b.[`h]P)})`d
	&\ByDef& 
\Span{3}{
\slmuX (a |`a,a| Q {m `b . `h P})`d ~ \redXs (\actR)
}	\\ 
\cut {\slmuX(m `b . `h P)`r } `r + r \vxdv{Q}{r}{`d} 
	&\redXs& \bsp (\actL) 
\cut { \slmuX (P)`d } `b + r \vxdv{Q}{r}{`d} 
	&\redXs& \bsp (\actL) 
\subR { \slmuX (P)`d } `b + r \vxdv{Q}{r}{`d} 
	\\ 
\rtcworraXs (\ref{mul substitution in X}) ~ \slmuX (\mulsubst[Q.`d/`b]P)`d 
	&\worraXs& \bsp (\actL) 
\slmuX (m `g . `h {\mulsubst[Q.`g/`b]P })`d
	&\ByDef& 
\slmuX ({}{`M`g.[`h] \mulsubst[Q.`g/`b]P })`d
 \end{array} $ 
 
 \end{enumerate}
 \end{proof}

We conclude the results of this paper with the following:

 \begin{theorem}
 \begin{enumerate}
 \firstitem If $ P \rednclmu Q $, then $ \slmuX(P)`d \crXs \slmuX(Q)`d $. 
 \item If $ P \rednclmun Q $, then $ \slmuX(P)`d \crXsn \slmuX(Q)`d $. 
 \item If $ P \rednclmuv Q $, then $ \slmuX(P)`d \crXsv \slmuX(Q)`d $. 
 \end{enumerate}
 \end{theorem}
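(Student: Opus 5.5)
The proof goes by induction on the definition of one-step reduction $P \rednclmu Q$ in $\slmu$, separating the root redexes from the contextual closure. The three parts are proved simultaneously: for each root case we name the result already established for full reduction, and then inspect its proof to check that every $\X$-step used is admissible in $\redXsn$, respectively $\redXsv$.

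For the root rules we proceed as follows.
\begin{itemize}
\item $(`b)$: use \Thm\ref{slmuX respects beta reduction}. Its closing observation already records that all steps are in $\redXsn$. For $\redXsv$ the argument must be a value $V$, so $\slmuX(V)`a$ introduces $`a$ and the $(\actR)$ steps remain $(\actRV)$.
\item $(\mur)$ in full reduction and under \CBV: use \Thm\ref{lmuX Right reduction preservation}. Its observation notes that the steps are valid in \CBV; in \Lmm\ref{lmuX mu right substitution} all activations are $(\actL)$ on cuts whose left side does not introduce the name, which is permitted in $\redXsv$.
\item $(\mur)$ under \CBN: use \Thm\ref{lmuX Right CBN reduction preservation} instead. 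It first prepares the stack with the $\redXsS$ steps of \Exm\ref{reduce in continuation}, so that the right-hand side introduces $x$ and $(\actLN)$ applies.
\item $(\mul)$ in full reduction: use \Thm\ref{lmuX Left reduction preservation}, together with \Lmm\ref{mul substitution in X} and \Lmm\ref{second rescue lemma}.
\item $(\mul)$ under \CBV: here the term in function position is a value $V$. Hence $\slmuX(V)`t$ introduces $`t$, and the initial $(\actR)$ becomes $(\actRV)$. The observation after \Lmm\ref{mul substitution in X} states that the remaining steps are in $\redXsv$.
\item $(\mul)$ under \CBN: nothing is needed, since $(\mul)$ is absent from $\rednclmun$.
\end{itemize}

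The renaming rule $(\Rename)$ and the erasing rule $(\Erase)$ are handled by direct calculation. For $`M`a.[`b]`M`g.[`d]M$, unfolding $\slmuX(m `b . `g P)`d$ gives nested cuts against capsules $\caps<z,`d>$. Lemma \ref{renaming lemma} collapses these by $(\actL)$, or by $(\Cap)$ or $(\Exp)$ when the name is introduced. Every such step is either an activation whose other side is a capsule, which introduces its connector and so is allowed in both restrictions, or a logical rule. Both sides therefore reduce to a common term, up to $\tsubst[`b/`g]$, in each of the three relations. For $(\Erase)$, $\slmuX(m `a . `a M)`d$ reduces to $\slmuX(M)`d$ by one $(\actL)$ against $\caps<z,`d>$, again via \Lmm\ref{renaming lemma}.

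For the contextual cases ($`l$-abstraction, both sides of an application, and under $`m`a.[`b]$) the translation is compositional. $\slmuX(`.)`d$ places the translated subterm inside an export, a cut, or a renamed term, and $\rtcredXs$, $\rtcredXsn$ and $\rtcredXsv$ are compatible closures: the restrictions affect only activation conditions, not contexts. The induction hypothesis yields a common reduct of the subterm, and embedding it gives a common reduct of the whole term.

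The main obstacle is the \CBV{} $(\mur)$ case. There \Lmm\ref{rescue lemma} is needed for the case where the function-position term is a value, so that $`b$ is introduced and no left activation is allowed. In that case we must check that the $(\actRV)$ step applies, which holds because the right-hand side $\Vxd{Q}{t}{`d}$ does not introduce $t$. We also have to confirm that the reverse steps $\worraXs(\actL)$ in \Thm\ref{lmuX Right reduction preservation} are activations on cuts whose left side is not introducing, since they come from translations of $`m$-terms.
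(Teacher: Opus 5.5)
Your proposal is correct and is essentially the paper's own proof: an induction on the reduction relation whose root cases are discharged by Theorems~\ref{slmuX respects beta reduction}, \ref{lmuX Right reduction preservation} (Theorem~\ref{lmuX Right CBN reduction preservation} for call-by-name) and \ref{lmuX Left reduction preservation}, with admissibility in the restricted relations read off from the observations after those proofs; you also make explicit the renaming, erasing and contextual cases that the paper leaves implicit. Only minor details are off: Lemma~\ref{lmuX mu right substitution} also uses an $(\actR)$ step in the case $M=y$ (harmless, since the capsule introduces its plug); the case split in Lemma~\ref{rescue lemma} is on whether the body $P$ of $\mu\beta.[\beta]P$ is a value, and it is needed for full reduction as well, not only for call-by-value; and for erasing, a logical rule replaces $(\actL)$ when $M$ is a value.
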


 \begin{proof} 
 \begin{enumerate}
 \firstitem By induction on the definition of $\rednclmu$, using \Thm \ref{slmuX respects beta reduction}, \ref{lmuX Right reduction preservation}, and \ref{lmuX Left reduction preservation}.
 \item By induction on the definition of $\rednclmun$, using \Thm \ref{slmuX respects beta reduction} and \ref{lmuX Right CBN reduction preservation}.
 \item By induction on the definition of $\rednclmuv$, using \Thm \ref{slmuX respects beta reduction}, \ref{lmuX Right reduction preservation}, and \ref{lmuX Left reduction preservation}.
 \end{enumerate}
 \end{proof}



\section*{Conclusion and Future Work}

This paper has presented mappings from $\nclmu$ to $\lmmt$, and from $\lmmt$ to $\Xs$, which preserve the {\CBN} and {\CBV} reduction and strategies, but with small caveats.
Furthermore, these mappings are strict in the sense that if $M \reduc N$, then $\Sem[M]$ and $\Sem[N]$ have a common reduct in $t$ such that $\Sem[M] \tcredlmmt t$; otherwise put, no reductions are `lost' in the mapping.
It follows from these results that there are mappings from $\LC$ into $\lmmt$ and $\X$ which preserve {\CBN} and {\CBV}.

\paragraph{Other reduction disciplines}
Our focus has been on preserving the {\CBN} and {\CBV} reduction and strategies, as these are the most commonly considered.
It should be interesting to see if these mappings preserve other evaluation disciplines, such as call-by-need or even call-by-co-need.
Of course, this would first require defining call-by-need and co-need for $\X$, although these definitions already exist for $\nclmu$ and $\lmmt$.
Such definitions may even follow from our translation into $\X$.

\paragraph{$`m$-reductions}
It may be surprising that a translation from $\lmu$ to $\lmmt$ such as the one we defined here exists; as mentioned in \Sect\ref{sec:TranslateLmuLmmt}, the nature of $`m$ reductions in both is distinct.
Although $`m$-reductions in $\nclmu$ are often understood to capture their context, they normally do not capture their \textit{entire} context.
Consider the term $`@ {`@ M (`M`a.\Cmd)} (`M`b.\Cmd')$, and the reduction sequence
 \[ \begin{array}{rcl}
    `@ {`@ M (`M`a.\Cmd)} (`M`b.\Cmd') 
    &\reduc (`a) & `@ (`M`g.\Cmd\lmusubst[M.`g/`a]) (`M`b.\Cmd') \\
    &\reduc (`b) & (`M`d.\Cmd' \lmusubst[{(`M`g.\Cmd[\lmusubst[M.`g/`a]])} . `d / `b] )
 \end{array} \]
The context of the subterm $`M`a.\Cmd$ is $\Appl {(M {{\EmptyCont}})} (`M`b.\Cmd')$, yet only $M$ is captured by $`M`a.\Cmd$.
The $`m$-reductions in $\lmu$ are performed term-by-term, which allows for the continuation to be captured piecewise instead of at-once;
this is in contrast with $`m$ in $\lmmt$, which necessarily substitutes its entire continuation in a command.

\Comment{
Furthermore, there is no direct analogue of the $(\mul)$ reduction in $\lmmt$; the term $`M`a.< t | `M`b.c `. `a >$ does not have a (head) redex, whereas its $\nclmu$ analogue $`M`a.[`a] M( `M`b.C )$ is reducible;
 \[ \begin{array}{ccccccc}
`M`a.[`a] M( `M`b.\Cmd ) & \reduc &
`M`a.[`a] `M`g.\lmusubst[M.`g/`b] \Cmd &\reduc& 
`M`a.\lmusubst[M.`g/`b] \Cmd \tsubst[`a/`g] &=& 
`M`a.\lmusubst[M.`a/`b] \Cmd  
 \end{array} \]
(notice that $`g$ does not appear in $\Cmd$).
}

Furthermore, there is no direct analogue of the $(\mul)$ reduction in $\lmmt$; the term $`M`a.< t | `M`b.c `. `a >$ does not have a (head) redex, whereas its $\nclmu$ analogue $`M`a.[`a] M( `M`b.\Cmd )$ is reducible;
 \[ \begin{array}{ccccccc}
`M`a.[`a] M( `M`b.\Cmd ) & \reduc &
`M`a.[`a] `M`g.\lmusubst[M.`g/`b] \Cmd &\reduc& 
`M`a.\lmusubst[M.`g/`b] \Cmd \tsubst[`a/`g] &=& 
`M`a.\lmusubst[M.`a/`b] \Cmd  
 \end{array} \]
(notice that $`g$ does not appear in $\Cmd$).
	
The translation of \Def\ref{from lmmt to x} circumvents this by introducing extra redexes around the applicands so that its image reduces:
\[\begin{array}{rcl}
    \lmmtSem[{[`a] `@ M ( `M`b.C) }] 
    &=& 
	\cell< {\Mo `g . 
		< \lmmtSem[M] 
			| { \mt x . 
				< { `M`b.\lmmtSem[C] } 
					| {\mt y .
						 < x  | y `. `g > } 
				> } 
		> } 
	| `a >
\\ &\rtcreduc&
	\cell< { `M`b.\lmmtSem[C] } 
	    | {\mt y .
		< \lmmtSem[M] | y `. `a > } 
	    >, 
\end{array}\]
which then allows for the $(\mul)$ reduction to be simulated.

An alternative solution could be to add a $(\mul)$ reduction to $\lmmt$,
\[\begin{array}{rcl}
    \cell< t | (`M`b.c) `. e >  
    &\reduc&
    c\tsubst[{\mt x . < t | x `. e >} / `b]
\end{array}\]
This would then allow for $(\mul)$ to be directly reflected, rather than just simulated.

\paragraph{Relating Semantics per-discipline}
The translations could provide a unified way to relate the by-name and by-value semantics of each calculi.
For example, the operational semantics of $\lmu$ in an abstract machine (KAM) could be directly translated into $\lmmt$ (and thus $\X$), making $\lmmt$ and $\X$ capable of simulating {\CBN} and {\CBV} $\lmu$-machines in a uniform way.
This could furthermore be used to inspect the behaviour of classical realizability models of $\lmu$ through $\lmmt$ and $\X$.

In a similar vein, one obtains CPS translations from $\lmu$ into the $\LC$ by combining $\Sem[`.]$ with the CPS translations of $\lmmt$ given by \citet{Curien-Herbelin'00}.
The resulting translations agree up to equality with the CPS translations of $\lmu$ given in \cite{Curien-Herbelin'00}, however that given by our mapping has extra redexes that are `clerical' in nature.
These precisely come from the extra $\mutilde$ redex given in $\Sem[MN]$ when compared with $\SEMcbv{MN}$ and $\SEMcbn{MN}$.
This would give the operational semantics obtained as suggested subtly different intensional properties.

\paragraph{Type Systems}
It was relatively simple to show the simple type systems of $\nclmu$ and $\lmmt$ are respected by the translations.
The case for polymorphic types would be more subtle; in the presence of control, restrictions are needed to determine when a term can have a polymorphic type \cite{Harper-Lillibridge'91,Herbelin'05}.
The translation must be ensured to preserve the appropriate restrictions.

The existence of a mapping of $\nclmu$ into $\lmmt$ means the former probably cannot be given a sound and complete notion of intersection type assignment \cite{Bakel-APAL'10}.
Nonetheless, $\lmmt$ and $\X$ is known to enjoy such a system once restricted to {\CBN} or {\CBV} \cite{Bakel-APAL'10,Bakel-FI'12}.
Intersection types for {\CBN} $\lmu$ have been explored before \cite{Bakel-ITRS'10}, and the mapping preserving {\CBV} suggests one is also possible for $\lmuV$.

\Comment{
\begin{itemize}
    \item Relate control models?
    \item Other connectives?
\end{itemize}
}

\subsection*{Acknowledgements}
We would like to thank Yoram Boccia and Nicolas Wu for many fruitful discussions.


 \bibliographystyle{plainnat}
 \bibliography{references}
 \Journal{ %
 \immediate\closeout\proofsappendix
 \small
  \par \par \topmargin -10mm \parindent 10.5\point \leftmargini 2\parindent \leftmarginii 2\parindent \leftmarginiii 1.1\parindent \leftmarginiv .9\parindent \itemcorrection \parindent \appendix \section {Proofs} 
\begin {trivlist} \item [{\textbf {~Proof of \Lmm \ref {lmu substitution lemma}.~}}]  \begin {enumerate} \itemsep 4\point \par \firstitem By induction on the definition of term substitution. \par \Longer {\begin {description} \itemsep 2pt \par \item [{$ x\tsubst [L/x] = L $}] If $ \derlmu `G,x`:B |- x : A | `D $, then $B = A$, so $ \derlmu `G |- L : A | `D $, so also $ \derlmu `G |- x[L/x] : A | `D $. \par \item [{$ y \tsubst [L/x] \same y $ $(y \not = x)$}] Then $y`:A \ele `G$ and by rule $(\Ax )$ we have $ \derlmu `G |- y : A | `D $. \par \item [{$ `ly.(N\tsubst [L/x]) = (`ly.N)\tsubst [L/x] $}] Then $A = C\arrow D$. If $ \derlmu `G,x`:B |- `ly.N : C\arrow D | `D $, then by rule $(\arrI )$, $ \derlmu `G,x`:B,y`:C |- N : D | `D $. Then by induction, $ \derlmu `G,y`:C |- N\tsubst [L/x] : D | `D $, so by $(\arrI )$, $ \derlmu `G |- `ly.(N\tsubst [L/x]) : C\arrow D | `D $. \par \item [{$ (PQ)\tsubst [L/x] \equiv `@ P\tsubst [L/x] Q\tsubst [L/x] $}] If $ \derlmu `G,x`:B |- PQ : A | `D $, then, by rule $(\arrE )$ there exist $C$ such that both $\derlmu `G,x`:B |- P : C\arrow A | `D $ and $\derlmu `G,x`:B' |- Q : C | `D $. Then, by induction, $\derlmu `G |- P\tsubst [L/x] : C\arrow A | `D $ and $\derlmu `G |- Q\tsubst [L/x] : C | `D $; the result follows by rule $(\arrE )$. \par \item [{$ (`M`a . [`b] N)\tsubst [L/x] \same `M`a . [`b] N\tsubst [L/x] $}] If $ \derlmu `G,x`:B |- `M`a.[`b]N : A | `D $, then, by rule $(`m)$ there exist $C$ such that $\derlmu `G,x`:B |- N : C | `a`:A,`b`:C,`D' $ with $`D = `b`:C,`D'$. Then, by induction, $\derlmu `G |- N\tsubst [L/x] : C | `a`:A,`b`:C,`D' $, and by rule $(`m)$ we have $\derlmu `G |- N\tsubst [L/x] : C | `a`:A,`b`:C,`D' $. \par \end {description} }\par \item By induction on the definition of right-structural substitution\Shorter {; we only show the interesting cases}. \par \begin {description} \par \Longer {\item [{$ x\mursubst [L.`g/`a] \ByDef x $}] Then $x`:A \ele `G$, and by rule $(\Ax )$ we have $\derlmu `G |- x : A | `g`:C,`D $. \par \item [{$ (`Lx.N)\mursubst [L.`g/`a] \ByDef `Lx . (N\mursubst [L.`g/`a] ) $}] Then $A = D\arrow E$ and, by rule $(\arrI )$, $\derlmu `G,x`:D |- N : E | `a`:B\arrow C,`D $. Then by induction we have $\derlmu `G,x`:D |- {N\mursubst [L.`g/`a]} : E | `g`:C,`D $, so by rule $(\arrI )$ also $\derlmu `G |- { `Lx . {N\mursubst [L.`g/`a]} } : D\arrow E | `g`:C,`D $. \par \item [{$ (`@ P Q )\mursubst [L.`g/`a] \ByDef `@ P\mursubst [L.`g/`a] Q\mursubst [L.`g/`a] $}] Then by rule $(\arrE )$ there exists $D$ such that $ \derlmu `G |- P : D\arrow A | `a`:B\arrow C,`D $ and $ \derlmu `G |- Q : D | `a`:B\arrow C,`D $. Then by induction we can assume both $ \derlmu `G |- P\mursubst [L.`g/`a] : D\arrow A | `g`:C,`D $ and $ \derlmu `G |- Q\mursubst [L.`g/`a] : D | `g`:C,`D $; the result follows by rule $(\arrE )$. }\par \item [{$ `M`d.[`a]N\mursubst [L.`g/`a] \ByDef `M`d.[`g](N\mursubst [L.`g/`a] L) $}] Then by rule $(`m)$ $ \derlmu `G |- N : B\arrow C | `d`:A,`a`:B\arrow C,`D $, and by induction $ \derlmu `G |- N\mursubst [L.`g/`a] : B\arrow C | `d`:A,`g`:C,`D $. Since $`a$, $`d$ and $`g$ all do not occur (free) in $N$, we can construct \[ \def \Turnlmu {\Turn } \Inf [`m] {\Inf [\arrE ] {\InfBox { \derlmu `G |- N\mursubst [L.`g/`a] : B\arrow C | `d`:A,`g`:C,`D } \quad \Inf [\Weak ] {\InfBox { \derlmu `G |- L : B | `D } }{ \derlmu `G |- L : B | `d`:A,`g`:C,`D } }{ \derlmu `G |- { `@ (N\mursubst [L.`g/`a]) L } : C | `d`:A,`g`:C,`D } }{ \derlmu `G |- { `M`d.[`g] `@ (N\mursubst [L.`g/`a]) L } : A | `g`:C,`D } \] \par \item [{$ (`M`d.[`b]N)\mursubst [L.`g/`a] \ByDef `M`d.[`b] ( N\mursubst [L.`g/`a] ) ~ (`b \not = `a) $}] Then by rule $(`m)$ there exists $D$ such that $`D = `b`:D,`D'$, and $ \derlmu `G |- N : D | `d`:A,`b`:D,`a`:B\arrow C,`D' $, and by induction $ \derlmu `G |- N\mursubst [L.`g/`a] : D | `d`:A,`b`:D,`g`:C,`D' $. But then, by rule $(`m)$, also $ \derlmu `G |- `M`d.[`b] N\mursubst [L.`g/`a] : A | `b`:D,`g`:C,`D' $. \par \end {description} \par \item By induction on the definition of left-structural substitution\Shorter {; we only show the interesting cases}. \par \begin {description} \par \Longer {\item [{$ \mulsubst [L.`g/`a] x \ByDef x $}] Then $x`:A \ele `G$, and by rule $(\Ax )$ we have $\derlmu `G |- x : A | `g`:C,`D $. \par \item [{$ \mulsubst [L.`g/`a] (`Lx.N) \ByDef `Lx . (\mulsubst [L.`g/`a] N) $}] Then $A = D\arrow E$ and, by rule $(\arrI )$, $\derlmu `G,x`:D |- N : E | `a`:B,`D $. Then by induction we have $\derlmu `G,x`:D |- {\mulsubst [L.`g/`a] N} : E | `g`:C,`D $, so by rule $(\arrI )$ also $\derlmu `G |- { `Lx . \mulsubst [L.`g/`a] N } : D\arrow E | `g`:C,`D $. \par \item [{$ \mulsubst [L.`g/`a] (`@ P Q ) \ByDef {\mulsubst [L.`g/`a] P} \, {\mulsubst [L.`g/`a] Q} $}] Then by rule $(\arrE )$ there exists $D$ such that $ \derlmu `G |- P : D\arrow A | `a`:B,`D $ and $ \derlmu `G |- Q : D | `a`:B,`D $. Then by induction both $ \derlmu `G |- {\mulsubst [L.`g/`a] P} : D\arrow A | `g`:C,`D $ and $ \derlmu `G |- {\mulsubst [L.`g/`a] Q} : D | `g`:C,`D $; the result follows by rule $(\arrE )$. }\par \item [{$ \mulsubst [L.`g/`a ] `M`d.[`a]N \ByDef `M`d.[`g] `@ L ({\mulsubst [L.`g/`a] N}) $}] Then by rule $(`m)$ $ \derlmu `G |- N : B | `d`:A,`a`:B,`D $, and by induction $ \derlmu `G |- {\mulsubst [L.`g/`a] N} : B | `d`:A,`g`:C,`D $. Since $`d$ and $`g$ do not occur (free) in $L$, we can construct \[ \def \Turnlmu {\Turn } \Inf [`m] {\Inf [\arrE ] {\Inf [\Weak ] {\InfBox { \derlmu `G |- L : B\arrow C | `D } }{ \derlmu `G |- L : B\arrow C | `d`:A,`g`:C,`D } \InfBox { \derlmu `G |- {\mulsubst [L.`g/`a] N} : B | `d`:A,`g`:C,`D } }{\derlmu `G |- { `@ L {\mulsubst [L.`g/`a] N} } : C | `d`:A,`g`:C,`D } }{ \derlmu `G |- { `M`d.[`g] `@ L {\mulsubst [L.`g/`a] N} } : A | `g`:C,`D } \] \par \item [{$ \mulsubst [L.`g/`a] (`M`d.[`b]N) \ByDef `M`d.[`b] ( {\mulsubst [L.`g/`a] N} ) ~ (`b \not = `a) $}] Then by rule $(`m)$ there exists $D$ such that $`b`:D,`D' = `D$, and $ \derlmu `G |- N : D | `d`:A,`a`:B,`b`:D,`D' $. Then we have $ \derlmu `G |- {\mulsubst [L.`g/`a] N} : D | `d`:A,`g`:C,`b`:D,`D' $ by induction. But then, by rule $(`m)$, also $ \derlmu `G |- `M`d.[`b]{\mulsubst [L.`g/`a] N} : A | `g`:C,`b`:D,`D' $. \qed \par \end {description} \end {enumerate} \par  \end {trivlist}
 
\begin {trivlist} \item [{\textbf {~Proof of \Thm \ref {soundness lmu}.~}}]  By induction on the definition of $\rednclmuN $. \par \begin {description} \par \item [{$ `@ ( `Lx . M ) N \rednclmuN M \tsubst [N/x] $}] The derivation for $ \derlmu `G |- {`@ (`L x . M ) N } : A | `D $ is shaped like \[ \def \Turnlmu {\Turn } \Inf [\arrE ] {\Inf [\arrI ] {\InfBox { \derlmu `G,x`:B |- M : A | `D } }{ \derlmu `G |- `Lx.M : B\arrow A | `D } \InfBox { \derlmu `G |- N : B | `D } }{ \derlmu `G |- { `@ (`Lx . M ) N } : A | `D } \] \par Then, by \Lmm \,\ref {term substitution lemma}, we have $ \derlmu `G |- M\tsubst [N/x] : A | `D $. \par \item [{$ `@ (`M`a.[`a]M) N \rednclmuN `M`g . [`g] `@ (M\mursubst [N.`g/`a]) N $}] The derivation for $ `@ (`M`a.[`a]M) N $ is shaped like \[ \def \Turnlmu {\Turn } \Inf [\arrE ] {\Inf [`m] {\InfBox { \derlmu `G |- M : B\arrow A | `a`:B\arrow A,`D } }{ \derlmu `G |- `M`a.[`a]M : B\arrow A | `D } \InfBox { \derlmu `G |- N : B | `D } }{ \derlmu `G |- {`@ (`M`a.[`a]M) N } : A | `D } \] Then by \Lmm \,\ref {term substitution lemma}, we have $\derlmu `G |- M\mursubst [N.`g/`a] : B\arrow A | `g`:A,`D $. Since $`g$ is fresh, by weakening also $ \derlmu `G |- N : B | `g`:A,`D $, and we can construct \[ \def \Turnlmu {\Turn } \Inf [`m] {\Inf [\arrE ] {\InfBox { \derlmu `G |- M\mursubst [N.`g/`a] : B\arrow A | `g`:A,`D } \quad \InfBox { \derlmu `G |- N : B | `g`:A,`D } }{ \derlmu `G |- {`@ M\mursubst [N.`g/`a] N } : A | `g`:A,`D } }{ \derlmu `G |- `M`g.[`g]{`@ (M\mursubst [N.`g/`a]) N } : A | `D } \] \par \item [{$ `@ (`M`a.[`d]M) N \rednclmuN `M`g . [`d] M\mursubst [N.`g/`a] $, with $`a \not = `d $}] The derivation for $ `@ (`M`a.[`d]M) N $ is shaped like \[ \def \Turnlmu {\Turn } \Inf [\arrE ] {\Inf [`m] {\InfBox { \derlmu `G |- M : C | `a`:B\arrow A,`d`:C,`D' } }{ \derlmu `G |- `M`a.[`d]M : B\arrow A | `d`:C,`D' } \InfBox { \derlmu `G |- N : B | `d`:C,`D' } }{ \derlmu `G |- {`@ (`M`a.[`d]M) N } : A | `d`:C,`D' } \] with $`D = `d`:C,`D'$. Then by \Lmm \,\ref {term substitution lemma}, we have $ \derlmu `G |- M\mursubst [N.`g/`a] : C | `g`:A,`d`:C,`D' $, and we can construct \[ \def \Turnlmu {\Turn } \Inf [`m] {\InfBox { \derlmu `G |- M\mursubst [N.`g/`a] : C | `g`:A,`d`:C,`D' } }{ \derlmu `G |- `M`g.[`d]M\mursubst [N.`g/`a] : A | `d`:C,`D' } \] \par \item [{$ `@ M (`M`a.[`a]N) \rednclmuN `M`g.[`g]`@ M (\mulsubst M.`g/`a N) $}] The derivation for $ `@ M (`M`a.[`a]N) $ is shaped like \[ \def \Turnlmu {\Turn } \Inf [\arrE ] {\InfBox { \derlmu `G |- M : B\arrow A | `D } \quad \Inf [`m] {\InfBox { \derlmu `G |- N : B | `a`:B,`D } }{ \derlmu `G |- { `M`a.[`a]N } : B | `D } }{ \derlmu `G |- { `@ M (`M`a . [`a] N ) } : A | `D } \] Then by \Lmm \ref {term substitution lemma}, we have $ \derlmu `G |- {\mulsubst M.`g/`a N} : B | `g`:A,`D $, and we can construct \[ \def \Turnlmu {\Turn } \Inf [`m] {\Inf [\arrE ] {\Inf [\Weak ] {\InfBox { \derlmu `G |- M : B\arrow A | `D } }{ \derlmu `G |- M : B\arrow A | `g`:A,`D } \InfBox { \derlmu `G |- {\mulsubst [M.`g/`a] N} : B | `g`:A,`D } }{ \derlmu `G |- {`@ M (\mulsubst [M.`g/`a] N) } : A | `g`:A,`D } }{ \derlmu `G |- { `M`g . `g `@ M (\mulsubst [M.`g/`a] N ) } : A | `D } \] \par \item [{$ `@ M (`M`a.[`d]N) \rednclmuN `M`g . [`d] {\mulsubst M.`g/`a N }$, with $`a \not = `d $}] The derivation for $ `@ M (`M`a.[`d]N) $ is shaped like \[ \def \Turnlmu {\Turn } \Inf [\arrE ] {\InfBox { \derlmu `G |- M : B\arrow A | `d`:C,`D' } \quad \Inf [`m] {\InfBox { \derlmu `G |- N : C | `a`:B,`d`:C,`D' } }{ \derlmu `G |- { `M`a.[`d]N } : B | `d`:C,`D' } }{ \derlmu `G |- {`@ M (`M`a.[`d]N) : A } | `d`:C,`D' } \] with $`D = `d`:C,`D'$. Then by \Lmm \ref {term substitution lemma}, we have $ \derlmu `G |- {\mulsubst [M.`g/`a] N} : C | `g`:A,`d`:C,`D' $, and by rule $(`m)$ we have $ \derlmu `G |- { `M`g . [`d] \mulsubst [M.`g/`a] N } : A | `d`:C,`D' $. \par \Comment {\item [{$ `M`a.[`a]M \rednclmuN M $}] The derivation for $ `M`a.[`a]M $ is shaped like \[ \def \Turnlmu {\Turn } \Inf [`m] {\InfBox { \derlmu `G |- M : A | `a`:A,`D } }{ \derlmu `G |- `M`a.[`a]M : A | `D } \] Since $`a$ does not occur in $M$, we can thin $`a`:A,`D$ and obtain $ \derlmu `G |- M : A | `D $. }\par \item [{$ `M`a.[`b] `M`g.[`d]M \rednclmuN `M`a.([`d]M) \tsubst [`b/`g] $}] The derivation for $ `@ (`M`a.[`d]M) N $ is shaped like \[ \def \Turnlmu {\Turn } \Inf [`m] {\Inf [`m] {\InfBox { \derlmu `G |- M : D | `a`:A,`b`:B,`g`:B,`d`:D,`D' } }{ \derlmu `G |- `M`g.[`d]M : B | `a`:A,`b`:B,`d`:D,`D' } }{ \derlmu `G |- `M`a.[`b] `M`g.[`d]M : A | `b`:B,`d`:D,`D' } \] So in particular, replacing all occurrences of $`g$ by $`b$, we obtain a derivation for $ \derlmu `G |- M\tsubst [`b/`g] : D | `a`:A,`b`:B,`d`:D,`D' $. Now either: \par \begin {description} \item [$`d \not = `g$] Then we can construct: \[ \def \Turnlmu {\Turn } \Inf [`m] {\InfBox { \derlmu `G |- M\tsubst [`b/`g] : D | `a`:A,`b`:B,`d`:D,`D' } }{ \derlmu `G |- `M`a.[`d]M\tsubst [`b/`g] : A | `b`:B,`d`:D,`D' } \] \par \item [$`d = `g$] Then $D = B$ as well, and we can construct: \[ \def \Turnlmu {\Turn } \Inf [`m] {\InfBox { \derlmu `G |- M\tsubst [`b/`g] : B | `a`:A,`b`:B,`D' } }{ \derlmu `G |- `M`a.[`b]M\tsubst [`b/`g] : A | `b`:B,`D' } \] \par \end {description} \par \item [{$`M`a . [`a] M \reduc M $}, with $a\notele M$] The derivation for $ `M`a . [`a] M $ is shaped like \[ \def \Turnlmu {\Turn } \Inf [`m] {\InfBox { \derlmu `G |- M : A | `a`:A,`D } }{ \derlmu `G |- `M`a.[`a]M : A | `D } \] Since $a\notele M$, by thinning we get $ \derlmu `G |- M : A | `D $. \par \end {description} The contextual rules follow by induction.\QED  \end {trivlist}
 
\begin {trivlist} \item [{\textbf {~Proof of \Thm \ref {Subject reduction nclmu}.~}}]  \par Simultaneous by induction on the definition of $\rtcredlmmt $; we will only show the base cases. \begin {description} \item [$ \cell <`L x.t_1 | t_2`.e > \reduc \cell < t_2 | {\mt x.< t_1|e>} > $] If $ \derLmmt \cell <`Lx.t_1|t_2`.e> : `G |- `D $, then the derivation is shaped like on the left; regrouping the sub-derivations, we can construct the one on the right. \[ \kern -5mm \def \Turnlmmt {\Turn } \Inf {\Inf {\InfBox { \derLmmt `G,x`:A |- t_1 : B | `D } }{ \derLmmt `G |- `Lx.t_1 : A\arr B | `D } \Inf {\InfBox { \derLmmt `G |- t_2 : A | `D } \quad \InfBox { \derLmmt `G | e : B |- `D } }{ \derLmmt `G | t_2`.e : A\arr B |- `D } }{ \derLmmt \cell <`Lx.t_1|t_2`.e> : `G |- `D } \dquad \Inf {\InfBox { \derLmmt `G |- t_2 : A | `D } \kern -10mm \Inf {\Inf {\InfBox { \derLmmt `G,x`:A |- t_1 : B | `D } \quad \Inf [\Weak ] {\InfBox { \derLmmt `G | e : B |- `D } }{ \derLmmt `G,x`:A | e : B |- `D } }{ \derLmmt \cell <t_1|e> : `G,x`:A |- `D } }{ \derLmmt `G | \mt x . <t_1|e> : A |- `D } }{ \derLmmt \cell <t_2|{\mt x.<t_1|e>}> : `G |- `D } \] \par \item [$ \cell < `M`b.c | e > \reduc c\tsubst e/`b $] If $ \derLmmt \cell <`M`b.c|e> : `G |- `D $, then the derivation is shaped like \[ \def \Turnlmmt {\Turn } \Inf {\Inf {\InfBox { \derLmmt c : `G |- `b`:A,`D } }{ \derLmmt `G |- `M`b.c : A | `D } \quad \InfBox { \derLmmt `G | e : A |- `D } }{ \derLmmt \cell <`M`b.c|e> : `G |- `D } \] By \Lmm \ref {lmmt subst lemma} we get $ \derLmmt c\tsubst [e/`b] : `G |- `D $. \par \item [$ \cell <t|\mt x.c> \reduc c\tsubst t/x $] If $ \derLmmt \cell <t|\mt x.c> : `G |- `D $, then the derivation is shaped like \[ \def \Turnlmmt {\Turn } \Inf {\InfBox { \derLmmt `G |- t : A | `D } \quad \Inf {\InfBox { \derLmmt c : `G,x`:A |- `D } }{ \derLmmt `G | \mt x.c : A |- `D } }{ \derLmmt \cell <t|\mt x.c> : `G |- `D } \] By \Lmm \ref {lmmt subst lemma} we get $ \derLmmt c\tsubst [t/x] : `G |- `D $. \par \item [$ `L x . `M`b.\cell < t | x`.`b > \reduc t $, $ x,`b \notele \FV {t} $] If $ \derLmmt `G |- `Lx.`M`b.<t|x`.`b> : A | `D $, then the derivation is shaped like \[ \def \Turnlmmt {\Turn } \Inf {\Inf {\Inf {\InfBox { \derLmmt `G,x`:A |- t : A\arr B | `b`:B,`D } \quad \Inf {\Inf { \derLmmt `G,x`:A |- x : A | `b`:B,`D } \Inf { \derLmmt `G,x`:A | `b : B |- `b`:B,`D } }{ \derLmmt `G,x`:A | x`.`b : A\arr B |- `b`:B,`D } }{ \derLmmt \cell <t|x`.`b> : `G,x`:A |- `b`:B,`D } }{ \derLmmt `G,x`:A | `M`b.<t|x`.`b> : B |- `D } }{ \derLmmt `G |- `Lx.`M`b.<t|x`.`b> : A\arr B | `D } \] From $ \derLmmt `G,x`:A |- t : A\arr B | `b`:B,`D $ and $ x,`b \notele \FV {t} $, by Thinning we get $ \derLmmt `G |- t : A\arr B | `D $. \par \item [$ `M`a.< t | `a > \reduc t $, $`a \notele \FV {t} $] If $ \derLmmt `G |- `M`a.<t|`a> : A | `D $, then the derivation is shaped like \[ \def \Turnlmmt {\Turn } \Inf {\Inf {\InfBox { \derLmmt `G |- t : A | `a`:A,`D } \quad \Inf { \derLmmt `G | `a : A |- `a`:A,`D } }{ \derLmmt \cell <t|`a> : `G |- `a`:A,`D } }{ \derLmmt `G |- `M`a.<t|`a> : A | `D } \] From $ \derLmmt `G |- t : A | `a`:A,`D $ and $ `a \notele \FV {t} $, by Thinning we get $ \derLmmt `G |- t : A | `D $. \par \item [$ \mt x.< x | e > \reduc e $, $ x \notele \FV {e} $] If $ \derLmmt `G |- `M`a.<t|`a> : A | `D $, then the derivation is shaped like \[ \def \Turnlmmt {\Turn } \Inf {\Inf {\Inf { \derLmmt `G,x`:A |- x : A | `D } \quad \InfBox { \derLmmt `G,x`:A | e : A |- `D } }{ \derLmmt \cell <x|e> : `G,x`:A |- `D } }{ \derLmmt `G | \mt x . <x|e> : A |- `D } \] From $ \derLmmt `G,x`:A | e : A |- `D $ and $ x \notele \FV {e} $, by Thinning we get $ \derLmmt `G | e : A |- `D $. \QED \end {description} \observe We can extend this last result also for the second alternative to rule $(`l)$, since we can derive: \[ \def \Turnlmmt {\Turn } \Inf {\Inf {\Inf {\InfBox { \derLmmt `G |- t' : A | `D } \kern -7mm \Inf {\Inf {\InfBox { \derLmmt `G,y`:A |- t : B | `D } \Inf { \derLmmt `G,y`:A |- `g : B | `g`:B,`D } }{ \derLmmt {\cell <t|`g>} : `G,y`:A |- `g`:B,`D } }{ \derLmmt `G | {\mt y.<t|`g>} : A |- `g`:B,`D } }{ \derLmmt { \cell < t' | {\mt y.<t|`g>} >} : `G |- `g`:B,`D } }{ \derLmmt `G |- {`M`g . < t' | {\mt y.<t|`g> } > } : B | `D } \kern -10mm \InfBox { \derLmmt `G | e : B |- `D } }{ \derLmmt \cell <{`M`g . < t' | {\mt y.<t|`g> } > }|e> : `G |- `D } \]  \end {trivlist}
 
\begin {trivlist} \item [{\textbf {~Proof of \Lmm \ref {garbage collection}.~}}]  By induction on the structure of nets. \begin {enumerate} \itemsep 4\point \item \begin {description} \par \firstmyitem [$P = \caps <y.`a>$] \cutL \caps <y.`a> `a + x \caps <x.`b> &\substo & \bsp (\deactL ) \Tran [V y]{`b} &\ByDef & \caps <y.`a> \tsubst [`b/`a] \end {array}$ \par \myitem [$P = \caps <y.`g>$, $`g\not =`a$] \cutL \caps <y.`g> `a + x \caps <x.`b> &\substo & \bsp (\Li ) \caps <y.`g> &\ByDef & \caps <y.`g> \tsubst [`b/`a] \end {array}$ \par \myitem [$P = \exp y Q `g . `a $] \cutL { \exp y Q `g . `a } `a + x \caps <x.`b> &\substo & \bsp (\Lii ) \exp y { \cutL Q `a + x \caps <x.`b> } `g . `b &\substo & \bsp (\IH ) \exp y { Q \tsubst [`b/`a]} `g . `b &\ByDef & \\ ( \exp y Q `g . `a ) \tsubst [`b/`a] \end {array}$ \par \myitem [$P = \exp y Q `g . `d $, $`d\not =`a$] \cutL { \exp y Q `g . `d } `a + x \caps <x.`b> &\substo & \bsp (\Liii ) \exp y { \cutL Q `a + x \caps <x.`b> } `g . `d &\substo & \bsp (\IH ) \exp y Q\tsubst [`b/`a] `g . `d &\ByDef & \\ ( \exp y Q `g . `d ) \tsubst [`b/`a] \end {array}$ \par \myitem [{$P = \imp Q `g [y] z R $}] \cutL { \imp Q `g [y] z R } `a + x \caps <x.`b> &\substo & \bsp (\Liv ) \imp { \cutL Q `a + x \caps <x.`b> } `g [y] z {\cutL R `a + x \caps <x.`b> } &\substo & \bsp (\IH ) \\ \imp Q\tsubst [`b/`a] `g [y] z R\tsubst [`b/`a] &\ByDef & ( \imp Q `g [y] z R ) \tsubst [`b/`a] \end {array}$ \par \myitem [$P = \cut Q `g + z R $] \cutL { \cut Q `g + z R } `a + x \caps <x.`b> &\substo & \bsp (\Lv ) \cut { \cutL Q `a + x \caps <x.`b> } `g + z {\cutL R `a + x \caps <x.`b> } &\substo & \bsp (\IH ) \\ \cut Q\tsubst [`b/`a] `g + z R\tsubst [`b/`a] &\ByDef & ( \cut Q `g + z R ) \tsubst [`b/`a] \end {array}$ \\ \par \end {description} \par \noindent For the second part, if $`a$ is introduced in $P$, the result follows by rules $(\Cap )$ or $(\Exp )$; otherwise $ \cut P `a + x \caps <x.`b> \redX (\actL ) \cutL P `a + x \caps <x.`b> $, and the result follows by the first part. \par \item \begin {description} \par \firstmyitem [$P = {\Tran [V x]{`b}} $] \cutR \caps <y.`a> `a + x {\Tran [V x]{`b}} &\substo & \bsp (\deactR ) {\Tran [V x]{`b}} &\ByDef & \Tran [V y]{`b} \tsubst [y/x] \end {array}$ \par \myitem [$P = \caps <z.`b>$, $z\not =x$] \cutR \caps <y.`a> `a + x \caps <z.`b> &\substo & \bsp (\Ri ) \caps <z.`b> &\ByDef & \caps <z.`b> \tsubst [y/x] \end {array}$ \par \myitem [$P = \exp z Q `g . `b $] \cutR \caps <y.`a> `a + x { \exp z Q `g . `b } &\substo & \bsp (\Rii ) \exp z { \cutR \caps <y.`a> `a + x Q } `g . `b &\substo & \bsp (\IH ) \exp z Q\tsubst [y/x] `g . `b &\ByDef & \\ ( \exp z Q `g . `b ) \tsubst [y/x] \end {array}$ \par \myitem [{$P = \imp Q `g [x] z R $}] \cutR \caps <y.`a> `a + x { \imp Q `g [x] z R } &\substo & \bsp (\Riii ) \imp { \cutR \caps <y.`a> `a + x Q } `g [y] z {\cutR \caps <y.`a> `a + x R } \\ ~=(\IH ) ~ \imp Q\tsubst [y/x] `g [y] z R\tsubst [y/x] &\ByDef & ( \imp Q `g [x] z R ) \tsubst [y/x] \end {array}$ \par \myitem [{$P = \imp Q `g [v] z R $, $v \not = x$}] \cutR \caps <y.`a> `a + x { \imp Q `g [v] z R } &\substo & \bsp (\Riv ) \\ \imp { \cutR \caps <y.`a> `a + x Q } `g [v] z {\cutR \caps <y.`a> `a + x R } &\substo & \bsp (\IH ) \\ \imp Q\tsubst [y/x] `g [v] z R\tsubst [y/x] &\ByDef & ( \imp Q `g [v] z R ) \tsubst [y/x] \end {array}$ \par \myitem [$P = \cut Q `g + z R $] \cutR \caps <y.`a> `a + x { \cut Q `g + z R } &\substo & \bsp (\Rv ) \cut { \cutR \caps <y.`a> `a + x Q } `g + z {\cutR \caps <y.`a> `a + x R } &\substo & \bsp (\IH ) \\ \cut Q\tsubst [y/x] `g + z R\tsubst [y/x] &\ByDef & ( \cut Q `g + z R ) \tsubst [y/x] \end {array}$ \\ \par \end {description} \par \noindent For the second part, if $x$ is introduced in $P$, the result follows by rules $(\Cap )$ or $(\Imp )$; otherwise $ \cut \caps <y.`a> `a + x P \redX \cutR \caps <y.`a> `a + x P $, and the result follows by the first part. \par \item \begin {description} \par \firstmyitem [$Q = \caps <y.`g>$, $`g\not =`a$] \cutL \caps <y.`g> `a + x P &\substo & \bsp (\Li ) \caps <y.`g> \end {array}$ \par \myitem [$Q = \exp y R `g . `d $, $`d\not =`a$] \cutL { \exp y R `g . `d } `a + x P &\substo & \bsp (\Liii ) \exp y { \cutL R `a + x P } `g . `d &\substo & \bsp (\IH ) \exp y R `g . `d \end {array}$ \par \myitem [{$Q = \imp R `g [y] z S $}] \cutL { \imp R `g [y] z S } `a + x P &\substo & \bsp (\Liv ) \imp { \cutL R `a + x P } `g [y] z {\cutL S `a + x P } &\substo & \bsp (\IH ) \imp R `g [y] z S \end {array}$ \par \myitem [$Q = \cut R `g + z S $] \cutL { \cut R `g + z R } `a + x P &\substo & \bsp (\Lv ) \cut { \cutL R `a + x P } `g + z {\cutL S `a + x P } &\substo & \bsp (\IH ) \cut R `g + z S \end {array}$ \par \end {description} \par \item \begin {description} \par \firstmyitem [$P = \caps <z.`b>$, $z\not =x$] \cutR P `a + x \caps <z.`b> &\substo & \bsp (\Ri ) \caps <z.`b> \end {array}$ \par \myitem [$P = \exp z R `g . `b $] \cutR P `a + x { \exp z R `g . `b } &\substo & \bsp (\Rii ) \exp z { \cutR P `a + x R } `g . `b &\substo & \bsp (\IH ) \exp z R `g . `b \end {array}$ \par \myitem [{$P = \imp Q `g [v] z R $, $v \not = x$}] \cutR P `a + x { \imp Q `g [v] z R } &\substo & \bsp (\Riv ) \imp { \cutR P `a + x Q } `g [v] z {\cutR P `a + x R } &\substo & \bsp (\IH ) \\ \imp Q `g [v] z R \end {array}$ \par \myitem [$P = \cut Q `g + z R $] \cutR P `a + x { \cut Q `g + z R } &\substo &\bsp (\Rv ) \cut { \cutR P `a + x Q } `g + z {\cutR P `a + x R } &\substo & \bsp (\IH ) \cut Q `g + z R \end {array}$ \qed \par \end {description} \end {enumerate} \par \observe Notice that substitution activation plays no role in this proof. \par  \end {trivlist}
 
\begin {trivlist} \item [{\textbf {~Proof of \Thm \ref {thm:TranslateLmuLmmt:TranslationWellTyped}.~}}]  \begin {description} \item [$\Ax $] Then $ M \same x $ and $ x`:A \ele `G $; since $ \lmulmmt [x] = x $, also $ \derLmmt `G |- x : A | `D $ by rule $(\Ax )$. \par \item [$\arrI $] Then $ M \same `L x . N $, $ A \same B \arr C $, and $ \derLmu `G,x`:B |- N : C | `D $. By induction, $ \derLmmt `G,x`:B |- \lmulmmt [N] : C | `D $, and by rule $(\arrI )$, $ \derLmmt `G |- `Lx.\lmulmmt [N] : A | `D $, and $`Lx.\lmulmmt [N] = \lmulmmt [`Lx.N] $. \par \item [$\arrE $] Then $ M \same `@ P Q $, and there exists $B$ such that $ \derLmu `G |- P : B\arr A | `D $ and $ \derLmu `G |- Q : B | `D $. Then by induction, $ \derLmmt `G |- \lmulmmt [P] : B\arr A | `D $ and $ \derLmmt `G |- \lmulmmt [Q] : B | `D $; we also have $ \derLmmt `G |- \lmulmmt [P] : B\arr A | `a`:A,`D $ and $ \derLmmt `G,x`:B\arr A |- \lmulmmt [Q] : B | `a`:A,`D $ by weakening, and we can construct (where $`G' = `G,x`:B\arr A,y`:B$): \[ \def \Turnlmmt {\Turn } \Inf {\Inf {\InfBox { \derLmmt `G |- \lmulmmt [P] : B\arr A | `a`:A,`D } \kern -12mm \Inf {\Inf {\InfBox { \derLmmt `G,x`:B\arr A |- \lmulmmt [Q] : B | `a`:A,`D } \kern -10mm \Inf {\Inf {\Inf { \derLmmt `G' |- x : B\arr A | `a`:A,`D } \Inf {\Inf { \derLmmt `G' |- y : B | `a`:A,`D } \Inf { \derLmmt `G' | `a : A |- `a`:A,`D } }{ \derLmmt `G' | { y `. `a } : B\arr A |- `a`:A,`D } }{ \derLmmt {\cell < x | y `. `a > } : `G' |- `a`:A,`D } }{ \derLmmt `G,x`:B\arr A | {\mt y . < x | y `. `a > } : B |- `a`:A,`D } }{ \derLmmt {\cell < \lmulmmt [Q] | {\mt y . < x | y `. `a > } > } : `G,x`:B\arr A |- `a`:A,`D } }{ \derLmmt `G | {\mt x . < \lmulmmt [Q] | {\mt y . < x | y `. `a > } > } : B\arr A |- `a`:A,`D } }{ \derLmmt { \cell < \lmulmmt [P] | {\mt x . < \lmulmmt [Q] | {\mt y . < x | y `. `a > } > } > } : `G |- `a`:A,`D } }{ \derLmmt `G |- \lmulmmt [a P Q] : A | `D } \] and $ \lmulmmt [a P Q] = \lmulmmt [{`@ P Q }] $ \par \item [$`m$] We have two cases: $ M \same `M`a.[`b]N $, $`D = `b`:B,`D'$, and $ \derlmu `G |- N : B | `a`:A,`b`:B,`D' $; then by induction we have $ \derLmmt `G |- \lmulmmt [N] : B | `a`:A,`b`:B,`D' $. We can construct: \[ \def \Turnlmmt {\Turn } \Inf {\Inf {\InfBox { \derLmmt `G |- \lmulmmt [N] : B | `a`:A,`b`:B,`D' } ~ \Inf { \derLmmt `G | `b : B |- `a`:A,`b`:B,`D' } }{ \derLmmt {\cell <\lmulmmt [N] | `b > } : `G |- `a`:A,`b`:B,`D' } }{ \derLmmt `G |- \lmulmmt [m `a . `b N] : A | `b`:B,`D' } \] and $ \lmulmmt [m `a . `b N] = \lmulmmt [`M `a . `b N] $. \par Or $ M \same `M`a.[`a]N $ and $ \derlmu `G |- N : A | `a`:A,`D $; then by induction we have $ \derLmmt `G |- \lmulmmt [N] : A | `a`:A,`D $. We can construct: \[ \def \Turnlmmt {\Turn } \Inf {\Inf {\InfBox { \derLmmt `G |- \lmulmmt [N] : A | `a`:A,`D } ~ \Inf { \derLmmt `G | `a : A |- `a`:A,`D } }{ \derLmmt {\cell <\lmulmmt [N] | `a > } : `G |- `a`:A,`D } }{\derLmmt `G |- \lmulmmt [m `a . `a N] : A | `D } \] and $ \lmulmmt [m `a . `a N] = \lmulmmt [`M `a . `a N] $. \QED \par \end {description}  \end {trivlist}
 
\begin {trivlist} \item [{\textbf {~Proof of \Lmm \ref {tsubst lemma}.~}}]  By induction on the structure of terms. \addqsymbol {`N}{\lmmtSem [N]\!} \begin {description} \myitem [$M = z$] \lmulmmt [z] \tsubst [`N/z] &\ByDef & \lmulmmt [v z] \tsubst [`N/z] &\ByDef & `N \ByDef \lmulmmt [{z \tsubst N/z }] \end {array} $ \par \myitem [$M = y$, $y\not =z$] \lmulmmt [y] \tsubst [`N/z] &\ByDef & \lmulmmt [v y] \tsubst [`N/z] &\ByDef & y \ByDef \lmulmmt [y] \ByDef \lmulmmt [{y \tsubst N/z }] \end {array} $ \par \myitem [$M = `Ly.P$] \lmulmmt [`Ly.P] \tsubst [`N/z] &\ByDef & \lmulmmt [l y . P] \tsubst [`N/z] = \! (\IH ) ~ \lmulmmt [l y . {P\tsubst N/z }] &\ByDef & \lmulmmt [`L y . {P\tsubst N/z }] \end {array} $ \par \myitem [$M = PQ$] \lmulmmt [`@ P Q ] \tsubst [`N/z] ~\ByDef ~ \lmulmmt [a P Q] \tsubst [`N/z] ~\ByDef ~ \\ `M`a.<`P \tsubst [`N/z] | {\mt x . < `Q \tsubst [`N/z] | { \mt y.< x | y`.`a >} >} > =\! (\IH ) ~\\ \lmulmmt [a {P \tsubst N/z } {Q \tsubst N/z }] ~\ByDef ~ \lmulmmt [{(P \tsubst [N/z]) (Q \tsubst [N/z])}] ~\ByDef ~ \lmulmmt [{(PQ) \tsubst N/z }] \end {array} $ \par \myitem [{$M = `M`b.[`g]P$}] \lmulmmt [`M`b. `g P] \tsubst [`N/z] &\ByDef & \lmulmmt [m `b . `g P] \tsubst [`N/z] &\ByDef & `M`b.< `P \tsubst [`N/z] | `g > &=& \bsp (\IH ) \\ \lmulmmt [m `b . `g {P \tsubst N/z }] &\ByDef & \lmulmmt [{`M`b.[`g]P\tsubst N/z }] &\ByDef & \lmulmmt [{(`M`b.[`g]P)\tsubst N/z }] \end {array} $ \par \arrayqed [2\point ] \end {description}  \end {trivlist}
 
\begin {trivlist} \item [{\textbf {~Proof of \Lmm \ref {rule m_l lemma}.~}}]  By induction on the definition of type assignment. \par \begin {description} \par \item [$\Cut $] Then $ c = \cell <t|v> $, and both $ \derLmmt `G |- t : A | `g`:B,`D $ and $ \derLmmt `G | e : A |- `g`:B,`D $. Then by induction, both $ \derLmmt `G |- { t \Subst } : A | `g`:B,`D $ and $ \derLmmt `G | { e \Subst } : A |- `g`:B,`D $. By rule $(\Cut )$ we have \par \[ \def \TurnLmmt {\Turn } \Inf { \derLmmt `G |- {t \Subst } : A | `g`:B,`D \quad \derLmmt `G | {e \Subst } : A |- `g`:B,`D }{\derLmmt { \cell < t \Subst | e \Subst >} : `G |- `D } \] and $ \cell < t \Subst | e \Subst > = \cell < t | e > \Subst $. \par \item [$\AxT $] Then $t = x$; since $ x`:C \ele `G $, by rule $(\AxT )$ also $ \derlmmt `G |- x : C | `g`:B,`D $ \par \item [$\AxC $] Then $e = `b$, and $ `b`:C \ele `D $. We have two cases: \par \begin {description} \item [$`a = `b$] Then $C = A$; we can construct: \[ \def \Turnlmmt {\Turn } \Inf [\MuT ] {\Inf [\Cut ] {\Inf [\Weak ]{\InfBox { \derlmmt `G |- t : A\arr B | `g`:B,`D } }{ \derlmmt `G,y`:A |- t : A\arr B | `g`:B,`D } \Inf [\arrL ] {\Inf [\AxT ]{ \derlmmt `G,y`:A |- y : A | `g`:B,`D } \Inf [\AxC ]{ \derlmmt `G,y`:A | `g : B |- `g`:B,`D } }{ \derlmmt `G,y`:A | y`.`g : A\arr B |- `g`:B,`D } }{ \derlmmt {\cell <t|y`.`g>} : `G,y`:A |- `g`:B,`D } }{ \derlmmt `G | \mt y.<t|y`.`g> : A |- `g`:B,`D } \] and $ `a \Subst = \mt y.< t | y`.`g > $. \par \item [$`a\not =`b$] Since $ `b`:C \ele `D $, by rule $(\AxC )$ also $ \derlmmt `G | `b : C |- `g`:B,`D $. \par \end {description} \item [$\arrL )$, $(\arrR )$, $(`m)$, $(\MuT $] By induction. \QED \par \end {description}  \end {trivlist}
 
\begin {trivlist} \item [{\textbf {~Proof of \Lmm \ref {mu simulation}.~}}]  By simultaneous induction on the structure of nets. \begin {enumerate} \item \begin {description} \par \pushmyitem {2mm}[$c = \cell <y|`a>$] && \kern -5mm \cutL { \lmmtX (\cell <y|`a>) } `a + x \lmmtX (e'){x} &\ByDef & \cutL \caps <y.`a> `a + x \lmmtX (e'){x} &\substo & \bsp (\deactL ) \lmmtX (e'){x} \tsubst [y/x] &=& \lmmtX (e'){y} &\ByDef & \\ \lmmtX (\cell <y|e'>) &=& \lmmtX (\cell <y|`a>{ \tsubst [e'/`a] }) \end {array}$ \par \myitem [$c = \cell <y|`b>, ~`a \not = `b$] \cutL { \lmmtX (\cell <y|`b>) } `a + x \lmmtX (e'){x} &\ByDef & \cutL \caps <y.`b> `a + x \lmmtX (e'){x} &\substo & \caps <y.`b> &\substo & \bsp (\Li ) \lmmtX (\cell <y|`b>) &=& \\ \lmmtX (\cell <y|`b>{ \tsubst [e'/`a] }) \end {array}$ \par \Comment {\myitem [$c = \cell <t|`a>$] \cutL \lmmtX (\cell <t|`a>) `a + x \lmmtX (e'){x} &\ByDef & \cutL { \lmmtX (t){`a} } `a + x \lmmtX (e'){x} & \substo & \bsp (\IH ) \\ \lmmtX (t{ \tsubst [e'/`a] }){`a} &\ByDef & \\ \\ \lmmtX (\cell <t{ \tsubst [e'/`a] }|e'>) &\ByDef & \lmmtX (\cell <t|`a>{ \tsubst [e'/`a] }) \end {array}$ }\par \myitem [$c = \cell <`lz.t|`a>$] \cutL \lmmtX (\cell <`lz.t|`a>) `a + x \lmmtX (e'){x} &\ByDef & \cutL {( \lmmtX (L z . t){`a} )} `a + x \lmmtX (e'){x} &\substo & \bsp (\Lii ) \\ \cut { \exp z { \cutL {{ \lmmtX (t){`b} } } `a + x \lmmtX (e'){x} } `b . `g } `g + x \lmmtX (e'){x} &\substo & \bsp (\IH ,`a\not =`b) \cut { \exp z { \lmmtX (t{ \tsubst [e'/`a] }) } `b . `g } `g + x \lmmtX (e'){x} &\ByDef & \\ \lmmtX (\cell <`lz.t{ \tsubst [e'/`a] }|e'>) &\ByDef & \lmmtX (\cell <`lz.t|`a>{ \tsubst [e'/`a] }) \end {array}$ \par \myitem [$c = \cell <`M`b.c|`a>$] \cutL \lmmtX (\cell <`M`b.c|`a>) `a + x \lmmtX (e'){x} &\ByDef & \cutL \lmmtX (`M`b.c){`a} `a + x \lmmtX (e'){x} &\ByDef & \\ \cutL {\lmmtX (M `b . c){`a} } `a + x \lmmtX (e'){x} &=& \cutL { \lmmtX (c\tsubst [`a/`b]) } `a + x \lmmtX (e'){x} &=& \\ \cutL {\omt \cutL { \lmmtX (c) } `a + x \lmmtX (e'){x} } `b + x \lmmtX (e'){x} &=& \cutL {\omt \cutL { \lmmtX (c) } `a + x \lmmtX (e'){x} \tsubst `a/`b } `a + x \lmmtX (e'){x} &\substo & \bsp (\IH ) && \\ \cutL { \lmmtX (M `b . {c{ \tsubst [e'/`a] }}){`a} } `a + x \lmmtX (e'){x} &\ByDef & \cutL { \lmmtX (`M`b.c{ \tsubst [e'/`a] }){`a} } `a + x \lmmtX (e'){x} &\ByDef & \\ \cutL { \lmmtX ({ \cell <`M`b.c{ \tsubst [e'/`a] }|`a>}) } `a + x \lmmtX (e'){x} &\substo &\Span {5}{ \bsp (\IH ) \lmmtX (\cell <{`M`b.c \tsubst [e'/`a] }|e'>) ~=~ \lmmtX (\cell <`M`b.c|`a>{ \tsubst [e'/`a] }) } \end {array}$ \par \myitem [$c = \cell <y|e>$] \cutL \lmmtX (\cell <y|e>) `a + x \lmmtX (e'){x} &\ByDef & \cutL \caps <y.e> `a + x \lmmtX (e'){x} &\ByDef & \cutL \lmmtX (e){y} `a + x \lmmtX (e'){x} & \substo & \bsp (\IH ) \lmmtX (e{ \tsubst [e'/`a] }){y} &\ByDef & \\ \lmmtX (\cell <y|e{ \tsubst [e'/`a] }>) &\ByDef & \lmmtX (\cell <y|e>{ \tsubst [e'/`a] }) \end {array}$ \par \Comment {\myitem [$c = \cell <y|t`.e>$] \cutL { \lmmtX (\cell <y|t`.e>) } `a + x \lmmtX (e'){x} &\ByDef & \cutL \lmmtX (t`.e){y} `a + x \lmmtX (e'){x} &\ByDef & \\ \cutL { \lmmtX (X t . e){y}} `a + x \lmmtX (e'){x} &\ByDef & \imp { \cutL \lmmtX (t){`g} `a + x \lmmtX (e'){x} } `g [y] z { \cutL { \lmmtX (e){z} } `a + x \lmmtX (e'){x} } & \substo & \bsp (\IH ) \\ \imp { \lmmtX (t{ \tsubst [e'/`a] }){`g} } `g [y] z { \lmmtX (e{ \tsubst [e'/`a] }){z} } &\ByDef & \lmmtX (\cell <y|t{ \tsubst [e'/`a] }`.e{ \tsubst [e'/`a] }>) &=& \\ \lmmtX (\cell <y|t`.e>{ \tsubst [e'/`a] }) \end {array}$ \par \myitem [$c = \cell <y|\mt z.c>$] \cutL \lmmtX (\cell <y|\mt z.c>) `a + x \lmmtX (e'){x} &\ByDef & \cutL \lmmtX (\mt z.c){y} `a + x \lmmtX (e'){x} &\ByDef & \cutL { \lmmtX (T z . c){y}} `a + x \lmmtX (e'){x} &\substo & \\ \cutL { \lmmtX (c) } `a + x \lmmtX (e'){x} \tsubst [y/z] & \substo & \bsp (\IH ) \lmmtX (c{ \tsubst [e'/`a] }) \tsubst z/y &\ByDef & \lmmtX (\cell <y|\mt z.c\tsubst [e'/`a]>) &=& \\ \lmmtX (\cell <y|\mt z.c>{ \tsubst [e'/`a] }) \end {array}$ }\par \myitem [$c = \cell <t|e>$] \cutL { \lmmtX (\cell <t|e>) } `a + x \lmmtX (e'){x} &\ByDef & \cutL { \lmmtX (C |`b,y| <t|e>) } `a + x \lmmtX (e'){x} &\substo & \bsp (\Lv ) \\ \cut { \cutL { \lmmtX (t){`b} } `a + x \lmmtX (e'){x} } `b + y { \cutL \lmmtX (e){y} `a + x \lmmtX (e'){x} } & \substo & \bsp (\IH ,`a\not =`b) \lmmtX (C |`b,y| <t{ \tsubst [e'/`a] }|e{ \tsubst [e'/`a] }>) &\ByDef & \\ \lmmtX (\cell <t{ \tsubst [e'/`a] }|e{ \tsubst [e'/`a] }>) &\ByDef & \lmmtX (\cell <t|e>{ \tsubst [e'/`a] }) \end {array}$ \end {description} \par \item \begin {description} \pushmyitem {2mm}[$t = y$] \cutL { \lmmtX (y){`b} } `a + x \lmmtX (e'){x} &\ByDef & \cutL { \lmmtX (V y){`b} } `a + x \lmmtX (e'){x} &\substo & \bsp (\Li ) \lmmtX (V y){`b} &\ByDef & \lmmtX (y){`b} &\ByDef & \lmmtX ({y \tsubst [e'/`a] }){`b} \end {array}$ \par \myitem [$t = `Ly.t$] \cutL { \lmmtX (`Ly.t){`b} } `a + x \lmmtX (e'){x} &\ByDef & \cutL {( \lmmtX (L |`d| y . t){`b} )} `a + x \lmmtX (e'){x} &\substo & \bsp (\Lii ) \\ \exp y { \cutL { \lmmtX (t){`d} } `a + x \lmmtX (e'){x} } `d . `b & \substo & \bsp (\IH , `a \not = `d) \lmmtX (L |`d| y . {t\tsubst [e'/`a] }){`b} &\ByDef & \lmmtX (`Ly.t{\tsubst [e'/`a]}){`b} &\ByDef & \lmmtX ({}{(`Ly.t) \tsubst [e'/`a] }){`b} \end {array}$ \par \myitem [$t = `M`g.c$] \cutL { \lmmtX (`M`g.c){`b} } `a + x \lmmtX (e'){x} &\ByDef & \cutL { \lmmtX (M `g . c){`b} } `a + x \lmmtX (e'){x} & \substo & \bsp (\IH , `a \not = `b,`g) \lmmtX (c{ \tsubst [e'/`a] }) \tsubst [`b/`g] &\ByDef & \\ \lmmtX (`M`g.c{ \tsubst [e'/`a] }){`b} &\ByDef & \lmmtX ({(`M`g.c) \tsubst [e'/`a] }){`b} \end {array}$ \par \end {description} \par \item \par \begin {description} \par \pushmyitem {2mm}[$e = `a$] \cutL { \lmmtX (`a){y} } `a + x \lmmtX (e'){x} &\ByDef & \cutL \caps <y.`a> `a + x \lmmtX (e'){x} &\substo & \bsp (\deactL ) \lmmtX (e'){x} \tsubst [y/x] &\ByDef & \lmmtX (e'){y} &\ByDef & \lmmtX (`a{ \tsubst [e'/`a] }){y} \end {array}$ \par \myitem [$e = `b \not = `a$] \cutL { \lmmtX (`b){y} } `a + x \lmmtX (e'){x} &=& \cutL \caps <y.`b> `a + x \lmmtX (e'){x} &\substo & \bsp (\Li ) \caps <y.`b> &\ByDef & \lmmtX (`b){y} &\ByDef & \lmmtX (`b{ \tsubst [e'/`a] }){y} \end {array}$ \par \myitem [$e = t`. e''$] \cutL { \lmmtX (t`.e''){y} } `a + x \lmmtX (e'){x} &\ByDef & \cutL { \lmmtX (X t . e''){y} } `a + x \lmmtX (e'){x} ~ \substo (\Liv ) \\ \imp { \cutL \lmmtX (t){`g} `a + x \lmmtX (e'){x} } `g [y] z { \cutL { \lmmtX (e''){z} } `a + x \lmmtX (e'){x} } & \substo & \bsp (\IH , `a \not = `g) \imp { \lmmtX (t{ \tsubst [e'/`a] }){`g} } `g [y] z { \lmmtX (e''{ \tsubst [e'/`a] }){z} } &\ByDef & \\ \lmmtX (X {t{ \tsubst [e'/`a] }} . {e''{ \tsubst [e'/`a] }}){y} &\ByDef & \lmmtX ({}{(t`.e'') \tsubst [e'/`a] }){y} \end {array}$ \par \myitem [$e = \mt y.c$] \cutL { \lmmtX (\mt z.c){y} } `a + x \lmmtX (e'){x} &\ByDef & \cutL { \lmmtX (T z . c){y} } `a + x \lmmtX (e'){x} &\ByDef & \lmmtX (\mt z.c{ \tsubst [e'/`a] }){y} &\ByDef & \lmmtX ({}{(\mt z.c) \tsubst [e'/`a] }){y} \end {array}$ \qed \par \end {description} \end {enumerate} \observe Notice that no reduction steps are used in this proof.  \end {trivlist}
 
\begin {trivlist} \item [{\textbf {~Proof of \Lmm \ref {mut simulation}.~}}]  By simultaneous induction on the structure of nets. \begin {enumerate} \item \begin {description} \par \pushmyitem {2mm}[$c = \cell <y|`a>$] \kern -8mm && \cutR \lmmtX (t'){`a} `a + x \lmmtX (\cell <x|`b>) &\ByDef & \cutR \lmmtX (t'){`a} `a + x \caps <x.`b> &\substo & \bsp (\deactR ) \lmmtX (t'){`a} \tsubst [`b/`a] &=& \lmmtX (t'){`b} &=& \\ \lmmtX (\cell <t'|`b>) &=& \lmmtX (\cell <x|`b>\tsubst [t'/x]) \end {array}$ \par \myitem [$c = \cell <y|`a>$, $y \not = x$] \kern -3mm && \cutR \lmmtX (t'){`a} `a + x \lmmtX (\cell <y|`b>) &\ByDef & \cutR \lmmtX (t'){`a} `a + x \caps <y.`b> &\substo & \bsp (\Ri ) \caps <y.`b> &=& \lmmtX (\cell <y|`b>) &=& \\ \lmmtX (\cell <y|`b>\tsubst [t'/x]) \end {array}$ \par \myitem [$c = \cell <t|`a>$] \cutR \lmmtX (t'){`a} `a + x \lmmtX (\cell <t|`b>) &\ByDef & \cutR \lmmtX (t'){`a} `a + x \lmmtX (t){`b} & \substo & \bsp (\IH ) \lmmtX (t\tsubst [t'/x]){`b} &\ByDef & \lmmtX (\cell <t\tsubst [t'/x]|`b>) &\ByDef & \\ \lmmtX (\cell <t|`b>\tsubst [t'/x]) \end {array}$ \par \myitem [$c = \cell <y|e>$] \cutR \lmmtX (t'){`a} `a + x { \lmmtX (\cell <y|e>) } &\ByDef & \cutR \lmmtX (t'){`a} `a + x \caps <y.e> &\ByDef & \cutR \lmmtX (t'){`a} `a + x \lmmtX (e){y} & \substo & \bsp (\IH ) \lmmtX (e\tsubst [t'/x]){y} &\ByDef & \\ \lmmtX (\cell <y|e\tsubst [t'/x]>) &\ByDef & \lmmtX (\cell <y|e>\tsubst [t'/x]) \end {array}$ \par \myitem [$c = \cell <y|t`.e>$] \cutR \lmmtX (t'){`a} `a + x \lmmtX (\cell <y|t`.e>) &\ByDef & \cutR \lmmtX (t'){`a} `a + x \lmmtX (t`.e){y} &\ByDef & \\ \cutR \lmmtX (t'){`a} `a + x { \lmmtX (X t . e){y} } &\substo & \imp { \cutR \lmmtX (t'){`a} `a + x \lmmtX (t){`g} } `g [y] z { \cutR \lmmtX (t'){`a} `a + x { \lmmtX (e){z} } } & \substo & \bsp (\IH ) \\ \imp { \lmmtX (t\tsubst [t'/x]){`g} } `g [y] z { \lmmtX (e\tsubst [t'/x]){z} } &\ByDef & \lmmtX (\cell <y|t\tsubst [t'/x]`.e\tsubst [t'/x]>) ~=~ \lmmtX (\cell <y|t`.e>\tsubst [t'/x]) \end {array}$ \par \myitem [$c = \cell <x|\mt z.c>$] \cutR \lmmtX (t'){`a} `a + x { \lmmtX (\cell <x|\mt z.c>) } &\ByDef & \cutR \lmmtX (t'){`a} `a + x { \lmmtX (\mt z.c){x} } &\ByDef & \\ \cutR \lmmtX (t'){`a} `a + x {(\lmmtX (T z . c){x} )} &=& \cutR \lmmtX (t'){`a} `a + z {\omt \cutR \lmmtX (t'){`a} `a + x \lmmtX (c) } & \substo & \bsp (\IH ) \\ \cutR \lmmtX (t'){`a} `a + x {( \lmmtX (c\tsubst [t'/x]) \tsubst x/z )} &\ByDef & \cutR \lmmtX (t'){`a} `a + x { \lmmtX (\mt z.{c\tsubst [t'/x] }){x} } &\ByDef & \\ \cutR \lmmtX (t'){`a} `a + x { \lmmtX (\cell <x|\mt z.{c\tsubst [t'/x] }>) } & \substo & \Span {4}{ \bsp (\IH ) \lmmtX (\cell <t'|\mt z.c\tsubst [t'/x]>) ~=~ \lmmtX (\cell <x|\mt z.c>\tsubst [t'/x]) } \end {array}$ \par \myitem [$c = \cell <y|\mt z.c>$, $y \not = x$] && \kern -8mm \cutR \lmmtX (t'){`a} `a + x { \lmmtX (\cell <y|\mt z.c>) } &\ByDef & \cutR \lmmtX (t'){`a} `a + x { \lmmtX (\mt z.c){y} } &\ByDef & \\ \cutR \lmmtX (t'){`a} `a + x { \lmmtX (T z . c){y} } &=& \cutR \lmmtX (t'){`a} `a + x { \lmmtX (c) \tsubst [y/z] } & \substo & \bsp (\IH ) \lmmtX (c\tsubst [t'/x]) \tsubst [z/y]&\ByDef & \\ \lmmtX (\cell <y|\mt z.c\tsubst [t'/x]>) &=& \lmmtX (\cell <y|\mt z.c>\tsubst [t'/x]) \end {array}$ \par \Comment {\par \myitem [$c = \cell <t|e>$] \cutR \lmmtX (t'){`a} `a + x \lmmtX (\cell <t|e>) &\ByDef & \cutR \lmmtX (t'){`a} `a + x { \lmmtX (C <R `b,y t|e>) } &=& \bsp (\Rv ) \\ \Span {3}{ \cut { \cutR \lmmtX (t'){`a} `a + x { \lmmtX (t){`b} } } `b + y { \cutR \lmmtX (t'){`a} `a + x \lmmtX (e){y} } } & \substo & \bsp (\IH , x \not = y) \\ \cut \lmmtX (t\tsubst [t'/x]){`b} `b + y \lmmtX (e\tsubst [t'/x]){y} &=& \lmmtX (\cell <t\tsubst [t'/x]|e\tsubst [t'/x]>) &\ByDef & \lmmtX (\cell <t|e>\tsubst [t'/x]) \end {array}$ }\par \end {description} \par \par \item \par \begin {description} \pushmyitem {2mm}[$t = x$] \cutR \lmmtX (t'){`a} `a + x { \lmmtX (x){`b} } &\ByDef & \cutR \lmmtX (t'){`a} `a + x { \lmmtX (V x){`b} } &=& \bsp (\deactR ) \lmmtX (t'){`a} \tsubst [`b/`a] &=& \lmmtX (t'){`b} &\ByDef & \lmmtX (x\tsubst [t'/x]){`b} \end {array}$ \par \myitem [$t = y \not = x$] \cutR \lmmtX (t'){`a} `a + x { \lmmtX (y){`b} } &\ByDef & \cutR \lmmtX (t'){`a} `a + x { \lmmtX (V y){`b} } &\substo & \bsp (\Ri ) \lmmtX (V y){`b} &\ByDef & \lmmtX (y){`b} &\ByDef & \lmmtX (y\tsubst [t'/x]){`b} \end {array}$ \par \myitem [$t = `Ly.t''$] \cutR \lmmtX (t'){`a} `a + x { \lmmtX (`Ly.t''){`b} } &\ByDef & \cutR \lmmtX (t'){`a} `a + x { \lmmtX (L |`d| y . t''){`b} } &\substo & \bsp (\Rii ) \\ \exp y { \cutR \lmmtX (t'){`a} `a + x { \lmmtX (t''){`d} } } `d . `b & \substo & \bsp (\IH ) \exp y { \lmmtX (t''\tsubst [t'/x]){`d} } `d . `b &\ByDef & \lmmtX (`Ly.t''\tsubst [t'/x]){`b} &\ByDef & \\ \lmmtX ((`Ly.t'')\tsubst [t'/x]){`b} \end {array}$ \par \myitem [$t = `M`g.c$] \cutR \lmmtX (t'){`a} `a + x { \lmmtX (`M`g.c){`b} } &\ByDef & \cutR \lmmtX (t'){`a} `a + x { \lmmtX (c) \tsubst `b/`g } & \substo & \bsp (\IH ) \lmmtX (c\tsubst [t'/x]) \tsubst [`b/`g] &\ByDef & \\ \lmmtX (`M`g.c\tsubst [t'/x]){`b} &\ByDef & \lmmtX ({(`M`g.c) \tsubst [t'/x] }){`b} \end {array}$ \end {description} \par \item \par \begin {description} \par \pushmyitem {2mm}[$e = `b$] \cutR \lmmtX (t'){`a} `a + x { \lmmtX (`b){z} } &\ByDef & \cutR \lmmtX (t'){`a} `a + x \caps <z.`b> &\substo & \bsp (\Ri ) \caps <z.`b> &\ByDef & \lmmtX (z){`b} &\ByDef & \lmmtX (z\tsubst [t'/x]){`b} \end {array}$ \par \myitem [$e = t`. e$] \cutR \lmmtX (t'){`a} `a + x { \lmmtX (t`.e){z} } &\ByDef & \Span {3}{ \cutR \lmmtX (t'){`a} `a + x { \imp { \lmmtX (t){`b} } `b [z] y \lmmtX (e){y} } ~\substo ~ \bsp (\Riv ) } \\ \Span {3}{ \imp { \cutR \lmmtX (t'){`a} `a + x { \lmmtX (t){`b} } } `b [z] y { \cutR \lmmtX (t'){`a} `a + x \lmmtX (e){y} } } & \substo & \bsp (\IH , x\not =y) \\ \imp { \lmmtX (t\tsubst [t'/x]){`b} } `b [z] y { \lmmtX (e\tsubst [t'/x]){y} } &\ByDef & \lmmtX (t\tsubst [t'/x] `.~ e\tsubst [t'/x]){z} &\ByDef & \lmmtX ({(t`. e) \tsubst [t'/x] }){z} \end {array}$ \par \myitem [$e = \mt y.c$] \cutR { \lmmtX (t'){x} } `a + x { \lmmtX (\mt y.c){z} } &\ByDef & \cutR { \lmmtX (t'){x} } `a + x { \lmmtX (c) \tsubst [z/y] } &\ByDef & ( \cutR { \lmmtX (t'){x} } `a + x { \lmmtX (c) } ) \tsubst [z/y] & \substo & \bsp (\IH ) \\ \lmmtX (c\tsubst [t'/x]) \tsubst [z/y] &\ByDef & \lmmtX (\mt y.c\tsubst [t'/x]){z} &\ByDef & \lmmtX ({(\mt y.c) \tsubst [t'/x] }){z} \end {array}$ \arrayqed \par \end {description} \end {enumerate} \par \observe As above, no reduction steps are used in this proof. \par  \end {trivlist}

 }

\end{document}